\documentclass[11pt]{article}
\usepackage[utf8]{inputenc}
\usepackage{biblatex}
\usepackage{amsthm}
\usepackage{dsfont}
\usepackage{amsmath}
\usepackage{amssymb}
\usepackage{upgreek}
\usepackage{esint}
\usepackage{esvect}
\usepackage{graphicx}
\usepackage[margin=1in]{geometry}
\usepackage{bm}
\graphicspath{ {img/} }
\usepackage{mathtools}
\usepackage{mathrsfs}
\usepackage{nicefrac}
\usepackage[ruled,vlined]{algorithm2e}
\usepackage{enumitem}
\usepackage{thmtools, thm-restate}
\usepackage[dvipsnames]{xcolor}
\usepackage{hyperref}
\hypersetup{
    colorlinks=true,
    linkcolor=blue,
    citecolor=violet   
    }
\usepackage[nameinlink]{cleveref}
\usepackage{array}

\usepackage{physics}

\newtheorem{theorem}{Theorem}[section]
\newtheorem{lemma}[theorem]{Lemma}
\newtheorem{proposition}[theorem]{Proposition}
\theoremstyle{definition}
\newtheorem{definition}[theorem]{Definition}
\theoremstyle{definition}
\newtheorem{construction}{Construction}
\theoremstyle{definition}
\newtheorem{example}{Example}

\theoremstyle{definition}

\theoremstyle{plain}
\newtheorem{corollary}[theorem]{Corollary}
\theoremstyle{remark}
\newtheorem{remark}{Remark}
\theoremstyle{definition}



\newcommand{\C}{\mathbb{C}}
\newcommand{\Z}{\mathbb{Z}}

\newcommand{\F}{\mathbb{F}}

\newcommand{\Q}{\mathbb{Q}}

\DeclarePairedDelimiter\ceil{\lceil}{\rceil}

\newcommand{\supp}{\mathrm{supp}}
\newcommand{\CSS}{\mathrm{CSS}}
\renewcommand{\ev}{\mathrm{ev}}
\newcommand{\Syl}{\mathrm{Syl}}
\newcommand{\Gal}{\mathrm{Gal}}
\newcommand{\Gr}{\mathrm{Gr}}

\allowdisplaybreaks

\title{\textbf{Quantum Hierarchical Locally Recoverable Codes}}
\author{Venkatesan Guruswami\thanks{University of California, Berkeley, and the Simons Institute for the Theory of Computing. {\tt venkatg@berkeley.edu.} Research supported by a grant from Fujitsu Limited, Inc., ONR grant N00014-24-1-2491, a Simons Investigator award, and a UC Noyce initiative award.} \and Rutuja Kshirsagar \thanks{Fujitsu Research of America, Inc. {\tt rkshirsagar@fujitsu.com.}} \and Pranav Trivedi\thanks{University of California, Berkeley, and Fujitsu Research of America, Inc. {\tt pranavtrivedi@berkeley.edu.}}}
\date{}
\begin{document}

\maketitle
\thispagestyle{empty}
\begin{abstract}
% Erasure recovery is a fundamental aspect of classical data storage, especially for the reliability of large-scale distributed systems. Extending these ideas to the quantum domain provides valuable insights for developing robust and scalable quantum memories. In quantum information theory, erasures—where the positions of lost qubits are known—are generally easier to correct than arbitrary errors, which require both detection and correction. This distinction simplifies recovery protocols and enhances fault-tolerance. Recent experimental advances have demonstrated that detected quantum errors can be converted into erasures in physical systems such as neutral-atom–based and superconducting qubit architectures, highlighting the practical importance of studying quantum erasure recovery. 

%\VG{Could we skip or shorten the first paragraph; it's good for the intro but not sure is needed in the abstract?}

Quantum locally recoverable codes (QLRCs) have recently gained attention as a framework for achieving efficient quantum storage with local recovery capabilities. Analogous to their classical counterparts, QLRCs allow a lost qudit to be reconstructed using only a small subset of other qudits, thereby reducing the resource and operational overhead in recovery. In this work, we extend the study of QLRCs by considering $(r,\delta)$ QLRCs characterized by locality parameter $r$ and local distance $\delta \geq 2$. We present constructions of both random and explicit $(r,\delta)$ QLRCs, including explicit families based on the quantum Tamo--Barg construction. We also present an efficient decoding algorithm for these quantum Tamo--Barg codes.

Furthermore, we introduce quantum \emph{hierarchical} locally recoverable codes (QHLRCs), which extend local recovery to multiple hierarchical levels. For any integer $h\geq 2$, we construct both random and explicit $h$-level QHLRCs, the latter being $h$-level quantum Tamo--Barg codes, and establish a Singleton-like bound for these codes using a CSS framework built from dual-containing classical codes. These results advance the theoretical foundations of quantum erasure recovery and contribute to the design of efficient quantum storage architectures.
\end{abstract}
\newpage

\thispagestyle{empty}
%\section*{Table of contents}

\tableofcontents
\thispagestyle{empty}
\newpage
\setcounter{page}{1}

\section{Motivation and Contributions}\label{sec:motivation}
\subsection{Introduction}

Erasure recovery plays a central role in maintaining data reliability in classical storage systems, particularly those that are distributed across multiple physical locations or devices. In such systems, data loss can occur due to node failures, hardware errors, or network disruptions. To mitigate these failures, erasure codes are widely employed, enabling the reconstruction of lost data from the remaining intact portions. As the size and complexity of distributed systems grow, so does the need for codes that can efficiently recover data with minimal redundancy and computational overhead.

Extending these concepts to quantum systems is essential for developing robust quantum memories—an indispensable component of large-scale quantum computing and quantum communication networks. Quantum memories must not only preserve fragile quantum information over time but also allow recovery from physical qubit losses, known as erasures. Consequently, understanding and adapting erasure recovery mechanisms to the quantum setting is a fundamental step toward fault-tolerant quantum information processing.

Classically, erasure recovery operates under the assumption that the location of erased symbols is known to the decoder. Once the erased positions are identified, the remaining symbols and parity constraints can be used to reconstruct the missing information efficiently. The same principle applies to quantum erasure recovery, where it is assumed that the locations of the lost qubits are known to the recovery operation. In both classical \cite{Cover2006} and quantum \cite{gottesman1997stabilizercodesquantumerror, Grassl_1997} contexts,  erasures are inherently easier to recover than general errors, since the decoder does not need to infer where errors have occurred. Furthermore, recent experiments have shown that detected errors can be converted into erasures in certain physical implementations of quantum hardware. This conversion has been demonstrated in neutral-atom systems \cite{Grassl_1997,Wu_2022} and superconducting qubit architectures \cite{Teoh_2023}, making erasure-based quantum recovery schemes particularly practical and appealing.

In the classical regime, locally recoverable codes (LRCs), introduced in \cite{LRC_intro}, have emerged as a particularly effective solution for efficient erasure recovery. The central idea behind LRCs is locality: each symbol of a codeword can be recovered by accessing only a small subset of other symbols, rather than the entire codeword. Formally, an $(n, k, r)$ LRC over the finite field $\mathbb{F}_q$ encodes a message of $k$ symbols into a codeword of length $n$ such that each codeword symbol has a recovery set of at most $r$ other symbols. The parameter $r$ thus captures the locality of the code, representing the size of the largest subset needed for recovery. Typically, $1 \leq r \leq k$, and smaller $r$ values indicate more efficient local recovery. Such local recovery is critical in distributed storage for efficient repair of a failed server, which is highly common. At the same time, we want the LRCs to have good (ideally near-optimal) distance, subject to the locality constraints, so that we have fault-tolerance to a larger number of failures (erasures) in the worst-case. 

The notion of locality can be further generalized by introducing the $(r, \delta)$-LRC \cite{KPLK14}, which enhances fault tolerance within each recovery group. In an $(r, \delta)$-LRC, each local group can correct up to $\delta - 1$ erasures, allowing the system to recover even when multiple erasures occur within a single locality group. This generalization strikes a balance between local redundancy and global robustness, providing a flexible framework for erasure recovery in large-scale distributed storage systems.

Explicit constructions achieving optimal trade-offs between distance and locality have been provided by Tamo and Barg \cite{tamo.barg.lrc}. Their families of LRCs meet the Singleton-like bound for local recoverability~\cite{LRC_intro}. Ballentine, Barg, and Vladut \cite{ballentine.barg.vladut.hlrc} further demonstrated that these constructions naturally extend to the hierarchical setting.

Building upon these classical developments, \emph{quantum} locally recoverable codes (QLRCs) were introduced by Golowich and Guruswami \cite{golowich2025quantum} to explore the notion of locality in the context of \emph{quantum} error correction. QLRCs are designed so that lost or corrupted qudits can be recovered using only a small subset of other qudits, preserving the quantum analog of locality. 

Subsequent research has expanded on the theory of QLRCs, including \cite{bu2025quantumlocallyrecoverablecode, galindo2026quantumrdeltalocallyrecoverablebch, galindo2025optimalquantumlocallyrecoverable, galindo2024quantumrdeltalocallyrecoverablecodes, li2025improvedboundsoptimalconstructions,  li2025optimalquantumlrcshermitian,luo2023boundsconstructionsquantumlocally,sharma2024quantumlocallyrecoverablecodes, zhou2025optimalquantumrdeltalocallyrepairable}. In this work, we focus on the $(r, \delta)$-QLRCs as defined in \cite{galindo2024quantumrdeltalocallyrecoverablecodes}. Our contributions extend the random QLRC and (folded) Quantum Tamo--Barg constructions introduced in \cite{golowich2025quantum} to the general $(r, \delta)$ setting. The folded variant of the explicit constructions improves the asymptotic rate and distance trade-off when compared to the unfolded explicit construction. While proving that our constructions indeed give $(r,\delta)$ QLRCs, computing the dimension and locality parameters for these codes is straightforward, but computing a bound on the minimum distance (\Cref{thm:qtb_distance}) of these codes is trickier and %enforces using a modified technical approach based on utilizing the $(r,\delta)$ parameters. %Subsequently, the distance proofs are
substantially more involved than in \cite{golowich2025quantum}. The constructions in \cite{golowich2025quantum} still arise as a special case of our framework when $\delta = 2$. 

Analogous to their classical counterparts, we generalize QLRCs to quantum hierarchical locally recoverable codes (QHLRCs), characterized by a hierarchy of locality and distance parameters $((r_1, \delta_1), \ldots, (r_h, \delta_h))$ for $h \geq 2$. QHLRCs inherit the multi-tier recovery structure of HLRCs, enabling both efficient local recovery and robust multi-erasure correction. 

We propose a random construction of QHLRCs (\Cref{sec:random_r_delta_hierarchical_qlrc}), along with a Singleton-like bound on their minimum distance, extending the trade-off relations known for classical codes to the quantum domain. Furthermore, we present an explicit construction of quantum hierarchical Tamo--Barg codes (\Cref{sec:r_delta_hierarchical_qTB}), which naturally generalize the $(r, \delta)$-Quantum Tamo--Barg codes while introducing a hierarchical recovery structure. We also apply a folding operation and obtain improved parameters in the folded Quantum Tamo--Barg codes (\Cref{sec:folded_qtb} and \Cref{sec:folded_hqtb}).
Finally, we also develop an efficient decoding algorithm for the Quantum Tamo--Barg codes.

This unified framework bridges the gap between locality-based code design in classical and quantum settings. By generalizing locality and hierarchical recovery to quantum codes, our constructions introduce new techniques which will prove useful towards developing scalable, efficient, and fault-tolerant quantum storage and computation systems capable of addressing reliability challenges in quantum technologies. It should be noted that not all notions related to classical locality extend to the quantum setting. One such notion is that of availability, as shown in \cite[Theorem 9]{golowich2025quantum}. Hence, hierarchy is the natural generalization of locality in the quantum setting. 

We now present some important results proved in this work. 

\subsection{Main Contributions}
Let $C_X,C_Z\subseteq \F_q^n$ be classical linear codes with $C_Z^\perp\subseteq C_X$. The corresponding CSS code \cite{Calderbank_1996,Steane_1996} is denoted $\CSS(C_X,C_Z)$. For our constructions, the most important case is a CSS code constructed from a dual-containing classical code: $\mathcal Q=\CSS(C,C)$ with $C^\perp\subseteq C$. The quantum dimension is $\dim \mathcal{Q} = 2\dim C-n$, and the CSS distance is \(d(\mathcal Q)=\min\{\mathrm{wt}(c):c\in C\setminus C^\perp\}.\)

A quantum code is an $(r,\delta)$-QLRC if every coordinate lies in a set $J$ of size at most $r+\delta-1$ such that every erasure pattern $I\subseteq J$ with $|I|\le \delta-1$ can be recovered by an operation supported on $J$. For CSS codes constructed from dual-containing classical codes, we use a known equivalence for quantum $(r,\delta)$ locality \cite{galindo2024quantumrdeltalocallyrecoverablecodes}: if $C^\perp\subseteq C$ and $d(C^\perp)\ge \delta$, then $\CSS(C,C)$ is an $(r,\delta)$-QLRC if and only if $C$ is a classical $(r,\delta)$-LRC. Thus, for the explicit constructions it suffices to use dual-containing classical LRCs.

In order to extend the notion of locality to the hierarchical setting, we need CSS codes to be closed under puncturing (restricting the code to a subset of coordinates). This property is in the same broad direction as recent puncturing results for quantum codes \cite{Grasslpuncturingstabilizer,Gundersen2025puncturing}, but we specifically isolate the CSS case.

\begin{lemma}[restatement of \Cref{lem:puncturing_css}]
\label{restatement lem:puncturing_css}
Let $\mathcal Q=\CSS(C_X,C_Z)$ be a CSS code with $C_Z^\perp\subseteq C_X$, and let $I\subseteq \{1,\cdots,n\}$ be a coordinate set. Then \(\mathcal Q|_I:=\CSS((C_X)|_I,(C_Z)|_I) \)
is again a well-defined CSS code. In particular, the punctured classical pair still satisfies the CSS orthogonality conditions.
\end{lemma}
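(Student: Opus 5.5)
We need to show that if $C_Z^\perp\subseteq C_X$, then $((C_Z)|_I)^\perp\subseteq (C_X)|_I$, where $C|_I$ denotes the puncturing (projection onto the coordinates in $I$). This containment is the only condition needed for $\CSS((C_X)|_I,(C_Z)|_I)$ to be well defined. The plan is to reduce it to the standard duality between puncturing and shortening, and then use the given containment coordinatewise.

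\textbf{Step 1 (puncturing–shortening duality).} For a linear code $C\subseteq\F_q^n$, define the shortened code $C^{I}:=\{c|_I : c\in C,\ \supp(c)\subseteq I\}$. The first claim is $(C|_I)^\perp=(C^\perp)^{I}$.

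For $\supseteq$: if $u=w|_I$ with $w\in C^\perp$ and $\supp(w)\subseteq I$, then for every $c\in C$ we have $\langle u,c|_I\rangle=\langle w,c\rangle=0$.

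For $\subseteq$: take $u\in (C|_I)^\perp$ and extend it by zeros to $\tilde u\in\F_q^n$. Then for every $c\in C$, $\langle \tilde u,c\rangle=\langle u,c|_I\rangle=0$, so $\tilde u\in C^\perp$ and $u=\tilde u|_I\in (C^\perp)^I$.

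\textbf{Step 2 (monotonicity).} Apply Step 1 with $C=C_Z$ to get $((C_Z)|_I)^\perp=(C_Z^\perp)^{I}$. Shortening is trivially contained in puncturing: $(C_Z^\perp)^I\subseteq (C_Z^\perp)|_I$. Puncturing is monotone under inclusion, so $C_Z^\perp\subseteq C_X$ gives $(C_Z^\perp)|_I\subseteq (C_X)|_I$. Chaining these,
\[
((C_Z)|_I)^\perp=(C_Z^\perp)^I\subseteq (C_Z^\perp)|_I\subseteq (C_X)|_I .
\]
This is exactly the CSS condition for the pair $((C_X)|_I,(C_Z)|_I)$.

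\textbf{Step 3 (symmetric form and well-definedness).} Taking duals of the chain in Step 2 gives the equivalent condition $((C_X)|_I)^\perp\subseteq (C_Z)|_I$, which is the other way the CSS orthogonality is sometimes stated. One can also verify it directly by rerunning Step 1 with $C=C_X$ and using $C_X^\perp\subseteq C_Z$.

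Both punctured codes are linear subspaces of $\F_q^{|I|}$, so $\CSS((C_X)|_I,(C_Z)|_I)$ is a bona fide CSS code. Its dimension is $\dim (C_X)|_I+\dim (C_Z)|_I-|I|\ge 0$; nonnegativity follows from the containment in Step 2, since $|I|-\dim (C_Z)|_I=\dim((C_Z)|_I)^\perp\le\dim (C_X)|_I$.

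The only step with real content is the $\subseteq$ direction of Step 1, namely the zero-extension argument. It relies on puncturing keeping the coordinates in $I$ rather than deleting them, so the notational convention for $C|_I$ has to be fixed carefully. Once that is done the argument is routine, and it also covers the dual-containing case $C_X=C_Z=C$ used later in the paper.
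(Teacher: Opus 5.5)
Your proof is correct and follows essentially the same route as the paper: both rest on the puncturing--shortening duality $(C|_I)^\perp=\sigma_I(C^\perp)$, chain inclusions to get $((C_Z)|_I)^\perp\subseteq (C_X)|_I$, and obtain the other orthogonality condition by dualizing. The differences are cosmetic: you pass through $(C_Z^\perp)|_I$ where the paper passes through $\sigma_I(C_X)$, and you prove the duality identity by zero-extension instead of citing it as standard.
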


% \begin{proof}[Proof idea]
% Write $\pi_I$ for puncturing to $I$ and $\sigma_I$ for shortening to $I$. The standard duality identity
% \[
%     (\pi_I C)^\perp=\sigma_I(C^\perp)
% \]
% gives
% \[
%     ((C_Z)|_I)^\perp=\sigma_I(C_Z^\perp)\subseteq \sigma_I(C_X)\subseteq (C_X)|_I,
% \]
% and the other inclusion is identical after exchanging $X$ and $Z$. Thus the punctured pair still defines a CSS code.
% \end{proof}

% This lemma is more than a technicality: it lets us define a local quantum recovery group by restricting the global CSS code to the group and checking that the restricted object is still a CSS code. It is also the mechanism behind the recursive definition of hierarchical locality: after puncturing to a level-$1$ group, the resulting punctured code can itself be required to have lower-level quantum locality.

The $h$-level QHLRC definition is recursive, paralleling classical hierarchical locality \cite{sasidharan2015hlrc,ballentine.barg.vladut.hlrc}. A code has hierarchical locality $((r_1,\delta_1),\ldots,(r_h,\delta_h))$ if every coordinate lies in a level-$1$ group of size at most $r_1+\delta_1-1$ that corrects $\delta_1-1$ erasures, and the punctured code on that group is itself an $(h-1)$-level QHLRC with parameters $((r_2,\delta_2),\ldots,(r_h,\delta_h))$. The constructions assume nested parameters
\(r_1\ge \cdots\ge r_h\ge \delta_1\ge\cdots\ge\delta_h\ge 2,\)
such that \( n_{l+1}\mid n_l,\) where \( n_l:=r_l+\delta_l-1.\) 

We present a Singleton-like bound for QHLRCs to which we will compare our constructions.

\begin{theorem}[informal restatement of \Cref{thm:singleton_qlrc} and \Cref{prop:singleton.bound.qhlrc}]\label{thm:singleton_informal} If $\mathcal{Q}$ is an $h$-level QHLRC with rate $R$, then its relative distance $\rho = d(\mathcal{Q})/N$ must satisfy \begin{align}\label{eqn:relative_singleton}
    \rho \leq \frac{1-R}{2} - \Omega\left(\frac{\delta_1}{r_h}\right).
\end{align}
\end{theorem}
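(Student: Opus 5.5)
The plan is to prove the bound in two stages. The first stage is a Singleton-like bound for single-level $(r,\delta)$-QLRCs (\Cref{thm:singleton_qlrc}). The second stage lifts it to $h$ levels (\Cref{prop:singleton.bound.qhlrc}) by recursing through the hierarchy.

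For the single-level bound, we follow the classical Gopalan et al./Kamath et al. argument, adapted to quantum codes as in \cite{golowich2025quantum}. We use the quantum Singleton-type fact that if $\mathcal Q$ has distance $d$, then every set $A$ of size at most $d-1$ is correctable. Hence, for the purified reference system, the reduced state on $A$ is independent of the encoded information. Together with the complementary-recovery (no-cloning) constraint, this gives the standard relation
\[
K \le |A^c| - |A| \quad\text{(in log-qudits)}.
\]
We now build a large set $A$ greedily from local groups. Choose groups $J_1,J_2,\dots$, each of size at most $r+\delta-1$, such that every $\delta-1$ erasures inside a group are locally recoverable. Each time we add a whole group to $A$, the last $\delta-1$ qudits are ``free'': they are recoverable from the rest of the group, so they add no entropy. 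As in the classical argument, entropy grows by at most $r$ per group while $|A|$ grows by $r+\delta-1$.

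To make this quantum, we use subadditivity of von Neumann entropy together with the local recovery maps. The local recovery gives $S(J_i)\le S(J_i\setminus I)$ relative to the reference. We stop once the accumulated entropy is about $K$, mirroring the classical stopping rule at $k-1$ symbols of information. Balancing this against the quantum condition $2|A|\le N-K$, instead of the classical $|A|<n$, yields
\[
d \le \frac{N-K}{2}+1-\Big(\Big\lceil \tfrac{K}{2r}\Big\rceil-1\Big)(\delta-1)
\]
or a similar expression. Dividing by $N$ gives $\rho\le \frac{1-R}{2}-\Omega(R\,\delta/r)$.

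For the hierarchical bound, we apply the single-level bound to the innermost level. By \Cref{lem:puncturing_css}, the restriction of $\mathcal Q$ to a level-$1$ group is again a CSS code. Recursing through the hierarchy therefore shows that every coordinate lies in an $(r_h,\delta_h)$ local group. This gives the bound with $r_h$ in the denominator. The $\delta_1$ in the correction term comes from the outermost level: repeat the greedy argument using level-$1$ groups, which have $\delta_1-1$ free erasures per group, while the entropy cost per group is controlled by $r_h$-scale pieces. Equivalently, because the parameters are nested, we may take the weaker combination $\delta_1/r_h$ up to constants, and absorb $R$ into the $\Omega$ for fixed rate.

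The main obstacle is the quantum greedy step: showing that adding a local group increases the entropy of the reference-correlated state by at most $r$ rather than $r+\delta-1$. I would establish this from the recoverability of the $\delta-1$ erased qudits by a channel on the remaining qudits of $J$, via data processing. I would also need to handle the case where groups overlap the already chosen set, by adding only the new coordinates, as in the classical proof.
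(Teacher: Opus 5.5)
As set up, your argument proves a strictly weaker bound than the statement. The loss is in the one-level greedy step, which uses the wrong monotone and the wrong threshold.

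What a local group makes cheap is a rank, not von Neumann entropy. For $\CSS(C,C)$ one has $S(A)=2\dim\pi_A(C)-|A|$. So a group $J$ has $\dim\pi_J(C)\le r$ but already $S(J)\le r-\delta+1$. The greedy therefore has to accumulate $\tfrac{S(A)+|A|}{2}=\dim\pi_A(C)$ up to $\dim C=\tfrac{N+K}{2}$, not entropy up to $K$. With your threshold only about $K/(2r)$ groups contribute free coordinates, which is exactly why you land at $\Omega(R\delta/r)$. That correction vanishes as $R\to 0$, where your bound degenerates to the plain quantum Singleton bound.

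The statement's correction is uniform in $R$. \Cref{thm:singleton_qlrc} contains the rate penalty $\tfrac{2(\delta-1)}{r+\delta-1}N$. It comes from the $\delta-1$ $X$-type and $\delta-1$ $Z$-type local stabilizers of every group, not only the groups visited before your stopping time. Rearranging it gives $\rho\le\tfrac{1-R}{2}-\tfrac{\delta-1}{2(r+\delta-1)}+o(1)$ for every $R$. Likewise \Cref{prop:singleton.bound.qhlrc} has $\lceil\tfrac{N+K}{2r_l}\rceil$ where you have $\lceil\tfrac{K}{2r}\rceil$. Absorbing $R$ into the $\Omega$ therefore proves a different statement.

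Separately, $K\le|A^c|-|A|$ holds for $|A|\le d-1$ via two disjoint correctable sets. For a correctable set containing free coordinates you only get $K\le|A^c|-S(A)$.

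The paper sidesteps the quantum greedy entirely. \Cref{thm:singleton_qlrc}, under its disjoint exact-size group hypotheses, is quoted from Golowich--Guruswami. \Cref{prop:singleton.bound.qhlrc} restricts to $\CSS(C,C)$ with $C$ a dual-containing classical HLRC, applies the Sasidharan--Agarwal--Kumar bound to $C$ with $\dim C=(N+K)/2$, and rearranges.

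The hierarchical stage has a second gap. Recursing with \Cref{lem:puncturing_css} to the innermost level shows that $\mathcal Q$ is an $(r_h,\delta_h)$-QLRC, which gives a correction of order $\delta_h/r_h$ only. Your upgrade of the numerator to $\delta_1$, by charging level-$1$ groups at ``$r_h$-scale'', is unsupported. A level-$1$ group has rank up to $r_1$. In the nested greedy, the $\delta_l-\delta_{l+1}$ extra free coordinates of a level-$l$ group are paid for with rank $r_l$.

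That accounting produces the level-weighted term of \Cref{prop:singleton.bound.qhlrc}, i.e. a relative correction $\tfrac{1+R}{2}\bigl(\sum_{l<h}\tfrac{\delta_l-\delta_{l+1}}{r_l}+\tfrac{\delta_h-1}{r_h}\bigr)$. This is at least $\tfrac12\max\{\tfrac{\delta_1-1}{r_1},\tfrac{\delta_h-1}{r_h}\}$ and at most $\tfrac{\delta_1-1}{r_h}$, and it is the precise content behind the informal $\delta_1/r_h$. Moreover, $\delta_1/r_h$ dominates both $\delta_h/r_h$ and $\delta_1/r_1$, so passing to it is a strengthening, not a weakening.

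To repair your approach, run the greedy on $\tfrac{S(A)+|A|}{2}$ with threshold $\tfrac{N+K}{2}$ and charge the nested groups level by level. Alternatively, follow the paper and transfer the classical HLRC bound through $\dim C=(N+K)/2$.
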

For exact constants hidden in the $\Omega(\delta_1/r_h)$ term, we refer the reader to \Cref{thm:singleton_qlrc} and \Cref{prop:singleton.bound.qhlrc}

\Cref{table:constructions} presents a summary of rate, distance, and alphabet size tradeoffs achieved by our constructions.
\begin{table}[htbp]
\centering
\begin{tabular}{|c|c|c|}
   \hline
   Code  &  Distance & Alphabet Size\\
   \hline
   Random QLRC  & \(\rho \geq \frac{1-R}{2} - O\left(\frac{\delta}{r}\right)\) & $2^{O\left(\frac{r+\delta-1}{\delta-1}\right)}$\\
   \hline
   Random QHLRC  & \(\rho \geq \frac{1-R}{2} - O\left(\frac{\delta_1}{r_h}\right)\) & $2^{O\left(\frac{r_h+\delta_h-1}{\delta_1-1}\right)}$\\
   \hline
   $(r,\delta)$ QTB & $\rho \geq \Omega\left(\frac{1-R}{\delta}\right) - O\left(\frac{\delta}{r}\right)$ & $N+1$\\
   \hline
   $h$-level QTB & $\rho \geq \Omega\left(\frac{1-R}{\delta_1^h}\right) - O\left(\frac{\delta_1}{r_h}\right)$& $N+1$\\
   \hline
   Folded $(r,\delta)$ QTB & $\rho \geq \frac{1-R}{2} - O\left(\sqrt{\frac{\delta}{r}}\right)$ & $O(N(r+\delta-1)^2)$\\
   \hline
   Folded $h$-level QTB & $\rho \geq \frac{1-R}{2} - O\left(\sqrt{\frac{\delta_1}{r_h}}\right)$ & $O(N(r+\delta-1)^2)$\\
   \hline
\end{tabular}
\caption{Rate-Distance Tradeoffs}
\label{table:constructions}
\end{table}

The randomized construction is still $O(\delta_1/r_h)$ below \labelcref{eqn:relative_singleton} because the constant hidden in $O(\delta_1/r_h)$ is larger than $\Omega(r_1/\delta_h)$. The parameters of the randomized construction improve upon the folded QTB codes at cost of explicitness. There is no efficient algorithm to certify the distance of the random codes nor is there an efficient algorithm to correct arbitrary errors. The QTB codes and their folded variants are explicit so their distance bounds are guaranteed and we present an efficient algorithm to correct arbitrary errors.

The rest of \Cref{sec:motivation} is devoted to an overview of the constructions and proofs. In particular, we will also compare our proof techniques with the techniques used in \cite{golowich2025quantum} and explain where the more general $(r,\delta)$ parameters and the hierarchy structure require more involved analysis.

\subsection{\texorpdfstring{Random $(r,\delta)$-QLRCs and random QHLRCs}{Random (r,delta)-QLRCs and random QHLRCs}}\label{subsec:random_constructions}

A random $(r,\delta)$-QLRC construction extends the random QLRC construction of \cite[Section~4.1]{golowich2025quantum}. We build two parity-check matrices $H_X,H_Z$. Each matrix contains local Vandermonde blocks supported on disjoint groups of size $r+\delta-1$, giving $\delta-1$ local checks per group. The $H_Z$ blocks are chosen as orthogonal Vandermonde-like blocks, so the local row spaces are mutually orthogonal. The construction then appends $\ell$ random rows to $H_X$ and $\ell$ random rows to $H_Z$, always sampling from the orthogonal complement of the other row span. This preserves the CSS condition and produces
\[
    \mathcal Q=\CSS(\ker H_X,\ker H_Z),
    \qquad
    k=N-2\left(m(\delta-1)+\ell\right),
\]
where $m = N/(r+\delta-1)$. The local Vandermonde blocks imply $(r,\delta)$ locality.

The main probabilistic estimate refines the standard random-CSS union bound by counting only supports that are compatible with local distance $\delta$. Let $\mathcal{N}^{(\delta)}(N,w)$ be the number of weight-$w$ supports in which every nonempty local block has weight at least $\delta$:
\[
\mathcal{N}^{(\delta)}(N,w)
=\sum_{s=1}^{\lfloor w/\delta\rfloor}\binom ms
\sum_{\substack{w_1+\cdots+w_s=w\\ w_i\ge\delta}}
\prod_{i=1}^s\binom{r+\delta-1}{w_i}.
\]
Define
\[
    H_q^{(\delta)}(\rho)
    :=\limsup_{N\to\infty}\frac1N\log_q\left(\sum_{w\le \rho N}\mathcal{N}^{(\delta)}(N,w)(q-1)^w\right).
\]
The resulting high-probability statement is the following.

\begin{proposition}[restatement of \Cref{prop:distance_random_r_delta_qlrc}]
\label{restatement prop:distance_random_r_delta_qlrc}
For sufficiently large $N$, if $\ell\ge (H_q^{(\delta)}(\rho)+2\epsilon)N$, then with probability at least $1-2q^{-\epsilon N}$ the resulting code has distance at least $\rho N$.
\end{proposition}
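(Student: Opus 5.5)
The plan is to show that, with high probability, no vector of weight below $\rho N$ lies in $\ker H_X\setminus(\ker H_Z)^\perp$, and similarly with $X$ and $Z$ swapped. The CSS distance is the minimum weight of such vectors in either code, so this suffices. Treat the $X$ side first; the $Z$ side is symmetric and a union bound over the two sides gives the factor $2$.

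\textbf{Step 1: reduce to structured supports.} Let $c\in\ker H_X$ be nonzero with $\mathrm{wt}(c)<\rho N$ and $c\notin(\ker H_Z)^\perp$. Restricted to each local group of size $r+\delta-1$, the local Vandermonde block forces $c|_{\text{group}}$ into a code of MDS type with minimum distance $\delta$. Hence every group on which $c$ is nonzero carries weight at least $\delta$. So $\supp(c)$ is one of the supports counted by $\mathcal N^{(\delta)}(N,w)$, and there are at most $\mathcal N^{(\delta)}(N,w)(q-1)^w$ candidate vectors of weight $w$. I would dispose of the local constraints once and for all: each candidate vector $c$ is fixed in advance and either satisfies the local rows or does not. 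Only those satisfying them matter.

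\textbf{Step 2: probability for a fixed candidate.} The $\ell$ random rows of $H_X$ are sampled, one at a time, uniformly from the orthogonal complement of the current row span of $H_Z$. That row span contains the local $Z$-blocks and the previously sampled $Z$ rows, and the $X$ and $Z$ rows are interleaved. The key claim is that a candidate $c\notin(\ker H_Z)^\perp=\mathrm{rowspan}(H_Z)$ is orthogonal to a uniform vector from $\mathrm{rowspan}(H_Z)^\perp$ with probability exactly $1/q$. This holds because $c$ defines a nonzero linear functional on that complement precisely when $c\notin\mathrm{rowspan}(H_Z)$.

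The main obstacle is the adaptivity. The span of $H_Z$ grows as $Z$ rows are added, and membership of $c$ in the final $\mathrm{rowspan}(H_Z)$ is itself random. I would handle this by conditioning step by step. At the time each $X$ row is drawn, either $c$ is already in the current $Z$ span, in which case it stays there and $c$ is not a bad vector, or it is not, in which case $c$ survives that row with probability $1/q$. Chaining these conditional probabilities over the $\ell$ rows gives
\[
\Pr\bigl[c\in\ker H_X\setminus\mathrm{rowspan}(H_Z)\bigr]\le q^{-\ell},
\]
following the argument of \cite[Section~4.1]{golowich2025quantum}.

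\textbf{Step 3: union bound.} Summing over all candidates gives a failure probability for the $X$ side of at most
\[
q^{-\ell}\sum_{w<\rho N}\mathcal N^{(\delta)}(N,w)(q-1)^w .
\]
By the definition of $H_q^{(\delta)}$ as a $\limsup$, for sufficiently large $N$ the sum is at most $q^{(H_q^{(\delta)}(\rho)+\epsilon)N}$. With $\ell\ge(H_q^{(\delta)}(\rho)+2\epsilon)N$ the bound becomes $q^{-\epsilon N}$. Adding the symmetric $Z$-side bound gives a total failure probability of at most $2q^{-\epsilon N}$, as claimed.
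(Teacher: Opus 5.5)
Your proposal is correct and follows the paper's proof. The local MDS blocks deterministically restrict bad vectors to supports counted by $\mathcal N^{(\delta)}(N,w)$, chaining conditional probabilities over the sequentially sampled rows gives a per-vector bound of $q^{-\ell}$, and a union bound over both sides finishes the argument; you work out the $X$ side where the paper works out the $Z$ side.

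Two small corrections, neither of which affects the conclusion:
\begin{enumerate}
\item The rows are not interleaved. All $X$ rows are drawn first, while $H_Z$ still consists only of its local blocks.
\item Each row is drawn from $\mathrm{rowspan}(H_Z)^\perp\setminus\mathrm{rowspan}(H_X)$, not from all of $\mathrm{rowspan}(H_Z)^\perp$. So the survival probability is at most $1/q$, not exactly $1/q$, because on the survival path the excluded span already lies in $c^\perp$.
\end{enumerate}
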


\begin{proof}[Proof idea]
For any fixed vector $y\notin C_X^\perp$, each random $Z$-check eliminates it with probability at least $1-1/q$, so the chance that $y\in C_Z\setminus C_X^\perp$ is at most $q^{-\ell}$, and symmetrically for $C_X\setminus C_Z^\perp$. The local Vandermonde checks already exclude support patterns whose nonempty local blocks have size less than $\delta$; hence the union bound only ranges over the refined count $\mathcal{N}^{(\delta)}(N,w)$, rather than over all supports of weight $w$. The entropy condition on $\ell$ makes the union bound at weights bounded above by $\rho N$ which is at most $2q^{-\epsilon N}$.
\end{proof}

A generalization of this framework yields random $h$-level QHLRCs. The local blocks are now nested, and the number of deterministic local checks is
\[
    M=m_h(\delta_h-1)+\sum_{l=1}^{h-1}m_l(\delta_l-\delta_{l+1}),
    \qquad m_l=\frac{N}{r_l+\delta_l-1}.
\]
The dimension is $k=N-2(M+\ell)$. A recursive generating function counts hierarchically admissible supports: set
\[
B_h(z)=\sum_{t=\delta_h}^{n_h}\binom{n_h}{t}z^t,
\]
and for $l<h$ let
\[
B_l(z)=\sum_{t=\delta_l}^{n_l}[z^t](1+B_{l+1}(z))^{n_l/n_{l+1}}z^t.
\]
Then $\mathcal N^{(\boldsymbol\delta)}(N,w)=[z^w](1+B_1(z))^{N/n_1}$. Replacing the one level count, $\mathcal N^{(\delta)}$, by the hierarchical count, $\mathcal N^{(\boldsymbol\delta)}$, in the union bound gives the hierarchical analogue.

\begin{theorem}[restatement of \Cref{prop:distance_random_qhlrc}]
\label{restatement prop:distance_random_qhlrc}
Let
\[
    \mathcal H_q^{(\boldsymbol\delta)}(\rho)
    :=\limsup_{N\to\infty}\frac1N\log_q
    \left(\sum_{w\le \rho N}\mathcal N^{(\boldsymbol\delta)}(N,w)(q-1)^w\right).
\]
For sufficiently large \(N\), if $\ell\ge (\mathcal H_q^{(\boldsymbol\delta)}(\rho)+2\epsilon)N$, then the random $h$-level QHLRC has distance at least $\rho N$ with probability at least $1-2q^{-\epsilon N}$.
\end{theorem}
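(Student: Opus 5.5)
The plan is to run the same union-bound argument as for \Cref{restatement prop:distance_random_r_delta_qlrc}, replacing the one-level support count $\mathcal N^{(\delta)}$ by the hierarchical count $\mathcal N^{(\boldsymbol\delta)}$.

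Write $C_X=\ker H_X$ and $C_Z=\ker H_Z$. The CSS distance is at least $\rho N$ unless some nonzero $y$ of weight $w\le \rho N$ lies in $C_Z\setminus C_X^\perp$ or in $C_X\setminus C_Z^\perp$. We bound each of these two events by $q^{-\epsilon N}$.

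\textbf{Step 1 (structural restriction from the deterministic checks).} Any $y$ in $C_X$ or $C_Z$ satisfies all nested local Vandermonde checks. I would show by induction on the level, from $h$ up to $1$, that for such $y$ the following holds: for every level-$l$ group $G$ on which $y$ is nonzero, $|\supp(y)\cap G|\ge \delta_l$.
\begin{itemize}
\item At level $l$, the checks present on $G$ are its own $\delta_l-\delta_{l+1}$ rows together with the $\delta_{l+1}-1$ rows from each level-$(l+1)$ subgroup (for $l=h$, just $\delta_h-1$ rows).
\item By the nested Vandermonde design, the restriction of these checks to $G$ spans the parity-check matrix of an MDS (generalized Reed--Solomon) code of length $n_l$ with $\delta_l-1$ checks, which has distance $\delta_l$. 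This is the same fact that gives the $h$-level locality.
\item Hence a nonzero restriction $y|_G$ has weight at least $\delta_l$.
\end{itemize}
The induction also gives that within $G$ the support decomposes into level-$(l+1)$ groups, each empty or of weight at least $\delta_{l+1}$. So the supports of such $y$ are exactly the hierarchically admissible sets. Unwinding the recursion for $B_l$ shows that their number at weight $w$ is $[z^w](1+B_1(z))^{N/n_1}=\mathcal N^{(\boldsymbol\delta)}(N,w)$: the factor $1+B_l$ records "empty or admissible" for each level-$l$ block, and $[z^t]$ enforces the level-$l$ weight threshold.

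\textbf{Step 2 (probabilistic estimate).} Fix an admissible $y\notin C_X^\perp$. The $\ell$ random $Z$-rows are sampled sequentially, each uniformly from the orthogonal complement of the current $X$-row span. Conditioned on previous choices, each row is orthogonal to $y$ with probability at most $1/q$, because $y$ is not in the relevant span and so defines a nonzero functional on that complement. Care is needed here: the span grows as rows are added, so I would argue as in \cite{golowich2025quantum}, either by conditioning on the final $C_X^\perp$ or by sampling the $X$-rows first and then the $Z$-rows. This gives $\Pr[y\in C_Z]\le q^{-\ell}$, and the symmetric bound holds for the $X$ side.

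\textbf{Step 3 (union bound).} Sum over weights $w\le\rho N$, over admissible supports, and over $(q-1)^w$ nonzero value patterns:
\[
\Pr[\text{failure}] \le 2q^{-\ell}\sum_{w\le\rho N}\mathcal N^{(\boldsymbol\delta)}(N,w)(q-1)^w .
\]
By the definition of $\mathcal H_q^{(\boldsymbol\delta)}$ as a limsup, for $N$ large the sum is at most $q^{(\mathcal H_q^{(\boldsymbol\delta)}(\rho)+\epsilon)N}$. With $\ell\ge(\mathcal H_q^{(\boldsymbol\delta)}(\rho)+2\epsilon)N$, the failure probability is at most $2q^{-\epsilon N}$.

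The main obstacle is Step 1: verifying that the nested Vandermonde blocks, restricted to any level-$l$ group, really give an MDS code of distance $\delta_l$. This requires the level-$l$ extra rows to extend the union of the lower-level rows to a full Vandermonde system on $G$. I would first try choosing the evaluation points and row degrees so that the level-$l$ rows are powers $x^{j}$ with $\delta_{l+1}-1\le j<\delta_l-1$, and the subgroup rows are generated by lower powers restricted to cosets, mirroring the classical hierarchical Tamo--Barg structure.
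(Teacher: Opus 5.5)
Your Steps 2 and 3 match the paper's proof. Your ordering worry in Step 2 is settled by the construction itself: all $X$-rows are drawn before the $Z$-rows, so $C_X$ is fixed while each $Z$-row is drawn from $C_X\setminus\operatorname{rowspan}(H_Z^{(i)})$.

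The gap is Step 1. You leave it open, and the route you propose rests on a structure the construction does not have. In \Cref{const:random_qhlrc} only the bottom level uses (scaled) Vandermonde blocks. At each level $l<h$, the extra rows $U_{l,b}^+\subseteq V_{>l,b}^\perp$ and $V_{l,b}^+\subseteq(U_{>l,b}+U_{l,b}^+)^\perp$ are $\Delta_l$-dimensional subspaces chosen by exhaustive search over a Grassmannian so as to satisfy the correctable property. Their existence comes from the Schwartz--Zippel argument in \Cref{thm:generic_nested_local_extension}.

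The accumulated local row space on a level-$l$ block then has dimension $\dim U_{>l,b}+\Delta_l$. This exceeds $\delta_l-1$ as soon as $n_l>n_{l+1}$. Your row count is also off for $l+1<h$: a child block carries all of its accumulated rows, not only $\delta_{l+1}-1$. So the checks on $G$ do not form the parity-check matrix of a length-$n_l$ MDS code with $\delta_l-1$ checks. For $l<h$, nothing in the construction places a $(\delta_l-1)$-row Vandermonde system inside that space either.

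Your suggested redesign, with level-$l$ rows $x^j$ for $\delta_{l+1}-1\le j\le\delta_l-2$, would handle the $X$ side. You would still need level-$l$ $Z$-rows that are orthogonal to every $X$-row in the group, including the coset-restricted lower-level ones, and that again force weight $\ge\delta_l$. That compatibility is exactly what the paper obtains via the Schwartz--Zippel existence argument and exhaustive search. A redesign would in any case change the code the theorem is about.

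The missing idea is that Step 1 needs no Vandermonde structure at all. The construction builds in the $(\delta_l-1,\delta_1-1)$-correctable property of \Cref{def:strong_local_mds} for both the $X$- and $Z$-local spaces on every level-$l$ block. Its first condition gives $\dim(V|_E)=|E|$ for every $E\subseteq J_{l,b}$ with $|E|\le\delta_l-1$. Suppose $y\in C_Z$ had $E:=\supp(y)\cap J_{l,b}$ with $0<|E|\le\delta_l-1$. Then $y|_E$ would be orthogonal to $V|_E=\F_q^E$, forcing $y|_E=0$, a contradiction. The same argument with $U$ handles $y\in C_X$.

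This is \Cref{lem:hierarchical_admissible_supports}. It applies at each level separately, so no induction over levels is needed. It shows only that the admissible supports of weight $w$ number \emph{at most} $\mathcal N^{(\boldsymbol\delta)}(N,w)$, not exactly that many, and this upper bound is all the union bound requires.
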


\begin{proof}[Proof idea]
The probabilistic part is unchanged from \Cref{restatement prop:distance_random_r_delta_qlrc}. The only difference is the support count. A low-weight nonzero word that survives the deterministic hierarchical checks must be nonempty in a nested pattern of local blocks, and every nonempty level-$l$ block must have weight at least $\delta_l$. The generating functions $B_l(z)$ enumerate exactly these admissible patterns, so the same random-CSS union bound gives the result.
\end{proof}

\subsection{\texorpdfstring{Explicit $(r,\delta)$ quantum Tamo--Barg codes}{Explicit (r,delta) quantum Tamo--Barg codes}}

Let $q$ be a prime power, let $n:=r+\delta-1$ divide $q-1$, and let $\Omega_{n}\subseteq \F_q^*$ be the subgroup of $n$th roots of unity. For $S\subseteq [q-1]$, write
\[
    \F_q[X]^S=\left\{\sum_{i\in S}f_iX^i:f_i\in\F_q\right\},
    \qquad
    \ev(f)=(f(x))_{x\in\F_q^*}.
\]
Define residue sets
\[
    S_+=\bigcup_{j=1}^{\delta-1}(j+n\Z),
    \qquad
    S_-=\bigcup_{j=1}^{\delta-1}(-j+n\Z).
\]
The $(r,\delta)$-QTB code is $\mathcal Q=\CSS(C,C)$, where $C=\ev(\F_q[X]^S)$ and
\[
    S=\left([\ell]\setminus S_-\right)\cup\left([q-1]\cap S_+\right).
\]
for $q/2 \leq \ell \leq q-1$. The first part is the usual Tamo--Barg exponent set \cite{tamo.barg.lrc}, while the second part is added to ensure $C$ is dual-containing and to supply local checks. Since $\ell\ge q/2$, $C^\perp\subseteq C$. The dimension is
\[
 k=1+\left|\{q - \ell \leq i \leq \ell - 1 : i \not\in (S_{+} \cup S_{-})\}\right|
\]
and hence
\[
    k=1+(2\ell-q)\left(1-\frac{2(\delta-1)}{r+\delta-1}\right)+O(\delta).
\]

The locality proof identifies the dual part
\(
    B^\perp=\ev(\F_q[X]^{[q-1]\cap S_+})
\)
with functions that, on each coset $\alpha\Omega_{n}$, agree with a polynomial in $\omega$, a primitive $n$th root of unity, of degree at most $\delta-1$ and with no constant term. Therefore, on each coset, $B^\perp$ contains $\delta-1$ independent Vandermonde checks supported entirely on that coset. Since each coset has size $r+\delta-1$, any $\delta-1$ erasures in the coset are locally recoverable. By the CSS/classical equivalence, $\CSS(C,C)$ is an $(r,\delta)$-QLRC.

\subsubsection{The non-vanishing theorem}

A key ingredient in proving our distance bound for the $(r,\delta)$-QTB is a theorem about a polynomial forced to vanish at consecutive roots of unity. This argument is not necessary in the case $\delta = 2$, but we require it to adapt the distance proof technique of \cite{golowich2025quantum}. We first proof the theorem over $\C$ and apply a reduction to obtain the result over finite fields.

\begin{theorem}[restatement of \Cref{thm:Q_b_has_delta-1_roots}]
\label{restatement thm:Q_b_has_delta-1_roots}
Let $r\ge\delta\ge 3$, set $n=r+\delta-1$, and let $\zeta \in \C$ be a primitive $n$th root of unity. For $b\in\{\delta-1,\ldots,n-1\}$, let
\[
    Q_b(Y)=Y^b+\sum_{t=0}^{\delta-2}v_tY^t
\]
be the unique polynomial satisfying $Q_b(\zeta^t)=0$ for $t=0,\ldots,\delta-2$. Then these are the only roots of $Q_b$ among the $n$th roots of unity:
\[
    Q_b(\zeta^s)\ne 0
    \qquad\text{for all }s=\delta-1,\ldots,n-1.
\]
\end{theorem}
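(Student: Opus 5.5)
The plan is to turn the vanishing condition into a statement about a complete homogeneous symmetric polynomial evaluated at roots of unity. Write $x_t=\zeta^t$ for $t=0,\ldots,\delta-2$ and $\omega(Y)=\prod_{t=0}^{\delta-2}(Y-x_t)$. Since $Q_b$ is monic of degree $b$ and vanishes at every $x_t$, the polynomial $Y^b-Q_b(Y)$ has degree at most $\delta-2$ and agrees with $Y^b$ at the $\delta-1$ nodes. Hence it is the Lagrange interpolant of $Y^b$, which justifies calling $Q_b$ the unique such polynomial. By the Newton remainder formula, the interpolation error of a monomial is
\[
Q_b(Y)=\omega(Y)\, f[x_0,\ldots,x_{\delta-2},Y],\qquad f(Y)=Y^b .
\]
The divided difference of $Y^b$ over $\delta$ nodes is the complete homogeneous symmetric polynomial $h_{b-\delta+1}$ in those nodes. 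For $s\in\{\delta-1,\ldots,n-1\}$ the point $\zeta^s$ is distinct from every $x_t$, so $\omega(\zeta^s)\neq 0$. The theorem is therefore equivalent to
\[
h_m\bigl(1,\zeta,\ldots,\zeta^{\delta-2},\zeta^s\bigr)\neq 0\qquad\text{for } 0\le m=b-\delta+1\le r-1 .
\]
Equivalently, the generalized Vandermonde determinant with exponents $0,1,\ldots,\delta-2,b$ at these $\delta$ nodes is nonzero. This is the step I would carry out first, since it is routine.

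Second, I would compute this symmetric function explicitly using the geometric structure of the nodes. For the first $\delta-1$ nodes, the generating function $\prod_{t=0}^{\delta-2}(1-\zeta^t z)^{-1}$ gives the $q$-binomial identity
\[
h_j(1,\zeta,\ldots,\zeta^{\delta-2})=\binom{j+\delta-2}{\delta-2}_{\zeta}.
\]
Adjoining the extra variable gives
\[
h_m=\sum_{j=0}^{m}\zeta^{s(m-j)}\binom{j+\delta-2}{\delta-2}_{\zeta}.
\]
There is also the partial-fraction form $h_m(y_0,\ldots,y_{k})=\sum_i y_i^{m+k}/\prod_{j\ne i}(y_i-y_j)$, which turns the condition into a single explicit sum over the $\delta$ nodes. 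I would pick whichever form collapses more cleanly. My first attempt is to write $h_m=P(\zeta)$ with $P\in\Z[X]$ having nonnegative coefficients, reduce $P$ modulo $X^n-1$, and show the reduction is not divisible by the cyclotomic polynomial $\Phi_n$. I would exploit the constraints $m\le r-1$ and $\delta-1\le s\le n-1$, which keep the degrees small relative to $n$.

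The main obstacle is this final non-vanishing step. Gaussian binomials at a primitive $n$th root of unity do vanish in general, by $q$-Lucas, so the argument cannot bound terms individually. It must use the interplay between the $\zeta^s$ term and the $q$-binomials, and the exact ranges $m<r$ and $s\ge \delta-1$. Two fallbacks are available if the cyclotomic argument is too messy. One is an induction on $\delta$ that removes one node at a time via $h_m(X\cup\{y\})-h_m(X\cup\{y'\})=(y-y')h_{m-1}(X\cup\{y,y'\})$. The other is an argument-of-complex-number estimate, placing $(1-\zeta^s)h_m$, or the corresponding telescoped expression, in an open half-plane. The reduction to finite fields is done separately later in the paper and is not needed for this statement over $\C$.
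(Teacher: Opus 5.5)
\textbf{Verdict: genuine gap.} Your reduction is correct and equivalent to the paper's, but the non-vanishing step, which is the actual content of the theorem, is never proved.

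\textbf{The reduction.} The Newton remainder formula gives the polynomial identity $Q_b(Y)=\prod_{t=0}^{\delta-2}(Y-\zeta^t)\,h_{b-\delta+1}(1,\zeta,\ldots,\zeta^{\delta-2},Y)$. The paper obtains this instead from Jacobi's bialternant formula and Jacobi--Trudi; your divided-difference derivation is, if anything, cleaner. But this is the easy half.

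\textbf{The missing step.} What must be shown is $h_m(1,\zeta,\ldots,\zeta^{\delta-2},\zeta^s)\neq 0$ for $0\le m\le r-1$ and $\delta-1\le s\le n-1$, and none of your routes establishes it. Showing that $P \bmod (X^n-1)$ is not divisible by $\Phi_n$ just restates $P(\zeta)\neq 0$. Nonnegative coefficients say nothing at points of the unit circle other than $1$; consider $1+X+\cdots+X^{n-1}$. For composite $n$ there is no usable divisibility criterion of this kind, and composite $n$ is exactly the case the theorem is needed for. (For prime $n$ it follows quickly from Tao's uncertainty principle applied to $Q_b$, which has at most $\delta$ terms.) The identity $h_m(X\cup\{y\})-h_m(X\cup\{y'\})=(y-y')h_{m-1}(X\cup\{y,y'\})$ does not transport non-vanishing in either direction. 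The half-plane idea comes without the estimate that would make it work.

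A telling sign is that nothing in your plan uses $\delta\ge 3$, yet the statement fails for $\delta=2$: with $n=4$ and $b=s=2$ one gets $h_1(1,\zeta^2)=1+\zeta^2=0$.

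Your diagnosis of the obstacle is also off. For $0\le j\le m\le n-\delta$,
$\binom{j+\delta-2}{\delta-2}_\zeta=\prod_{k=1}^{\delta-2}\frac{1-\zeta^{j+k}}{1-\zeta^k}$ with $1\le j+k\le n-2$. So none of the Gaussian binomials in your sum vanish; $q$-Lucas vanishing needs $j+\delta-2\ge n$. The paper's proof works precisely by controlling these terms individually.

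\textbf{The idea you need: pin down their phases.}
\begin{itemize}
\item \emph{Normalization.} The quantity is $A(\zeta)$ for a fixed $A\in\Z[X]$, so by Galois conjugation you may take $\zeta=e^{2\pi i/n}$.
\item \emph{Phase and positivity.} Then $h_j(1,\zeta,\ldots,\zeta^{\delta-2})=\zeta^{j(\delta-2)/2}\rho_j$ with $\rho_j=\prod_{k=1}^{\delta-2}\sin(\pi(j+k)/n)/\sin(\pi k/n)>0$. After the rotation $Y=\zeta^{(\delta-2)/2}z$, the polynomial $Y\mapsto h_m(1,\zeta,\ldots,\zeta^{\delta-2},Y)=\sum_{u=0}^{m}h_{m-u}(1,\zeta,\ldots,\zeta^{\delta-2})Y^u$ becomes a nonzero multiple of $\sum_{u=0}^{m}\rho_{m-u}z^u$.
\item \emph{Strict monotonicity.} A sine-difference computation gives $\rho_{j+1}>\rho_j$ whenever $2j+\delta<n$. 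The strictness is exactly where $\delta\ge 3$ enters; for $\delta=2$ all $\rho_j=1$.
\item \emph{Lower half of the range.} For $m\le\lfloor (n-\delta)/2\rfloor$ the coefficients therefore strictly decrease. Enestr\"om--Kakeya then puts every zero strictly outside the closed unit disk, so in particular $\zeta^s$ is not a zero. This is the precise form of your half-plane heuristic.
\item \emph{Upper half: reciprocity.} Past the midpoint, $\rho_0,\ldots,\rho_m$ is no longer monotone (indeed $\rho_j=\rho_{n-\delta+1-j}$), so Enestr\"om--Kakeya alone does not reach the unit circle. From $\prod_{u=0}^{n-1}(1-\zeta^uT)=1-T^n$ one gets $h_j(1,\zeta,\ldots,\zeta^{\delta-2},\zeta^s)=[T^j]\prod_{u=\delta-1,\,u\neq s}^{n-1}(1-\zeta^uT)=:c_j$ for $j\le n-\delta$. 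Since these roots are unimodular, $c_{n-\delta-j}=\varepsilon\,\overline{c_j}$ with $|\varepsilon|=1$. Hence vanishing at $m$ and at $n-\delta-m$ are equivalent. This is the point where the range of $s$ enters the non-vanishing argument.
\end{itemize}
Without the phase/positivity structure, the strict monotonicity, and the reciprocity, your proposal stops at a reformulation of the theorem rather than a proof.
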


\begin{proof}[Proof idea]
The proof first expresses $Q_b(\zeta^s)$ as a quotient of two generalized Vandermonde determinants. By Jacobi's bialternant formula and the Jacobi--Trudi identity \cite{cauchy1815memoire,Jacobi1841,trudi1862teoria,Macdonald_2008}, this quotient becomes
\[
    Q_b(\zeta^s)
    =h_{b-\delta+1}(1,\zeta,\ldots,\zeta^{\delta-2},\zeta^s)
      \prod_{t=0}^{\delta-2}(\zeta^s-\zeta^t),
\]
where $h_m$ is the complete homogeneous symmetric polynomial. The second factor is nonzero for $s\ge \delta-1$, so the task is to prove non-vanishing of the specialized $h_m$. The non-vanishing of \(h_{b-\delta+1}(1,\zeta,\ldots,\zeta^{\delta-2},\zeta^s)\) follows from \Cref{lem:homogenous_poly_roots_of_unity}. Its proof utilizes a generating-function argument for complete homogeneous symmetric polynomials \cite{Macdonald_2008} which gives
\[
    h_j(1,\zeta,\ldots,\zeta^{\delta-2},\zeta^s)
    =[T^j]\prod_{\substack{u=\delta-1\\u\ne s}}^{n-1}(1-\zeta^uT)
    \quad (0\le j\le r-1).
\]
A reciprocal symmetry allows us to consider $j\le (r-1)/2$. Replacing $\zeta^s$ by $z$ and expressing the homogeneous polynomials as a complex polynomial and applying a form of Enestr\"om--Kakeya \cite{Gardner2014} proves the desired non-vanishing property.
\end{proof}

\begin{corollary}[restatement of \Cref{cor:Q_b_finite_field_version_outside_finitely_many_char}]
\label{restatement cor:Q_b_finite_field_version_outside_finitely_many_char}
The theorem transfers to finite fields outside finitely many characteristics. For fixed $(r,\delta)$ define
\[
    \mathcal M_{r,\delta}
    :=\prod_{m=0}^{r-1}\prod_{s=\delta-1}^{n-1}
    \operatorname{Res}\left(h_m(1,X,\ldots,X^{\delta-2},X^s),\Phi_{n}(X)\right),
\] where $\Phi_n(X)$ is the $n^{th}$ cyclotomic polynomial.
\Cref{restatement thm:Q_b_has_delta-1_roots} implies $\mathcal M_{r,\delta}\ne0$. Therefore, if $\operatorname{char}(\F_q)\nmid\mathcal M_{r,\delta}$ and $n\mid(q-1)$, the same non-vanishing conclusion holds over $\F_q$.
\end{corollary}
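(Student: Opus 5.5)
We would prove the corollary by transferring the complex statement to $\F_q$ through the factorization of $Q_b(\zeta^s)$ and a resultant argument.

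\textbf{Step 1: $\mathcal M_{r,\delta}\neq 0$.} The polynomial $h_m(1,X,\ldots,X^{\delta-2},X^s)$ has integer coefficients. Since $\Phi_n$ is monic and irreducible over $\Q$, its resultant with $\Phi_n$ equals, up to sign, the product of the values $h_m(1,\xi,\ldots,\xi^{\delta-2},\xi^s)$ over all primitive $n$th roots of unity $\xi$. \Cref{restatement thm:Q_b_has_delta-1_roots} holds for every primitive $n$th root of unity $\zeta$, and its proof shows that $h_m(1,\zeta,\ldots,\zeta^{\delta-2},\zeta^s)\neq 0$ for $0\le m\le r-1$ and $\delta-1\le s\le n-1$; this is \Cref{lem:homogenous_poly_roots_of_unity}. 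Here $m=b-\delta+1$ ranges over $0,\ldots,r-1$. Hence each resultant is a nonzero integer, and so is the product $\mathcal M_{r,\delta}$.

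\textbf{Step 2: pass to $\F_q$.} Assume $p=\operatorname{char}\F_q\nmid \mathcal M_{r,\delta}$ and $n\mid q-1$. Then $p\nmid n$, so $\Phi_n$ reduced mod $p$ splits over $\F_q$ with the primitive $n$th roots of unity in $\F_q$ as its roots. The resultant is given by the Sylvester determinant, which commutes with reduction mod $p$. The degree of $\Phi_n$ is unchanged mod $p$ because it is monic. If the degree of the $h_m$ polynomial drops, this only multiplies the resultant by a power of the leading coefficient of $\Phi_n$, which is $1$. So the reduced resultant equals the product of the values $\bar h_m(1,\omega,\ldots,\omega^{\delta-2},\omega^s)$ over primitive $n$th roots $\omega\in\F_q$, and this product is nonzero in $\F_q$. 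Therefore each factor $h_m(1,\omega,\ldots,\omega^{\delta-2},\omega^s)$ is nonzero in $\F_q$.

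\textbf{Step 3: redo the factorization over $\F_q$.} The identity
\[
Q_b(\omega^s)=h_{b-\delta+1}(1,\omega,\ldots,\omega^{\delta-2},\omega^s)\prod_{t=0}^{\delta-2}(\omega^s-\omega^t)
\]
comes from Cramer's rule and the bialternant formula. It is a polynomial identity over $\Z$ once the Vandermonde denominator is cleared, so it holds in any commutative ring. Over $\F_q$, the Vandermonde matrix at $1,\omega,\ldots,\omega^{\delta-2}$ is invertible because $\omega$ has order $n>\delta-2$, so $Q_b$ is well defined and unique. The product $\prod_t(\omega^s-\omega^t)$ is nonzero because the powers $\omega^0,\ldots,\omega^{n-1}$ are distinct. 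Combined with Step 2, this gives $Q_b(\omega^s)\neq 0$ for all $s=\delta-1,\ldots,n-1$.

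\textbf{Main obstacle.} The main care point is Step 2: justifying that the resultant specializes correctly mod $p$ and factors as a product over the roots of $\Phi_n$ in $\F_q$. Monicity of $\Phi_n$ is what handles a possible degree drop in $h_m$. A smaller point is checking that the range of $m$ in $\mathcal M_{r,\delta}$ covers every $b$: $b$ runs over $\delta-1,\ldots,n-1$, so $m=b-\delta+1$ runs over $0,\ldots,r-1$, as required.
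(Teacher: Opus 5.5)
Your proposal is correct and follows essentially the same route as the paper: nonvanishing of each resultant via \Cref{lem:homogenous_poly_roots_of_unity} at every primitive $n$th root, reduction modulo $p$, and the field-independent factorization $Q_b(\omega^s)=h_{b-\delta+1}(1,\omega,\ldots,\omega^{\delta-2},\omega^s)\prod_{t=0}^{\delta-2}(\omega^s-\omega^t)$. The only difference is cosmetic. The paper argues by contradiction: it first checks that $\omega$ is a root of $\overline{\Phi_n}$ and then applies the common-root criterion of \Cref{prop:resultant_mod_p}. You instead use the root-product formula for the reduced resultant directly, and you handle a possible degree drop correctly through the monicity of $\Phi_n$.
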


\begin{proof}[Proof idea]
\Cref{restatement thm:Q_b_has_delta-1_roots} implies that the relevant specialized symmetric polynomials have no common root with the cyclotomic polynomial $\Phi_{n}$. Equivalently, the associated resultants are nonzero integers. Reducing modulo a prime that does not divide their product preserves non-vanishing over finite fields.
\end{proof}

This is stronger than what follows from the composite-order uncertainty principle: for prime $n$, Tao's uncertainty principle \cite{tao2003} gives the result immediately because $|Q_b|\le\delta$, but for composite $n$ Meshulam's uncertainty bound \cite{MESHULAM200663} is generally too weak. Thus, $Q_b$ is a concrete family for which the composite-order uncertainty principle is not tight enough, while the algebraic-combinatorial proof still succeeds. This is an important consequence because \cite{golowich2025quantum} relies on the primality of $r+1$ in the case of $\delta = 2$ (which is equivalent to using the uncertainty principle). However, if we try to use Tao's uncertainty principle, then the hierarchy (\Cref{subsec:explicit_hqlrc}) necessarily collapses and as we noted Meshulam's uncertainty principle is too weak.

\subsubsection{Distance of QTB codes}
In the following theorem, we demonstrate a lower bound on the minimum distance of $(r,\delta)$ QTB codes. 
\begin{theorem}[restatement of \Cref{thm:qtb_distance}]
\label{restatement thm:qtb_distance}
Assume either $\delta\ge3$ and $\operatorname{char}(\F_q)\nmid\mathcal M_{r,\delta}$, or $\delta=2$ and $r+1$ is prime. Then the $(r,\delta)$-QTB code has distance at least
\[
\frac{q-1}{2}\left(
\frac{1}{\delta-1}+\frac{r}{r+\delta-1}
-\sqrt{\left(\frac{r}{r+\delta-1}-\frac{1}{\delta-1}\right)^2
+\frac{4r}{(\delta-1)(r+\delta-1)}\cdot\frac{\ell-1}{q-1}}
\right).
\]
\end{theorem}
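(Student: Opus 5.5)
We bound $\min\{\mathrm{wt}(c): c\in C\setminus C^\perp\}$ directly at the level of polynomials, adapting the coset-by-coset counting argument of \cite{golowich2025quantum}. Fix $f\in\F_q[X]^S$ with $\ev(f)\in C\setminus C^\perp$.

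\textbf{Step 1 (decomposition).} Group the monomials of $f$ by exponent residue modulo $n=r+\delta-1$:
\[
f(X)=\sum_{j=0}^{n-1}X^j g_j(X^n).
\]
Split this as $f=f_{\mathrm{TB}}+f_{+}$, where $f_{+}$ collects the residues $j\in\{1,\dots,\delta-1\}$ (the $S_+$ part, which spans $B^\perp$) and $f_{\mathrm{TB}}$ collects the remaining Tamo--Barg residues of degree $<\ell$. Using the identification of $C^\perp$ with evaluations on the reflected exponent set, the condition $\ev(f)\notin C^\perp$ forces some monomial of $f_{\mathrm{TB}}$ to have exponent in $[q-\ell,\ell-1]\setminus(S_+\cup S_-)$. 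In particular $f_{\mathrm{TB}}\neq 0$, and its top "high" residue $b\in\{\delta-1,\dots,n-1\}$ has a nonzero coefficient polynomial $g_b$ of degree at most $(\ell-1)/n$ in $X^n$.

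\textbf{Step 2 (local analysis on a coset).} On a coset $\alpha\Omega_n$, write $y=\omega^t$ and $c_j=\alpha^j g_j(\alpha^n)$. Then $f|_{\alpha\Omega_n}$ is the polynomial $P_\alpha(y)=\sum_j c_j y^j$ in $y$, of degree at most $n-1$. We classify the cosets:
\begin{enumerate}[label=(\roman*)]
\item Cosets on which $f$ is identically zero contribute nothing.
\item On cosets where the top high coefficient $c_b\neq0$, we want to show that $P_\alpha$ has at most $\delta-1$ zeros among the $n$th roots of unity, so the coset contributes weight at least $r$. Suppose instead $P_\alpha$ vanishes at $\delta-1$ roots; after a rotation these are $\zeta^0,\dots,\zeta^{\delta-2}$. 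Then $P_\alpha$, normalized, agrees with $Q_b$ on the relevant span, and \Cref{restatement thm:Q_b_has_delta-1_roots} (over $\F_q$, via \Cref{restatement cor:Q_b_finite_field_version_outside_finitely_many_char}) rules out any further root. For $\delta=2$ we use primality of $r+1$ instead.
\item On the remaining cosets $c_b=0$, i.e.\ $g_b(\alpha^n)=0$. There are at most $\deg g_b\le(\ell-1)/n$ such cosets. On each of them $f$ is nonzero but lies in a code of local distance $\delta$ coming from the $\delta-1$ Vandermonde checks, so it contributes weight at least $\delta$ (or at least $1$ for the part living only in the $(\delta-1)$-dimensional $S_+$ span).
\end{enumerate}

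\textbf{Step 3 (global counting and optimization).} Let $x$ be the fraction of cosets of type (ii) and $y$ the fraction of type (iii). Step 2 gives
\[
\mathrm{wt}\ \gtrsim\ m\bigl(x\,r+y\cdot(\text{type-(iii) weight})\bigr),\qquad y\le \frac{\ell-1}{q-1}.
\]
We pair this with a second, global inequality: treating $f$ as a univariate polynomial of degree $<q-1$ split into its $\delta-1$ "low" residue components, the number of zeros of $f$ is at most about $(\ell-1)/(\delta-1)$ per component. This converts into a lower bound on $x+y$ weighted by $\tfrac{1}{\delta-1}$ versus $\tfrac{r}{r+\delta-1}$. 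Minimizing the weight over $(x,y)$ subject to both constraints gives a quadratic, whose smaller root is exactly the stated expression: the radicand combines $\bigl(\tfrac{r}{n}-\tfrac1{\delta-1}\bigr)^2$ with the product term $\tfrac{4r}{(\delta-1)n}\cdot\tfrac{\ell-1}{q-1}$.

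\textbf{Main obstacle.} I expect the hardest part to be the second global inequality and the resulting coupled optimization. The point is to show that the cosets of types (ii) and (iii) cannot both be few, which requires controlling the zeros of $f$ through the interplay of residue degrees. I would first try a Schwartz--Zippel-style count on the bivariate representation $(X^n, X)$ and check that this yields a product-form constraint $xy\gtrsim$ const. Such a constraint is what produces the square root in the final bound.
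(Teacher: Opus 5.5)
Your Step 2(ii) is false, and the whole count collapses with it.

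On a coset $\alpha\Omega_n$ the restriction $P_\alpha(y)$ has degree at most $r-1$ in $y$. The exponents of $f$ have residues in $\{0,\ldots,n-\delta\}=\{0,\ldots,r-1\}$: $S_-$ removes exactly the residues $r,\ldots,n-1$, and the $S_+$ residues $1,\ldots,\delta-1$ already lie in that range. So locally you are in a Reed--Solomon code of length $n$ and dimension $r$, whose distance is only $\delta$. A nonzero $P_\alpha$ can vanish at $r-1$ points of the coset no matter which of its coefficients are nonzero.

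The non-vanishing theorem says nothing here. It concerns only the $\delta$-sparse polynomial $Q_b$ with support $\{0,\ldots,\delta-2,b\}$. A general $P_\alpha$ vanishing at $\zeta^0,\ldots,\zeta^{\delta-2}$ has no reason to be supported there, so ``normalized, agrees with $Q_b$'' is unjustified.

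Here is a concrete counterexample. If $r\le \ell$, then $f(X)=\prod_{k=1}^{r-1}(X-\alpha\omega^k)$ lies in $\F_q[X]^S$. Moreover $\ev(f)\notin C^\perp$, because $f$ has a nonzero constant term and $0\notin T$. Since $\deg f<n$, every coefficient polynomial $g_j$ is constant, so whichever $b$ you pick has $c_b\neq 0$ on every coset. Yet $|\ev(f)|_{\alpha\Omega_n}=\delta<r$ whenever $r>\delta$.

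As a sanity check, if (ii) held your count would give relative distance about $\frac{r}{n}\bigl(1-\frac{\ell-1}{q-1}\bigr)$. Asymptotically this coincides with the Singleton-like bound of \Cref{thm:singleton_qlrc}, which is far more than the theorem asserts.

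Step 3 cannot repair this, because it contains no actual argument. Since $f=g+h$ with $h\in\F_q[X]^{[q-1]\cap S_+}$ of degree up to $q-2$, no degree count on $f$ or on its residue components bounds its zeros by anything of order $\ell$. The product constraint $xy\gtrsim$ const is never derived.

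The missing idea is an auxiliary polynomial that removes $h$ before you count roots.
\begin{itemize}
\item For $i\in\{\delta-1,\ldots,n-1\}$, set $g_i(X)=\omega^{-i}f(\omega^iX)+\sum_{t=0}^{\delta-2}v_{i,t}\omega^{-t}f(\omega^tX)$, where $Q_i(Y)=Y^i+\sum_t v_{i,t}Y^t$ vanishes at $1,\omega,\ldots,\omega^{\delta-2}$.
\item On each coset, $h$ is a polynomial in $\omega$ of degree at most $\delta-1$ with no constant term (\Cref{lem:low.wt.parity.checks}). Hence this transform annihilates $h$, and $g_i$ equals the transform of $g$.
\item On monomials of $g$ the transform acts diagonally, $X^j\mapsto Q_i(\omega^{j-1})X^j$. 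The factor $Q_i(\omega^{j-1})$ is nonzero for $j\notin S_+$; this, not a bound on the roots of $P_\alpha$, is where the non-vanishing theorem enters.
\item Hence $G=\prod_i g_i$ is nonzero of degree at most $r(\ell-1)$. Also $g_i(x)=0$ whenever $f$ vanishes at $x,\omega x,\ldots,\omega^{\delta-2}x,\omega^ix$.
\item On a coset of weight $w_A$ there are at least $(n-(\delta-1)w_A)_+$ starting points of $\delta-1$ consecutive zeros. Each has at least $(r-w_A)_+$ admissible $i$, so $G$ gets at least $\psi(w_A)=(r-w_A)_+(n-(\delta-1)w_A)_+$ roots there.
\item Jensen's inequality (convexity of $\psi$) together with $\deg G\le r(\ell-1)$ gives $\psi\bigl(n|\ev(f)|/(q-1)\bigr)\le rn\frac{\ell-1}{q-1}$. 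The smaller root of this quadratic is exactly the stated bound.
\end{itemize}
That quadratic in the weight, not a trade-off between two coset types, is the source of the square root.
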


\begin{proof}[Proof idea]
Take a nonzero $\ev(f)\in C\setminus C^\perp$ and decompose $f=g+h$, where $h$ is the piecewise low-degree dual part (as characterized in \Cref{lem:low.wt.parity.checks}) and $g$ is supported on $[\ell]\setminus(S_+\cup S_-)$. For each admissible $i\in\{\delta-1,\ldots,n-1\}$, the polynomial $Q_i$ defines a linear transform
\(g_i(X)=\omega^{-i}g(\omega^iX)+\sum_{t=0}^{\delta-2}v_{i,t}\omega^{-t}g(\omega^tX). \)
The transform annihilates the dual part $h$, and the non-vanishing theorem implies each $g_i$ is nonzero. Hence
\(
    G(X)=\prod_{i=\delta-1}^{n-1}g_i(X)
\)
is nonzero and has degree at most $r(\ell-1)$.

Let $n = r+\delta -1$. The lower bound on the number of roots of $G$ is combinatorial. On a coset $A=\alpha\Omega_{n}$, let $w_A$ be the weight of $f$ on $A$. Define $y_+ := \max \{y,0\}$. The number of starting points of a consecutive block of $\delta-1$ zeros is at least $(n-(\delta-1)w_A)_+$, and for each such starting point at least $(r-w_A)_+$ admissible transforms also see a zero. Thus the root contribution from $A$ is at least
\(
    \psi(w_A):=(r-w_A)_+\left(n-(\delta-1)w_A\right)_+.
\)
The function $\psi$ is convex and nonincreasing on $[0,\infty)$ (\Cref{lem:psi_convex_monotone}), so Jensen's inequality converts the sum over cosets into a function of the total weight. Comparing this root lower bound with $\deg G\le r(\ell-1)$ and solving the resulting quadratic gives the distance bound.
\end{proof}

\subsection{Explicit hierarchical quantum Tamo--Barg codes}\label{subsec:explicit_hqlrc}

In case of hierarchy, for $l = 1,\ldots, h$ define
\[
S_{l,+}=\bigcup_{j=1}^{\delta_l-1}(j+n_l\Z),
\qquad
S_{l,-}=\bigcup_{j=1}^{\delta_l-1}(-j+n_l\Z),
\qquad
S_+=\bigcup_l S_{l,+},\quad S_-=\bigcup_l S_{l,-}.
\]
The $h$-level HQTB code is $\mathcal Q=\CSS(C,C)$ with
\(    C=\ev(\F_q[X]^S),\)
    where
    \( S=([\ell]\setminus S_- )\cup([q-1]\cap S_+).
\)
A disjointness lemma (\Cref{lem:disjointness}) shows $S_{u,+}\cap S_{v,-}=\emptyset$ for all nested levels, using the divisibility chain $n_h\mid\cdots\mid n_1$ and the inequalities on $r_l,\delta_l$. Consequently, the same dual-containment argument as in the one-level case gives $C^\perp\subseteq C$. The dimension is
\( k=1+\left|\{q-\ell\le i\le \ell-1:i\notin S_+\cup S_-\}\right|
\) and asymptotically
\[
 k=1+(2\ell-q)\left(1-2\sum_{l=1}^{h-1}\frac{\delta_l-\delta_{l+1}}{n_l}
           -2\frac{\delta_h-1}{n_h}\right)+O(\delta_1).
\]
The locality proof identifies $B^\perp=\ev(\F_q[X]^{[q-1]\cap S_+})$ as a sum of piecewise low-degree polynomial spaces over the nested cosets. Each level-$l$ coset contributes $\delta_l-1$ independent parity checks, and the nesting of root-of-unity subgroups gives nested repair groups. Hence the code is an $h$-level QHLRC.

The proof of the minimum distance of the hierarchical code iterates the one-level root-counting method. For each level $l$, set
\(
    \psi_l(t):=(r_l-t)_+\left(n_l-(\delta_l-1)t\right)_+.
\)
Define recursively
\[
    \Psi_h(t)=\psi_h(t),
    \qquad
    \Psi_l(t)=\psi_l\left(n_l-\frac{n_l}{r_{l+1}n_{l+1}}\Psi_{l+1}(t)\right)
    \quad(l=h-1,\ldots,1),
\]
and
\[
    \Theta_h(t)=\frac{q-1}{n_1}\left(\prod_{l=2}^h r_l\right)\Psi_1(t).
\]
We present a lower bound on the minimum distance of the HQTB code in the following theorem. 
\begin{theorem}[restatement of \Cref{thm:h_level_hqtb_distance_bound}]
\label{restatement thm:h_level_hqtb_distance_bound}
If \Cref{restatement cor:Q_b_finite_field_version_outside_finitely_many_char} holds at every level, and
\[
    \tau_h=\inf\left\{t\in[0,n_h]:\Theta_h(t)\le \left(\prod_{l=1}^h r_l\right)(\ell-1)\right\},
\]
then
\[
    d(\mathcal Q)\ge \frac{q-1}{n_h}\tau_h.
\]
\end{theorem}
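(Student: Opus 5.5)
The plan is to iterate the one-level root-counting argument of \Cref{restatement thm:qtb_distance}, applying the transforms level by level from the innermost level $h$ outward.

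Take a nonzero $\ev(f)\in C\setminus C^\perp$ of weight $w$ and decompose $f=g+h$, where $h\in\F_q[X]^{[q-1]\cap S_+}$ is the piecewise low-degree dual part and $g$ is supported on $[\ell]\setminus(S_+\cup S_-)$. At each level $l$, with $\omega_l$ a primitive $n_l$th root of unity, define for $i\in\{\delta_l-1,\dots,n_l-1\}$ the level-$l$ transform
\[
T^{(l)}_i(g)(X)=\omega_l^{-i}g(\omega_l^iX)+\sum_{t=0}^{\delta_l-2}v^{(l)}_{i,t}\,\omega_l^{-t}g(\omega_l^tX),
\]
using the polynomial $Q^{(l)}_i$ of \Cref{restatement thm:Q_b_has_delta-1_roots}.

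Each $T^{(l)}_i$ is diagonal on monomials, multiplying $X^j$ by $\omega_l^{-1}Q^{(l)}_i(\omega_l^{j})$ up to normalization. It therefore kills the residues $j\equiv 0,\dots,\delta_l-2 \pmod{n_l}$ after the shift by one, which is exactly $S_{l,+}$. It also preserves degree and monomial support.

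The composite transforms are the products $T^{(1)}_{i_1}\circ\cdots\circ T^{(h)}_{i_h}$. There are $\prod_l r_l$ of them. Their product
\[
G=\prod_{i_1,\dots,i_h}T^{(1)}_{i_1}\cdots T^{(h)}_{i_h}(g)
\]
has degree at most $(\prod_l r_l)(\ell-1)$. Non-vanishing of $G$ follows from \Cref{restatement cor:Q_b_finite_field_version_outside_finitely_many_char} at every level, provided the surviving exponents of $g$ avoid every $S_{l,\pm}$; this is guaranteed by \Cref{lem:disjointness}. Each monomial then gets a nonzero multiplier, and $g\neq 0$ because $f\notin C^\perp$.

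The lower bound on roots comes from bottom-up induction over the nested cosets. At level $h$, a level-$h$ coset carrying weight $t$ contributes at least $\psi_h(t)$ pairs (point, transform index) at which the level-$h$ transform vanishes. This is the one-level argument verbatim.

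For the inductive step, pass to level $l<h$. A level-$l$ coset consists of $n_l/n_{l+1}$ level-$(l+1)$ cosets. The vanishing count $\Psi_{l+1}$ from level $l+1$, normalized by the $r_{l+1}$ transforms and by the coset size $n_{l+1}$, yields an effective number of nonzero points seen by the level-$l$ transforms. That number is at most $n_l-\frac{n_l}{r_{l+1}n_{l+1}}\Psi_{l+1}(t)$. Feeding it into $\psi_l$ gives $\Psi_l$.

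To make this rigorous, I would show that the composite functions $T^{(l+1)}\circ\cdots$ play the role of $f$ in the one-level argument. Averaging over the $r_{l+1}\cdots r_h$ inner indices and applying Jensen then finishes the step. This requires convexity and monotonicity of each $\Psi_l$, extended from \Cref{lem:psi_convex_monotone} via composition: $\psi_l$ is nonincreasing and convex, and the inner map is nondecreasing in $t$. Establishing that composition property cleanly is the step I would check most carefully.

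Summing over the $(q-1)/n_1$ level-$1$ cosets and multiplying by the inner transform counts $\prod_{l\ge2}r_l$ gives at least $\Theta_h(t)$ roots of $G$. Here $t=n_hw/(q-1)$ is the average weight per level-$h$ coset. Jensen applies because $\Theta_h$ is convex and nonincreasing in the per-coset weight.

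Since $G\neq0$, we need $\Theta_h(t)\le(\prod_l r_l)(\ell-1)$, so $t\ge\tau_h$ by monotonicity. This gives $w\ge \frac{q-1}{n_h}\tau_h$.

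The main obstacle is the inductive step. Making precise how zeros of inner transforms translate into zeros visible to outer shifts requires care, because outer transforms mix level-$(l+1)$ cosets. A further difficulty is that the count must be carried as an averaged quantity rather than pointwise.
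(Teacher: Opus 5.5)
Your proposal is correct and follows essentially the same route as the paper: the same level-by-level diagonal operators, the same product polynomial $G$ of degree at most $(\prod_l r_l)(\ell-1)$, and the same bottom-up incidence induction with two Jensen steps per level (over the inner indices for $\psi_l$, and over the child cosets for $\Psi_{l+1}$), which is exactly the content of \Cref{lem:recursive_psi_jensen}. There are a few small slips, none affecting the argument: the monomial multiplier is $Q^{(l)}_i(\omega_l^{j-1})$, not $\omega_l^{-1}Q^{(l)}_i(\omega_l^{j})$; $\supp(g)\cap S_{l,+}=\emptyset$ holds by construction of the decomposition, not by \Cref{lem:disjointness}; and the intermediate transforms in the root count must be applied to $f$ rather than $g$, which is harmless because the full composite kills every $S_{l,+}$ part and therefore coincides with the composite applied to $g$.
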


\begin{proof}[Proof idea]
The proof recursively uses the proof of \Cref{restatement thm:qtb_distance} at each level of the hierarchy. For complete details see the proof of \Cref{thm:h_level_hqtb_distance_bound}.
\end{proof}

The bound presented here is implicitly stated. An explicit bound is presented in \Cref{cor:h_level_nested_square_root}. As an example, for $h = 2$, we can unravel the bound in \labelcref{eqn:hqtb_distance_bound} to see that \begin{equation}\label{restatement eqn:two_level_hqtb_distance_bound}
    d(\mathcal{Q}) \geq  \frac{q-1}{n_2}\cdot\frac{n_2 + (\delta_2-1)r_2 - \sqrt{(n_2 + (\delta_2-1)r_2)^2 - 4(\delta_2-1)r_2n_2\frac{y_1}{n_1}}}{2(\delta_2-1)}
\end{equation} is the explicit bound where \begin{equation*}
    y_1 = \frac{n_1 + (\delta_1-1)r_1 - \sqrt{(n_1 + (\delta_1-1)r_1)^2 - 4(\delta_1-1)r_1n_1\left(1-\frac{\ell-1}{q-1}\right)}}{2(\delta_1-1)}.
\end{equation*} Additionally, $y_1$ is exactly the one-level bound in \Cref{restatement thm:qtb_distance}.

An important conclusion is that the recursive bound is not automatically stronger than the top-level one-level QTB bound. If $y_l$ denotes the explicit inverse recursion obtained by solving $\psi_l(t)=\theta$, then
\[
    \frac{y_h}{n_h}\le \frac{y_{h-1}}{n_{h-1}}\le\cdots\le\frac{y_1}{n_1}.
\]
This leads us to view hierarchy primarily as improving repair structure and not necessarily as improving the global distance lower bound.

\paragraph{A small two-level hierarchy example.}
For a concrete instance, take \(
    (r_1,\delta_1)=(9,4),(r_2,\delta_2)=(4,3). \)
Then $n_1=12$, $n_2=6$, and $n_2\mid n_1$.  The smallest field satisfying $12\mid(q-1)$ and the finite-characteristic non-vanishing hypotheses for these two levels is $\F_{25}$.  Exact CSS-distance computations for length $24$ give, for example,
\[
\begin{array}{c|cc|cc}
\ell
& k_{\mathrm{1lev}} & d_{\mathrm{1lev}}
& k_{\mathrm{hier}} & d_{\mathrm{hier}}\\
\hline
18 & 6 & 7 & 2 & 7\\
19 & 8 & 6 & 4 & 4\\
20 & 10 & 5 & 4 & 4
\end{array}
\]
where the one-level code uses only the top parameters $(9,4)$ and the hierarchical code has the two-level repair structure.  The example illustrates the main tradeoff: hierarchy creates smaller fallback repair groups of size $6$ inside top-level groups of size $12$, but it need not improve the global minimum distance.

\subsection{Folded QTB and folded HQTB codes}

Folding groups consecutive evaluation positions into larger alphabet symbols. For the one-level code, following the folding perspective for Quantum Tamo--Barg codes \cite{golowich2025quantum}, take $s\mid (q-1)/(r+\delta-1)$ and (for a fixed generator $\omega_{q-1} \in \mathbb{F}_q^*$) group
\(
    F_{\omega_{q-1}^{is}}=
    \{\omega_{q-1}^{is},\omega_{q-1}^{is+1},\ldots,\omega_{q-1}^{is+s-1}\}
\)
into one symbol over $\F_q^s$. Since each coset $\alpha\Omega_{r+\delta-1}$ intersects each folded block in at most one scalar coordinate (\Cref{lem:coset_block_intersection}), folding preserves the $(r,\delta)$ repair groups and preserves rate. The trivial distance lower bound is the unfolded distance divided by $s$, but we prove a stronger distance bound.

\begin{theorem}[restatement of \Cref{thm:fqtb_distance}]
\label{restatement thm:fqtb_distance}
Let
\(
    \lambda:=1-\frac{\ell-1}{q-1}.
\) For prime $n=r+\delta-1$ satisfying \Cref{cor:uncertainty_prime_finite_field}, the folded QTB distance is at least
\(
    \frac{q-1}{s}(\lambda-\epsilon),
\)
where
\[
\epsilon=
\max_{1\le m_f\le n}
\min\left\{
\lambda\frac{m_f-1}{n},
\max_{\max\{1,m_f-(\delta-1)\}\le m_g\le m_f}
\left(\frac{\delta-1}{m_g}+\frac{m_g-1}{s}\right)
\right\},
\] $m_f$ is the number of the nonzero coefficients corresponding to exponents modulo $n$ of a codeword $f = g + h \in C\setminus C^\perp$, and $m_g$ is defined analogously.
\end{theorem}

\begin{proof}[Proof idea]
The proof has two cases: one uses an uncertainty principle over finite fields and the other case uses a determinant polynomial. If the coefficient support modulo $n$ is small, the uncertainty principle forces many nonzero evaluations on each nonzero coset. If the nondual support is large, a determinant polynomial, as in the folded QTB analysis of \cite{golowich2025quantum}, detects many folded zero blocks; the determinant has high multiplicity at low-rank matrices, which converts folded zeros into root multiplicity.
\end{proof}
If $s\ge 2n^2$, this yields an asymptotic bound on the relative distance 
\[
\frac{r+1}{r+\delta-1}\left(\frac12-\frac R2\cdot\frac{r+\delta-1}{r-\delta+1}\right)
-\frac32\sqrt{\frac{\delta-1}{r+\delta-1}\left(\frac12-\frac R2\cdot\frac{r+\delta-1}{r-\delta+1}\right)},
\]
where $R$ is the rate.
The paper also extends folding to hierarchical QTB codes. Choose $s\mid(q-1)/n_1$, which implies $s\mid(q-1)/n_l$ for every level. The folded code has alphabet $\F_q^s$, block length $(q-1)/s$, and the same rate as the unfolded HQTB code. The same coset-block intersection argument applies at every level, so the folded hierarchical code is an $h$-level QHLRC.

The bound on the distance of the folded HQTB is proven similarly to that of the folded QTB distance. We split into two cases: one using an uncertainty principle and another applying the determinant polynomial. The determinant polynomial argument is applied at each level, but the uncertainty principle is only applied to the top level. A detailed description is presented in \Cref{thm:fhqtb_distance}. We note that for the folded HQTB code, we need to use Meshulam's uncertainty principle as it works for composite order rather than Tao's stronger uncertainty principle as in the the one-level folded QTB distance analysis. The primality assumption in Tao's uncertainty principle would collapse the hierarchical structure to one level. 

\subsection{Decoding}
We conclude our summary with a decoding algorithm for $(r,\delta)$ QTB codes. 
\begin{theorem}[restatement of \Cref{thm:dec.r.delta.qtb}]
\label{restatement thm:dec.r.delta.qtb}
The $(r,\delta)$-QTB code can be decoded from any error pattern of size less than \[\frac{q-1}{4}\left(
\frac{1}{\delta-1}+\frac{r}{r+\delta-1}
-\sqrt{\left(\frac{r}{r+\delta-1}-\frac{1}{\delta-1}\right)^2
+\frac{4r}{(\delta-1)(r+\delta-1)}\cdot\frac{\ell}{q-1}}
\right).\] in $(q^{O(\delta)}\operatorname{poly}(r,q))$ time
\end{theorem}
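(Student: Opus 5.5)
The plan is to reduce quantum decoding of $\mathcal Q=\CSS(C,C)$ to classical syndrome decoding of $C$ for the $X$-type and $Z$-type error components separately. Since both stabilizer types come from $C^\perp$, an error $E=X^aZ^b$ with $|\supp(a)\cup\supp(b)|<t$ can be corrected if we can find, from the syndrome $Ha$ (with $H$ a parity-check matrix of $C$, i.e.\ a generator of $C^\perp$), some $a'$ with $Ha'=Ha$ and $a-a'\in C^\perp$; the same applies to $b$. Here $t$ is the stated radius. Note that $t$ is exactly one half of the distance bound of \Cref{restatement thm:qtb_distance}, with $\ell-1$ replaced by $\ell$. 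So it suffices to give an algorithm that, given a syndrome of a vector $e$ with $\mathrm{wt}(e)<t$, returns $e'$ with $e-e'\in C^\perp$. We do not need $e'=e$. Uniqueness modulo $C^\perp$ is automatic from the distance bound: if $e,e'$ both have weight $<t$ and the same syndrome, then $e-e'\in C$ has weight $<2t\le d$, so $e-e'\in C^\perp$. The $\ell$ versus $\ell-1$ slack is what lets the algorithm work with a degree bound of one more.

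To find $e'$ efficiently, I would mimic the distance proof in a Berlekamp--Welch style. Write the received word as $y=c+e$ with $c\in C$. This is obtained by lifting the syndrome to any preimage vector, since $C=\ev(\F_q[X]^S)$ and its parity checks are explicit. The first step handles the dual part $h$ (the piecewise low-degree part of \Cref{lem:low.wt.parity.checks}), which is the obstruction to plain Reed--Solomon decoding. I would apply the transforms $g\mapsto g_i$ of the distance proof, i.e.\ the operators built from $Q_i$ (acting on evaluation vectors by shifting along cosets $\alpha\Omega_n$ and combining with coefficients $v_{i,t}$), directly to $y$. These kill $h$ and map $g$ to polynomials $g_i$ of degree $<\ell$. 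They map $e$ to vectors $e_i$ whose support is contained in the union of the $\delta$-fold shifts of $\supp(e)$ within cosets. Each transformed word $y_i=\ev(g_i)+e_i$ is then a noisy Reed--Solomon word.

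The second step is to decode the $g_i$ jointly rather than individually. Individual Reed--Solomon decoding loses a factor of $\delta$ in weight. Instead I would run a joint interpolation: find an error locator $\Lambda$ and polynomials $N_i$ with $N_i=\Lambda\cdot$ (the relevant product/combination) mirroring the product $G=\prod_i g_i$. Concretely, set up the linear system whose zero-count analysis is exactly the $\psi(w_A)$ root-counting from the proof of \Cref{restatement thm:qtb_distance}. The number of cosets' worth of guaranteed zeros, via convexity of $\psi$ (\Cref{lem:psi_convex_monotone}) and Jensen, exceeds the degree $r\ell$ whenever $\mathrm{wt}(e)<t$. This is why the radius is the quadratic-root expression with $\ell$ and the factor $\frac14$.

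Where a clean linear formulation is hard, I would fall back on brute force over local patterns, which the allowed $q^{O(\delta)}$ factor suggests. For each coset, enumerate candidate erasure-like corrections of $\delta-1$ positions, or guess the coefficients $v_{i,t}$-related values over $q^{O(\delta)}$ choices. Then use local Vandermonde repair (\Cref{restatement thm:Q_b_has_delta-1_roots} guarantees invertibility) to peel off $h$, and decode the remaining $g$ by Reed--Solomon unique decoding in $\mathrm{poly}(q)$. After recovering $c$ modulo $C^\perp$, set $e'=y-c$.

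The main obstacle is this joint decoding step: proving that the interpolation, or the enumeration, returns a valid answer exactly up to the radius $t$ rather than a $\delta$-factor smaller one. This requires redoing the root-counting argument of \Cref{restatement thm:qtb_distance} for a polynomial built from the error locator rather than from a codeword, with weight $2\mathrm{wt}(e)$ replacing $w$.
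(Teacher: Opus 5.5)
Your reduction to classical decoding modulo $C^\perp$, your uniqueness argument (twice the radius is below the distance bound), and your use of the $Q_i$-transforms to kill $h$ all match the paper. The gap is that the step which actually recovers $g$ is never carried out, and that step is the content of the theorem. The ``joint interpolation'' comes with no interpolating polynomial, degree bound, or root count. The enumeration fallback does not work as described. The $v_{i,t}$ are fixed constants determined by $\omega$, so there is nothing to guess. Per-coset guesses for the local pieces of $h$ need a selection rule you do not supply, and enumerating those pieces jointly over all $(q-1)/n$ cosets costs $q^{(\delta-1)(q-1)/n}$. Also, the non-vanishing theorem is not a statement about local repair. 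Its role is that each transform multiplies the monomial $X^j$ of $g$ by the nonzero scalar $Q_i(\omega^{j-1})$, so $g$ can be read off from $g_i$.

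Your worry about decoding the $y_i$ individually is only half right. Individual \emph{unique} decoding can indeed fail at the stated radius. For example, take $\delta=3$, $r=8$, $\ell\approx q/2$, and place the errors one per coset. Then every $y_i$ carries exactly $3\,\mathrm{wt}(e)$ corruptions, which exceeds $(q-\ell)/2$ as $\mathrm{wt}(e)$ approaches the radius.

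The missing idea is that individual \emph{list} decoding suffices, and this is what the paper does. Run the $\psi$-count on the error $e$ rather than on a codeword. The number of pairs $(x,i)$ with $y_i(x)=g_i(x)$ is then at least $\frac{q-1}{n}\psi\bigl(\frac{n\,\mathrm{wt}(e)}{q-1}\bigr)$. Averaging over the $r$ indices yields some $y_i$ agreeing with $\ev(g_i)$ on more than $\sqrt{\ell(q-1)}$ points, i.e.\ inside the Guruswami--Sudan radius. With $\eta=\frac{1}{\delta-1}$, $\xi=\frac{r}{n}$, $\mu=\frac{\ell}{q-1}$ and $H(Y)=\eta+\xi-\sqrt{(\xi-\eta)^2+4\eta\xi Y}$, this reduces to $H(\mu)<2H(\sqrt{\mu})$.

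Since you do not know which $i$ is good, the decoder proceeds as follows:
\begin{itemize}
\item list-decode every $y_i$;
\item pull each list element back via $g_{i,j}=Q_i(\omega^{j-1})g_j$;
\item discard candidates with coefficients on $S_-\cup S_+$;
\item output the candidate minimizing $\mathrm{dis}(\ev(g)-y,B^\perp)$.
\end{itemize}
That distance separates over the cosets of $\Omega_n$ and is computed by brute force over the $q^{\delta-1}$ local polynomials in $\langle Y,\ldots,Y^{\delta-1}\rangle$. This selection step, not a local peeling of $h$, is the source of the $q^{O(\delta)}$ factor. Your uniqueness argument then shows the output is correct modulo $C^\perp$.
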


\begin{proof}[Proof idea]
The decoding algorithm reuses the transformed-polynomial operators from the distance proof. Given a received word $a$, for each admissible $i\in\{\delta-1,\ldots,r+\delta-2\}$ the algorithm forms
\(
    a_i(X)=\omega^{-i}a(\omega^iX)+\sum_{t=0}^{\delta-2}v_{i,t}\omega^{-t}a(\omega^tX).
\)
For a true codeword $f=g+h$, the transform annihilates $h$ and maps $g$ to a Reed--Solomon codeword of degree $<\ell$ whose coefficient support is unchanged by the non-vanishing theorem, \Cref{restatement cor:Q_b_finite_field_version_outside_finitely_many_char}. A counting argument using the same convex function $\psi$ shows that, if the original error has weight at most half the proved QTB distance, then for some $i$ the word $a_i$ lies within the Reed--Solomon list-decoding radius of the corresponding transform of $g$. The algorithm list-decodes $a_i$ \cite{GS98}, inverts the diagonal transform using $Q_i(\omega^{j-1})\ne 0$, and chooses the candidate closest modulo $B^\perp$.

The distance to $B^\perp$ can be computed efficiently because $B^\perp$ separates over local cosets and consists of polynomials of degree at most $\delta-1$ with no constant term on each coset. This subroutine runs in $((q-1)q^{\delta-1}\delta)$ time. Overall, the algorithm decodes up to
\[
\frac{q-1}{4}\left(
\frac{1}{\delta-1}+\frac{r}{r+\delta-1}
-\sqrt{\left(\frac{r}{r+\delta-1}-\frac{1}{\delta-1}\right)^2
+\frac{4r}{(\delta-1)(r+\delta-1)}\cdot\frac{\ell}{q-1}}
\right)
\]
errors in $(q^{O(\delta)}\operatorname{poly}(r,q))$ time. Standard CSS decoding \cite{Calderbank_1996,gottesman1997stabilizercodesquantumerror,Steane_1996} then gives the corresponding quantum decoder.
\end{proof}

\subsection{Use of AI}

ChatGPT Pro was primarily used for proof verification and improving the presentation of the paper. It noted important mistakes in \Cref{const:random_qhlrc} and the proof of \Cref{thm:qtb_distance} which were then corrected independent of GPT Pro. Additionally, it suggested fixing various typos, making variable names uniform across sections, and adding some minor hypotheses in theorem statements which were already implicitly used in the proofs. All the main ideas, constructions, and proofs are entirely human-generated and everything in the paper was written by the authors.

\subsection{Open problems}
We end this section by listing some open questions.
\begin{enumerate}
    \item The proven QTB and hierarchical QTB distance bounds are not tight as illustrated in \Cref{ex:two_level_qtb_vs_one_level_true_distance}. The repeated use of Jensen's inequality and the degree growth compared to the root count weakens the bound. This points to a natural open problem: can we find an alternative proof for the distance bound for the hierarchical case that uses the reduced nondual support $[\ell]\setminus(S_+\cup S_-)$ more efficiently and better utilizes the distribution of zeros among repair groups?

    \item Can we construct explicit QHLRCs meeting the quantum Singleton-like bound?

    \item Can we extend the QTB decoder to the hierarchical setting such that we can correct errors up to half the proven distance bound?
    % \begin{enumerate}
%     \item Proving that puncturing a CSS code preserves the CSS orthogonality condition.
%     \item A CSS-based framework for $(r,\delta)$ QLRCs and $h$-level QHLRCs, including puncturing definitions and a Singleton-like bound for CSS codes built from dual-containing classical HLRCs.
%     \item Random $(r,\delta)$ QLRCs and random QHLRCs with high-probability distance bounds.
%     \item Explicit $(r,\delta)$ QTB codes and $h$-level HQTB codes from evaluation codes on $\F_q^*$, computing their dimensions, and proving their local and hierarchical repair properties through piecewise polynomial dual checks.
%     \item Proving a new non-vanishing theorem for specific polynomials $Q_b$ that vanish at $\delta-1$ consecutive roots of unity. The proof uses techniques involving complete homogeneous symmetric functions, complex analysis, resultants, and finite-field reduction \cite{cox2005using, Gardner2014, gelfand1994discriminants, Jacobi1841, Macdonald_2008, neukirch1999algebraic, trudi1862teoria}. The theorem bypasses the limitations of composite-order uncertainty principles and is used in the unfolded QTB and HQTB distance proofs.
%     \item Establishing distance lower bounds for unfolded QTB and HQTB codes.
%     \item Establishing folded QTB and folded hierarchical QTB codes. Folding preserves rate and locality, and the folded distance proofs combine uncertainty principles with determinant-polynomial root multiplicity.
%     \item Proving an efficient decoder for QTB codes.
% \end{enumerate}
\end{enumerate}

\section{Preliminaries}
\label{sec:prelim}
In this section, we introduce the necessary notation, preliminary results and definitions. Given a positive integer $t$, $[t]:=\{ 0, 1,  \dots, t-1 \}$.  We write \(\Z\), \(\Q\), and \(\C\) for the integers, rational numbers, and complex numbers, respectively. If \(p\) is a prime, we write \(\F_p\) for the finite field with \(p\) elements. We write $\mathbb{F}_q$ for the finite field where $q = p^m$ is a power of some prime $p$ and positive integer $m$. Let $\mathbb{F}_q^n$ denote the $n$-dimensional vector space over the field $\mathbb{F}_q$. Let $\mathbb{F}_q^*$ denote the set of non-zero elements of $\mathbb{F}_q$. The set of $m \times n$ matrices with entries in $\mathbb{F}_q$ is denoted by $\mathbb{F}_q^{m \times n}$. The entry in the $i^{th}$ row and $j^{th}$ column of a matrix $A \in \mathbb{F}_q^{m \times n}$ is denoted by $A_{ij}$. 

Let ${C} \subseteq \F_q^n$ be an $[n,k,d]_q$ classical linear code with length $n$, dimension $k$ and minimum distance $d$. The weight of a codeword, $\mathrm{wt}(c)$ is the (Hamming) distance between the codeword, $c$ and the zero codeword. Let $G \in \mathbb{F}_q^{k \times n}$ and $H \in \mathbb{F}_q^{n-k \times n}$ denote the generator matrix and parity check matrix of ${C}$, respectively. Given an $[n,k]$ linear code ${C},$ the subspace of $\mathbb{F}_q^n$ containing all those vectors that are orthogonal to every codeword in ${C}$ forms the \textit{dual code, ${C}^\perp$} of code ${C}$. That is, 
\({C}^\perp := \left\{\mathbf{u} \in \mathbb{F}_q^n: \mathbf{u\cdot v} =0, \forall\mathbf{v} \in {C}\right\}.\)

For any classical code $C$ of length $n$ over $\F_q$, a set of indices $I \subseteq \{1,\ldots,n\}$ such that $|I| = \ell$, and a projection map $\pi_I(C): \F_q^n \rightarrow \F_q^\ell$, we define the punctured code, \[C|_I = \pi_I(C) = \{ {\bf c}|_I = (c_j)_{j\in I} \mid {\bf c} = (c_1,\ldots,c_n) \in C\}.\] The shortened code is defined to be \[C^I =\sigma_I(C) = \{ {\bf c}|_I \mid {\bf c} = (c_1,\ldots,c_n) \in C \land \supp({\bf c}) \subseteq I\},\] where $\supp({\bf c}) = \{j \mid c_j \neq 0\}$. Note $\pi_I(C^\perp) = (\sigma_I(C))^\perp$, where $C^\perp$ is the dual of $C$.

An $(r,\delta)$ LRC is defined as follows:
\begin{definition}[$(r,\delta)$-LRCs (\cite{KPLK14})]
For every $i \in \{1,\ldots, n\}$, the $i^{th}$ code symbol $c_i$ of ${C}$ is said to have $(r,\delta)$ locality,
$\delta \geq 2$, if there exists a punctured code of ${C}$ with support containing $i$,
whose length is at most $r + \delta - 1$, and whose minimum distance is at least $\delta$, that is, there exists a subset $S_i \subseteq \{1,\ldots, n\}$  such that
\begin{itemize}
\item $i \in S_i$, $|S_i| \leq r + \delta -1$, and
\item $d_{min} ({C}|_{S_i}) \geq \delta$, where ${C}|_{S_i}$ denotes the code obtained when ${C}$ is punctured to the set of coordinates corresponding to $S_i$.
\end{itemize}
\end{definition}

An $h$-level HLRC is defined as follows:
\begin{definition}[$((r_{1},\delta_{1}), \ldots, (r_{h},\delta_{h}))$-HLRCs (\cite{sasidharan2015hlrc})]
An $[n,k,d]$ code ${C}$ is a code with a $h$-level hierarchical locality having locality parameters $(r_{1},\delta_{1}), \ldots, (r_{h},\delta_{h})$ if for every $1 \leq i \leq n$, there exists a set $I \subseteq \{1,\ldots, n\}$ of cardinality at most $r_1+ \delta_1 -1$ such that $i \in I$ and the punctured code ${C}|_{I}$ satisfies the following:
\begin{itemize}
    \item $\dim({C}|_{I}) \leq r_1$,
    \item $d({C}|_{I}) \geq \delta_1$,
    \item the punctured code ${C}|_{I}$ is a code with $(h-1)$-level hierarchical locality having local parameters $(r_{2},\delta_{2}), \ldots, (r_{h},\delta_{h})$.
\end{itemize}
\end{definition}

Let $\mathbb{F}_q[X]$ be the ring of polynomials over $\mathbb{F}_q$. Then for $S \subseteq \mathbb{Z}^+$, let \[\mathbb{F}_q[X]^S = \left\{\sum_{i\in S}f_iX^i : \forall i \in S,  f_i \in \mathbb{F}_q\right\}.\] Consider the following polynomial evaluation map \(\ev : \mathbb{F}_q[X]^S \rightarrow \mathbb{F}_q^{\mathbb{F}_q^*} \cong \mathbb{F}_q^{q-1},\) where $\ev(f) = (f(x))_{x \in \mathbb{F}_q^*}$. Note that $C = \ev(\mathbb{F}_q[X]^S)$ represents an evaluation code of length $q-1$ and dimension $|S|$. When $S = [\ell]$ for some $\ell \in [q]$, the code $C$ is a Reed-Solomon code. 

Let $\mathbb{C}^q$ be a $q$-dimensional vector space over the complex field $\mathbb{C}$. Let $\mathcal{Q} \subseteq (\mathbb{C}^{q})^{{\otimes n}} = \mathbb{C}^q \otimes \ldots \otimes \mathbb{C}^q$ be an $[[n,k,d]]_q$ quantum code of length $n$, dimension $k$, and minimum distance $d$. We specify the distinction between classical and quantum parameters through the text wherever it is not clear from context. Given two classical codes, $C_X \subseteq \F_q^n$ and $C_Z \subseteq \F_q^n$ such that $C_Z^\perp \subseteq C_X$, the code $\mathcal{Q} = \mathrm{CSS}(C_X,C_Z)$ represents a quantum CSS code as defined in \cite{Calderbank_1996,Steane_1996}.

Now we turn our attention to some definitions and results that are necessary to understand the main constructions in this manuscript.

\subsection{Field theory}\label{sec:field_theory}

Here, we collect some standard field-theoretic notation and results used throughout the paper. For more details, we refer the reader to \cite{neukirch1999algebraic}.

Let \(K\) be a field. A \emph{field extension} \(L/K\) is an inclusion of fields \(K\subseteq L\). It is
called \emph{finite} if \(L\) is finite-dimensional as a vector space over \(K\).
In this case, the \emph{degree} of the extension is
\(
[L:K]:=\dim_K L.
\)

A field extension \(L/K\) is called \emph{algebraic} if every element of \(L\) is a root of a nonzero polynomial with coefficients in \(K\). An algebraic field extension \(L/K\) is called \emph{separable} if every element of \(L\) has
separable minimal polynomial (i.e. coprime to its formal derivative) over \(K\). An algebraic field extension \(L/K\) is called \emph{normal} if every irreducible polynomial over \(K\) that has a root in \(L\) splits into linear factors over \(L\). An algebraic field extension which is both normal and separable is called \emph{Galois}. We denote the Galois group of a Galois extension \(L/K\) by \(\Gal(L/K)\).

An \emph{algebraic closure} of \(K\) is a field
\(\overline K\) containing \(K\) such that \(\overline K\) is algebraic over \(K\)
and every nonconstant polynomial in \(\overline K[X]\) has a root in
\(\overline K\). Equivalently, every nonconstant polynomial in \(K[X]\) splits
into linear factors over \(\overline K\). Whenever an algebraic closure is needed, we fix one and denote it by \(\overline K\).

Let \(L/K\) be a finite field extension. The \emph{field norm} from \(L\) to \(K\)
is the map
\[
N_{L/K}:L\longrightarrow K
\]
defined as follows. For \(x\in L\), let \(m_s: L\longrightarrow L\) be the \(K\)-linear map given by multiplication by \(x\): \(y\longmapsto xy\).
Then
\[
N_{L/K}(x):=\mathrm{det}_K(m_x).
\]
If \(L/K\) is separable, then the norm admits the equivalent description
\[
N_{L/K}(x)=\prod_{\sigma:L\hookrightarrow \overline K}\sigma(x),
\]
where the product is over all \(K\)-embeddings of \(L\) into a fixed algebraic
closure \(\overline K\). Here a \(K\)-embedding means an injective field
homomorphism that restricts to the identity on \(K\). In particular, if \(L/K\)
is Galois, then
\[
N_{L/K}(x)=\prod_{\sigma\in \Gal(L/K)}\sigma(x).
\]
When \(K\) is perfect, every finite extension of \(K\) is separable, so the
embedding formula for the norm applies to all finite extensions of \(K\).

For a positive integer \(n\), an element \(\zeta\) of a field is called an
\(n\)th root of unity if \(\zeta^n=1\).
It is called a \emph{primitive} \(n\)th root of unity if \(n\) is the smallest
positive integer such that \(\zeta^n=1\), or equivalently, if \(\zeta\) has
multiplicative order \(n\).

The \(n\)th \emph{cyclotomic polynomial} is the polynomial
\[
\Phi_n(X)
:=
\prod_{\substack{1\leq k\leq n\\ \gcd(k,n)=1}}
\left(X-e^{2\pi i k/n}\right)
\in \C[X].
\]
It is a standard fact that \(\Phi_n(X)\in \Z[X]\), and that
\(
X^n-1=\prod_{d\mid n}\Phi_d(X).
\)
The degree of \(\Phi_n(X)\) is \(\varphi(n)\), where \(\varphi\) denotes Euler's
totient function. We write \(\zeta_n\) for a fixed primitive \(n\)th root of
unity. Thus the roots of \(\Phi_n(X)\) over \(\C\) are precisely the primitive
\(n\)th roots of unity.

We end this section with a definition describing the set of subspaces of a vector space and its size.

\begin{definition}\label{def:grassmannian}
Let \(V\) be a finite-dimensional vector space over \(\F_q\), and let
\(0\le k\le \dim_{\F_q}V\). The \emph{Grassmannian} of \(k\)-dimensional
subspaces of \(V\), denoted
\(\Gr_{\F_q}(k,V),\)
is the set
\[
\Gr_{\F_q}(k,V)
:=
\{U\subseteq V: U\text{ is an }\F_q\text{-linear subspace and }
\dim_{\F_q}U=k\}.
\]
When \(V=\F_q^n\), we also write
\(
\Gr_{\F_q}(k,n):=\Gr_{\F_q}(k,\F_q^n).
\)
Since \(\F_q\) is finite, \(\Gr_{\F_q}(k,V)\) is a finite set. If
\(\dim_{\F_q}V=n\), then
\[
|\Gr_{\F_q}(k,V)|
=
\binom{n}{k}_q
:=
\prod_{i=0}^{k-1}
\frac{q^{n-i}-1}{q^{k-i}-1},
\]
where \(\binom{n}{k}_q\) is the Gaussian binomial coefficient.
\end{definition}

\subsection{Puncturing CSS codes}
\label{subsec:puncturing_css_codes}
As with classical codes, there is a puncturing operation associated to quantum
codes. For general stabilizer codes, puncturing is most naturally described as
a shortening operation on the stabilizer group. Let $\mathcal{P}_n$ be the Pauli group on $n$ qudits and $S\subseteq \mathcal P_n$
be a stabilizer group. If
\[
    P = \lambda P_1\otimes \cdots \otimes P_n \in \mathcal P_n,
\]
where each $P_i$ is a single-qudit Pauli operator and $\lambda$ is a phase, then
for a subset $I\subseteq \{1,\ldots,n\}$ we write
\[
    P|_I := \lambda \bigotimes_{i\in I} P_i
\]
whenever $P_j=I$ for all $j\notin I$. In other words, we only restrict $P$ to
$I$ if $P$ acts trivially outside $I$.

The puncturing of the stabilizer group $S$ to the coordinate set $I$ is defined
by
\[
    S|_I
    :=
    \{P|_I : P\in S,\ \mathrm{supp}(P)\subseteq I\}.
\]
Thus, one first keeps only those stabilizers which act trivially on the deleted
coordinates, and then restricts them to the remaining coordinates. Since $S$ is
abelian, $S|_I$ is also abelian. Moreover, $S|_I$ contains no nontrivial phase:
if $P|_I$ were a nontrivial phase, then $P$ itself would be a nontrivial phase in
$S$, which is impossible. Hence $S|_I$ is again a valid stabilizer group, now on
$|I|$ physical qudits. This is the stabilizer-code puncturing operation \cite{Grasslpuncturingstabilizer, Gundersen2025puncturing}.

For the constructions, we only need this operation for quantum CSS codes. Let
\[
    \mathcal Q = \mathrm{CSS}(C_X,C_Z)
\]
be a CSS code, where $C_Z^\perp\subseteq C_X$. Its stabilizer group is
\[
    S_{\mathcal Q}
    =
    \{X^aZ^b : a\in C_X^\perp,\ b\in C_Z^\perp\}.
\]
We define the puncturing of $\mathcal Q$ to the coordinate set
$I\subseteq \{1,\ldots,n\}$ by puncturing the associated classical codes:
\[
    \mathcal Q|_I
    :=
    \mathrm{CSS}(C_X|_I,C_Z|_I).
\]
Here $C|_I=\pi_I(C)$ denotes the classical puncturing of $C$ to the coordinates
in $I$.

\begin{lemma}\label{lem:puncturing_css}
    Let $\mathcal Q=\mathrm{CSS}(C_X,C_Z)$ be a quantum CSS code with
    $C_Z^\perp\subseteq C_X$. For any subset
    $I\subseteq \{1,\ldots,n\}$, the punctured code
    \[
        \mathcal Q|_I
        :=
        \mathrm{CSS}(C_X|_I,C_Z|_I)
    \]
    is again a well-defined quantum CSS code. Moreover, this definition agrees
    with the stabilizer-code puncturing operation described above.
\end{lemma}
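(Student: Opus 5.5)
The proof has two parts: first the CSS orthogonality condition for the punctured pair, then a comparison with stabilizer puncturing. Both rest on the duality identity recorded in the preliminaries, $\pi_I(C^\perp)=(\sigma_I(C))^\perp$. Dualizing it (finite-dimensional spaces) and applying it to $C^\perp$ in place of $C$ gives the companion form
\[
(\pi_I(C))^\perp=\sigma_I(C^\perp).
\]

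\textbf{Orthogonality.} We need $(C_Z|_I)^\perp\subseteq C_X|_I$. Applying the companion identity to $C_Z$ and using $C_Z^\perp\subseteq C_X$ gives
\[
(C_Z|_I)^\perp=\sigma_I(C_Z^\perp)\subseteq\sigma_I(C_X)\subseteq\pi_I(C_X)=C_X|_I.
\]
The first inclusion holds because shortening is monotone in the code. The second holds because shortening restricts a subset of codewords to $I$, while puncturing restricts all of them. The symmetric condition $(C_X|_I)^\perp\subseteq C_Z|_I$ follows identically, since $C_Z^\perp\subseteq C_X$ is equivalent to $C_X^\perp\subseteq C_Z$. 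Hence $\mathrm{CSS}(C_X|_I,C_Z|_I)$ is a well-defined CSS code on $|I|$ qudits.

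\textbf{Agreement with stabilizer puncturing.} The stabilizer group of $\mathcal Q|_I$ is
\[
\{X^{a'}Z^{b'}: a'\in (C_X|_I)^\perp,\ b'\in (C_Z|_I)^\perp\}=\{X^{a'}Z^{b'}: a'\in\sigma_I(C_X^\perp),\ b'\in\sigma_I(C_Z^\perp)\},
\]
again by the companion identity. On the other side, take $P=X^aZ^b\in S_{\mathcal Q}$ up to phase. Then $\mathrm{supp}(P)\subseteq I$ holds exactly when $\mathrm{supp}(a)\subseteq I$ and $\mathrm{supp}(b)\subseteq I$. The reason is that a single-qudit factor $X^{a_j}Z^{b_j}$ is the identity only when $a_j=b_j=0$. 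Restricting such a $P$ to $I$ yields $X^{a|_I}Z^{b|_I}$ with $a|_I\in\sigma_I(C_X^\perp)$ and $b|_I\in\sigma_I(C_Z^\perp)$. Since the $X$-part and $Z$-part can be chosen independently in the CSS stabilizer group, every pair $(a',b')$ from these shortened codes arises this way. The two groups therefore coincide, up to the phase conventions fixed for $S_{\mathcal Q}$.

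\textbf{Main obstacle.} The only delicate point is this bookkeeping. We must check that support containment for a Pauli splits coordinatewise into containment for $a$ and for $b$, and that no phases spoil the identification. Both follow from the fact that $X^{a_j}Z^{b_j}$ is a scalar only when $a_j=b_j=0$.
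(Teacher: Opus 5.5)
Your proposal is correct and follows essentially the same route as the paper: the identity $(C|_I)^\perp=\sigma_I(C^\perp)$ gives the chain $(C_Z|_I)^\perp=\sigma_I(C_Z^\perp)\subseteq\sigma_I(C_X)\subseteq C_X|_I$, and the same identity shows that the two stabilizer groups coincide. Your extra steps (deriving this companion identity from the preliminaries' form, and checking that support containment for $X^aZ^b$ splits coordinatewise) only make explicit what the paper leaves implicit.
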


\begin{proof}
    Let $\pi_I$ denote puncturing to the coordinate set $I$, so that
    $C|_I=\pi_I(C)$. Let $\sigma_I(C)$ denote shortening to $I$, namely
    \[
        \sigma_I(C)
        :=
        \{c|_I : c\in C,\ c_j=0 \text{ for all } j\notin I\}.
    \]
    We use the standard duality relation between puncturing and shortening:
    \(
        (C|_I)^\perp
        =
        \sigma_I(C^\perp).
    \)

    First, we check the CSS orthogonality condition. Since
    $C_Z^\perp\subseteq C_X$, we have
    \[
        (C_Z|_I)^\perp
        =
        \sigma_I(C_Z^\perp)
        \subseteq
        \sigma_I(C_X)
        \subseteq
        C_X|_I.
    \]
    and by duality, we obtain 
    \(
        (C_X|_I)^\perp
        \subseteq
        C_Z|_I.
    \)
    Therefore $C_X|_I$ and $C_Z|_I$ satisfy the CSS orthogonality conditions, so
    $\mathrm{CSS}(C_X|_I,C_Z|_I)$ is a well-defined quantum CSS code.

    It remains to compare this definition with stabilizer puncturing. The
    stabilizer group of $\mathcal Q$ is
    \[
        S_{\mathcal Q}
        =
        \{X^aZ^b : a\in C_X^\perp,\ b\in C_Z^\perp\}.
    \]
    Its stabilizer puncturing to $I$ is
    \[
        S_{\mathcal Q}|_I
        =
        \{(X^aZ^b)|_I :
        a\in C_X^\perp,\ b\in C_Z^\perp,\ 
        a_j=b_j=0 \text{ for all } j\notin I\}.
    \]
    Equivalently,
    \[
        S_{\mathcal Q}|_I
        =
        \{X^{a'}Z^{b'} :
        a'\in \sigma_I(C_X^\perp),\
        b'\in \sigma_I(C_Z^\perp)\}.
    \]
    On the other hand, the stabilizer group of
    $\mathrm{CSS}(C_X|_I,C_Z|_I)$ is
    \[
        \{X^{a'}Z^{b'} :
        a'\in (C_X|_I)^\perp,\
        b'\in (C_Z|_I)^\perp\}.
    \]
    Using $(C|_I)^\perp=\sigma_I(C^\perp)$, this becomes
    \(
        \{X^{a'}Z^{b'} :
        a'\in \sigma_I(C_X^\perp),\
        b'\in \sigma_I(C_Z^\perp)\},
    \)
    which is exactly $S_{\mathcal Q}|_I$. Thus, puncturing the classical codes
    $C_X$ and $C_Z$ agrees with puncturing the associated CSS stabilizer group.
\end{proof}

\subsection{QLRCs and QHLRCs}
\label{subsec:definitions_qlrc_qhlrc}
Before defining quantum locally recoverable codes, we recall what it means for a
quantum code to correct a set of erasures. Let
\(
 \mathcal E=\{X^aZ^b:a,b\in \mathbb F_q\}
\)
be the single-qudit Pauli error basis. For a density matrix
$\rho$ on a single qudit, define the completely depolarizing channel
\[
 \tau(\rho)
 :=
 \frac{1}{q^2}\sum_{E\in \mathcal E} E\rho E^\dagger 
\]
where \(E^\dagger\) is the Hermitian dual of \(E\). Thus $\tau$ applies each single-qudit Pauli error with equal probability. For a
set of coordinates $I\subseteq \{1,\ldots,n\}$, let $\tau^I$ denote the channel
on $(\mathbb C^q)^{\otimes n}$ which applies $\tau$ to the qudits in $I$ and
acts as the identity on the remaining qudits. Equivalently,
\[
 \tau^I(\rho)
 =
 \frac{1}{q^{2|I|}}
 \sum_{\substack{E\in \mathcal E^{\otimes n}\\ \mathrm{supp}(E)\subseteq I}}
 E\rho E^\dagger .
\]

Let $\mathcal Q\subseteq (\mathbb C^q)^{\otimes n}$ be a quantum code. We say
that $\mathcal Q$ corrects erasures at $I$ if there exists a trace-preserving
quantum operation $\mathcal R_{\mathcal Q,I}$ such that
\[
 \mathcal R_{\mathcal Q,I}\circ \tau^I(\ket{\phi}\bra{\phi})
 =
 \ket{\phi}\bra{\phi}
\]
for all $\ket{\phi}\in \mathcal Q$.

Equivalently, $\mathcal Q$ corrects erasures at $I$ if it can correct every
Pauli error whose support is contained in $I$, under the assumption that the
decoder knows the erased set $I$. This is the usual erasure model: the locations
of the errors are known, but the errors themselves are unknown.

We now impose a locality constraint on the recovery operation. The idea is that
erasures on a set $I$ should be recoverable by using only a small set of qudits
$J$ containing $I$.

\begin{definition}[$(I,J)$-QLRC {\cite{galindo2024quantumrdeltalocallyrecoverablecodes}}]
\label{def:I_J_qlrc}
Let $\emptyset\neq I\subsetneq J\subseteq \{1,\ldots,n\}$. A quantum code
$\mathcal Q\subseteq (\mathbb C^q)^{\otimes n}$ is said to be an
$(I,J)$-QLRC if there exists a trace-preserving quantum operation
$\mathcal R_{\mathcal Q,I}^J$, acting nontrivially only on the qudits indexed
by $J$, such that
\[
\mathcal R_{\mathcal Q,I}^J\circ \tau^I(\ket{\phi}\bra{\phi})
=
\ket{\phi}\bra{\phi}
\]
for all $\ket{\phi}\in \mathcal Q$.
\end{definition}

Thus, if $\mathcal Q$ is an $(I,J)$-QLRC, then erasures on the coordinates in
$I$ can be recovered locally using only the qudits in $J$.

\begin{definition}[$(r,\delta)$-QLRC {\cite{galindo2024quantumrdeltalocallyrecoverablecodes}}]
\label{def:r_delta_qlrc}
Let $r,\delta\geq 2$ be positive integers. A quantum error-correcting code
$\mathcal Q\subseteq (\mathbb C^q)^{\otimes n}$ is an $(r,\delta)$-QLRC if, for
each coordinate $i\in \{1,\ldots,n\}$, there exists a set
$J\subseteq \{1,\ldots,n\}$ containing $i$ such that
\[
\delta\le |J|\le r+\delta-1
\]
and such that, for every nonempty subset \(I\subsetneq J\) with
\(
|I|\le \delta-1,
\)
the code \(\mathcal Q\) is \((I,J)\)-locally recoverable.
\end{definition}

Equivalently, since \(|J|\ge\delta\), it suffices to require this for subsets
\(I\subsetneq J\) with \(|I|=\delta-1\). In other words, the definition states that every coordinate belongs to a local recovery set $J$ of size at
most $r+\delta-1$, and any erasure pattern of size at most $\delta-1$ inside
$J$ can be corrected using only the qudits in $J$. This is the quantum analogue
of classical $(r,\delta)$-locality.

The following result relates this notion of quantum locality
to classical locality for dual-containing codes.

\begin{theorem}[{\cite[Theorem 28]{galindo2024quantumrdeltalocallyrecoverablecodes}}]
\label{thm:css_qlrc_iff_clrc}
Let $C\subseteq \mathbb F_q^n$ be a linear code. Assume that
$C^\perp\subseteq C$ and $\delta\leq d(C^\perp)$, where $d(C^\perp)$ is the
minimum distance of $C^\perp$. Then
\(
\mathcal Q=\mathrm{CSS}(C,C)
\)
is an $(r,\delta)$-QLRC if and only if $C$ is an $(r,\delta)$ classical LRC.
\end{theorem}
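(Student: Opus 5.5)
The plan is to reduce both sides of the equivalence to a statement about codewords supported on small sets, using the puncturing lemma (\Cref{lem:puncturing_css}) to pass from $\mathcal Q$ to the local code on $J$.

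First I will isolate a quantum-local criterion. For a fixed $J$ containing $I$, I claim that $\mathcal Q$ is $(I,J)$-locally recoverable iff the punctured code $\mathcal Q|_J=\CSS(C|_J,C|_J)$ corrects erasures at $I$. By \Cref{lem:puncturing_css}, this punctured code is well defined, and its stabilizer group is exactly the set of stabilizers of $\mathcal Q$ supported in $J$.

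\begin{itemize}
\item For ``$\Leftarrow$'', run the erasure-recovery operation of $\mathcal Q|_J$ on the qudits of $J$. The stabilizers of $\mathcal Q$ supported in $J$ provide the syndrome data, and the Knill--Laflamme condition for Paulis supported in $I$ only involves operators on $J$.
\item For ``$\Rightarrow$'', treat the qudits outside $J$ as part of the reference system. A recovery map acting only on $J$ must then restore the joint state, so the reduced code on $J$ satisfies the Knill--Laflamme erasure conditions for $I$.
\end{itemize}

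I expect this identification to be the main obstacle. The direction ``$\Rightarrow$'' needs care, and I would first try to prove it via the complementary-channel/decoupling formulation of erasure correction (the state on $I$ must be decoupled from the reference given $J\setminus I$), restricted to stabilizer states.

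Second, I will apply the standard CSS erasure criterion. A CSS code $\CSS(D_X,D_Z)$ corrects erasures at $I$ iff no vector in $D_X\setminus D_Z^\perp$ or in $D_Z\setminus D_X^\perp$ has support inside $I$. For $\mathcal Q|_J$, the duality $(C|_J)^\perp=\sigma_J(C^\perp)$ reduces both conditions to one: no $c\in C|_J\setminus\sigma_J(C^\perp)$ has $\supp(c)\subseteq I$.

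Third, I will use the hypothesis $\delta\le d(C^\perp)$. Any nonzero element of $\sigma_J(C^\perp)$ supported in $I$ with $|I|\le\delta-1$ lifts to a codeword of $C^\perp$ of weight at most $\delta-1$, which is impossible. So for $|I|\le\delta-1$ the condition becomes: no nonzero $c\in C|_J$ has $\supp(c)\subseteq I$.

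Finally, I will assemble the two definitions.
\begin{itemize}
\item Requiring the last condition for every $I\subsetneq J$ with $|I|\le\delta-1$ is exactly $d(C|_J)\ge\delta$. The restriction $I\subsetneq J$ is harmless because $|J|\ge\delta$.
\item With $|J|\le r+\delta-1$, this is precisely the classical $(r,\delta)$-locality of $c_i$ with $S_i=J$.
\item Conversely, a classical repair set $S_i$ with $d(C|_{S_i})\ge\delta$ automatically has $|S_i|\ge\delta$, since a punctured code with minimum distance $\delta$ has length at least $\delta$. So $S_i$ is admissible as the set $J$ in \Cref{def:r_delta_qlrc}.
\end{itemize}

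Running this over every coordinate $i$ gives both implications of the stated equivalence.
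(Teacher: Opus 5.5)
Your proof is correct. The paper does not prove this statement: it quotes it from Galindo et al.\ and uses it as a black box, including for punctured codes in \Cref{cor:locality_hqtb}, where the dual-distance hypothesis is checked before each application. So your argument is an independent, self-contained route rather than a reconstruction of anything in the text. It is built from the paper's own \Cref{lem:puncturing_css} and the duality $(C|_J)^\perp=\sigma_J(C^\perp)$, plus two standard quantum facts: the decoupling characterization of exact erasure recovery, and the ``no nontrivial logical operator supported on $I$'' criterion for stabilizer codes.

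A useful by-product is that it pinpoints the role of $\delta\le d(C^\perp)$. The hypothesis is needed only for the direction QLRC $\Rightarrow$ LRC. There it rules out nonzero words of $C|_J$ of weight $<\delta$ that lie in $\sigma_J(C^\perp)$. Such stabilizer-type words do not obstruct quantum recovery, but they do violate $d(C|_J)\ge\delta$. The direction LRC $\Rightarrow$ QLRC uses only $C^\perp\subseteq C$.

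Two points should be made explicit.

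First, the step you flag as the main obstacle becomes routine once you record that $\operatorname{Tr}_{\{1,\ldots,n\}\setminus J}\Pi_{\mathcal Q}$ is proportional to $\Pi_{\mathcal Q|_J}$, where $\Pi$ denotes the projector onto the code. The reason is that a Pauli acting nontrivially on a traced-out qudit is traceless there. Only the stabilizers supported in $J$ survive the partial trace, and by \Cref{lem:puncturing_css} these are exactly the stabilizers of $\CSS(C|_J,C|_J)$.

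Consequently, the canonical purification of the maximally mixed state of $\mathcal Q$ and that of $\mathcal Q|_J$ are two purifications of the same state on $J$. For $\mathcal Q$, the purifying system is the reference together with the qudits outside $J$. The two purifications are related by an isometry between the purifying systems, and this isometry commutes with any channel on $J$. So ``$I$ is decoupled from the purifier'' means the same thing in both pictures, and no restriction to stabilizer states is needed.

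For the direction ``$\Leftarrow$'' of your first step, the clean justification has two parts. One is that $\mathcal Q\subseteq\mathcal Q|_J\otimes\mathcal H'$, where $\mathcal H'$ is the space of the qudits outside $J$. The other is that a channel fixing every pure state of $\mathcal Q|_J$ fixes every operator on $\mathcal Q|_J$, by polarization.

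Second, your claim that a punctured code with minimum distance at least $\delta$ has length at least $\delta$ needs $C|_{S_i}\neq\{0\}$. This holds: if every codeword of $C$ vanished at $i$, then $e_i\in C^\perp\subseteq C$, which is absurd.
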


We now extend the established $(r,\delta)$-QLRCs to a quantum hierarchical setting, analogous to classical HLRCs. For simplicity, we first define $2$-level QHLRCs.

\begin{definition}[$((r_1,\delta_1),(r_2,\delta_2))$-QHLRC]\label{def:2_level_qhlrc}
     Let $r_1,\delta_1, r_2,\delta_2 \in \Z^+$ such that $r_1\geq r_2$ and $\delta_1\geq \delta_2 \geq 2$. Let $\mathcal{Q} \subseteq (\C^{q})^{\otimes n}$ be a stabilizer quantum code. Fix two level locality parameters $(r_1,\delta_1), (r_2,\delta_2)$. We say $\mathcal{Q}$ is a $((r_1,\delta_1), (r_2,\delta_2))$-QHLRC if:
        \begin{enumerate}
            \item For every $i \in \{1,\ldots, n\}$ there exists a recovery set $J_1(i) \subseteq \{1,\ldots, n\}$ containing $i$ with $\delta_1 \leq |J_1(i)| \leq r_1 + \delta_1 -1$ such that for all $\emptyset \neq I_1 \subsetneq J_1(i)$ with $|I_1| \leq \delta_1-1$, $\mathcal{Q}$ is an $(I_1,J_1(i))$-QLRC. In other words, $\mathcal{Q}$ is an $(r_1,\delta_1)$-QLRC.
            \item The punctured code $\mathcal{Q}|_{J_1(i)}$ is itself an $(r_2,\delta_2)$-QLRC.
        \end{enumerate}
        %There exists a disjoint partitioning $\mathcal{J}_1 = \{J_{1,1}, \ldots, J_{1,m_1}\}$ of $\{1,\ldots, n\}$ such that for each group of coordinates $J_{1,j}$, the punctured code $\mathcal{Q}|_{J_{1,j}}$ is itself an $(r_2,\delta_2)$ quantum LRC.
       
\end{definition} Using \Cref{def:2_level_qhlrc}, we make the following observation: let $\mathcal{Q}$ be a $((r_1,\delta_1), (r_2,\delta_2))$-QHLRC. Fix a coordinate $i$ and an erasure pattern $I$ containing $i$. Let $J_1(i)$ be the corresponding recovery set. Let $J_2(i)$ be a recovery set that arises from $\mathcal{Q}|_{J_1(i)}$. If $I \subseteq J_2(i)$ and $|I| \leq \delta_2 - 1$ then we can recover qudits corresponding to $I$ using the qudits corresponding to $J_2(i) \setminus I$. Otherwise, assuming $|I| \leq \delta_1 - 1$ and $I \subseteq J_1(i)$, we can recover qudits corresponding to $I$ using the qudits corresponding to $J_1(i) \setminus I$.

Note that a $1$-level QHLRC is simply a QLRC. Additionally, it is not too difficult to generalize the previous definition to any number of hierarchical levels.

\begin{definition}[$h$-level QHLRC]\label{def:h_level_qhlrc}
Let $2 \leq h \in \Z^+$, and $ r_l,\delta_l \in \Z^+$ for all $l \in \{1,\ldots,h\}$. Consider $h$ level locality parameters $(r_1,\delta_1), \ldots, (r_h,\delta_h)$ with \[r_1\geq \cdots \geq r_h\quad\text{and}\quad \delta_1\geq \cdots\geq \delta_h \geq 2.\] A stabilizer quantum code $\mathcal{Q} \subseteq (\C^q)^{\otimes n}$ is said to be an $h$-level $((r_1,\delta_1), \ldots, (r_h,\delta_h))$-QHLRC if the following conditions are satisfied:
    \begin{enumerate}
        \item For every $i \in \{1,\ldots, n\}$ there exists a recovery set $J_1(i) \subseteq \{1,\ldots, n\}$ containing $i$ with $\delta_1 \leq |J_1(i)| \leq r_1 + \delta_1 -1$ such that for all $\emptyset \neq I_1 \subsetneq J_1(i)$ with $|I_1| \leq \delta_1-1$, $\mathcal{Q}$ is an $(I_1,J_1(i))$-QLRC. In other words, $\mathcal{Q}$ is an $(r_1,\delta_1)$-QLRC.
        %\item There exists a sequence of disjoint partitions $\mathcal{J}_1,\ldots, \mathcal{J}_{h-1}$ such that $\mathcal{J}_1$ partitions $\{1,\ldots, n\}$ into disjoint subsets and $\mathcal{J}_l$ is a set of disjoint sets which further partition the sets in $\mathcal{J}_{l-1}$ for $l = 2, \ldots, h-1$.
        %\item For each $l \in \{1,\ldots, h-1\}$, and for each $J \in \mathcal{J}_l$, the punctured code $Q|_{J}$ is an $(h-l)$-level $((r_{l+1},\delta_{l+1}),\ldots, (r_h,\delta_h))$-QHLRC.
        \item The punctured code $\mathcal{Q}|_{J_1(i)}$ is an $(h-1)$-level $((r_{2},\delta_{2}),\ldots, (r_h,\delta_h))$-QHLRC.
    \end{enumerate}
\end{definition} Notice that the second condition implies that, after suitably puncturing $h-1$ times, the resulting code is a $1$-level $(r_h,\delta_h)$-QHLRC which is simply an $(r_h,\delta_h)$-QLRC. 

% We end the section with the following theorem which we show that given a dual containing $h$-level classical HLRC, the corresponding quantum CSS code $\mathcal{Q} = \mathrm{CSS}(C,C)$ is an $h$-level QHLRC.

% \begin{theorem}\label{thm:css_qhlrc_iff_chlrc}
%     Let $C \subseteq \mathbb{F}_q^n$ be a linear code. Assume that $C^\perp \subseteq C$. Fix hierarchical locality and distance parameters $(r_l,\delta_l)_{l=1,\ldots,h}$ and assume $\delta_1 \leq d(C^\perp)$ where $d(C^\perp)$ is the minimum distance of $C^\perp$. Then $\mathcal{Q} = \mathrm{CSS}(C,C)$ is an $h$-level $((r_1,\delta_1), \ldots, (r_h,\delta_h))$-QHLRC if and only if $C$ is an $h$-level $((r_1,\delta_1), \ldots, (r_h,\delta_h))$ classical HLRC.
% \end{theorem}

% \begin{proof}
%     We induct on $h$. If $h = 1$, then we reduce to \Cref{thm:css_qlrc_iff_clrc}. Suppose the claim holds for $(h-1)$-level hierarchical codes. By \Cref{thm:css_qlrc_iff_clrc}, $C$ is an $(r_1,\delta_1)$ classical LRC if and only if $\mathcal{Q}$ is an $(r_1,\delta_1)$-QLRC.

%     Fix a coordinate $i$ and corresponding recovery set $J(i)$. By \Cref{def:classical_hlrc}, $C|_{J(i)}$ is an $(h-1)$-level classical HLRC. Since puncturing preserves the CSS property (\Cref{lem:puncturing_css}), $\mathcal{Q}|_{J(i)}$ is an $(h-1)$-level QHLRC. Therefore, $\mathcal{Q}$ satisfies \Cref{def:h_level_qhlrc}.
% \end{proof}

We end this section by proving two Singleton-type bounds.
\begin{theorem}\label{thm:singleton_qlrc}
    Let $\mathcal{Q}$ be a QLRC of block length $n$, dimension $k$, distance $d$ with locality parameters $r, \delta$. Assume that for each coordinate $i \in \{1,\ldots, n\}$, there exists a recovery set $J(i)$ containing $i$ such that $|J(i)| = r + \delta -1$. Further assume that for all $i,j \in \{1,\ldots, n\}$ either $J(i) = J(j)$ or $J(i) \cap J(j) = \emptyset$. Then \[k \leq \left(1 - \frac{2(\delta-1)}{r+\delta-1}\right)n-2\left(d-1-(\delta-1)\left\lceil\frac{d-1}{r}\right\rceil\right).\] 
\end{theorem}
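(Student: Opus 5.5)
The plan is to adapt the standard Singleton-type argument for quantum codes, where $k\le n-2(d-1)$ is obtained by exhibiting a set of qudits that is simultaneously correctable and whose complement still obeys a dimension count. We then strengthen it using the $\delta-1$ local redundancies inside each recovery group. Concretely, the disjointness hypothesis says the sets $J(i)$ partition $\{1,\ldots,n\}$ into $m=n/(r+\delta-1)$ groups $J_1,\ldots,J_m$, each of size exactly $r+\delta-1$.

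The first step is to bound $n$ from below in terms of local redundancy. Within each group $J_j$, any $\delta-1$ qudits $I\subseteq J_j$ are recoverable from $J_j\setminus I$ (by \Cref{def:r_delta_qlrc}). The quantum information-theoretic form of this is the following. Fix a purification $|\Psi\rangle_{RA}$ of the maximally mixed state on $\mathcal Q$, so $S(R)=k\log q$. Correctability of $I$ from $J\setminus I$ gives $S(I\,|\,J\setminus I)\,=\,-S(I)$ in the appropriate form (the Knill--Laflamme decoupling condition $I(R:I)=0$, combined with recoverability from $J\setminus I$). This yields $S(J)\le S(J\setminus I)-S(I)+\ldots$; iterating over a fixed erasable subset $I_j$ of size $\delta-1$ in each group gives, roughly, that each group contributes at most $r$ "independent" qudits of entropy. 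In parallel, the complement of any correctable set satisfies an entropy bound.

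The second step is the counting. Let $t=\lceil (d-1)/r\rceil$. Choose groups $J_1,\ldots,J_{t}$ and form a set $E$ of size $d-1$ consisting of $r$ non-redundant coordinates from each of the first $t-1$ groups plus the remaining ones from $J_t$. Adjoin the $\delta-1$ locally redundant coordinates of these groups to get $E'=E\cup\bigcup_{j\le t} I_j$, of size about $d-1+(\delta-1)t$. The set $E'$ is still correctable: first recover the $I_j$ locally after $E$ is known, which shows $E'$ behaves like $E$ as far as erasure correction is concerned. Then apply the quantum Singleton argument (no-cloning and subadditivity, as in the proof of $k\le n-2(d-1)$) to two disjoint such enlarged sets $E'_1,E'_2$ together with the remaining qudits. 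The remaining qudits consist of whole groups, each of which contributes at most $r$ effective qudits, i.e.\ a factor $\frac{r}{r+\delta-1}=1-\frac{\delta-1}{r+\delta-1}$.

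The expected output is $k\le |{\rm rest}|_{\rm eff}-|E_1'|_{\rm eff}-\ldots$. Bookkeeping then yields $k\le n-2\frac{(\delta-1)n}{r+\delta-1}-2\bigl(d-1-(\delta-1)t\bigr)$ after carefully accounting for the fact that the $2(\delta-1)t$ redundant coordinates inside $E_1',E_2'$ were already subtracted in the $\frac{2(\delta-1)n}{r+\delta-1}$ term. This double-counting correction is exactly why $(\delta-1)\lceil (d-1)/r\rceil$ appears with a plus sign.

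The main obstacle is making the "each group has only $r$ effective qudits, counted twice" statement rigorous in the quantum setting, since a single local redundancy must reduce the bound by $2$ per erased qudit rather than $1$. I would first try the entropic route: show $S(J_j)\le S(J_j\setminus I_j)$ and $S(R J_j)\le S(R, J_j\setminus I_j)$ from local recoverability, then combine these with strong subadditivity. Alternatively, if this gets messy, I would restrict to stabilizer/CSS codes, where $\mathcal Q=\CSS(C_X,C_Z)$ and local recoverability forces $\delta-1$ independent checks per group in both $C_X^\perp$ and $C_Z^\perp$. Doing the count classically on $\dim C_X+\dim C_Z-n$, with a shortening/puncturing argument in the style of the classical LRC Singleton bound applied to each of $C_X$ and $C_Z$ (via \Cref{lem:puncturing_css}), gives the factor $2$ directly.
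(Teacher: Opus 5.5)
The paper gives no argument of its own; it only cites \cite[Theorem 36]{golowich2025quantum}. A self-contained entropic proof would therefore be a genuinely different route. As sketched, though, yours does not reach the stated bound, and the failure is at the step you flag.

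\textbf{The factor of two per group.} Write $D_j$ for your $I_j$, to avoid a clash with mutual information $I(\cdot{:}\cdot)$. Your single-group inequalities $S(J_j)\le S(J_j\setminus D_j)$ and $S(RJ_j)\le S(R,J_j\setminus D_j)$ only give $S(J_j)\le r$. That is a penalty of $\delta-1$ per group, and it cannot by itself produce the term $2(\delta-1)n/(r+\delta-1)$; already at $d=1$ you would get $k\le mr$ rather than $k\le m(r-\delta+1)$.

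The missing idea is to run the quantum Singleton argument \emph{inside} each group, with $R$ together with $\bar J_j:=\{1,\ldots,n\}\setminus J_j$ as the reference.
\begin{itemize}
\item Take $D\subseteq J_j$ with $|D|\le\delta-1$. By \Cref{def:r_delta_qlrc} the state on $R\bar J_jJ_j$ is recovered by a channel on $J_j\setminus D$. Data processing then gives $I(R\bar J_j{:}D\mid J_j\setminus D)=0$.
\item For a pure tripartite state this conditional mutual information equals $I(R\bar J_j{:}D)$. Hence $S(J_j)=S(J_j\setminus D)-S(D)$ holds exactly.
\item Apply this to two disjoint $(\delta-1)$-subsets $D,D'\subseteq J_j$, add the two identities, and use subadditivity. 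This gives $S(J_j)\le|J_j\setminus(D\cup D')|=r-\delta+1$, which is your ``counted twice'' statement.
\end{itemize}

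\textbf{The enlargement step.} Your justification ``recover $D_j$ once $E$ is known'' is circular, since $E$ is erased as well. The correct reason is this: when $J_j\setminus D_j\subseteq E$, the state $\rho_{RE'}$ is obtained from $\rho_{RE}=\rho_R\otimes\rho_E$ by a channel acting on $E$. So $E'$ remains decoupled from $R$. This requires $J_j\setminus D_j\subseteq E$, so you may adjoin $D_t$ for the partially covered group $J_t$ only if $|E\cap J_t|=r$. In general $|E'|=d-1+(\delta-1)(t-1)$, not $d-1+(\delta-1)t$.

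\textbf{The final count.} Even with $S(J)\le r-\delta+1$, the count is off when $r\nmid(d-1)$. Write $d-1=(t-1)r+u$ with $1\le u\le r$. If, as in your sketch, $E_1'$ and $E_2'$ use disjoint sets of groups, the complement is not a union of whole groups. Bounding $k\le S(C)$ with the two leftover partial groups counted at full size gives only $k\le n-2(d-1)-2(\delta-1)(m-t-1)$, which is $2(\delta-1)$ weaker than claimed.

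One way to close this, for $m\ge 2t$, is as follows.
\begin{itemize}
\item Split the groups into two families $P_1,P_2$, each containing the $t$ groups met by one erasure pattern. Use $2S(R)=I(R{:}P_1)+I(R{:}P_2)$, which holds because the state on $RP_1P_2$ is pure.
\item In $I(R{:}P_1)$, delete a $(\delta-1)$-set $D_j$ from each of the $t$ special groups, choosing $D_t$ disjoint from the $u$ chosen points of $J_t$. The same data-processing argument shows this does not change the mutual information.
\item The $d-1$ remaining special points form a correctable set and contribute $0$.
\item The other $r-u$ points of $J_t$ contribute at most $2(r-u)$, using $I(R{:}B\mid W)\le 2S(B)$.
\item Every other group contributes at most $2S(J)\le 2(r-\delta+1)$.
\end{itemize}
Together these give $k\le(m-2t)(r-\delta+1)+2(r-u)$, which is exactly the stated right-hand side.

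\textbf{The CSS fallback.} This would not suffice either. The theorem concerns arbitrary QLRCs, not only CSS codes. Moreover, classical Singleton-type bounds for $C_X$ and $C_Z$ control their classical distances, not $d(\mathcal Q)$, which can be larger for degenerate codes.
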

\begin{proof}
    Follows from \cite[Theorem 36]{golowich2025quantum}.
\end{proof}
The above bound can also be extended to the hierarchical setting, but since the explicit constructions in this work are of the form $\CSS(C,C)$ we give a specialized Singleton-like bound for CSS codes constructed from a dual-containing classical HLRC. When \(h=1\), this recovers the same bound as \cite[Theorem 30]{galindo2024quantumrdeltalocallyrecoverablecodes}.
\begin{proposition} \label{prop:singleton.bound.qhlrc}
    Consider a classical $h$-level $((r_1,\delta_1),\ldots, (r_h,\delta_h))$-HLRC $C$ such that $C^\perp \subseteq C$ and $d(C^\perp) \geq \delta_1$. Then $\dim C = (N+k)/2$ and $\mathcal{Q} = \mathrm{CSS}(C,C)$ is an $[[N,k,\geq d(C)]]$ quantum $h$-level HLRC satisfying \begin{equation}\label{eqn:singleton_bound_css_qhlrc}
         k + 2d(C) \leq N+2 -  2\sum_{l = 1}^{h-1}\left(\left\lceil{\frac{N+k}{2r_l}}\right\rceil - 1\right)(\delta_l - \delta_{l+1}) - 2\left(\left\lceil{\frac{N+k}{2r_h}}\right\rceil - 1\right)(\delta_h - 1)
    \end{equation}
\end{proposition}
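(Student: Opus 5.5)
Our plan is to reduce the statement to a classical Singleton-like bound for hierarchical LRCs applied to $C$, and then translate it into quantum parameters via the CSS dimension formula.

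\emph{Bookkeeping.} Since $C^\perp\subseteq C$, the quantum dimension is $k=2\dim C-N$, so $\dim C=(N+k)/2$. The CSS distance is $\min\{\mathrm{wt}(c):c\in C\setminus C^\perp\}$, which is at least $d(C)$. Because $d(C^\perp)\ge\delta_1\ge\delta_l$ for every $l$, we apply \Cref{thm:css_qlrc_iff_clrc} at the top level and obtain that $\mathcal Q$ is an $(r_1,\delta_1)$-QLRC. For the recursion we use \Cref{lem:puncturing_css}. The punctured pair $(C|_{J},C|_J)$ is again dual-containing: $(C|_J)^\perp=\sigma_J(C^\perp)\subseteq C|_J$. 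Moreover $d(\sigma_J(C^\perp))\ge d(C^\perp)\ge\delta_2$, since shortening does not decrease distance. So \Cref{thm:css_qlrc_iff_clrc} applies again to $C|_J$. Induction on $h$ then shows $\mathcal Q$ is an $h$-level QHLRC.

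\emph{Main step: the classical bound.} We need $d(C)\le N-K+2-\sum_{l=1}^{h-1}(\lceil K/r_l\rceil-1)(\delta_l-\delta_{l+1})-(\lceil K/r_h\rceil-1)(\delta_h-1)$ with $K=\dim C$. This is the Singleton-like bound for HLRCs of Sasidharan et al.\ / Ballentine--Barg--Vladut. Substituting $K=(N+k)/2$ and multiplying by $2$ gives exactly \eqref{eqn:singleton_bound_css_qhlrc}. We will cite this bound. To keep the argument self-contained, we sketch its standard greedy proof.

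We build a coordinate set $T$ with $\operatorname{rank}(G|_T)\le K-1$ and $|T|$ as large as possible; then $d(C)\ge$ would give $d(C)\le N-|T|$. The construction proceeds as follows.
\begin{enumerate}
\item Repeatedly add whole level-$h$ groups. Each group raises the rank by at most $r_h$ but adds $r_h+\delta_h-1$ coordinates, so each step gains $\delta_h-1$ "free" coordinates. We continue while the rank stays below $K$, for roughly $\lceil K/r_h\rceil-1$ steps.
\item Within level-$(l+1)$ groups that complete a level-$l$ group, we gain an additional $\delta_l-\delta_{l+1}$ redundant coordinates. This happens because the level-$l$ group has only $r_l$ rank spread over $r_l+\delta_l-1$ coordinates.
\item Once the greedy procedure stops, we pad $T$ with single coordinates until the rank equals $K-1$.
\end{enumerate}

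\emph{Main obstacle.} The delicate part is the nested counting. We must show that about $\lceil K/r_l\rceil-1$ level-$l$ groups can be fully absorbed at each level simultaneously, with the differences $\delta_l-\delta_{l+1}$ telescoping correctly. This requires handling the case where a group's rank contribution is smaller than $r_l$, which only helps, and the final partial group, which is why the "$-1$" appears. We will follow the induction on levels in the classical proof. At each level we first fill level-$(l+1)$ subgroups inside a level-$l$ group and then use that the punctured code on that group has dimension at most $r_l$ to get the extra redundancy. If this becomes messy, we fall back on citing the classical theorem directly.
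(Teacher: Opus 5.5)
Your approach is the paper's: apply the classical hierarchical Singleton-like bound \cite[Theorem 3.1]{sasidharan2015hlrc} to $C$ and substitute $\dim C=(N+k)/2$. The problem is that you state that classical bound with the wrong constant. The correct bound is
\[
d(C)\le N-K+1-\sum_{l=1}^{h-1}\left(\left\lceil\frac{K}{r_l}\right\rceil-1\right)(\delta_l-\delta_{l+1})-\left(\left\lceil\frac{K}{r_h}\right\rceil-1\right)(\delta_h-1),
\]
not $N-K+2-\cdots$. The $+2$ in the $\delta=2$ form $n-k-\lceil k/r\rceil+2$ appears only after absorbing the $-1$ from the parentheses, and you kept that $-1$.

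With your constant, doubling and using $2K=N+k$ gives $k+2d(C)\le N+4-2\sum(\cdots)$. That does not imply \eqref{eqn:singleton_bound_css_qhlrc}, so your claim that the substitution ``gives exactly'' the target is false as written. This is a slip rather than a missing idea, and your own greedy sketch produces the correct constant. Once $\operatorname{rank}(G|_T)\le K-1$, there is a nonzero codeword vanishing on $T$, so $d(C)\le N-|T|$. The construction gives $|T|\ge (K-1)+\sum_{l=1}^{h}(\lceil K/r_l\rceil-1)(\delta_l-\delta_{l+1})$ with $\delta_{h+1}:=1$. With $+1$ in place of $+2$, your derivation is complete and coincides with the paper's.

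Your extra check that $\mathcal Q$ is an $h$-level QHLRC goes beyond the paper's proof, which only derives the inequality. Its two ingredients are sound: $(C|_J)^\perp=\sigma_J(C^\perp)\subseteq C|_J$, and $d(\sigma_J(C^\perp))\ge d(C^\perp)\ge\delta_2$. One point needs tightening. \Cref{def:h_level_qhlrc} requires the same set $J$ to be both a quantum recovery set for $\mathcal Q$ and the set on which $\mathcal Q|_J$ carries the lower-level structure. The bare code-level equivalence in \Cref{thm:css_qlrc_iff_clrc} does not guarantee this. You should instead use its set-wise form: if $C^\perp\subseteq C$ and $d(C|_J)\ge\delta$, then any at most $\delta-1$ erasures inside $J$ can be recovered by measuring the stabilizers supported in $J$.
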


\begin{proof}
    Denote the parameters of $C$ as $[N,k',d(C)]$. By the classical HLRC Singleton-like bound \cite[Theorem 3.1]{sasidharan2015hlrc}, we have \begin{align}
        d(C) &\leq N- k' + 1 - \sum_{l = 1}^{h-1}\left(\left\lceil{\frac{k'}{r_l}}\right\rceil - 1\right)(\delta_l - \delta_{l+1}) - \left(\left\lceil{\frac{k'}{r_h}}\right\rceil - 1\right)(\delta_h - 1).
    \end{align} We know by the CSS construction that $k = k' + k' - N$ so $k' = (N+k)/2$. Plugging this into the above bound, we have \begin{align*}
        d(C) \leq N - \frac{N+k}{2} + 1 - \sum_{l = 1}^{h-1}\left(\left\lceil{\frac{N+k}{2r_l}}\right\rceil - 1\right)(\delta_l - \delta_{l+1}) - \left(\left\lceil{\frac{N+k}{2r_h}}\right\rceil - 1\right)(\delta_h - 1)
    \end{align*} and upon rearranging we get \begin{align*}
        \frac{N+k}{2} + d(C) &\leq N + 1 - \sum_{l = 1}^{h-1}\left(\left\lceil{\frac{N+k}{2r_l}}\right\rceil - 1\right)(\delta_l - \delta_{l+1}) - \left(\left\lceil{\frac{N+k}{2r_h}}\right\rceil - 1\right)(\delta_h - 1)\\
        k + 2d(C) &\leq N+2 -  2\sum_{l = 1}^{h-1}\left(\left\lceil{\frac{N+k}{2r_l}}\right\rceil - 1\right)(\delta_l - \delta_{l+1}) - 2\left(\left\lceil{\frac{N+k}{2r_h}}\right\rceil - 1\right)(\delta_h - 1),
    \end{align*} as desired.
\end{proof} 

\subsection{Decoding algorithms}
Here we recall a few results that are necessary to prove the efficiency of our decoding algorithm, \Cref{alg:decoding_r_delta_qtb}. Given a Reed-Solomon code with block length $q-1$ and rate $R = \ell/(q-1)$, there exists an efficient list decoding algorithm for the Reed-Solomon code which can be summarized in the following statement.

\begin{theorem}[\cite{GS98}]
\label{thm:RS.list.dec}
The Reed-Solomon code with parameters $q,\ell$ has an $e$-list-decoding algorithm that runs in time $q^{O(1)}$ for $e = (q-1)(1-\sqrt{R})$.
\end{theorem}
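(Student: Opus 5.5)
The plan is to reproduce the Guruswami--Sudan argument for the Reed--Solomon code $\{\ev(f):\deg f<\ell\}$ of length $n=q-1$. We use the evaluation points $\alpha_1,\ldots,\alpha_n\in\F_q^*$ and a received word $y\in\F_q^n$. We want every $f$ with $\deg f<\ell$ and at most $e$ disagreements with $y$, that is, at least $t=n-e$ agreements. The radius is meant strictly, so $e<n(1-\sqrt R)$, which gives $t>\sqrt{n\ell}\ge\sqrt{n(\ell-1)}$.

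\textbf{Interpolation step.} Fix a multiplicity parameter $m$ and a $(1,\ell-1)$-weighted degree bound $D$. Find a nonzero $Q(X,Y)\in\F_q[X,Y]$ with $(1,\ell-1)$-weighted degree at most $D$ that vanishes with multiplicity at least $m$ at every point $(\alpha_i,y_i)$.
\begin{itemize}
\item Vanishing with multiplicity $m$ at one point is $\binom{m+1}{2}$ homogeneous linear conditions on the coefficients of $Q$, obtained from the shifted Hasse derivatives.
\item The number of monomials $X^aY^b$ with $a+(\ell-1)b\le D$ is at least $D^2/(2(\ell-1))$.
\item So a nonzero solution exists once $D^2/(2(\ell-1))>n\binom{m+1}{2}$. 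Take $D=\lfloor\sqrt{n(\ell-1)m(m+1)}\rfloor+1$ (adjusted so the inequality is strict).
\item The solution is found by Gaussian elimination in time polynomial in $n$ and $m$.
\end{itemize}

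\textbf{Root-finding step.} Suppose $\deg f<\ell$ agrees with $y$ in at least $t$ positions.
\begin{itemize}
\item The univariate polynomial $Q(X,f(X))$ has degree at most $D$.
\item It vanishes with multiplicity at least $m$ at each agreement point $\alpha_i$, since substituting $Y=f(X)$ preserves multiplicity when $f(\alpha_i)=y_i$.
\item It therefore has at least $tm$ roots counted with multiplicity. If $tm>D$, then $Q(X,f(X))\equiv 0$, so $Y-f(X)$ divides $Q(X,Y)$.
\item All such $f$ are recovered by finding the $Y$-roots of $Q$ in $\F_q[X]$, for instance with the Roth--Ruckenstein algorithm or bivariate factorization over $\F_q$, in polynomial time.
\item The list has size at most $\deg_Y Q\le D/(\ell-1)$.
\end{itemize}

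\textbf{Parameter choice.} It remains to show that $tm>D$ can be arranged whenever $t>\sqrt{n(\ell-1)}$. We need $t>\sqrt{n(\ell-1)(1+1/m)}+1/m$. Since $t$ and $n(\ell-1)$ are integers, the gap $t^2-n(\ell-1)\ge1$. Choosing $m=\Theta(n(\ell-1))=O(q^2)$ makes the correction term small enough.

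With this choice the linear system has $\mathrm{poly}(q)$ size, so interpolation and root finding both run in $q^{O(1)}$ time, as claimed. This parameter bookkeeping, which turns the strict inequality into an integer margin while keeping $m$ polynomial in $q$, is the only delicate point. The rest is the standard Guruswami--Sudan argument \cite{GS98}.
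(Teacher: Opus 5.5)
Your proposal is correct. It is the same Guruswami--Sudan interpolation and root-finding argument with multiplicities that the paper invokes by citing \cite{GS98}, including the choice $m=\Theta(n(\ell-1))$ justified by the integer gap $t^2-n(\ell-1)\ge 1$. The strict-radius assumption is not actually needed: your argument only uses $t>\sqrt{n(\ell-1)}$, which already holds for agreement $t\ge n-e=\sqrt{n\ell}$.
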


Furthermore, it is well-known that to decode a CSS code $\mathcal{Q}=CSS(C_X,C_Z)$ it is sufficient to have classical decoders for $C_X,C_Z$.

\begin{proposition}[\cite{Calderbank_1996,Steane_1996}]
Let $\mathcal{Q}=\CSS(C_X,C_Z)$ be a CSS code of block length $n$ over $\mathbb{F}_q^s$ of size $|\mathbb{F}_q^s|$. Let $e\geq 0$ be an integer such that for each permutation $(\alpha,\beta)$ of $(X,Z)$, there exists a classical decoding algorithm $\mathrm{Dec}_\alpha$ that takes as input a classical corrupted codeword $c+b$  for some $c \in C_\alpha$ and some corruption $b \in (\mathbb{F}_q^s)^n$ of Hamming weight $|b| \leq e$, and outputs some $c' \in C_\alpha$ such that $c'-c \in C_\beta^\perp$.
Then \(\mathcal Q\) has a decoding algorithm \(\mathrm{Dec}\) that recovers from errors of weight \(e\), so \(\mathcal{Q}\) has distance \(d \geq 2e  + 1\). Furthermore, if each $\mathrm{Dec}_\alpha$ has running time \(T_\alpha(n,a)\), then \(\mathrm{Dec}\) has running time
\(T_X(n,a) + T_Z (n,a) + O(n^3 \mathrm{poly log} \, a).\)
\end{proposition}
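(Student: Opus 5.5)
The plan is to follow the standard Calderbank--Shor--Steane decoding argument: reduce quantum decoding to two independent classical decoding problems, one for the $X$-part and one for the $Z$-part of the error. Because the Pauli group factorizes, any error of weight at most $e$ can be written, up to phase, as $E=X^aZ^b$ with $\mathrm{wt}(a)\le e$ and $\mathrm{wt}(b)\le e$, since $\supp(a),\supp(b)\subseteq\supp(E)$.

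\emph{Step 1: syndromes.} Fix parity-check matrices $H_X,H_Z$ with $C_X=\ker H_X$ and $C_Z=\ker H_Z$. Measuring the stabilizer generators of $\mathcal Q=\CSS(C_X,C_Z)$ on the corrupted state returns the classical syndromes $H_Xa$ and $H_Zb$. I will fix the pairing so that the component of $E$ detected by $C_X^\perp$-type checks is the one decoded in $C_X$; this bookkeeping of which dual pairs with which code is the only convention-sensitive point.

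\emph{Step 2: from syndrome to a corrupted codeword.} Given $s=H_Xa$, solve $H_Xy=s$ by Gaussian elimination. Then $c:=y-a\in C_X$, so $y=c+a$ is a codeword of $C_X$ corrupted by a pattern of weight at most $e$. Apply $\mathrm{Dec}_X$ to $y$. It returns $c'\in C_X$ with $c'-c\in C_Z^\perp$. Set $a':=y-c'$. Then $a'-a=c-c'\in C_Z^\perp$, so the correction operator built from $a'$ differs from the true one built from $a$ by an element of the stabilizer group, which acts trivially on $\mathcal Q$. Treat $b$ the same way with $\mathrm{Dec}_Z$, obtaining $b'$ with $b'-b\in C_X^\perp$. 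Applying $(X^{a'}Z^{b'})^{-1}$ therefore returns the state to $E'E\ket\phi$, where $E'E$ is a stabilizer times a phase, so the original encoded state is recovered (the global phase is irrelevant).

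\emph{Step 3: distance and running time.} Since every Pauli error of weight at most $e$ is corrected, the Knill--Laflamme conditions hold for this error set. Hence $\langle\phi_i|E_1^\dagger E_2|\phi_j\rangle=c_{E_1,E_2}\delta_{ij}$ for all $E_1,E_2$ of weight at most $e$. Every Pauli of weight at most $2e$ can be written as such a product $E_1^\dagger E_2$. Therefore no nontrivial logical operator has weight at most $2e$, and $d\ge 2e+1$.

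For the running time, the cost is $T_X+T_Z$ for the classical decoders plus the linear algebra. The linear algebra consists of computing the syndromes, solving two linear systems of size $O(n)$, and forming the corrections, which is $O(n^3)$ field operations. Each field operation over an alphabet of size $a$ costs $\mathrm{poly}\log a$, giving the stated $O(n^3\,\mathrm{polylog}\,a)$ overhead.

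The main obstacle is keeping the convention consistent: the condition $c'-c\in C_\beta^\perp$ must exactly match the stabilizer subgroup that acts trivially on the relevant error component. Once that pairing is fixed, every step is routine.
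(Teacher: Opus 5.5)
Your proposal is correct and is the standard Calderbank--Shor--Steane syndrome-decoding argument; the paper gives no proof of its own and simply cites this result. The only adjustment is bookkeeping: under the paper's convention $S_{\mathcal Q}=\{X^aZ^b : a\in C_X^\perp,\ b\in C_Z^\perp\}$ from \Cref{subsec:puncturing_css_codes}, the $Z$-part of $E=X^aZ^b$ has syndrome $H_Xb$ and is handled by $\mathrm{Dec}_X$, while the $X$-part has syndrome $H_Za$ and is handled by $\mathrm{Dec}_Z$ (this is harmless, since your hypothesis is symmetric); also, because the folded codes are only $\mathbb{F}_q$-linear, the linear systems should be solved over $\mathbb{F}_q$ in $ns$ unknowns, which still costs $O(n^3\,\mathrm{polylog}\,a)$ since $a=q^s$.
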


\subsection{The resultant}\label{sec:resultant}

In this section we recall the Sylvester matrix and the resultant of two
univariate polynomials. We also record the description of the resultant in terms
of roots in an algebraic closure, its interpretation as a field norm in the
irreducible case, and a reduction-modulo-\(p\) criterion that will be used later. For a more in-depth treatment, see
\cite{cox2005using, gelfand1994discriminants, neukirch1999algebraic}.
We will use the field-theoretic notation fixed in
\Cref{sec:field_theory}. 

Let \(K\) be a field, and let
\[
f(X)=a_mX^m+a_{m-1}X^{m-1}+\cdots+a_0,
\qquad
g(X)=b_nX^n+b_{n-1}X^{n-1}+\cdots+b_0
\]
be nonzero polynomials in \(K[X]\) of degrees \(m\) and \(n\), respectively.

\begin{definition}[Sylvester matrix]
The \emph{Sylvester matrix} of \(f\) and \(g\), denoted \(\Syl(f,g)\), is the
\((m+n)\times(m+n)\) matrix obtained by writing \(n\) shifted copies of the
coefficient vector of \(f\), followed by \(m\) shifted copies of the coefficient
vector of \(g\). Explicitly,
\[
\Syl(f,g)=
\begin{pmatrix}
a_m & a_{m-1} & \cdots & a_0 & 0 & \cdots & 0\\
0 & a_m & a_{m-1} & \cdots & a_0 & \ddots & \vdots\\
\vdots & \ddots & \ddots & \ddots & & \ddots & 0\\
0 & \cdots & 0 & a_m & a_{m-1} & \cdots & a_0\\[4pt]
b_n & b_{n-1} & \cdots & b_0 & 0 & \cdots & 0\\
0 & b_n & b_{n-1} & \cdots & b_0 & \ddots & \vdots\\
\vdots & \ddots & \ddots & \ddots & & \ddots & 0\\
0 & \cdots & 0 & b_n & b_{n-1} & \cdots & b_0
\end{pmatrix},
\]
where the first block consists of \(n\) rows and the second block consists of
\(m\) rows.
\end{definition}

\begin{definition}[Resultant]
The \emph{resultant} of \(f\) and \(g\) is
\(
\Res(f,g):=\det \Syl(f,g).
\)
\end{definition}

Since the entries of \(\Syl(f,g)\) are polynomial expressions in the coefficients
of \(f\) and \(g\), the resultant is itself a polynomial expression in those
coefficients, with integer coefficients. In particular, if \(f,g\in \Z[X]\), then
\(\Res(f,g)\in \Z\).

\begin{proposition}[Root formula for the resultant]
\label{prop:root_formula_resultant}
Let \(K\) be a field. Let \(f,g\in K[X]\) be nonzero polynomials of degrees
\(m\) and \(n\), respectively, and write
\[
f(X)=a_m\prod_{i=1}^m (X-\alpha_i),
\qquad
g(X)=b_n\prod_{j=1}^n (X-\beta_j)
\]
in \(\overline K[X]\), where the roots are counted with multiplicity. Then
\[
\Res(f,g)
=
a_m^n b_n^m
\prod_{i=1}^m\prod_{j=1}^n(\alpha_i-\beta_j).
\]
Equivalently,
\[
\Res(f,g)
=
a_m^n\prod_{i=1}^m g(\alpha_i)
=
(-1)^{mn}b_n^m\prod_{j=1}^n f(\beta_j).
\]
\end{proposition}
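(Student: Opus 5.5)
We prove the root formula by first establishing the product form $\Res(f,g)=a_m^nb_n^m\prod_{i,j}(\alpha_i-\beta_j)$, and then deriving the two "equivalently" forms from it by direct algebra. Since the Sylvester determinant does not depend on the ambient field, we may compute over $\overline K$ throughout, where both polynomials split.

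The main step is a universal-identity argument. Work in the polynomial ring $R=\Z[a,b,\alpha_1,\ldots,\alpha_m,\beta_1,\ldots,\beta_n]$ with indeterminates, and define the generic polynomials $f=a\prod_i(X-\alpha_i)$ and $g=b\prod_j(X-\beta_j)$. Their coefficients are $a$ or $b$ times signed elementary symmetric polynomials in the roots. Set $D=\det\Syl(f,g)\in R$. Any specific $f,g$ over $\overline K$ arise from this generic pair by specializing the indeterminates, and $\det\Syl$ commutes with ring homomorphisms. So it suffices to prove the identity in $R$.

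We establish the identity through three facts about $D$.

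First, $D$ vanishes whenever $\alpha_i=\beta_j$. In that case the vector $(x^{m+n-1},\ldots,x,1)$ with $x=\alpha_i=\beta_j$ lies in the kernel of $\Syl(f,g)$, because multiplying the rows by this vector evaluates the polynomials $X^kf$ and $X^kg$ at the common root $x$. Hence $(\alpha_i-\beta_j)\mid D$ in the UFD $R$ for each pair $(i,j)$. These factors are pairwise non-associate primes, so their product $P=\prod_{i,j}(\alpha_i-\beta_j)$ divides $D$.

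Second, we compare degrees. Assign the root variables degree $1$. The $(k,\ell)$ entry of the $f$-block is then homogeneous of degree $\ell-k$ in the roots, and likewise for the $g$-block, so $D$ is homogeneous of total root-degree $mn$. The product $P$ also has degree $mn$. In addition, $D$ is homogeneous of degree $n$ in $a$ and degree $m$ in $b$. Together these give $D=c\,a^nb^mP$ for some integer $c$.

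Third, we determine $c$ by specializing all $\beta_j=0$. Then $g=bX^n$ and $P=\prod_i\alpha_i^n$. The Sylvester matrix becomes block-triangular: after moving the $g$-rows, which are shifted monomials, the determinant reduces to $a_0^n b^m$ with $a_0=a(-1)^m\prod_i\alpha_i$, up to the sign introduced by the reordering. Tracking that sign shows $c=1$. This sign bookkeeping is the only fiddly point, and we expect it to be the main obstacle. An alternative is to specialize to $m=n=1$ as a check and use homogeneity, but the direct block computation is cleaner.

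Finally, the equivalent forms follow by regrouping the product. We have $a_m^n\prod_i g(\alpha_i)=a_m^n\prod_i b_n\prod_j(\alpha_i-\beta_j)$, which is the product form. Similarly, $f(\beta_j)=a_m\prod_i(\beta_j-\alpha_i)$ contributes a sign $(-1)^m$ for each $j$, giving the total factor $(-1)^{mn}$ in the last expression.
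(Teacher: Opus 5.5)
Your proof is correct, and it does more than the paper's. The paper only verifies that the three displayed expressions agree, by expanding $g(\alpha_i)=b_n\prod_j(\alpha_i-\beta_j)$ and $f(\beta_j)=a_m\prod_i(\beta_j-\alpha_i)$ exactly as in your final paragraph. It then cites the agreement with $\det\Syl(f,g)$ as ``the standard determinantal formula.'' You actually prove that identity. You pass to the generic pair over $\Z[a,b,\alpha,\beta]$ and combine three ingredients: divisibility by each $\alpha_i-\beta_j$ (the same kernel-vector argument the paper uses later in \Cref{prop:resultant_mod_p}), a weighted-degree count, and a normalization at $\beta=0$. This makes the statement self-contained and visibly characteristic-free, since it becomes a polynomial identity over $\Z$ that specializes to any field. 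The paper's version is shorter only because it outsources the core step.

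The steps you compressed all go through:
\begin{itemize}
\item Read ``$D$ vanishes at $\alpha_i=\beta_j$'' as saying the image of $D$ in the domain $R/(\alpha_i-\beta_j)$ is zero. This holds because a square matrix over a domain with a nonzero kernel vector has determinant zero.
\item For the degree count, give the $k$-th row of either block weight $-k$ and column $\ell$ weight $\ell$. Then every term of the determinant has root-degree $\sum_{\ell=1}^{m+n}\ell-\sum_{k=1}^{n}k-\sum_{k=1}^{m}k=mn$.
\item To get $D=c\,a^nb^mP$, write $D=a^nb^mD'$. Since $P$ is coprime to $a$ and $b$, you get $P\mid D'$, and the quotient has degree zero, hence is an integer.
\end{itemize}

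The sign you were worried about cancels. With $g=bX^n$, the $g$-rows form $[\,bI_m\mid 0\,]$, and the last $n$ columns of the $f$-rows form a lower-triangular block with diagonal $a_0$. So $\det\Syl(f,bX^n)=(-1)^{mn}b^ma_0^n$. Meanwhile $a_0^n=(-1)^{mn}a^n\prod_i\alpha_i^n$, so the two signs cancel and $c=1$.
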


\begin{proof}
The equivalence of the displayed formulas follows from the fact that
\[
g(\alpha_i)=b_n\prod_{j=1}^n(\alpha_i-\beta_j)
\quad \text{and} \quad
f(\beta_j)=a_m\prod_{i=1}^m(\beta_j-\alpha_i).
\]
The agreement of these expressions with \(\det \Syl(f,g)\) is the standard
determinantal formula for the resultant.
\end{proof}

The most important immediate consequence is the following.

\begin{corollary}\label{cor:resultant_common_root}
Let \(K\) be a field, and let \(f,g\in K[X]\) be nonzero polynomials. Then
\(
\Res(f,g)=0
\)
if and only if \(f\) and \(g\) have a common root in \(\overline K\).
Equivalently, \(\Res(f,g)=0\) if and only if \(f\) and \(g\) have a nonconstant
common factor in \(K[X]\).
\end{corollary}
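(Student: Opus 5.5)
We deduce both equivalences from the root formula for the resultant, \Cref{prop:root_formula_resultant}. Since $f$ and $g$ are nonzero, their leading coefficients $a_m,b_n$ are nonzero. Hence the factor $a_m^n b_n^m$ in the root formula is a nonzero element of $K$.

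For the first equivalence, factor $f$ and $g$ completely over $\overline K$ as in \Cref{prop:root_formula_resultant}. Then
\[
\Res(f,g)=a_m^n b_n^m\prod_{i,j}(\alpha_i-\beta_j).
\]
Because $\overline K$ is a field, it has no zero divisors. So this product vanishes if and only if some factor $\alpha_i-\beta_j$ is zero, that is, if and only if some root of $f$ equals some root of $g$. This is exactly the statement that $f$ and $g$ have a common root in $\overline K$. The degenerate case where one of $f,g$ is a nonzero constant is consistent with this: the product is empty, the resultant is a nonzero power of the constant, and there is no common root.

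For the second equivalence, suppose first that $f$ and $g$ have a nonconstant common factor $d\in K[X]$. Then $d$ has a root in $\overline K$, because $\overline K$ is algebraically closed, and that root is a common root of $f$ and $g$.

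Conversely, suppose $\gamma\in\overline K$ is a common root of $f$ and $g$. Since $\overline K/K$ is algebraic, $\gamma$ has a minimal polynomial $\mu\in K[X]$, which is nonconstant. By the defining property of the minimal polynomial, $\mu$ divides every polynomial in $K[X]$ that vanishes at $\gamma$. In particular $\mu$ divides both $f$ and $g$, so it is a nonconstant common factor in $K[X]$.

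An alternative route is to take $d=\gcd(f,g)$ in $K[X]$ and use that the gcd computed in $K[X]$ agrees with the gcd computed in $\overline K[X]$. The minimal-polynomial argument avoids needing that fact.

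No step is a real obstacle. The only point needing care is the descent from a common root in $\overline K$ to a common factor over $K$, which the minimal polynomial handles.
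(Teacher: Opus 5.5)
Your proposal is correct and follows the paper's proof: both read the first equivalence off the root formula, using $a_m,b_n\neq 0$ and the absence of zero divisors in $\overline K$. The only difference is that the paper just calls the passage from a common root in $\overline K$ to a nonconstant common factor in $K[X]$ ``standard,'' while you prove it via the minimal polynomial of the common root, which is a valid way to fill that step in.
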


\begin{proof}
By \Cref{prop:root_formula_resultant},
\[
\Res(f,g)
=
a_m^n b_n^m
\prod_{i=1}^m\prod_{j=1}^n(\alpha_i-\beta_j).
\]
Since \(a_m\neq 0\) and \(b_n\neq 0\), this product vanishes if and only if
\(\alpha_i=\beta_j\) for some \(i,j\), that is, if and only if \(f\) and \(g\)
share a root in \(\overline K\). The final equivalence is standard: two polynomials in \(K[X]\) have a common root
in \(\overline K\) if and only if they have a nonconstant common factor in
\(K[X]\).
\end{proof}

We next record the norm interpretation of the resultant in the irreducible case.

\begin{proposition}[Resultant as a norm]
\label{prop:resultant_as_norm}
Let \(K\) be a field, let
\(
f(X)=a_mX^m+\cdots+a_0\in K[X],\) and \(g(X)\in K[X],
\)
and suppose that \(f\) is irreducible and separable over \(K\). Let \(\alpha\) be
a root of \(f\) in \(\overline K\). Then
\[
\Res(f,g)
=
a_m^{\deg g}\,
N_{K(\alpha)/K}\left(g(\alpha)\right).
\]
In particular, if \(f\) is monic, then
\(
\Res(f,g)
=
N_{K(\alpha)/K}\left(g(\alpha)\right).
\)
\end{proposition}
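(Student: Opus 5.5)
We prove the norm formula with \Cref{prop:root_formula_resultant}, after identifying the \(K\)-embeddings of \(K(\alpha)\) with the roots of \(f\).

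Since \(f\) is irreducible and separable of degree \(m\), it factors in \(\overline K[X]\) as \(f(X)=a_m\prod_{i=1}^m(X-\alpha_i)\) with the \(\alpha_i\) pairwise distinct, and we may take \(\alpha_1=\alpha\). Then \(L:=K(\alpha)\) satisfies \([L:K]=m\), and \(L/K\) is separable because it is generated by the separable element \(\alpha\). A \(K\)-embedding \(\sigma:L\hookrightarrow\overline K\) is determined by \(\sigma(\alpha)\), which must be a root of the minimal polynomial \(f/a_m\). Conversely, every root \(\alpha_i\) gives an embedding \(\sigma_i\) via \(K[X]/(f)\cong L\). So there are exactly \(m\) embeddings \(\sigma_1,\dots,\sigma_m\), with \(\sigma_i(\alpha)=\alpha_i\). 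By the separable description of the norm in \Cref{sec:field_theory},
\[
N_{L/K}(y)=\prod_{i=1}^m\sigma_i(y)\qquad\text{for all } y\in L.
\]

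Apply this to \(y=g(\alpha)\). Since \(g\) has coefficients in \(K\) and each \(\sigma_i\) fixes \(K\), we get \(\sigma_i(g(\alpha))=g(\sigma_i(\alpha))=g(\alpha_i)\). Hence
\[
N_{K(\alpha)/K}(g(\alpha))=\prod_{i=1}^m g(\alpha_i).
\]
The second form of \Cref{prop:root_formula_resultant} gives \(\Res(f,g)=a_m^{\deg g}\prod_{i=1}^m g(\alpha_i)\). Combining the two displays yields the claim, and the monic case is \(a_m=1\).

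The main obstacle is the edge case \(g=0\). Here \(\deg g\) and the Sylvester matrix are not defined as stated, so this case has to be excluded or handled by the convention \(\Res(f,0)=0\), which agrees with \(N(0)=0\). If \(g\) is a nonzero constant \(b_0\), the formula gives \(\Res(f,g)=b_0^m=N(b_0)\), consistent with both sides. Apart from this, the only care needed is the bookkeeping that matches embeddings bijectively with roots, and that uses only separability and irreducibility.
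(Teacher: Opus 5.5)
Your proof is correct and follows essentially the same route as the paper: identify the \(K\)-embeddings of \(K(\alpha)\) with the distinct roots of \(f\), use the embedding formula for the norm to get \(N_{K(\alpha)/K}(g(\alpha))=\prod_i g(\alpha_i)\), and combine this with the root formula \(\Res(f,g)=a_m^{\deg g}\prod_i g(\alpha_i)\). Your remark on the degenerate case \(g=0\) is a sensible addition that the paper leaves implicit, since its Sylvester-matrix definition of the resultant only covers nonzero polynomials.
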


\begin{proof}
Let \(m=\deg f\) and \(n=\deg g\). Write
\[
f(X)=a_m\prod_{i=1}^m(X-\alpha_i)
\]
in \(\overline K[X]\), where \(\alpha_1,\dots,\alpha_m\) are the roots of \(f\).
Since \(f\) is irreducible and separable, these roots are distinct and are
precisely the images of \(\alpha\) under the \(K\)-embeddings
\(
\sigma:K(\alpha)\hookrightarrow \overline K.
\)
By \Cref{prop:root_formula_resultant},
\[
\Res(f,g)
=
a_m^n\prod_{i=1}^m g(\alpha_i).
\]
On the other hand, by the embedding formula for the norm,
\[
N_{K(\alpha)/K}\left(g(\alpha)\right)
=
\prod_{\sigma:K(\alpha)\hookrightarrow \overline K}
\sigma\left(g(\alpha)\right)
=
\prod_{\sigma:K(\alpha)\hookrightarrow \overline K}
g\left(\sigma(\alpha)\right)
=
\prod_{i=1}^m g(\alpha_i).
\]
Combining the two formulas gives
\(
\Res(f,g)
=
a_m^n\,
N_{K(\alpha)/K}\left(g(\alpha)\right).
\)
The second statement follows by setting \(a_m=1\).
\end{proof}

A particularly important special case is when \(f\) is a cyclotomic polynomial.

\begin{corollary}[Cyclotomic norm formula]
\label{cor:cyclotomic_norm_formula}
Let \(n\geq 1\), let \(\Phi_n(X)\in \Z[X]\) be the \(n\)th cyclotomic polynomial,
and let \(\zeta_n\in \C\) be a primitive \(n\)th root of unity. Then for every
\(A(X)\in \Z[X]\),
\[
\Res(\Phi_n(X),A(X))
=
N_{\Q(\zeta_n)/\Q}\left(A(\zeta_n)\right).
\]
% Equivalently,
% \[
% \Res(A(X),\Phi_n(X))
% =
% (-1)^{\deg(A)\varphi(n)}
% N_{\Q(\zeta_n)/\Q}\left(A(\zeta_n)\right),
% \]
% where \(\varphi\) is Euler's totient function.
\end{corollary}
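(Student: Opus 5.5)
The plan is to specialize \Cref{prop:resultant_as_norm} to $K=\Q$, $f=\Phi_n$ and $g=A$. Two hypotheses have to be checked first. The first is that $\Phi_n$ is irreducible over $\Q$; this is the classical irreducibility of cyclotomic polynomials, which we take as standard (see \cite{neukirch1999algebraic}). The second is separability, which is automatic because $\Q$ is perfect, or alternatively because $\Phi_n$ divides $X^n-1$ and the latter has $n$ distinct roots over $\C$. Since $\Phi_n$ is monic, the leading-coefficient factor $a_m^{\deg g}$ equals $1$. The second statement of \Cref{prop:resultant_as_norm}, taking $\alpha=\zeta_n$, then gives $\Res(\Phi_n,A)=N_{\Q(\zeta_n)/\Q}(A(\zeta_n))$.

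The proposition implicitly requires both polynomials to be nonzero. If $A=0$, the resultant is conventionally $0$ and the norm of $0$ is also $0$, so this degenerate case can be handled by that convention or simply excluded. There is also the question of which root is used. The proposition allows $\alpha$ to be any root of $f$, and the primitive $n$th roots of unity are exactly the roots of $\Phi_n$ in $\C$. Therefore $\zeta_n$ is a legitimate choice, and the norm is independent of this choice because the fields $\Q(\zeta_n^k)$ coincide. As a remark, since $\Q(\zeta_n)/\Q$ is Galois, the norm can also be written as $\prod_{\gcd(k,n)=1}A(\zeta_n^k)$, which agrees with the root formula in \Cref{prop:root_formula_resultant}.

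The only genuinely nontrivial input is the irreducibility of $\Phi_n$ over $\Q$. I would cite it rather than reprove it; a full proof would go through the standard argument that $\zeta^p$ is a root of the minimal polynomial of $\zeta$ for every prime $p\nmid n$. All remaining steps are direct substitution.
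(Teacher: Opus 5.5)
Your proposal is correct and follows the paper's proof: specialize \Cref{prop:resultant_as_norm} to $K=\Q$, $f=\Phi_n$ (monic, irreducible and separable over $\Q$), $g=A$ and $\alpha=\zeta_n$. Your side remarks on $A=0$ and on the choice of root are harmless but unnecessary, since the proposition already holds for any root of $f$.
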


\begin{proof}
The polynomial \(\Phi_n(X)\) is monic and irreducible over \(\Q\), and
\(\zeta_n\) is one of its roots. Hence the formula follows immediately
from \Cref{prop:resultant_as_norm}. 
% The second formula follows from the symmetry
% relation
% \[
% \Res(A,\Phi_n)=(-1)^{\deg(A)\deg(\Phi_n)}\Res(\Phi_n,A),
% \]
% together with \(\deg(\Phi_n)=\varphi(n)\).
\end{proof}

We will also need the following basic reduction criterion.

\begin{proposition}
\label{prop:resultant_mod_p}
Let \(f,g\in \Z[X]\), and let \(p\) be a prime. If the reductions of \(f\) and
\(g\) modulo \(p\) have a common root in an algebraic closure of \(\F_p\), then
\(
p\mid \Res(f,g).
\)

Equivalently, if \(p\nmid \Res(f,g)\), then the reductions of \(f\) and \(g\)
modulo \(p\) have no common root in an algebraic closure of \(\F_p\).
\end{proposition}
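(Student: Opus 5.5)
The plan is to compute the resultant twice: once over $\Z$, and once after reduction modulo $p$. The key point is that the Sylvester-matrix definition makes the resultant a polynomial with integer coefficients in the coefficients of $f$ and $g$, and forming that polynomial commutes with reduction mod $p$.

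Write $\bar f,\bar g\in\F_p[X]$ for the reductions. Since $\det$ is a polynomial with integer coefficients in the matrix entries, reducing $\det\Syl(f,g)$ modulo $p$ gives the determinant of the entrywise-reduced matrix. So the first step is
\[
\Res(f,g)\bmod p=\det\bigl(\Syl(f,g)\bmod p\bigr).
\]

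The only obstacle is that $\bar f$ or $\bar g$ may drop degree when $p$ divides a leading coefficient. In that case the reduced matrix is not literally $\Syl(\bar f,\bar g)$. I would split into cases.

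\textbf{Case 1: neither leading coefficient is divisible by $p$.} Then the reduced matrix is exactly $\Syl(\bar f,\bar g)$, so $\Res(f,g)\bmod p=\Res(\bar f,\bar g)$. By \Cref{cor:resultant_common_root} applied over $K=\F_p$, a common root in $\overline{\F_p}$ forces $\Res(\bar f,\bar g)=0$. Hence $p\mid\Res(f,g)$.

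\textbf{Case 2: $p$ divides exactly one leading coefficient, say $a_m$.} Let $m'=\deg\bar f<m$ and $n=\deg g=\deg\bar g$.
\begin{itemize}
\item Expanding the determinant along the first columns, where the only nonzero entries come from the rows of $g$, gives the standard identity
\[
\det\bigl(\Syl(f,g)\bmod p\bigr)=\pm\,\bar b_n^{\,m-m'}\,\Res(\bar f,\bar g).
\]
\item The common root again makes $\Res(\bar f,\bar g)=0$.
\end{itemize}

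\textbf{Case 3: $p$ divides both leading coefficients.} Then the first column of the reduced Sylvester matrix is zero, so its determinant vanishes.

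\textbf{Degenerate case.} If $\bar f$ or $\bar g$ is identically zero, then a whole block of rows vanishes and the determinant is $0$. (If one of them is a nonzero constant, it has no roots, so the hypothesis of a common root cannot hold and there is nothing to prove.)

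In every case the reduced determinant is $0$ in $\F_p$, i.e.\ $p\mid\Res(f,g)$. The contrapositive is exactly the stated equivalent form.

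An alternative for Case 2 avoids the determinant expansion. Use the root formula $\Res(f,g)=(-1)^{mn}b_n^m\prod_j f(\beta_j)$ over $\Z$, and lift to the ring of integers of a splitting field together with a prime above $p$. This is heavier machinery, so I would use the Sylvester-matrix expansion.
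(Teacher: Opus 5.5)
Your proof is correct, but it is organized differently from the paper's. You split according to whether $p$ divides $a_m$, $b_n$, both, or neither. In Case 1 you identify the reduced determinant with $\operatorname{Res}(\bar f,\bar g)$. In Case 2 you relate the two via the padding identity $\det(\Syl(f,g)\bmod p)=\pm\bar b_n^{\,m-m'}\operatorname{Res}(\bar f,\bar g)$, which I checked by first-column expansion. In both cases you then invoke \Cref{cor:resultant_common_root} over $\F_p$, which itself rests on the determinantal root formula.

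The paper needs no case split. It works directly with the entrywise reduction $\Syl_{m,n}(\bar f,\bar g)$, whose rows are the padded coefficient vectors of $X^{n-1}\bar f,\ldots,\bar f,X^{m-1}\bar g,\ldots,\bar g$. For a common root $\alpha$, the nonzero vector $(\alpha^{m+n-1},\ldots,\alpha,1)^T$ lies in its kernel, because pairing a row with it evaluates the corresponding polynomial at $\alpha$. Vanishing leading coefficients are then irrelevant. Only the elementary direction ``common root $\Rightarrow$ singular'' is used, and it is proved from scratch rather than quoted.

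Your route costs more bookkeeping but yields more. Outside the case where $p$ divides both leading coefficients, your identity shows that $p\mid\operatorname{Res}(f,g)$ is actually \emph{equivalent} to $\bar f,\bar g$ having a common root. In the remaining case, $p\mid\operatorname{Res}(f,g)$ holds unconditionally. The kernel-vector argument gives no such converse.

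Both arguments tacitly need $m+n\ge 1$. For $f=g=p$ the Sylvester matrix is $0\times 0$, so $\operatorname{Res}(f,g)=1$, although $\bar f=\bar g=0$ share every root. Your step ``a whole block of rows vanishes'' and the paper's step ``$v_\alpha\neq 0$'' both fail exactly there.
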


\begin{proof}
Let \(m=\deg f\) and \(n=\deg g\). Let
\(\overline f,\overline g\in\F_p[X]\) denote the reductions of \(f\) and \(g\)
modulo \(p\). We write
\[
\Res_{m,n}(\overline f,\overline g)
:=
\det \Syl_{m,n}(\overline f,\overline g),
\]
where \(\Syl_{m,n}\) is the Sylvester matrix formed using the degree bounds
\(\deg \overline f\le m\) and \(\deg \overline g\le n\), i.e. after padding
the coefficient vectors with leading zeroes if necessary.

Suppose that \(\overline f\) and \(\overline g\) have a common root
\(\alpha\in \overline{\F}_p\). Consider the padded Sylvester matrix
\(
\Syl_{m,n}(\overline f,\overline g).
\)
With the convention that coefficient rows are ordered from highest degree to
lowest degree, this matrix has rows given by the coefficient vectors of
\(
X^{n-1}\overline f,\ldots,\overline f,
X^{m-1}\overline g,\ldots,\overline g,
\)
each padded to length \(m+n\).

Let
\(
v_\alpha:=
[\alpha^{m+n-1},\alpha^{m+n-2},\ldots,\alpha,1]^T.
\)
Then dotting the coefficient row of any polynomial \(P\) of degree at most
\(m+n-1\) with \(v_\alpha\) gives \(P(\alpha)\). Hence each shifted
\(\overline f\)-row satisfies
\[
\operatorname{coeff}(X^k\overline f)\cdot v_\alpha
=
(X^k\overline f)(\alpha)
=
\alpha^k\overline f(\alpha)
=
0,
\]
and similarly each shifted \(\overline g\)-row satisfies
\[
\operatorname{coeff}(X^k\overline g)\cdot v_\alpha
=
(X^k\overline g)(\alpha)
=
\alpha^k\overline g(\alpha)
=
0.
\]
Thus,
\(
\Syl_{m,n}(\overline f,\overline g)v_\alpha=0.
\)
Since \(v_\alpha\neq 0\), the padded Sylvester matrix is singular. Therefore
\[
\Res_{m,n}(\overline f,\overline g)
=
\det \Syl_{m,n}(\overline f,\overline g)
=
0
\]
in \(\F_p\).

On the other hand, \(\Res_{m,n}\) is an integer polynomial in the coefficients
of its two inputs. Therefore, reducing the entries of the integer Sylvester
matrix modulo \(p\) gives
\[
\Res_{m,n}(\overline f,\overline g)
\equiv
\Res_{m,n}(f,g)
\pmod p.
\]
Since the left-hand side is \(0\), we obtain
\(
\Res(f,g)\equiv 0\pmod p.
\)
Equivalently,
\(
p\mid \Res(f,g).
\) The final statement is the contrapositive.
\end{proof}

For later use, let us also note the following reformulation of
\Cref{prop:resultant_mod_p}. If \(f,g\in \Z[X]\) and
\(p\nmid \Res(f,g)\), then the reductions of \(f\) and \(g\) modulo \(p\) are
coprime in \(\F_p[X]\). In particular, they cannot have a common root in any
extension field of \(\F_p\).

\subsection{Uncertainty principles}\label{sec:uncertainty_principles}

In this section, we recall an uncertainty principle for finite abelian groups. The first proposition is due to
Meshulam \cite{MESHULAM200663} for finite abelian groups of arbitrary order. Meshulam's composite-order uncertainty principle generalizes the sharper prime-order uncertainty principle due to Tao \cite{tao2003}. We will also record a finite-field version of the composite-order uncertainty principle, valid outside finitely many characteristics and derive an immediate corollary for the finite-field version of the prime-order uncertainty principle.

Let \(n\ge 2\), let \(\zeta_n\in \C\) be a primitive \(n\)th root of unity, and let
\(
\Omega_n:=\{\zeta_n^j : j=0,\dots,n-1\}.
\)
Given a nonzero polynomial
\[
f(X)=\sum_{i=0}^{n-1} f_iX^i \in \C[X]
\qquad\text{with}\qquad \deg f<n,
\]
we define
\[
|f|:=|\{\,i\in\{0,\dots,n-1\}:f_i\neq 0\,\}|
\]
and
\[
|\ev(f)|_{\Omega_n}:=|\{\,\xi\in \Omega_n:f(\xi)\neq 0\,\}|.
\]

Equivalently, if one identifies \(f\) with its coefficient vector in \(\C^{\mathbb Z/n\mathbb Z}\),
then \(|f|\) is the support size of that vector and \(|\ev(f)|_{\Omega_n}\) is the support size of its
discrete Fourier transform.

\begin{proposition}[Meshulam\cite{MESHULAM200663}]\label{prop:uncertainty_composite_complex}
Let \(0\neq f(X)\in \C[X]\) with \(\deg f<n\). Let \(d_1<d_2\) be consecutive divisors of \(n\)
such that
\(
d_1\le |f|\le d_2.
\)
Then
\[
|\ev(f)|_{\Omega_n}
\ge
\frac{n}{d_1d_2}\left(d_1+d_2-|f|\right).
\]
Equivalently, the number of roots of \(f\) in \(\Omega_n\) is at most
\[
n-\frac{n}{d_1d_2}\left(d_1+d_2-|f|\right).
\]
\end{proposition}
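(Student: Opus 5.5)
The statement is cited from Meshulam, so one option is simply to invoke \cite{MESHULAM200663}. If a self-contained argument is wanted, the plan below is how I would try to reprove it; the key step is only partly checked.

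\textbf{Reformulation.} Identify $f$ with its coefficient vector $a\colon G:=\Z/n\Z\to\C$. Let $A=\supp(a)$ with $|A|=k=|f|$. Let $B\subseteq\Omega_n\cong\widehat G$ be the support of $\hat a(\xi)=f(\xi)$. The claim is that $|B|$ is at least the linear interpolation between the points $(d_1,n/d_1)$ and $(d_2,n/d_2)$. These points lie on the Donoho--Stark hyperbola $|A||B|\ge n$. So the target is to show that the function $k\mapsto\min|B|$ lies above every chord of that hyperbola between consecutive divisors.

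\textbf{Step 1: coset decomposition.} For a subgroup $H\le G$ of order $d\mid n$, with annihilator $H^\perp\le\widehat G$ of order $n/d$, I would record two dual facts.
\begin{itemize}
\item For each coset $y+H^\perp$, the restriction of $\hat a$ to it is, up to a character twist, the Fourier transform on the quotient $G/H$ of the pushed-forward function $x+H\mapsto\sum_{h\in H}a(x+h)\chi_y(h)$.
\item Symmetrically, restricting $a$ to a single coset of $H$ gives a function whose Fourier transform is constant along cosets of $H^\perp$.
\end{itemize}
These give two counting estimates. First, if $a$ meets $m$ cosets of $H$, then every nonzero fibre of $\hat a$ over $\widehat G/H^\perp$ has at least $(n/d)/m$ points, by Donoho--Stark on $G/H$. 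Second, the number of nonzero fibres is at least $d/\max_x|A\cap(x+H)|$, by Donoho--Stark applied inside $H$.

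\textbf{Step 2: induction and the key estimate.} I would induct on $n$, the case $n$ prime being Tao's theorem (or trivial). Choose $H$ of order $d_2/d_1$, or more generally adapted to the pair $(d_1,d_2)$, and write $k=\sum_j k_j$ over the cosets of $H$ that meet $A$.

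Applying the inductive hypothesis to the smaller groups $H$ and $G/H$ bounds the number of nonzero fibres and the size of each fibre. This gives $|B|$ as a sum of terms, each an interpolation-type bound in the $k_j$. The divisors of $|H|$ and of $|G/H|$ are sparser than those of $n$, so one must check that consecutive divisors there rescale to lie between $d_1$ and $d_2$.

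Because each local lower bound is convex and piecewise linear in $k_j$, minimising the total over all partitions $\sum_j k_j=k$ should force the extreme configuration. That configuration is $A$ equal to a union of cosets of a subgroup of order $d_1$, or a subset of a subgroup of order $d_2$, and for it the bound is exactly $\frac{n}{d_1d_2}(d_1+d_2-k)$.

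\textbf{Main obstacle.} The hard part is this convexity/optimisation step. The quantity $|B|$ is not additive over cosets, because different cosets of $H$ interfere in the pushed-forward function, so cancellation can kill fibres. The lower bound on the number of nonzero fibres must therefore come from the dual decomposition, and the two estimates of Step 1 have to be combined as a product (fibre count times fibre size) rather than a sum.

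What I would try first is to follow Meshulam's actual route: take $H$ to be the subgroup of order $d_2/d_1$ when $d_1\mid d_2$. Then handle the general non-dividing consecutive pair by choosing a prime $p$ with $p\mid n$ and $p\nmid d_1$ appropriately, and using the product estimate with the inductive bound on $G/H$.

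Finally, the equivalent root-count form in the statement is immediate, since the number of roots equals $n-|\ev(f)|_{\Omega_n}$.
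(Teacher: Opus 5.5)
Citing \cite{MESHULAM200663} is exactly how the paper treats this statement. It is quoted without proof and used only as a black box, through Corollary~\ref{cor:uncertainty_composite_finite_field}, so on that reading your proposal agrees with the paper. Your optional self-contained sketch, however, is not yet a proof, and the missing ingredient can be named precisely.

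Write $u_N$ for the piecewise-linear interpolation of $x\mapsto N/x$ at the divisors of $N$, so the claim is $|B|\ge u_n(|A|)$.

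\textbf{Step 1 gives nothing new.} Both of your Step~1 estimates come from Donoho--Stark. As a bound in terms of $|A|$ alone, their product gives nothing beyond $|B|\ge n/|A|$, so no interpolation can come out of them.

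\textbf{What closes the induction.} Take any subgroup $1<H<G$ and set $d=|H|$, $e=n/d$. Let $m$ be the number of cosets of $H$ meeting $A$. Let $t$ be the \emph{smallest} nonzero value of $|A\cap(x+H)|$, so that $tm\le|A|$. With your $\max$ in place of the minimum this inequality can fail, and the argument breaks.
\begin{enumerate}
\item The inductive bound on $H$, applied to the coset realizing $t$, shows that at least $u_d(t)$ fibres over $\widehat G/H^\perp$ are nonzero.
\item The inductive bound on $G/H$ shows that each nonzero fibre has at least $u_e(m)$ points. Here one uses that each pushed-forward function has support at most $m$ and that $u_e$ is decreasing.
\end{enumerate}
Hence $|B|\ge u_d(t)\,u_e(m)$. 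The remaining step is the purely numerical inequality
\[
u_d(t)\,u_e(m)\ \ge\ u_{de}(tm)\qquad (1\le t\le d,\ 1\le m\le e).
\]
Once it holds, monotonicity of $u_n$ gives $u_n(tm)\ge u_n(|A|)$.

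\textbf{This inequality is the real content.} It is where the consecutive-divisor interpolation actually enters, and it is precisely what your ``convexity/optimisation'' paragraph leaves unproved. It is not an optimisation over partitions $|A|=\sum_j k_j$. You correctly observe that $|B|$ is not additive over cosets, so that route should simply be dropped.

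One way to prove it: let $t\in[t_1,t_2]$ and $m\in[m_1,m_2]$ with $[t_1,t_2]$, $[m_1,m_2]$ consecutive-divisor intervals. The four numbers $t_1m_1,\,t_1m_2,\,t_2m_1,\,t_2m_2$ are divisors of $de$ that bracket $tm$. So $u_{de}(tm)$ is at most the chord of $x\mapsto de/x$ over the smallest bracketing subinterval. After dividing both sides by $de/(tm)$, everything reduces to the functions $g_\rho(\sigma)=\log\bigl(s(1+\rho-s)/\rho\bigr)$ with $s=e^{\sigma}\in[1,\rho]$. These are concave and vanish at both ends, and both $g_\rho$ and its reflection $\sigma\mapsto g_\rho(\log\rho-\sigma)$ increase with $\rho$.

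\textbf{Two smaller corrections.}
\begin{enumerate}
\item No choice of $H$ adapted to $(d_1,d_2)$ is needed, and a subgroup of order $d_2/d_1$ need not exist. For example, $n=6$ with $(d_1,d_2)=(2,3)$ has no such subgroup. The ``extreme configuration'' heuristic and the prime-$p$ patch can therefore be discarded.
\item The prime base case is not trivial. For $n=p$ the bound reads $|B|\ge p+1-|A|$, which is Chebotarev's theorem on the minors of the Fourier matrix (Tao's uncertainty principle), and it must be imported.
\end{enumerate}
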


\begin{corollary}\label{cor:uncertainty_composite_finite_field}
Fix \(n\ge 2\). Then there exists a nonzero integer \(\mathcal U_n\) such that the following holds.
If \(\F_q\) is a finite field of characteristic \(p\) with \(p\nmid \mathcal U_n\) and \(n \mid (q-1)\),
then every nonzero polynomial \(f(X)\in \F_q[X]\) of degree \(<n\) satisfies
\[
|\ev(f)|_{\Omega_n}
\ge
\frac{n}{d_1d_2}\left(d_1+d_2-|f|\right),
\]
where \(d_1<d_2\) are the consecutive divisors of \(n\) such that
\(
d_1\le |f|\le d_2.
\)
Equivalently, the number of roots of \(f\) in \(\Omega_n\subseteq \F_q^*\) is at most
\[
n-\frac{n}{d_1d_2}\left(d_1+d_2-|f|\right).
\]
\end{corollary}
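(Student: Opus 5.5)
The plan is to transfer Meshulam's bound (\Cref{prop:uncertainty_composite_complex}) from $\C$ to $\F_q$ by a resultant argument, in the same way as \Cref{restatement cor:Q_b_finite_field_version_outside_finitely_many_char}. Over $\F_q$ the quantity $|\ev(f)|_{\Omega_n}$ is the rank-type quantity $n$ minus the number of roots in $\Omega_n$. The roots of $f$ in $\Omega_n$ are controlled by minors of the $n\times n$ Fourier (Vandermonde) matrix $V=(\zeta^{ij})_{i,j\in\Z/n}$.

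\textbf{Step 1: linear-algebraic reformulation.} Let $A\subseteq\Z/n$ be the coefficient support, $|A|=a$, and let $Z\subseteq \Z/n$ be the set of $j$ with $f(\zeta^j)=0$. Then the vector $(f_i)_{i\in A}$, which has no zero entries, lies in the kernel of the submatrix $V_{Z,A}$.

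Over $\C$, Meshulam's theorem says that whenever $|Z|> n-\frac{n}{d_1d_2}(d_1+d_2-a)$, no vector with full support on $A$ lies in $\ker V_{Z,A}$. I would encode this as a polynomial condition. For fixed $(A,Z)$, the variety of full-support kernel vectors is empty over $\C$ by Meshulam. The plan is to extract a finite witness: a nonzero element of $\Z[\zeta_n]$ whose nonvanishing modulo a prime forces emptiness over $\F_q$.

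The clean way to do this is through the sub-supports $B\subseteq A$. A minimal-support kernel vector exists on some $B\subseteq A$ iff $V_{Z,B}$ has corank exactly $1$. More simply: a full-support vector exists on $A$ iff there is a kernel vector not lying in any coordinate hyperplane. Rather than handle this directly, I would invoke the fact that Meshulam's bound is monotone. Since $n-\frac{n}{d_1d_2}(d_1+d_2-a)$ is nondecreasing in $a$, it suffices to rule out kernel vectors with support exactly some $B\subseteq A$. Each such $B$ contains a minimal circuit $B'$ with $|B'|\le |B|$, whose kernel vector is given by signed maximal minors.

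\textbf{Step 2: define $\mathcal U_n$.} For each pair $(B,Z)$ with $|Z|$ exceeding the Meshulam threshold for $|B|$, Meshulam over $\C$ says $V_{Z,B}$ has full column rank $|B|$. Hence some $|B|\times|B|$ minor $\Delta_{Z',B}(\zeta_n)$ with $Z'\subseteq Z$ is a nonzero element of $\Z[\zeta_n]$.

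Define $\mathcal U_n$ as the product over all such pairs $(B,Z)$ of the norms $N_{\Q(\zeta_n)/\Q}(\Delta_{Z',B}(\zeta_n))$. By \Cref{cor:cyclotomic_norm_formula}, each norm equals $\Res(\Phi_n,\Delta_{Z',B}(X))$, which is a nonzero integer.

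\textbf{Step 3: reduce mod $p$.} If $p\nmid\mathcal U_n$ and $n\mid q-1$, fix a primitive $n$th root $\omega\in\F_q$. Then $\omega$ is a root of $\Phi_n \bmod p$, since $p\nmid n$; this holds because $\mathcal U_n$ can be taken to include $n$, and $n\mid q-1$ already gives it. By \Cref{prop:resultant_mod_p}, $\Delta_{Z',B}(\omega)\neq0$. So $V_{Z,B}$ over $\F_q$ has full column rank, and no nonzero $f$ with support $B$ vanishes on all of $\omega^Z$.

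Now apply this to the actual support $A$ of $f$ and the zero set $Z$. If $|Z|$ exceeded the bound, take $B=A$ itself: $V_{Z,A}$ would have full column rank, contradicting $f\neq0$ being in the kernel. This gives the bound directly, so the circuit and monotonicity detour of Step 1 is unnecessary. Only pairs $(A,Z)$ with $|Z|$ above the threshold for $|A|$ are needed, and these are exactly the ones where Meshulam guarantees full column rank over $\C$.

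\textbf{Main obstacle.} The one subtle point is the claim that over $\C$, Meshulam forces $V_{Z,A}$ to have full column rank, rather than merely forbidding full-support kernel vectors. Any nonzero kernel vector has support $B\subseteq A$ with $|B|\le|A|$. By monotonicity of the threshold in $|f|$, a vector on $B$ violating the bound for $|A|$ also violates it for $|B|$. This requires the threshold function, taken across different consecutive-divisor intervals, to be nondecreasing in $|f|$. I would verify this piecewise-linear monotonicity, including continuity at divisor breakpoints, where both formulas give $n-n/d$. That check is the main computation to carry out.
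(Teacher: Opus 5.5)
Your proposal is correct and follows essentially the paper's route: use Meshulam over $\C$ to show that each relevant Vandermonde submatrix $V_{Z,A}(\zeta_n)$ has full column rank, pick a nonzero maximal minor $\Delta(X)\in\Z[X]$, multiply the nonzero integers $\operatorname{Res}(\Phi_n,\Delta)$ into $\mathcal U_n$, and reduce modulo $p$ via \Cref{prop:resultant_mod_p}. Your monotonicity check is what actually justifies the full-column-rank claim: $u_n(k)=\frac{n}{d_1d_2}(d_1+d_2-k)$ is continuous and nonincreasing in $k$, so a kernel vector supported on a proper subset of $A$ still contradicts Meshulam. The paper's proof skips this point when it asserts $|f|=|A|$ for an arbitrary nonzero kernel vector.
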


\begin{proof}
For each \(k\in\{1,\dots,n\}\), let \(u_n(k)\) denote the Meshulam lower bound
\[
u_n(k):=\frac{n}{d_1d_2}(d_1+d_2-k),
\]
where \(d_1<d_2\) are the consecutive divisors of \(n\) such that \(d_1\le k\le d_2\).

Fix \(k\), and let \(A,B\subseteq \{0,\dots,n-1\}\) be such that \(|A| =k \) and \(|B| < u_n(k).\)
Consider the \((n-|B|)\times k\) matrix
\[
V_{A,B}(X):=\left(X^{ab}\right)_{a\in B^c,\ b\in A},
\]
where \(B^c=\{0,\dots,n-1\}\setminus B\), and define \(V_{A,B}(\zeta_n)\) by specializing \(X=\zeta_n\).

We claim that \(V_{A,B}(\zeta_n)\) has full column rank \(k\). Indeed, if not, then there would exist
a nonzero coefficient vector \(c=(c_b)_{b\in A}\in \C^A\) such that
\(
V_{A,B}(\zeta_n)c=0.
\)
Defining
\[
f(X):=\sum_{b\in A} c_b X^b,
\]
we would have \(|f|=|A|=k\), and the equation \(V_{A,B}(\zeta_n)c=0\) implies \(f(\zeta_n^a)=0\) for all $a \in B^c$.
Hence,
\[
|\ev(f)|_{\Omega_n}\le |B|<u_n(k),
\]
contradicting \Cref{prop:uncertainty_composite_complex}. Thus \(V_{A,B}(\zeta_n)\) has full column rank.

It follows that some \(k\times k\) minor of \(V_{A,B}(\zeta_n)\) is nonzero. Choose one such minor and
denote the corresponding determinant polynomial by \(\Delta_{A,B}(X)\in \Z[X]\). Then
\(
\Delta_{A,B}(\zeta_n)\neq 0.
\)
Since \(\zeta_n\) is a root of \(\Phi_n(X)\), it follows that \(\Delta_{A,B}(X)\) and \(\Phi_n(X)\)
have no common root over \(\C\), and therefore
\[
\Res(\Delta_{A,B}(X),\Phi_n(X))\neq 0.
\]

Now define
\[
\mathcal U_n:=
\prod_{\substack{1\le k\le n\\ A,B\subseteq\{0,\dots,n-1\}\\ |A|=k,\ |B|<u_n(k)}}
\Res(\Delta_{A,B}(X),\Phi_n(X)).
\]
Since there are only finitely many pairs \((A,B)\), this is a finite product of nonzero integers, hence \(\mathcal U_n\neq 0\).

Now let \(\F_q\) be a finite field of characteristic \(p\) such that \(p\nmid \mathcal U_n\) and
\(n\mid(q-1)\), and let \(\omega\in \F_q^*\) be a primitive \(n\)th root of unity. Suppose, for contradiction,
that there exists a nonzero polynomial
\[
f(X)=\sum_{i=0}^{n-1} f_iX^i\in \F_q[X]
\qquad\text{with}\qquad
\deg f<n
\]
such that 
\(
|\ev(f)|_{\Omega_n} < u_n(|f|).
\) Let
\(
A:=\{\,i:f_i\neq 0\,\},\)
and \(
B:=\{\,a\in\{0,\dots,n-1\}:f(\omega^a)\neq 0\,\}.
\)
Then \(|A|=|f|\), and by assumption
\[
|B| = |\ev(f)|_{\Omega_n} < u_n(|f|) = u_n(|A|).
\]
For every \(a\in B^c\), we have
\[
\sum_{b\in A} f_b\omega^{ab}=0.
\]
Thus the coefficient vector \((f_b)_{b\in A}\neq 0\) lies in the kernel of \(V_{A,B}(\omega)\), so
\(V_{A,B}(\omega)\) does not have full column rank. Therefore every \( |A|\times |A| \) minor of
\(V_{A,B}(\omega)\) vanishes, including the chosen one:
\(
\Delta_{A,B}(\omega)=0.
\)

Since \(\omega\) is a primitive \(n\)th root of unity and \(p\nmid n\), the reduction of \(\Phi_n(X)\)
modulo \(p\) also vanishes at \(\omega\). Hence the reductions of \(\Delta_{A,B}(X)\) and \(\Phi_n(X)\)
have a common root in an algebraic closure of \(\F_p\). By \Cref{prop:resultant_mod_p},
\[
p\mid \Res(\Delta_{A,B}(X),\Phi_n(X)).
\]
But this resultant is one of the factors of \(\mathcal U_n\), so this implies \(p\mid \mathcal U_n\),
contrary to hypothesis. This contradiction proves the corollary.
\end{proof}

\begin{corollary}\label{cor:uncertainty_prime_finite_field}
Fix a prime \(n\). Then there exists a nonzero integer \(\mathcal U_n'\) such that the following holds.
If \(\F_q\) is a finite field of characteristic \(p\) with
\(p\nmid \mathcal U_n'\) and \(n \mid (q-1)\),
then every nonzero polynomial \(f(X)\in \F_q[X]\) of degree \(<n\) satisfies
\[
|f|+|\ev(f)|_{\Omega_n}\ge n+1.
\]
Equivalently,
\[
|\ev(f)|_{\Omega_n}\ge n+1-|f|,
\]
and the number of roots of \(f\) in \(\Omega_n\) is at most \(|f|-1\).
\end{corollary}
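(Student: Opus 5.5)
The plan is to mirror the proof of \Cref{cor:uncertainty_composite_finite_field} line by line, using Tao's prime-order uncertainty principle \cite{tao2003} over $\C$ in place of Meshulam's bound. Tao's result says that for prime $n$ and a primitive $n$th root of unity $\zeta_n\in\C$, every nonzero $f\in\C[X]$ with $\deg f<n$ satisfies $|f|+|\ev(f)|_{\Omega_n}\ge n+1$. Equivalently, every square minor of the Fourier matrix $(\zeta_n^{ab})_{a,b}$ is nonzero (Chebotarev's theorem).

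\textbf{Step 1 (nonvanishing over $\C$).} Fix $k\in\{1,\dots,n\}$ and sets $A,B\subseteq\{0,\dots,n-1\}$ with $|A|=k$ and $|B|\le n-k$. Let $V_{A,B}(X):=(X^{ab})_{a\in B^c,\,b\in A}$, an $(n-|B|)\times k$ matrix with at least $k$ rows. Suppose $V_{A,B}(\zeta_n)$ had a nonzero kernel vector $c$. Then $f=\sum_{b\in A}c_bX^b$ would satisfy $|f|\le k$ and vanish on $B^c$, so
\[
|f|+|\ev(f)|_{\Omega_n}\le k+(n-k)=n,
\]
contradicting Tao's theorem. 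Hence $V_{A,B}(\zeta_n)$ has full column rank. Choose a $k\times k$ minor $\Delta_{A,B}(X)\in\Z[X]$ with $\Delta_{A,B}(\zeta_n)\ne0$. By \Cref{cor:resultant_common_root}, $\Res(\Delta_{A,B},\Phi_n)\ne0$.

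\textbf{Step 2 (define the integer).} Set
\[
\mathcal U_n':=\prod_{k,A,B}\Res(\Delta_{A,B}(X),\Phi_n(X)),
\]
the product running over the finitely many admissible triples $(k,A,B)$. This is a nonzero integer.

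\textbf{Step 3 (reduction to $\F_q$).} Let $\F_q$ have characteristic $p\nmid\mathcal U_n'$ with $n\mid q-1$. Note $p\ne n$, since $n\mid q-1$. Take $\omega$ a primitive $n$th root of unity in $\F_q$; it is a root of $\Phi_n\bmod p$.

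Suppose some nonzero $f\in\F_q[X]$ with $\deg f<n$ has $|\ev(f)|_{\Omega_n}\le n-|f|$. Let $A$ be the coefficient support of $f$ and $B$ its nonzero evaluation set. Then the coefficient vector of $f$ lies in $\ker V_{A,B}(\omega)$, so $\Delta_{A,B}(\omega)=0$. Thus $\Delta_{A,B}$ and $\Phi_n$ share a root mod $p$, and by \Cref{prop:resultant_mod_p}, $p$ divides a factor of $\mathcal U_n'$, a contradiction.

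Therefore $|f|+|\ev(f)|_{\Omega_n}\ge n+1$. The root count follows from the number of roots in $\Omega_n$ being $n-|\ev(f)|_{\Omega_n}\le |f|-1$.

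The only substantive input is Tao's theorem over $\C$, which we cite. The step needing care is checking that the range $|B|\le n-k$ (rather than $|B|<u_n(k)$) produces exactly the needed contradiction. I would also add a remark that Chebotarev's theorem in fact gives $\mathcal U_n'$ with no prime factors other than $n$, but the statement only asks for a nonzero integer, so the argument above suffices.
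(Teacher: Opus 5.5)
Your argument is correct and is essentially the paper's proof. The paper simply applies \Cref{cor:uncertainty_composite_finite_field} with $d_1=1$, $d_2=n$ and takes $\mathcal U_n'=\mathcal U_n$. Meshulam's bound at prime order is exactly Tao's bound, so the condition $|B|<u_n(k)$ there becomes your $|B|\le n-k$, and your Steps 1--3 are that corollary's proof re-run verbatim.

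Do drop the closing remark, however: it is false that $\mathcal U_n'$ can be chosen with no prime factors other than $n$. The paper's own example with $(r,\delta)=(9,3)$ gives a counterexample. There $f(Y)=Y^4+8Y+14\in\F_{23}[Y]$ has $|f|=3$, yet it vanishes at $\omega^0,\omega^1,\omega^5$ for the primitive $11$th root $\omega=2$. Hence $|f|+|\ev(f)|_{\Omega_{11}}\le 11$, so $23$ must divide $\mathcal U_{11}'$.

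The $p$-adic proof of Chebotarev's theorem only shows that the minors are nonzero modulo the prime above $n$ in $\Z[\zeta_n]$. It says nothing about other characteristics.
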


\begin{proof}
When \(n\) is prime, the only divisors of \(n\) are \(1\) and \(n\). Thus,
\Cref{cor:uncertainty_composite_finite_field} yields
\[
|\ev(f)|_{\Omega_n}
\ge
\frac{n}{1\cdot n}(1+n-|f|)
=
n+1-|f|,
\]
which is equivalent to the stated inequality. One may take \(\mathcal U_n'=\mathcal U_n\).
\end{proof}

\begin{remark}\label{rem:uncertainty_used_by_GG}
The prime-order finite-field uncertainty principle in
\Cref{cor:uncertainty_prime_finite_field} is the finite field version of the uncertainty principle proven by Tao \cite{tao2003}. It is also the same statement used by Golowich and Guruswami in their paper \cite[Proposition 66]{golowich2025quantum} on quantum locally recoverable codes, where it is attributed to Goldstein--Guralnick--Isaacs \cite{Goldstein_Guralnick_Isaacs_2005}.
\end{remark}

This completes the preliminaries. In the next section, we present randomized constructions of $(r,\delta)$-QLRCs and $h$-level QHLRCs achieving good rate--distance--locality tradeoffs.

\section{\texorpdfstring{Random $(r,\delta)$-QLRCs}{Random (r, delta)-QLRCs}}
\label{sec:random_r_delta_qlrc}
In this section, we construct random $(r,\delta)$-QLRCs. We begin with a lemma that we will use repeatedly in the random constructions.

\begin{lemma}\label{lem:vandermonde_local_mds}
Let \(\alpha_1,\ldots,\alpha_n\in\F_q\) be distinct. For an integer \(t\le n\), consider the \(t\times n\) Vandermonde matrix
\[
V_t(\alpha_1,\ldots,\alpha_n)
=
\begin{bmatrix}
1&1&\cdots&1\\
\alpha_1&\alpha_2&\cdots&\alpha_n\\
\vdots&\vdots&\ddots&\vdots\\
\alpha_1^{t-1}&\alpha_2^{t-1}&\cdots&\alpha_n^{t-1}
\end{bmatrix}.
\]
Then any \(t\) columns of \(V_t(\alpha_1,\ldots,\alpha_n)\) are linearly
independent.
\end{lemma}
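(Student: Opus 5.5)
The plan is to reduce the claim to the nonvanishing of a square Vandermonde determinant. Fix any \(t\) column indices \(j_1<\cdots<j_t\) in \(\{1,\ldots,n\}\); such a choice exists because \(t\le n\). Let \(M\) denote the \(t\times t\) submatrix of \(V_t(\alpha_1,\ldots,\alpha_n)\) formed by these columns. Then \(M\) is exactly the square Vandermonde matrix \(V_t(\alpha_{j_1},\ldots,\alpha_{j_t})\). Linear independence of the chosen columns is equivalent to \(\det M\neq 0\), so it suffices to prove this.

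For the main step I would use the polynomial interpretation, which avoids having to quote the product formula. Suppose \(c=(c_1,\ldots,c_t)^T\) satisfies \(c^TM=0\), that is, a linear combination of the rows of \(M\) vanishes. Then the polynomial \(p(X)=\sum_{i=0}^{t-1}c_{i+1}X^i\in\F_q[X]\) has degree at most \(t-1\) and vanishes at the \(t\) distinct points \(\alpha_{j_1},\ldots,\alpha_{j_t}\). A nonzero polynomial of degree at most \(t-1\) over a field has at most \(t-1\) roots, so \(p=0\) and hence \(c=0\). Thus the rows of \(M\) are independent, so \(M\) is invertible and its columns are linearly independent.

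As an alternative, I would cite the classical identity \(\det M=\prod_{1\le a<b\le t}(\alpha_{j_b}-\alpha_{j_a})\). This is nonzero because the \(\alpha\)'s are distinct and \(\F_q\) is a field.

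There is no real obstacle here. The only points needing care are the degenerate cases. If \(t=0\), the statement is vacuous. The distinctness hypothesis is used precisely where we count roots. I also note, for later use in the random construction, the consequence this lemma is meant to supply: any nonzero vector in the kernel of \(V_t\) has weight at least \(t+1\). Taking \(t=\delta-1\), the local checks therefore have distance \(\delta\).
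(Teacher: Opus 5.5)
Your proof is correct and follows the same reduction as the paper: invertibility of the square Vandermonde submatrix on the chosen \(t\) columns. The paper settles this by quoting the product formula \(\prod_{1\le u<v\le t}(\alpha_{j_v}-\alpha_{j_u})\neq 0\), which is your alternative, while your main root-counting argument is a valid self-contained substitute that proves the same invertibility without that formula.
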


\begin{proof}
Choose any \(t\) columns, say those indexed by \(j_1,\ldots,j_t\). The
corresponding determinant is
\[
\det(\alpha_{j_b}^{a-1})_{1\le a,b\le t}
=
\prod_{1\le u<v\le t}(\alpha_{j_v}-\alpha_{j_u}),
\]
which is nonzero because the \(\alpha_i\)'s are distinct.
\end{proof}

\begin{construction}[Random $(r,\delta)$-QLRC]
\label{const:random_r_delta_qlrc}
We are given a block length $N$, locality parameter $r$, %\leq n/2$
distance parameter $\delta \leq r$, and finite field $\mathbb{F}_q$ with $q \geq n:= r+\delta-1$. Assume that $m := N/n$ is an integer. Choose $\ell \in [N/2 - m(\delta-1)]$. We define a random QLRC via a CSS code $\mathcal{Q} = \mathrm{CSS}(C_X, C_Z)$ which is sampled as follows. Initialize $G_i$ to be the $(\delta - 1) \times n$ Vandermonde matrix \[G_i = \begin{bmatrix}
    1&1&\cdots &1\\
    \alpha_{i,1}&\alpha_{i,2}&\cdots &\alpha_{i,n}\\
    \vdots&\vdots&\ddots&\vdots\\
    \alpha_{i,1}^{\delta-2}&\alpha_{i,2}^{\delta-2}&\cdots &\alpha_{i,n}^{\delta-2}
\end{bmatrix}\] where for each $i = 1,\ldots, m$, every $\alpha_{i,j} \in \mathbb{F}_q$ is distinct for $j = 1, \ldots, n$. Let $H_X$ be a block diagonal matrix comprised of $m$ matrices: \[H_X = \begin{bmatrix}
    G_1\\
    &G_2\\
    &&\ddots\\
    &&&G_m
\end{bmatrix}.\] 
Let $\beta_{i,l} = \prod_{j\neq l}\frac{1}{\alpha_{i,l}-\alpha_{i,j}}$  for $l = 1,\ldots, n$. Let $H_i$ be an $r \times n$ Vandermonde-like matrix \begin{align*}H_i &= \begin{bmatrix}
    1&1&\cdots &1\\
    \alpha_{i,1}&\alpha_{i,2}&\cdots &\alpha_{i,n}\\
    \vdots&\vdots&\ddots&\vdots\\
    \alpha_{i,1}^{r-1}&\alpha_{i,2}^{r-1}&\cdots &\alpha_{i,n}^{r-1}
\end{bmatrix}\cdot\begin{bmatrix} \beta_{i,1} \\&\beta_{i,2}\\&&\ddots\\&&&\beta_{i,n}\end{bmatrix}\end{align*} and let $H_Z$ be the block matrix \[H_Z = \begin{bmatrix}
    (H_1)_{\delta-1}\\&(H_2)_{\delta-1}\\&&\ddots\\&&&(H_m)_{\delta-1}
\end{bmatrix}\] where $(H_i)_{\delta-1}$ means the first $\delta-1$ rows of $H_i$. We then add $\ell$ random rows to $H_X$ and $H_Z$ subject to the orthogonality conditions, as follows:
\begin{enumerate}
    \item for $i = 1, \ldots, \ell$:
        \begin{enumerate}
            \item let $v$ be a uniformly random vector sampled from $\text{row-span}(H_Z)^\perp\setminus \text{row-span}(H_X)$
    
            \item append $v$ to $H_X$ and rename the resulting matrix as $H_X$ i.e. $H_X = \begin{bmatrix}
                H_X\\v
            \end{bmatrix}$
        \end{enumerate}
    \item for $i = 1, \ldots, \ell$:
        \begin{enumerate}
            \item let $v$ be a uniformly random vector sampled from $\text{row-span}(H_X)^\perp\setminus \text{row-span}(H_Z)$

            \item append $v$ to $H_Z$ and rename the resulting matrix as $H_Z$  i.e. $H_Z = \begin{bmatrix}
            H_Z\\v
            \end{bmatrix}$
        \end{enumerate}
\end{enumerate}
We have sampled matrices $H_X, H_Z \in \mathbb{F}_q^{(m(\delta-1) + \ell) \times N}$ with orthogonal row spaces. Let $C_X = \ker H_X$, $C_Z = \ker H_Z$ are such that we obtain a well-defined CSS code $\mathcal{Q} = \mathrm{CSS}(C_X, C_Z)$.
\end{construction}

\begin{lemma}
\label{lem:dual_vandermonde_blocks}
For positive integers $r \geq \delta$, let \(n=r+\delta-1\), and let
\(\alpha_1,\ldots,\alpha_n\in\F_q\) be distinct. For $l=1,\ldots,n$ define
\[
\beta_l:=\prod_{\substack{j=1\\j\neq l}}^n
\frac{1}{\alpha_l-\alpha_j}.
\]
Let
\[
G=
\begin{bmatrix}
1&1&\cdots&1\\
\alpha_1&\alpha_2&\cdots&\alpha_n\\
\vdots&\vdots&\ddots&\vdots\\
\alpha_1^{\delta-2}&\alpha_2^{\delta-2}&\cdots&\alpha_n^{\delta-2}
\end{bmatrix}
\]
and
\[
H=
\begin{bmatrix}
1&1&\cdots&1\\
\alpha_1&\alpha_2&\cdots&\alpha_n\\
\vdots&\vdots&\ddots&\vdots\\
\alpha_1^{r-1}&\alpha_2^{r-1}&\cdots&\alpha_n^{r-1}
\end{bmatrix}
\begin{bmatrix}
\beta_1\\
&\beta_2\\
&&\ddots\\
&&&\beta_n
\end{bmatrix}.
\]
Then
\[
GH^T=0.
\]
Moreover, any \(\delta-1\) columns of \(G\) are linearly independent, and any
\(\delta-1\) columns of the first \(\delta-1\) rows of \(H\) are linearly
independent.
\end{lemma}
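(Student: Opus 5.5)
The plan is to compute each entry of $GH^T$ directly and recognize it as a partial-fraction / Lagrange interpolation identity. The $(a,b)$ entry, for $0\le a\le \delta-2$ and $0\le b\le r-1$, is
\[
(GH^T)_{a,b}=\sum_{l=1}^n \alpha_l^{a}\alpha_l^{b}\beta_l=\sum_{l=1}^n \frac{\alpha_l^{a+b}}{\prod_{j\ne l}(\alpha_l-\alpha_j)},
\]
with exponent $a+b\le (\delta-2)+(r-1)=n-2$. The key fact I will invoke is that for any polynomial $p$ of degree at most $n-1$, the sum $\sum_{l=1}^n p(\alpha_l)/\prod_{j\ne l}(\alpha_l-\alpha_j)$ equals the coefficient of $X^{n-1}$ in $p$. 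This holds because Lagrange interpolation gives
\[
p(X)=\sum_{l=1}^n p(\alpha_l)\prod_{j\neq l}\frac{X-\alpha_j}{\alpha_l-\alpha_j},
\]
and comparing $X^{n-1}$ coefficients yields the claim. Applying this with $p(X)=X^{a+b}$ and $a+b\le n-2<n-1$ gives that every entry vanishes, so $GH^T=0$. This is the step that uses $n=r+\delta-1$ exactly, and it is the only substantive one.

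For the linear independence claims, any $\delta-1$ columns of $G$ are independent by \Cref{lem:vandermonde_local_mds} with $t=\delta-1\le n$. For $H$, the first $\delta-1$ rows form $V_{\delta-1}(\alpha_1,\ldots,\alpha_n)\cdot\mathrm{diag}(\beta_1,\ldots,\beta_n)$. Each $\beta_l$ is nonzero, since the $\alpha$'s are distinct, so the diagonal matrix only rescales columns by nonzero scalars. Any choice of $\delta-1$ columns is therefore the corresponding Vandermonde columns rescaled, and these are independent, again by \Cref{lem:vandermonde_local_mds}.

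I expect no real obstacle here. The only point needing care is the degree bookkeeping $a+b\le n-2$, together with stating the interpolation identity cleanly, which is the degree-$<n-1$ vanishing of divided differences.
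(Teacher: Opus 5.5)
Your proposal is correct and follows the paper's proof essentially step for step. You express each entry of $GH^T$ as $\sum_l \beta_l\alpha_l^{a+b}$ with $a+b\le n-2$ and kill it by comparing $X^{n-1}$-coefficients in the Lagrange interpolation of $X^{a+b}$. You then get the column-independence claims from \Cref{lem:vandermonde_local_mds} together with the fact that the nonzero scalars $\beta_l$ only rescale columns.
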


\begin{proof}
The column-independence statement for \(G\) follows from
\Cref{lem:vandermonde_local_mds}. The same statement for the first
\(\delta-1\) rows of \(H\) follows from \Cref{lem:vandermonde_local_mds} after
scaling each column by the nonzero scalar \(\beta_l\).

It remains to prove \(GH^T=0\). The \((a,b)\)-entry of \(GH^T\), where
\(0\le a\le \delta-2\) and \(0\le b\le r-1\), is
\[
\sum_{l=1}^n \beta_l \alpha_l^{a+b}.
\]
Since
\[
0\le a+b\le (\delta-2)+(r-1)=r+\delta-3=n-2,
\]
it is enough to prove that
\[
\sum_{l=1}^n \beta_l\alpha_l^m=0
\qquad
\text{for }0\le m\le n-2.
\]

This is the standard Lagrange interpolation identity. Indeed, let
\[
L_l(X)=
\prod_{\substack{j=1\\j\neq l}}^n
\frac{X-\alpha_j}{\alpha_l-\alpha_j}
\]
be the \(l\)-th Lagrange basis polynomial. The leading coefficient of
\(L_l(X)\) is \(\beta_l\). For \(0\le m\le n-2\), the polynomial \(X^m\) has
degree \(<n-1\), so its Lagrange interpolation expansion
\(
X^m=\sum_{l=1}^n \alpha_l^m L_l(X)
\)
has zero coefficient of \(X^{n-1}\) on the left-hand side. The coefficient of
\(X^{n-1}\) on the right-hand side is
\(
\sum_{l=1}^n \alpha_l^m\beta_l.
\)
Therefore,
\(
\sum_{l=1}^n \beta_l\alpha_l^m=0
\)
for \(0\le m\le n-2\), proving \(GH^T=0\).
\end{proof}

We now prove that \Cref{const:random_r_delta_qlrc} has the desired properties of locality and good distance.
\begin{lemma}
    The random QLRC $\mathcal{Q}$ from \Cref{const:random_r_delta_qlrc} is a QLRC of locality $r$ and dimension $k = N - 2(m(\delta -1) + \ell)$. Furthermore, each coordinate $s\in\{1,\ldots, N\}$ has recovery set $J_s = \{a(r+\delta -1)+1,\ldots, a(r+\delta -1) + r+\delta-1\}$ for $a = \ceil{s/(r+\delta-1)} -1$.
\end{lemma}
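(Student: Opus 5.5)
We prove the two claims separately: first the dimension, by a rank count on $H_X$ and $H_Z$, and then locality, by exhibiting a local recovery procedure on each block $J_s$.

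\textbf{Dimension.} By construction, $H_X$ and $H_Z$ each have $m(\delta-1)+\ell$ rows, and we show these rows are linearly independent.
\begin{enumerate}
\item The deterministic blocks $G_i$ and $(H_i)_{\delta-1}$ are $(\delta-1)\times n$ Vandermonde (or column-scaled Vandermonde) matrices with distinct nodes. Since $\delta-1\le n$, each has full row rank.
\item These blocks sit on disjoint column supports, so the deterministic parts of $H_X$ and $H_Z$ have rank exactly $m(\delta-1)$.
\item Each random row is sampled outside the current row span, so each appended row increases the rank by one.
\item Sampling is possible at every step. The required condition is that $\mathrm{rowspan}(H_Z)^\perp$ strictly contains $\mathrm{rowspan}(H_X)$ (and symmetrically). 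This holds when the combined dimension $2(m(\delta-1)+\ell)$ is less than $N$, which is guaranteed by the constraint $\ell < N/2-m(\delta-1)$.
\item Orthogonality $H_XH_Z^T=0$ holds for the deterministic blocks by \Cref{lem:dual_vandermonde_blocks}, since $GH^T=0$ implies $G_i(H_i)_{\delta-1}^T=0$. It is preserved at each step because every new row is drawn from the orthogonal complement of the other matrix's current span.
\item Hence $C_Z^\perp=\mathrm{rowspan}(H_Z)\subseteq\ker H_X=C_X$, and
\[
k=\dim C_X+\dim C_Z-N = N-2\bigl(m(\delta-1)+\ell\bigr).
\]
\end{enumerate}

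\textbf{Locality.} Fix a coordinate $s$. The set $J_s$ is exactly the support of block $a+1$, so $|J_s|=r+\delta-1$. Take an erasure set $I\subsetneq J_s$ with $|I|\le\delta-1$. We use the Knill--Laflamme/CSS criterion for erasures: it suffices that no logical operator is supported on $I$, i.e.\ $C_X\setminus C_Z^\perp$ and $C_Z\setminus C_X^\perp$ contain no vector supported in $I$.
\begin{enumerate}
\item Let $c\in C_X$ be supported in $I$. The rows of $G_{a+1}$ restricted to $I$ annihilate $c|_I$.
\item By \Cref{lem:vandermonde_local_mds}, any $\delta-1$ columns of $G_{a+1}$ are independent, so $c=0$.
\item The same argument applies to $C_Z$, using that any $\delta-1$ columns of $(H_{a+1})_{\delta-1}$ are independent (\Cref{lem:dual_vandermonde_blocks}).
\item Thus no nonzero element of $C_X$ or $C_Z$ is supported on $I$, so in particular no logical operator is.
\end{enumerate}

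It remains to make the recovery operation act only on $J_s$. Since the checks $G_{a+1}$ and $(H_{a+1})_{\delta-1}$ are themselves supported on $J_s$, the required operation is the standard one:
\begin{enumerate}
\item measure the $X$- and $Z$-type stabilizers coming from the rows of block $a+1$;
\item use the local independence above to solve uniquely for the Pauli error on $I$;
\item apply the inverse correction.
\end{enumerate}
Solving uniquely is possible because the syndrome map restricted to errors supported on $I$ is injective.

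Equivalently, one can pass to the punctured code. The punctured pair $(C_X|_{J_s},C_Z|_{J_s})$ is CSS by \Cref{lem:puncturing_css}, and its shortened duals contain the local Vandermonde checks; this yields an $(I,J_s)$-QLRC in the sense of \Cref{def:I_J_qlrc}.

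The main obstacle is this last step: rigorously justifying that erasure correctability using only stabilizers supported on $J_s$ gives a recovery operation acting nontrivially only on $J_s$, as \Cref{def:I_J_qlrc} requires. The random global rows also restrict $C_X$ and $C_Z$, but they act on all coordinates. I would handle this by a local Knill--Laflamme argument: take the local code stabilized by the block-$(a+1)$ checks alone, note that $\mathcal Q$ is contained in it, and observe that the local code corrects erasures on $I$ by the column independence above. Its recovery channel, which acts only on $J_s$, therefore also restores every state of $\mathcal Q$.
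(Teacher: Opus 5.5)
Your proof is correct and follows essentially the same route as the paper: the rank count $m(\delta-1)+\ell$ comes from full-rank Vandermonde blocks on disjoint supports plus random rows sampled outside the current span, orthogonality comes from \Cref{lem:dual_vandermonde_blocks}, and recovery on $J_s$ uses the fact that any $\delta-1$ columns of the local $X$- and $Z$-checks are independent. You are more careful than the paper in two places: you check that the sampling sets are nonempty using $\ell<N/2-m(\delta-1)$, and your local-stabilizer-code Knill--Laflamme argument shows the recovery map is supported on $J_s$, a step the paper only asserts.
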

\begin{proof}
First consider the deterministic local rows before the \(\ell\) random rows are
added. The matrix \(H_X\) contains, on each local group
\(
J_a:=\{a(r+\delta-1)+1,\ldots,a(r+\delta-1)+r+\delta-1\},
\)
a copy of the \((\delta-1)\times(r+\delta-1)\) Vandermonde matrix \(G_a\).
Similarly, \(H_Z\) contains, on the same group, the first \(\delta-1\) rows of
the scaled Vandermonde matrix \(H_a\).

By \Cref{lem:vandermonde_local_mds}, any \(\delta-1\) columns of \(G_a\) are
linearly independent. Since the first \(\delta-1\) rows of \(H_a\) are obtained
from a Vandermonde matrix by nonzero column scalings, any \(\delta-1\) of their
columns are also linearly independent. Therefore, on every local group \(J_a\),
both the \(X\)- and \(Z\)-local parity checks can recover any erasure pattern of
size at most \(\delta-1\). Equivalently, the punctured local \(X\)- and
\(Z\)-classical codes have local distance at least \(\delta\).

The local \(X\)- and \(Z\)-row spaces are orthogonal by
\Cref{lem:dual_vandermonde_blocks}. Since distinct local groups have disjoint
supports, the deterministic local parts of \(H_X\) and \(H_Z\) are globally
orthogonal. The additional random rows are sampled from the orthogonal
complement of the current opposite row span, so after every appended row the row
spans of \(H_X\) and \(H_Z\) remain orthogonal. Hence
\(
C_X^\perp=\operatorname{rowspan}(H_X)\subseteq C_Z
\)
and the CSS code \(\mathcal Q=\CSS(C_X,C_Z)\) is well-defined.

The local recovery sets are precisely the groups \(J_a\). Thus
\(\mathcal Q\) is an \((r,\delta)\)-QLRC. Finally, each of \(H_X\) and \(H_Z\) has \(m(\delta-1)+\ell\) rows. The
Vandermonde local rows are linearly independent across disjoint supports, and
the random rows are sampled outside the current row span, so the rank of each
matrix is \(m(\delta-1)+\ell\). Therefore
\(
\dim C_X=\dim C_Z=N-(m(\delta-1)+\ell),
\)
and the CSS dimension is
\[
k
=
\dim C_X+\dim C_Z-N
=
N-2(m(\delta-1)+\ell).
\]
The condition \(\ell\in[N/2-m(\delta-1)]\) ensures this dimension is
positive.
\end{proof}

Now we bound the distance of the random $(r,\delta)$-QLRCs that we constructed.

\begin{proposition}\label{prop:distance_random_r_delta_qlrc}
    For sufficiently large \(N\), given any $\rho, \epsilon > 0$, the distance $d(\mathcal{Q})$ of a random $(r,\delta)$-QLRC $\mathcal{Q}$ from \Cref{const:random_r_delta_qlrc} with parameters, $N, r, \delta,$ and $N/2 - m(\delta-1)>\ell \geq (H_q^{(\delta)}(\rho) + 2\epsilon)N$ (if one exists) over the alphabet $\mathbb{F}_q$ satisfies \begin{equation}
        \Pr[d(\mathcal{Q}) \geq \rho N] > 1 - 2q^{-\epsilon N},
    \end{equation} where for $y \in \mathbb{F}_q^N \setminus \{0\}$ with weight $w = |y|$, 
    \begin{equation}
        H_q^{(\delta)}(\rho) = \limsup_{N\to \infty} \frac{1}{N}\log_q\left(\sum_{w\leq\rho N}\mathcal{N}^{(\delta)}(N,w)(q-1)^w\right) 
    \end{equation} and \begin{equation}\label{eqn:supports_of_weight_w} \mathcal{N}^{(\delta)}(N,w) = \sum_{s=1}^{\lfloor w/\delta \rfloor} \binom{m}{s}\sum_{\substack{w_1+\cdots +w_s = w\\ w_i\geq \delta}}\prod_{i=1}^s\binom{r+\delta-1}{w_i}\end{equation} and the \(\limsup\) is taken over \(N\) divisible by \(r+\delta-1\).
\end{proposition}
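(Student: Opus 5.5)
We follow the union-bound template for random CSS codes from \cite[Section~4.1]{golowich2025quantum}, refined to count only supports that can survive the deterministic local checks. Since $d(\mathcal Q)=\min\{\mathrm{wt}(y): y\in (C_X\setminus C_Z^\perp)\cup(C_Z\setminus C_X^\perp)\}$, it suffices to show that, with probability at least $1-q^{-\epsilon N}$, no nonzero $y$ of weight $w<\rho N$ lies in $C_Z\setminus C_X^\perp$. The symmetric statement for $C_X\setminus C_Z^\perp$ is proved the same way, and a union bound over the two events then gives the factor $2$.

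\textbf{Step 1 (deterministic filter).} Fix $y\ne0$ with $y\in C_Z=\ker H_Z$. Then $y$ is annihilated by the local blocks $(H_a)_{\delta-1}$ on every group $J_a$. By \Cref{lem:dual_vandermonde_blocks}, any $\delta-1$ columns of $(H_a)_{\delta-1}$ are independent. Hence on each group the restriction $y|_{J_a}$ is either zero or has weight at least $\delta$. The analogous statement holds for $C_X$ using the blocks $G_a$. Therefore the support of any such $y$ is a union of $s\le\lfloor w/\delta\rfloor$ nonempty blocks, each of size $w_i\ge\delta$. The number of such supports of total weight $w$ is exactly $\mathcal N^{(\delta)}(N,w)$ in \eqref{eqn:supports_of_weight_w}. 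The number of vectors with one of these supports is at most $\mathcal N^{(\delta)}(N,w)(q-1)^w$.

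\textbf{Step 2 (probabilistic estimate for a fixed $y$).} Fix an admissible $y\notin C_X^\perp=\operatorname{rowspan}(H_X)$. Conditioned on the final $H_X$, the $Z$-rows are appended one at a time, each uniform in $\operatorname{rowspan}(H_X)^\perp\setminus\operatorname{rowspan}(H_Z^{\rm current})$. Since $y\notin (\operatorname{rowspan}(H_X)^\perp)^\perp$, the linear functional $v\mapsto v\cdot y$ is nonzero on $\operatorname{rowspan}(H_X)^\perp$. Its kernel therefore has index $q$, and conditioning on avoiding the current span changes the probability of $v\cdot y=0$ only slightly. This is the step I expect to be the main obstacle: we must show that each step satisfies $\Pr[v\cdot y=0]\le 1/q\cdot(1+o(1))$, or cleanly at most $q^{-1}\cdot\frac{|W|}{|W|-|U|}$, where $W=\operatorname{rowspan}(H_X)^\perp$ and $U$ is the current span. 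Since $\dim U\le N/2<\dim W$, the ratio $|U|/|W|$ is small, but only by a factor $q^{-(\dim W-\dim U)}$, which may be as small as $q^{-1}$. To handle this I would use the sharper count that $U$ intersects $y^\perp\cap W$ in a subspace, so
\[
\Pr[v\cdot y=0]=\frac{|y^\perp\cap W|-|y^\perp\cap U|}{|W|-|U|}\le \frac{|W|/q}{|W|(1-1/q)}\le\frac{2}{q}.
\]
If the constant matters, we absorb a factor $2^\ell$ into $\epsilon$ by assuming $q$ large, or argue the exact $q^{-1}$ bound when $y\notin \operatorname{span}(U\cup y^{\perp}\text{-complement})$. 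Multiplying over the $\ell$ rows gives $\Pr[y\in C_Z]\le q^{-\ell}$, up to that slack.

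\textbf{Step 3 (union bound).} Summing over all admissible $y$ with $w<\rho N$, we get
\[
\Pr[\exists\, y]\le q^{-\ell}\sum_{w\le\rho N}\mathcal N^{(\delta)}(N,w)(q-1)^w\le q^{-\ell}q^{(H^{(\delta)}_q(\rho)+\epsilon)N}
\]
for $N$ large, by the definition of $\limsup$. The hypothesis $\ell\ge(H^{(\delta)}_q(\rho)+2\epsilon)N$ then makes this at most $q^{-\epsilon N}$. Combining with the symmetric $X$ case yields $\Pr[d(\mathcal Q)\ge\rho N]>1-2q^{-\epsilon N}$.
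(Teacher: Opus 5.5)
\textbf{Gap in Step 2.} Your Steps 1 and 3 match the paper: the local Vandermonde filter restricts supports to those counted by $\mathcal N^{(\delta)}(N,w)$, and the union bound uses the $\limsup$. Step 2, however, does not give the per-row factor $1/q$ that the statement needs. Your bound $\Pr[v\cdot y=0]\le \frac{1}{q-1}\le\frac2q$ yields only
\[
\Pr[y\in C_Z]\le (q-1)^{-\ell}=q^{-\ell}\bigl(\tfrac{q}{q-1}\bigr)^{\ell}.
\]
Since $\ell=\Theta(N)$, the extra factor is $q^{\Theta(N)}$. It cannot be absorbed into $\epsilon$: in the proposition $q$ is fixed and $\epsilon>0$ is arbitrary. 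You would therefore only prove the result under a threshold such as
\[
\ell\bigl(1-\log_q\tfrac{q}{q-1}\bigr)\ge (H^{(\delta)}_q(\rho)+2\epsilon)N,
\]
which is strictly weaker. Your alternative fix, ``$y\notin\operatorname{span}(U\cup y^{\perp}\text{-complement})$'', is not a meaningful condition.

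\textbf{The missing observation, which is what the paper uses.} Let $W=\operatorname{rowspan}(H_X)^\perp=C_X$ and $U=\operatorname{rowspan}(H_Z^{(i)})\subseteq W$. You only need the conditional probability on the event that $y$ has survived the first $i$ rounds, i.e.\ $y\in\ker H_Z^{(i)}$. That event means every current $Z$-row is orthogonal to $y$, so $U\subseteq y^\perp$. Hence $|y^\perp\cap U|=|U|$, and
\[
\Pr[v\cdot y=0\mid y\in\ker H_Z^{(i)}]=\frac{|W|/q-|U|}{|W|-|U|}<\frac1q .
\]
Even unconditionally the bound $1/q$ holds: $y^\perp\cap U$ has index at most $q$ in $U$, so $|y^\perp\cap U|\ge |U|/q$ and the ratio is at most $1/q$. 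Chaining these conditional probabilities over the $\ell$ rounds gives $\Pr[y\in C_Z\setminus C_X^\perp]<q^{-\ell}$, and the rest of your argument goes through unchanged. You had already written the correct exact formula. The error was discarding the subtracted term $|y^\perp\cap U|$ instead of bounding it from below.
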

\begin{proof}
    %This proof follows the proof of Proposition 40 from \cite{golowich2025quantum}. The main change is in the union bound.
    
    Fix a nonzero vector $y \in \mathbb{F}_q^N$. Let $H_Z^{(i)}$ denote the matrix after the $i$th iteration of step 2 in \Cref{const:random_r_delta_qlrc}. If $y \in C_Z\setminus C_X^\perp$ then for $0 \leq i \leq \ell -1$, $y \in \ker H_Z^{(i)}$ and $y \in \ker H_Z^{(i+1)}$. Since $y \not \in C_X^\perp$, exactly $1/q$-fraction of the vectors in $C_X$ are orthogonal to $y$. It follows if $y \in \ker H_Z^{(i)}$ then less than $1/q$-fraction of the vectors in $C_X\setminus \text{row-span}(H_Z^{(i)})$ are orthogonal to $y$. Formally, \[\Pr[y \in \ker H_Z^{(i+1)}\setminus C_X^\perp \mid y \in \ker H_Z^{(i)}\setminus C_X^\perp] < \frac{1}{q}.\] Hence, \begin{align*}
        \Pr[y \in C_Z\setminus C_X^\perp] &= \Pr[y \not \in C_X^\perp] \cdot \Pr[y \in \ker H_Z^{(0)}\mid y \not \in C_X^\perp]\\ &\qquad\cdot \prod_{i=0}^{\ell-1}\Pr[y \in \ker H_Z^{(i+1)}\setminus C_X^\perp\mid y \in \ker H_Z^{(i)}\setminus C_X^\perp]\\ &< q^{-\ell}.
    \end{align*}
    We now explain why the union bound only needs to count supports appearing in
    \(\mathcal{N}^{(\delta)}(N,w)\). Let
    \(
    y\in C_Z\setminus C_X^\perp
    \)
    be a nonzero vector. Since \(y\in C_Z=\ker H_Z\), it satisfies all local
    \(Z\)-checks. Consider any local group \(J_a\). If
    \(
    0<|\operatorname{supp}(y)\cap J_a|<\delta,
    \)
    then the restriction of the local \(Z\)-check matrix to these nonzero
    coordinates has full column rank by \Cref{lem:vandermonde_local_mds}. Hence
    the only vector supported on those coordinates and satisfying the local checks
    is the zero vector, a contradiction. Therefore every nonempty local block of
    \(\operatorname{supp}(y)\) has size at least \(\delta\). Thus the possible
    supports of \(y\) of weight \(w\) are counted by \(\mathcal{N}^{(\delta)}(N,w)\).
    The same argument applies to vectors in \(C_X\setminus C_Z^\perp\).
    
    Applying the union bound over all $y \in \mathbb{F}_q^N$ such that $|y| \leq \rho N$ gives \begin{align*}
        \Pr[\exists\, y \in C_Z\setminus C_X^\perp, |y| \leq \rho N] &< q^{-\ell}\sum_{w\leq\rho N}\mathcal{N}^{(\delta)}(N,w)(q-1)^w\\
        & \leq q^{-\ell}q^{(H_q^{(\delta)}(\rho)+\epsilon)N}\\
        & \leq q^{-\epsilon N}
    \end{align*}
    for sufficiently large $N$
    and by symmetry \[\Pr[\exists\, y \in C_X\setminus C_Z^\perp, |y| \leq \rho N]<q^{-\epsilon N}.\] Therefore, \[\Pr[d(\mathcal{Q}) \leq \rho N] < 2q^{-\epsilon N},\] as desired.
\end{proof}
\begin{remark}
    The proof of Proposition 40 from \cite{golowich2025quantum} works for \Cref{prop:distance_random_r_delta_qlrc}, but gives a weaker estimate because it overcounts the number of low-weight codewords. It would require  $\ell$ more random rows to drive the union bound to zero so existence/distance threshold and rate-vs-distance tradeoff will be worse.
\end{remark}

\section{\texorpdfstring{Random $((r_1,\delta_1),\ldots,(r_h,\delta_h))$-QHLRCs}{Random ((r1,delta1),\ldots,(rh,deltah))-QHLRCs}}
\label{sec:random_r_delta_hierarchical_qlrc}
Now we move on to constructing random $h$-level quantum hierarchical LRCs. The main difference from the one-level construction is that the deterministic local \(X\) and \(Z\) checks must be chosen recursively so that nested local checks remain mutually orthogonal.

\subsection{Constructing a random QHLRC}
\begin{definition}
\label{def:strong_local_mds}
Let \(J\) be a finite set of coordinates, let \(U\subseteq \F_q^J\) be a subspace, and let \(a,s\) be positive integers. For \(E \subseteq J\), define \[\F_q^J[E] = \{x \in \F_q^J \mid \supp(x) \subseteq E\}.\] We say that \(U\) has the
\((a,s)\)-correctable property on \(J\) if:
\begin{enumerate}
\item for every \(E\subseteq J\) with \(|E|\le a\),
\(
\dim(U|_E)=|E|;
\)
\item for every \(E\subseteq J\) with \(|E|\le s\),
\(
U\cap \F_q^J[E]=\{0\}.
\)
\end{enumerate}
\end{definition}

Equivalently, the second condition says that the restriction map \[\pi_{J\setminus E}: U \to \F_q^{J\setminus E}\] is injective for every \(E\subseteq J\) with \(|E| \leq s\). It is also equivalent to \(\dim(U^\perp|_E) = |E|\) for \(|E| \leq s\).

\begin{construction}[$h$-level QHLRC]\label{const:random_qhlrc}
    Suppose we are given a block length $N$, locality and distance parameters $(r_1,\delta_1), (r_2,\delta_2),\ldots, (r_h,\delta_h)$ such that \(r_1\geq\cdots\geq r_h\quad\text{and}\quad \delta_1\geq\cdots\geq \delta_h\geq 2,\) and a finite field $\mathbb{F}_q$, with $q$ sufficiently large. Let $n_l  := r_l + \delta_l -1$ and suppose $\delta_1 \leq n_h/2$. 
    Further assume that $m_l := N/n_l$ is an integer for $l = 1,\ldots, h$ and $n_h\mid n_{h-1}\mid\cdots\mid n_1\mid N$.

    We will set some notation for the construction. For a level-\(l\) block  where \( b=0,\ldots,m_l-1\), write    \(J_{l,b}:=\{bn_l+1,bn_l+2,\ldots,(b+1)n_l\}\). Thus level-\((l+1)\) blocks refine level-\(l\) blocks. We first construct deterministic local row spaces
    \(U^{\mathrm{loc}},V^{\mathrm{loc}}\subseteq \F_q^N\)
    which will become the local parts of \(H_X\) and \(H_Z\).

    \medskip
    \noindent\textbf{Bottom level.}
    For each level-\(h\) block \(J_{h,b}\), choose \(n_h\) distinct field elements
    \(
    \alpha_{b,1},\ldots,\alpha_{b,n_h}\in\F_q.
    \)
    Let \(U_{h,b}\subseteq \F_q^{J_{h,b}}\) be the row space of the
    \((\delta_h-1)\times n_h\) Vandermonde matrix
    \[
    \begin{bmatrix}
    1&1&\cdots&1\\
    \alpha_{b,1}&\alpha_{b,2}&\cdots&\alpha_{b,n_h}\\
    \vdots&\vdots&\ddots&\vdots\\
    \alpha_{b,1}^{\delta_h-2}&
    \alpha_{b,2}^{\delta_h-2}&
    \cdots&
    \alpha_{b,n_h}^{\delta_h-2}
    \end{bmatrix}.
    \]
    Let
    \[
    \eta_{b,t}:=
    \prod_{\substack{u=1\\u\neq t}}^{n_h}
    \frac{1}{\alpha_{b,t}-\alpha_{b,u}}.
    \]
    Let \(V_{h,b}\subseteq \F_q^{J_{h,b}}\) be the row space of the scaled
    Vandermonde matrix
    \[
    \begin{bmatrix}
    1&1&\cdots&1\\
    \alpha_{b,1}&\alpha_{b,2}&\cdots&\alpha_{b,n_h}\\
    \vdots&\vdots&\ddots&\vdots\\
    \alpha_{b,1}^{\delta_h-2}&
    \alpha_{b,2}^{\delta_h-2}&
    \cdots&
    \alpha_{b,n_h}^{\delta_h-2}
    \end{bmatrix}
    \begin{bmatrix}
    \eta_{b,1}\\
    &\eta_{b,2}\\
    &&\ddots\\
    &&&\eta_{b,n_h}
    \end{bmatrix}.
    \]
    By \Cref{lem:dual_vandermonde_blocks}, the Vandermonde space generated by
    degrees \(0,\ldots,\delta_h-2\) is orthogonal to the full scaled dual
    Vandermonde space generated by degrees \(0,\ldots,r_h-1\). Since \(V_{h,b}\)
    is contained in this scaled dual space, we have \(U_{h,b}\perp V_{h,b}.\)

    Moreover, both \(U_{h,b}\) and \(V_{h,b}\) have the
    \((\delta_h-1,\delta_1-1)\)-correctable property on \(J_{h,b}\). Indeed,
    the Vandermonde property gives
    \[
    \dim(U_{h,b}|_E)=\dim(V_{h,b}|_E)=|E|
    \qquad
    (|E|\le \delta_h-1).
    \]
    Also, each of \(U_{h,b}\) and \(V_{h,b}\) is a scaled generalized
    Reed--Solomon row space of dimension \(\delta_h-1\) and length \(n_h\), hence
    has minimum distance \(n_h-(\delta_h-1)+1 = r_h + 1\) and 
    \[
    r_h + 1 > \delta_1 - 1 \iff n_h > \delta_1 + \delta_h - 3.
    \]
    Since \(\delta_h\le\delta_1\le n_h/2\), we have \(n_h \geq 2 \delta_1 > \delta_1 + \delta_h - 3.\) Therefore, neither row space has a nonzero word supported
    on at most \(\delta_1-1\) coordinates.

    \medskip
    \noindent\textbf{Higher levels.}
    Suppose that for all levels \(u>l\) we have already constructed local
    \(X\)- and \(Z\)-row spaces supported on level-\(u\) blocks, and that within
    each level-\((l+1)\) block the accumulated \(X\)- and \(Z\)-spaces are
    orthogonal.
    
    Fix a level-\(l\) block \(J_{l,b}\). Let \(U_{>l,b}\)
    be the sum of all previously constructed \(X\)-row spaces supported inside \(J_{l,b}\), and let \(V_{>l,b}\) be the analogous sum of previously constructed \(Z\)-row spaces. By induction,
    \(U_{>l,b}\perp V_{>l,b}.\)   
    
    Set \(\Delta_l:=\delta_l-\delta_{l+1}. \)
    Choose a \(\Delta_l\)-dimensional subspace \(U_{l,b}^{+}\subseteq V_{>l,b}^{\perp}\) such that
    \[
    \dim(U_{>l,b}+U_{l,b}^{+})
    =
    \dim U_{>l,b}+\Delta_l,
    \]
    and such that \(U_{>l,b}+U_{l,b}^{+}\) has the \((\delta_l-1,\delta_1-1)\)-correctable property on \(J_{l,b}\). Then choose a \(\Delta_l\)-dimensional subspace \( V_{l,b}^{+}\subseteq (U_{>l,b}+U_{l,b}^{+})^\perp \)
    such that
    \[
    \dim(V_{>l,b}+V_{l,b}^{+})
    =
    \dim V_{>l,b}+\Delta_l,
    \]
    and such that \(V_{>l,b}+V_{l,b}^{+}\) has the \((\delta_l-1,\delta_1-1)\)-correctable property on \(J_{l,b}\).
    
    The existence of such \(U_{l,b}^{+}, V_{l,b}^{+}\) follows from
    \Cref{thm:generic_nested_local_extension} which is proved at the end of the section.
    
    % we
    % can sample the required subspaces from the indicated orthogonal complements and
    % reject if one of the finitely many local rank conditions fails. For sufficiently
    % large \(q\), the success probability is positive. Alternatively, 
    We perform an exhaustive
    search over the relevant finite Grassmannian (\Cref{def:grassmannian}) which gives a deterministic finite procedure. Enumerate all \(U_{l,b}^+\in \Gr_{\F_q}(\Delta_l,V_{>l,b}^{\perp})\) and choose one such that \(U_{l,b}^+\cap U_{>l,b}=\{0\}\) and \(U_{l,b}^++ U_{>l,b}\) has the \((\delta_l-1,\delta_1-1)\)-correctable property. Then enumerate all \(V_{l,b}^+\in \Gr_{\F_q}(\Delta_l,(U_{>l,b}+U_{l,b}^+)^\perp)\) and choose one such that \(V_{l,b}^+\cap V_{>l,b}=\{0\}\) and \(V_{l,b}^++ V_{>l,b}\) has the \((\delta_l-1,\delta_1-1)\)-correctable property.
    
    After performing this for every level-\(l\) block and every
    \(l=h-1,\ldots,1\), let \(U^{\mathrm{loc}}\) be the span of all constructed
    \(X\)-local spaces and let \(V^{\mathrm{loc}}\) be the span of all constructed
    \(Z\)-local spaces. These spaces are all constructed deterministically.
    
    By construction, \(U^{\mathrm{loc}}\perp V^{\mathrm{loc}}.\)
    Let
    \[
    M
    :=
    m_h(\delta_h-1)
    +
    \sum_{l=1}^{h-1}m_l(\delta_l-\delta_{l+1}),
    \]
    then
    \(
    \dim U^{\mathrm{loc}}=\dim V^{\mathrm{loc}}=M.
    \)
    Choose bases of \(U^{\mathrm{loc}}\) and \(V^{\mathrm{loc}}\), and use them as
    the initial rows of \(H_X\) and \(H_Z\), respectively. 
\medskip

 Finally, choose an integer \(\ell < N/2 - M.\) We then add $\ell$ random rows to $H_X$ and $H_Z$ subject to the orthogonality conditions, as follows:
\begin{enumerate}
    \item for $i = 1, \ldots, \ell$:
        \begin{enumerate}
            \item let $v$ be a uniformly random vector sampled from $\text{row-span}(H_Z)^\perp\setminus \text{row-span}(H_X)$
    
            \item append $v$ to $H_X$ and rename the resulting matrix as $H_X$ i.e. $H_X = \begin{bmatrix}
                H_X\\v
            \end{bmatrix}$
        \end{enumerate}
    \item for $i = 1, \ldots, \ell$:
        \begin{enumerate}
            \item let $v$ be a uniformly random vector sampled from $\text{row-span}(H_X)^\perp\setminus \text{row-span}(H_Z)$
    
            \item append $v$ to $H_Z$ and rename the resulting matrix as $H_Z$ i.e. $H_Z = \begin{bmatrix}
                H_Z\\v
            \end{bmatrix}$
        \end{enumerate}
\end{enumerate}
We have sampled matrices $H_X, H_Z \in \mathbb{F}_q^{(M + \ell)\times N}$ with orthogonal row spaces. Let $C_X = \ker H_X$, $C_Z = \ker H_Z$ and obtain a well-defined CSS code $\mathcal{Q} = \mathrm{CSS}(C_X, C_Z)$.
\end{construction}

\subsection{Dimension, locality, and distance}

We will begin with computing the dimension and proving the locality properties of the quantum HLRC $\mathcal{Q}$.
\begin{lemma}
    The random QHLRC $\mathcal{Q}$ from \Cref{const:random_qhlrc} is a quantum HLRC of hierarchical locality $(r_1,\delta_1),\ldots, (r_h,\delta_h)$ and dimension $k = N - 2(M+\ell)$ where \begin{equation}M = m_h(\delta_h-1) + \sum_{l = 1}^{h-1} m_l(\delta_l-\delta_{l+1}).\end{equation}
\end{lemma}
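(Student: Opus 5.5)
The proof splits into three parts: well-definedness and orthogonality, the dimension count, and hierarchical locality. For well-definedness, I will argue by downward induction on the level $l=h,\ldots,1$ that, inside every level-$l$ block $J_{l,b}$, the accumulated spaces $U_{\ge l,b}:=U_{>l,b}+U^+_{l,b}$ and $V_{\ge l,b}:=V_{>l,b}+V^+_{l,b}$ are orthogonal.

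The base case $l=h$ is \Cref{lem:dual_vandermonde_blocks}. For the inductive step:
\begin{itemize}
\item Blocks at level $l+1$ inside $J_{l,b}$ have disjoint supports, so the inductive hypothesis gives $U_{>l,b}\perp V_{>l,b}$.
\item The choice $U^+_{l,b}\subseteq V_{>l,b}^\perp$ and $V^+_{l,b}\subseteq (U_{>l,b}+U^+_{l,b})^\perp$ then gives $U_{\ge l,b}\perp V_{\ge l,b}$.
\item Distinct level-$1$ blocks are disjoint, so $U^{\mathrm{loc}}\perp V^{\mathrm{loc}}$.
\item Each random row is drawn from the orthogonal complement of the opposite current row span, so orthogonality persists through both sampling loops. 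Hence $C_X^\perp=\mathrm{rowspan}(H_X)\subseteq \ker H_Z=C_Z$, and $\mathcal Q$ is a valid CSS code.
\end{itemize}

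For the dimension, I count ranks. Each $U_{h,b}$ is a Vandermonde row space of dimension $\delta_h-1$, and each level-$l$ step adds exactly $\Delta_l$ new dimensions by the imposed direct-sum condition. Summing over the disjoint blocks, $\dim U^{\mathrm{loc}} = m_h(\delta_h-1)+\sum_{l<h} m_l\Delta_l = M$, and likewise $\dim V^{\mathrm{loc}}=M$. The random rows are sampled outside the current row span, so each raises the rank by one. Thus $\mathrm{rank}\,H_X=\mathrm{rank}\,H_Z=M+\ell$, and
\[
k=\dim C_X+\dim C_Z-N=N-2(M+\ell),
\]
which is positive because $\ell<N/2-M$.

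For locality, I take the recovery sets to be the nested blocks $J_{1,b}\supseteq J_{2,b'}\supseteq\cdots$ containing a given coordinate; these have sizes $n_l=r_l+\delta_l-1\ge\delta_l$. The key observation is that the accumulated local space $U_{\ge l,b}$ lies in $C_X^\perp$ and is supported inside $J_{l,b}$. By the first condition of the $(\delta_l-1,\delta_1-1)$-correctable property (\Cref{def:strong_local_mds}), $\dim(U_{\ge l,b}|_E)=|E|$ for every $E\subseteq J_{l,b}$ with $|E|\le\delta_l-1$. So for any codeword of $C_X$, the symbols on $E$ are determined linearly by those on $J_{l,b}\setminus E$. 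In other words, $C_X|_{J_{l,b}}$ has distance at least $\delta_l$, and the same holds for $C_Z$ using $V_{\ge l,b}$.

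Next I pass to the quantum code.
\begin{itemize}
\item The punctured local code $\mathcal Q|_{J_{l,b}}=\CSS(C_X|_{J_{l,b}},C_Z|_{J_{l,b}})$ is a valid CSS code by \Cref{lem:puncturing_css}.
\item An $X$-type or $Z$-type Pauli error supported on $E$ is detected and identified using only the local stabilizers $X^{a}$, $a\in U_{\ge l,b}$, and $Z^{b}$, $b\in V_{\ge l,b}$. All of these are supported in $J_{l,b}$, so the erasure at $E$ is corrected by an operation acting only on $J_{l,b}$. Standard CSS erasure correction via syndrome measurement of these local stabilizers makes this an $(E,J_{l,b})$-recovery.
\item The stabilizers of $J_{l+1,b'}\subseteq J_{l,b}$ survive in the shortened stabilizer group $S|_{J_{l,b}}$, so the same argument applies to $\mathcal Q|_{J_{l,b}}$ at the next level. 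Recursing on $h$ via \Cref{def:h_level_qhlrc} then gives the full hierarchy.
\end{itemize}

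I expect the main obstacle to be this last quantum step. \Cref{thm:css_qlrc_iff_clrc} cannot be applied directly, because it assumes $C=C_X=C_Z$ is dual-containing. I will therefore argue directly: the Knill--Laflamme condition for erasures on $E$ reduces to showing that every logical operator supported on $E$ is trivial. Such an operator has the form $X^{a}$ with $a\in C_Z$, $\supp(a)\subseteq E$, or $Z^{b}$ with $b\in C_X$, $\supp(b)\subseteq E$. The point is that these vectors must be zero. Since $U_{\ge l,b}\subseteq C_X^\perp$ restricts onto $\F_q^E$ (the first correctable condition), any $b\in C_X$ supported on $E$ has $\langle b,u\rangle=0$ for all $u\in U_{\ge l,b}$; because these restrictions to $E$ span $\F_q^E$, this forces $b|_E=0$, so $b=0$. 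The symmetric argument with $V_{\ge l,b}\subseteq C_Z^\perp$ forces $a=0$. The recovery map can be taken local, since the syndrome of these local stabilizers already determines the error. In this argument the second correctable condition is not needed; it is used only in the distance analysis.
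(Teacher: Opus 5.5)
Your proposal is correct and follows essentially the same route as the paper. The dimension comes from rank counting of the nested local $X$/$Z$ spaces plus the $\ell$ random rows, and hierarchical locality comes from the local MDS property $\dim(U_{\ge l,b}|_E)=|E|$ on the nested blocks. The paper only asserts the local quantum recovery and its inheritance under puncturing. Your explicit local-syndrome/Knill--Laflamme argument fills in that detail rather than changing the approach, and you are right that it is needed, since $C_X\neq C_Z$ means \Cref{thm:css_qlrc_iff_clrc} cannot be invoked. The paper also checks $M\le m_h(\delta_1-1)<N/2$, which guarantees that an admissible $\ell$ exists; your proof takes $\ell<N/2-M$ as given from the construction.
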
\begin{proof}
    Since $r_1,\ldots, r_h$ and $\delta_1,\ldots, \delta_h$ are decreasing sequences, $m_1,\ldots, m_h$ is an increasing sequence. Therefore, \begin{equation}
    m_h(\delta_h-1) + \sum_{l=1}^{h-1} m_l(\delta_l-\delta_{l+1}) \leq m_h(\delta_1 - 1)
    \end{equation}
    and \begin{equation}\label{eqn:rhlrc_distance_restriction}
    m_h(\delta_1 -1) < N/2 \iff \delta_1-1< n_h/2.
    \end{equation} The latter inequality is enforced in \Cref{const:random_qhlrc}. By construction, the deterministic local \(X\)-row space has dimension
    \[
    M
    =
    m_h(\delta_h-1)
    +
    \sum_{l=1}^{h-1}m_l(\delta_l-\delta_{l+1}),
    \]
    and the same is true for the deterministic local \(Z\)-row space. Each random
    global row is chosen outside the current row span, so it increases rank by one.
    Hence, \(\rank H_X=\rank H_Z=M+\ell.\)
    Therefore, \(\dim C_X=\dim C_Z=N-(M+\ell),\)
    and the CSS dimension is
    \[
    k
    =
    \dim C_X+\dim C_Z-N
    =
    N-2(M+\ell).
    \]
    
    It remains to prove hierarchical locality. Fix a level \(l\) block \(J_{l,b}\).
    By construction, the \(X\)-local row space supported in \(J_{l,b}\) has the
    \(\delta_l-1\) local MDS property:
    \[
    \dim (U|_E)=|E|
    \qquad
    \text{for every }E\subseteq J_{l,b},\ |E|\le \delta_l-1.
    \]
    Thus any \(\delta_l-1\) erasures in \(J_{l,b}\) can be recovered using
    \(X\)-type local checks supported inside \(J_{l,b}\). The same statement holds
    for the \(Z\)-local row space.
    
    Since \(n_h\mid n_{h-1}\mid\cdots\mid n_1,\)
    each level-\((l+1)\) block is contained in a level-\(l\) block. Therefore the
    punctured code on a level-\(l\) block inherits the lower-level repair structure
    from the blocks contained inside it. Hence the code is an \(h\)-level QHLRC with
    locality parameters \(((r_1,\delta_1),\ldots,(r_h,\delta_h)).\)
    \end{proof}
We now count the possible supports of logical operators compatible with the
hierarchical local checks. Define
\[
B_h(z)
=
\sum_{t=\delta_h}^{n_h}\binom{n_h}{t}z^t.
\]
For \(l=h-1,\ldots,1\), define recursively
\[
B_l(z)
=
\sum_{t=\delta_l}^{n_l}
[z^t]\left(1+B_{l+1}(z)\right)^{n_l/n_{l+1}}z^t.
\]
Finally, define
\begin{equation}\label{eqn:supports_of_weight_w_hierarchical} \mathcal{N}^{(\boldsymbol{\delta})}(N,w) = [z^w](1 + B_1(z))^{N/n_1}.\end{equation}
When \(h=1\), \labelcref{eqn:supports_of_weight_w_hierarchical} recovers \labelcref{eqn:supports_of_weight_w}:
\[
\mathcal N^{(\delta_1)}(N,w)
=
[z^w]\left(
1+
\sum_{t=\delta_1}^{n_1}\binom{n_1}{t}z^t
\right)^{N/n_1} = \sum_{s=1}^{\lfloor w/\delta_1 \rfloor} \binom{m_1}{s}\sum_{\substack{w_1+\cdots +w_s = w\\ w_i\geq \delta_1}}\prod_{i=1}^s\binom{n_1}{w_i}.
\]

\begin{lemma}
\label{lem:hierarchical_admissible_supports}
Let \(y\in C_Z\setminus C_X^\perp\) or \(y\in C_X\setminus C_Z^\perp.\) If \(y\neq 0\), then for every level-\(l\) block \(J_{l,b}\), either
\(
\operatorname{supp}(y)\cap J_{l,b}=\emptyset,
\)
or
\(
|\operatorname{supp}(y)\cap J_{l,b}|\ge \delta_l.
\)
Consequently, the number of possible supports of such vectors of weight \(w\)
is at most \(\mathcal N^{(\boldsymbol\delta)}(N,w).\)
\end{lemma}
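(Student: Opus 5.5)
The plan has two parts: a local argument giving the block-weight dichotomy, and a counting argument matching the admissible supports to the generating function.

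\emph{Local dichotomy.} Take $y\in C_Z\setminus C_X^\perp$; the case $y\in C_X\setminus C_Z^\perp$ is symmetric with the roles of $U$ and $V$ exchanged. Fix a level-$l$ block $J_{l,b}$ and put $E:=\operatorname{supp}(y)\cap J_{l,b}$. Suppose, for contradiction, that $0<|E|\le \delta_l-1$.

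Let $W:=V_{>l,b}+V_{l,b}^+$ denote the accumulated $Z$-local row space supported in $J_{l,b}$. For $l=h$, take $W:=V_{h,b}$. By \Cref{const:random_qhlrc}, $W$ has the $(\delta_l-1,\delta_1-1)$-correctable property on $J_{l,b}$. For $l=h$ this was verified directly from the Vandermonde structure via \Cref{lem:vandermonde_local_mds}.

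The rows of $W$ are rows of $H_Z$, padded by zero outside $J_{l,b}$. Since $y\in\ker H_Z$, we get $w\cdot y|_{J_{l,b}}=0$ for every $w\in W$. Because $y|_{J_{l,b}}$ is supported on $E$, this says $y|_E$ is orthogonal to the restricted space $W|_E$.

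Condition 1 of \Cref{def:strong_local_mds} gives $\dim(W|_E)=|E|$, so $W|_E=\F_q^E$. Hence $y|_E=0$, contradicting $E\neq\emptyset$. Thus every nonempty intersection has size at least $\delta_l$.

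The hypothesis $y\notin C_X^\perp$ is not needed here; only $y\in C_Z$ and $y\neq 0$ are used.

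\emph{Counting.} Next I bound the number of supports. Call $T\subseteq[N]$ \emph{admissible} if every level-$l$ block meets $T$ in either $0$ or at least $\delta_l$ coordinates, for every $l$. By the dichotomy, every relevant $y$ has an admissible support, so it suffices to show that the number of admissible $T$ with $|T|=w$ is at most $[z^w](1+B_1(z))^{N/n_1}$.

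I would prove by downward induction on $l$ that, for a single level-$l$ block, the number of nonempty admissible subsets of size $t$ is at most $[z^t]B_l(z)$. Here admissible means the constraint holds for that block and all finer blocks inside it.

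\begin{itemize}
\item \emph{Base case $l=h$.} A level-$h$ block has no finer blocks. Its nonempty admissible subsets are exactly the subsets of size $t\ge\delta_h$, and there are $\binom{n_h}{t}$ of them, which matches $B_h$.
\item \emph{Inductive step.} A level-$l$ block is the disjoint union of $n_l/n_{l+1}$ level-$(l+1)$ blocks, by divisibility. An admissible subset restricts, on each sub-block, to either the empty set or a nonempty admissible subset. These choices are independent across sub-blocks, so subsets of total size $t$ are counted by $[z^t](1+B_{l+1}(z))^{n_l/n_{l+1}}$. Additionally imposing $t\ge\delta_l$ and keeping $t\le n_l$ yields exactly $[z^t]B_l(z)$.
\end{itemize}

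The same product argument over the $N/n_1$ top-level blocks gives the total count $[z^w](1+B_1(z))^{N/n_1}$.

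The only delicate step is the local dichotomy. There one must check that the correctable property was imposed on the \emph{accumulated} spaces $V_{>l,b}+V^+_{l,b}$, not merely on $V^+_{l,b}$, and that this property is guaranteed (via \Cref{thm:generic_nested_local_extension}) at every level. Once that is in place, the counting is a routine generating-function identity. In fact it holds with equality for admissible sets, which gives the stated upper bound.
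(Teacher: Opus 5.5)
Your proposal is correct and follows the paper's proof essentially step for step. The local dichotomy is obtained the same way: assuming $0<|E|\le\delta_l-1$ contradicts condition 1 of the $(\delta_l-1,\delta_1-1)$-correctable property of the accumulated local $Z$-row space on $J_{l,b}$, and the count then comes from the recursive generating function over the nested blocks. Your downward induction only makes explicit the block-by-block counting that the paper states informally, and you are right that the hypothesis $y\notin C_X^\perp$ plays no role in the dichotomy.
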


\begin{proof}
We prove the statement for \(y\in C_Z\setminus C_X^\perp\); the other case is
identical. Since \(y\in C_Z=\ker H_Z,\)
the vector \(y\) satisfies all deterministic local \(Z\)-checks.

Fix a level-\(l\) block \(J_{l,b}\). Let $E = \operatorname{supp}(y)\cap J_{l,b}$ and suppose
\(0<|E|\le \delta_l-1.\) The local \(Z\)-row space supported in \(J_{l,b}\) has the
\((\delta_l-1,\delta_1-1)\)-correctable property, so
\(
\dim(V|_E)=|E|.
\)
Thus, the only vector supported on \(E\) and satisfying all these local checks is
the zero vector. This contradicts the definition of \(E\). Therefore every
nonempty intersection with \(J_{l,b}\) has size at least \(\delta_l\).

The recursive generating function counts exactly such supports. At the bottom, a nonempty level-\(h\) block must have weight at least \(\delta_h\), so
its generating function is
\[
B_h(z)=\sum_{t=\delta_h}^{n_h}\binom{n_h}{t}z^t.
\]
A level-\(l\) block consists of \(n_l/n_{l+1}\) level-\((l+1)\) blocks. Each
child block is either empty or has a support counted by \(B_{l+1}(z)\). Hence
the generating function before imposing the level-\(l\) threshold is
\[
(1+B_{l+1}(z))^{n_l/n_{l+1}}.
\]
However, we only want the nonempty configurations which also meet the threshold of $\delta_l$. For $l=1,\ldots,h-1$ define \[B_l(z) = \sum_{t = \delta_l}^{n_l}[z^t](1 + B_{l+1}(z))^{\frac{n_l}{n_{l+1}}} \cdot z^t\] where $[z^t] f(z)$ is the coefficient of $z^t$ in $f(z)$. This recursively defines $B_{h-1}, \ldots, B_1$. At
the top level, the whole length \(N\) consists of \(N/n_1\) level-\(1\) blocks,
each either empty or counted by \(B_1(z)\). Therefore the number of admissible
supports of weight \(w\) is
\[
[z^w](1+B_1(z))^{N/n_1}
=
\mathcal N^{(\boldsymbol\delta)}(N,w),
\] as desired.
\end{proof}

We use the number of admissible supports to define an entropy-like function and give a probabilistic argument for the distance bound.
\begin{proposition}\label{prop:distance_random_qhlrc}
    For all sufficiently large \(N\), given any $\rho, \epsilon > 0$, the distance $d(\mathcal{Q})$ of a random $h$-level QHLRC $\mathcal{Q}$ from \Cref{const:random_qhlrc} with parameters $N$, $(r_l,\delta_l)_{l=1,\ldots,h}$, and $N/2 - M > \ell \geq (H_q^{(\boldsymbol{\delta})}(\rho) + 2\epsilon)N$ (if one exists) over the alphabet $\mathbb{F}_q$ satisfies \begin{equation}
        \Pr[d(\mathcal{Q}) \geq \rho N] > 1 - 2q^{-\epsilon N}.
    \end{equation} where \begin{equation}
        H_q^{(\boldsymbol{\delta})}(\rho) = \limsup_{N\to \infty} \frac{1}{N}\log_q\left(\sum_{w\leq\rho N}\mathcal{N}^{(\boldsymbol{\delta})}(N,w)(q-1)^w\right).
    \end{equation} and the \(\limsup\) is taken over \(N\) divisible by \(n_1\).
\end{proposition}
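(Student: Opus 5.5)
The plan is to reuse the proof of \Cref{prop:distance_random_r_delta_qlrc} essentially verbatim, changing only the support count.

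First, fix a nonzero $y\in\F_q^N$. Let $H_Z^{(i)}$ denote $H_Z$ after the $i$th random $Z$-row has been appended in \Cref{const:random_qhlrc}. The rows of $H_Z^{(0)}$ span $V^{\mathrm{loc}}$, and every later row is drawn uniformly from $\text{row-span}(H_X)^\perp\setminus\text{row-span}(H_Z^{(i)})$, where at that stage $\text{row-span}(H_X)^\perp=C_X$ is already fixed. Suppose $y\in\ker H_Z^{(i)}\setminus C_X^\perp$. Then $y$ is orthogonal to exactly a $1/q$ fraction of $C_X$. Removing the subspace $\text{row-span}(H_Z^{(i)})$, which lies entirely in $y^\perp$, can only lower this fraction. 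Hence
\[
\Pr\bigl[y\in\ker H_Z^{(i+1)}\mid y\in \ker H_Z^{(i)}\setminus C_X^\perp\bigr]<1/q .
\]
Chaining over the $\ell$ steps gives $\Pr[y\in C_Z\setminus C_X^\perp]<q^{-\ell}$. The symmetric argument for the $X$-rows gives $\Pr[y\in C_X\setminus C_Z^\perp]<q^{-\ell}$. The only point needing care is the conditioning on the first phase: the $X$-rows are sampled before the $Z$-rows, but the bound holds conditionally on any outcome of the $X$ phase, so it survives.

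Second, restrict the union bound. By \Cref{lem:hierarchical_admissible_supports}, any nonzero $y$ in $C_Z\setminus C_X^\perp$ or $C_X\setminus C_Z^\perp$ has a hierarchically admissible support. This lemma relies only on the deterministic local spaces $U^{\mathrm{loc}},V^{\mathrm{loc}}$, which are always present. So the number of candidate vectors of weight $w$ is at most $\mathcal N^{(\boldsymbol\delta)}(N,w)(q-1)^w$. Summing over $w\le\rho N$ gives
\[
\Pr\bigl[\exists\,y\in C_Z\setminus C_X^\perp,\ |y|\le\rho N\bigr]< q^{-\ell}\sum_{w\le\rho N}\mathcal N^{(\boldsymbol\delta)}(N,w)(q-1)^w .
\]

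Third, use the definition of $H_q^{(\boldsymbol\delta)}(\rho)$ as a $\limsup$ over $N$ divisible by $n_1$. For all sufficiently large such $N$ the sum is at most $q^{(H_q^{(\boldsymbol\delta)}(\rho)+\epsilon)N}$. Combined with $\ell\ge(H_q^{(\boldsymbol\delta)}(\rho)+2\epsilon)N$, this bounds the probability by $q^{-\epsilon N}$. The same bound holds for the $X$ side, and a union of the two failure events gives $\Pr[d(\mathcal Q)<\rho N]<2q^{-\epsilon N}$. Since the CSS distance is the minimum weight over $(C_Z\setminus C_X^\perp)\cup(C_X\setminus C_Z^\perp)$, this is the claim.

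The constraint $\ell<N/2-M$ only ensures the sampling sets are nonempty and $k>0$. I expect the main obstacle to be justifying the per-step $1/q$ bound, namely that removing the current row span does not raise the fraction of vectors orthogonal to $y$. Everything else is bookkeeping carried over from the one-level case.
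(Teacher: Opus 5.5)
Your proposal is correct and follows the paper's proof essentially step for step. It uses the same three ingredients: the per-row $<1/q$ survival bound (the removed row span lies in $C_X\cap y^\perp$) chained over $\ell$ rows, the union bound restricted to hierarchically admissible supports via \Cref{lem:hierarchical_admissible_supports}, and the $\limsup$ definition of $H_q^{(\boldsymbol\delta)}(\rho)$ to absorb the count.

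One small refinement, which the paper also leaves implicit under ``by symmetry'', concerns the $X$-side. There $C_Z^\perp$ is not yet determined when the $X$-rows are drawn, so the argument should use the weaker condition $y\notin V^{\mathrm{loc}}$, which is implied by $y\notin C_Z^\perp$ since $V^{\mathrm{loc}}\subseteq C_Z^\perp$. This condition suffices because those rows are sampled from the fixed space $(V^{\mathrm{loc}})^\perp$, in which exactly a $1/q$ fraction of vectors is orthogonal to such a $y$.
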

\begin{proof}
    Fix a nonzero vector $y \in \mathbb{F}_q^N$. Let $H_Z^{(i)}$ denote the matrix after the $i$th iteration of step 2 in \Cref{const:random_qhlrc}. If $y \in C_Z\setminus C_X^\perp$ then for $0 \leq i \leq \ell -1$, $y \in \ker H_Z^{(i)}$ and $y \in \ker H_Z^{(i+1)}$. Since $y \not \in C_X^\perp$, exactly $1/q$-fraction of the vectors in $C_X$ are orthogonal to $y$. It follows if $y \in \ker H_Z^{(i)}$ then less than $1/q$-fraction of the vectors in $C_X\setminus \text{row-span}(H_Z^{(i)})$ are orthogonal to $y$. Formally, \[\Pr[y \in \ker H_Z^{(i+1)}\setminus C_X^\perp \mid y \in \ker H_Z^{(i)}\setminus C_X^\perp] < \frac{1}{q}.\] Hence, \begin{align*}
        \Pr[y \in C_Z\setminus C_X^\perp] &= \Pr[y \not \in C_X^\perp] \cdot \Pr[y \in \ker H_Z^{(0)}\mid y \not \in C_X^\perp]\\ &\qquad\cdot \prod_{i=0}^{\ell-1}\Pr[y \in \ker H_Z^{(i+1)}\setminus C_X^\perp\mid y \in \ker H_Z^{(i)}\setminus C_X^\perp]\\ &< q^{-\ell}.
    \end{align*}
    Now union bound over all possible supports of weight \(w\le \rho N\). By
    \Cref{lem:hierarchical_admissible_supports}, any logical vector of weight \(w\)
    has one of at most \(\mathcal N^{(\boldsymbol\delta)}(N,w)\) admissible supports. For each support of size \(w\), there are at most \((q-1)^w\)
    nonzero vectors supported on it. Therefore,
    applying the union bound over all $y \in \mathbb{F}_q^N$ with weight, $w$ at most $\rho N$ gives \begin{align*}
        \Pr[\exists\, y \in C_Z\setminus C_X^\perp, |y| \leq \rho N] &< q^{-\ell}\sum_{w\leq\rho N}\mathcal{N}^{(\boldsymbol\delta)}(N,w)(q-1)^w\\
        & \leq q^{-\ell}q^{(H_q^{(\boldsymbol\delta)}(\rho) + \epsilon)N}\\
        & \leq q^{-\epsilon N}
    \end{align*} for sufficiently large $N$
    and by symmetry \[\Pr[\exists\, y \in C_X\setminus C_Z^\perp, |y| \leq \rho N]<q^{-\epsilon N}.\] Therefore, \[\Pr[d(\mathcal{Q}) \leq \rho N] < 2q^{-\epsilon N},\] as desired.
\end{proof}

\begin{remark}\label{rem:strict_hierarchical}
    In the hierarchical setting, for the most general setting, we are given locality and distance parameters $(r_1,\delta_1), \ldots, (r_h, \delta_h)$ such that $r_1 \geq \cdots \geq r_h \geq 1$ and $\delta_1 \geq \cdots \geq \delta_h \geq 2$. We imposed the restriction $n_h \mid n_{h-1} \mid \cdots \mid n_1 \mid N$ so the groups nest. We want each level to meaningfully correct additional erasures so if there is some $1 \leq l \leq h-1$ such that $\delta_l = \delta_{l+1}$ then level $l$ does not correct any more erasures than level $l+1$. In the parity check view, we would not be adding additional parity checks corresponding to level $l$ because $\delta_l - \delta_{l+1} = 0$. Consequently, it is reasonable to assume $\delta_1 > \cdots > \delta_h\geq 2$. Suppose there is some $1 \leq l \leq h-1$ such that $r_l = r_{l+1}$. We have \[r_l + \delta_l - 1 = r_{l+1} + \delta_{l+1} - 1 + (\delta_l - \delta_{l+1})\] and by the divisibility condition $r_{l+1} + \delta_{l+1} - 1\mid r_l + \delta_l -1$ so $r_{l+1} + \delta_{l+1} - 1 \mid \delta_l - \delta_{l+1}$ which is a contradiction. Therefore, we may also assume $r_1 > \cdots > r_h$.
\end{remark}

\subsection{Existence of subspaces}
We now prove the existence of the desired \(U_{l,b}^+, V_{l,b}^+\) used in \Cref{const:random_qhlrc}. The proof is quite technical so we first present two inequalities and use them to justify the existence of the desired subspaces.
\begin{theorem}
\label{thm:generic_nested_local_extension}
Let \(J\) be a level-\(l\) block of length \(n_l\), and set \(\Delta_l:=\delta_l-\delta_{l+1}.\) Suppose that inside \(J\) we have already constructed lower-level row spaces \(U_{>l},V_{>l}\subseteq \F_q^J\) such that \[U_{>l}\perp V_{>l} \quad \text{and} \quad \dim U_{>l} = \dim V_{>l}.\] Further suppose both \(U_{>l}\) and \(V_{>l}\) have the \((\delta_{l+1}-1, \delta_1-1)\)-correctable property on every level-\((l+1)\) child block inside $J$. More specifically, suppose that \(J\) is partitioned into level-\((l+1)\) child blocks
\[
J=\bigsqcup_i J_i,
\]
and that
\[
U_{>l}=\bigoplus_i U_i,
\qquad
V_{>l}=\bigoplus_i V_i,
\]
where \(U_i,V_i\subseteq \F_q^{J_i}\), extended by zero outside \(J_i\), satisfy
\(
U_i\perp V_i
\)
and both \(U_i\) and \(V_i\) have the
\((\delta_{l+1}-1,\delta_1-1)\)-correctable property on \(J_i\). Assume \(q\) is sufficiently large. Then there exists a
\(\Delta_l\)-dimensional subspace \(U_l^+\subseteq V_{>l}^{\perp}\) such that 
\[
\dim(U_{>l}+U_l^+)=\dim U_{>l}+\Delta_l,
\]
and \(U_{>l-1} := U_{>l}+U_l^+\) has the \((\delta_l-1,\delta_1-1)\)-correctable property on $J$.

After choosing such a \(U_l^+\), there exists a \(\Delta_l\)-dimensional subspace \(V_l^+\subseteq (U_{>l}+U_l^+)^\perp\) such that \[\dim(V_{>l}+V_l^+)=\dim V_{>l}+\Delta_l,\] and \(V_{>l-1}:=V_{>l}+V_l^+\) has the \((\delta_l-1,\delta_1-1)\)-correctable property on \(J\). Consequently,
\[
U_{>l-1}\perp V_{>l-1} \quad \text{and} \quad \dim U_{>l-1} = \dim V_{>l-1}.
\]
\end{theorem}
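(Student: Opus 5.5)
We prove the existence of $U_l^+$ by a generic-choice (Schwartz--Zippel / counting) argument over the Grassmannian $\Gr_{\F_q}(\Delta_l, V_{>l}^\perp)$. The symmetric statement for $V_l^+$ then follows by the same argument with $U_{>l}+U_l^+$ in place of $V_{>l}$.

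Write $W:=V_{>l}^\perp\subseteq\F_q^J$. Since $U_{>l}\perp V_{>l}$ we have $U_{>l}\subseteq W$. The dimension condition on $U_{>l}$ is that $\dim W-\dim U_{>l}\ge\Delta_l$. This holds because $\dim W=n_l-\dim V_{>l}=n_l-\dim U_{>l}$. Hence it suffices that $n_l\ge 2\dim U_{>l}+\Delta_l$, and this follows from $\dim U_{>l}\le (n_l/n_{l+1})(\delta_{l+1}-1)$ together with $\delta_1\le n_h/2$. I will check this as a short arithmetic step.

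The main part is the two correctable conditions for $U:=U_{>l}+U_l^+$. Each is a finite family of conditions indexed by subsets $E\subseteq J$: $\dim(U|_E)=|E|$ for $|E|\le\delta_l-1$, and $U\cap\F_q^J[E]=0$ for $|E|\le\delta_1-1$. For a fixed $E$, each condition is the nonvanishing of some maximal minor of a matrix whose rows are a basis of $U_{>l}$ together with $\Delta_l$ generic vectors of $W$ (entries are indeterminates after fixing a basis of $W$). So each is a polynomial condition of degree at most $\Delta_l$ in the parameters. By Schwartz--Zippel, once each such polynomial is shown to be nonzero, a uniformly random choice avoids all $\binom{n_l}{\le\delta_1-1}$ bad events when $q$ is large. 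The same holds for the condition $U_l^+\cap U_{>l}=0$. This is exactly where ``$q$ sufficiently large'' enters.

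The main obstacle is showing that each polynomial is not identically zero, i.e.\ exhibiting, possibly over $\overline{\F_q}$, a single $U_l^+\subseteq W$ satisfying the condition for the given $E$. For the second condition, injectivity of $\pi_{J\setminus E}$ on $U$, I would argue as follows. The hypothesis already gives injectivity on each $U_i$, hence on $U_{>l}=\bigoplus U_i$: a vector in $U_{>l}$ supported in $E$ splits into components supported in $E\cap J_i$, and each component vanishes. It therefore suffices that the generic $\Delta_l$ vectors of $W$ project to $\F_q^{J\setminus E}$ independently modulo $\pi_{J\setminus E}(U_{>l})$. This is a dimension count: $\dim\pi_{J\setminus E}(W)\ge \dim W-|E|$, and this exceeds $\dim U_{>l}+\Delta_l$ by the same arithmetic as above, using $|E|\le\delta_1-1$.

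For the first condition, $\dim(U|_E)=|E|$ with $|E|\le\delta_l-1$, I would reduce to its dual form: no nonzero vector of $U^\perp$ is supported on $E$. Blockwise, $U_{>l}|_E$ already has dimension $\sum_i\min(|E\cap J_i|,\ \delta_{l+1}-1)$, the first condition restricted to each block. The deficit is at most $\Delta_l$, since $|E|\le\delta_l-1$. The generic vectors of $W$ must fill this deficit, which requires $W|_E$ to have full rank $|E|$, i.e.\ no nonzero vector of $V_{>l}^{\perp\perp}=V_{>l}$ supported on $E$. That is exactly the second correctable property of $V_{>l}$ for $|E|\le\delta_l-1\le\delta_1-1$, available blockwise from the hypotheses on the $V_i$. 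Given full rank of $W|_E$, generic vectors of $W$ restricted to $E$ complete $U_{>l}|_E$ to full rank whenever the deficit is at most $\Delta_l$.

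With both conditions shown generically nonvacuous, the $U_l^+$ part is done. For $V_l^+$ I repeat everything with the roles swapped, using $W'=(U_{>l}+U_l^+)^\perp$ and the correctable property of $U$ just established, which replaces the hypothesis on $V_{>l}$. Orthogonality and equal dimensions of $U_{>l-1}$ and $V_{>l-1}$ are immediate from the choices.
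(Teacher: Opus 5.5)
You have the same architecture as the paper. You choose $U_l^+$ generically in $\Gr_{\F_q}(\Delta_l,V_{>l}^\perp)$, encode each requirement as a nonvanishing minor, and finish with Schwartz--Zippel. Your handling of $U_l^+\cap U_{>l}=\{0\}$ and of $\dim(U|_E)=|E|$ (deficit at most $\Delta_l$, plus $W|_E=\F_q^E$ from the second correctable property of the $V_i$) is also correct.

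\textbf{The gap: the injectivity condition.} You estimate $\dim\pi_{J\setminus E}(W)\ge\dim W-|E|$. With $D:=\dim U_{>l}$, this requires $n_l-2D\ge\Delta_l+\delta_1-1$, which is false for admissible parameters. Take $h=2$, $(r_1,\delta_1)=(31,10)$, $(r_2,\delta_2)=(12,9)$. Then $n_1=40$, $n_2=20$, $\delta_1\le n_2/2$, and $D=(n_1/n_2)(\delta_2-1)=16$. For $|E|=9$ your estimate gives only $\dim\pi_{J\setminus E}(W)\ge 15<17=D+\Delta_1$. So ``the same arithmetic as above'' does not close, and no nonzero minor is exhibited.

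\textbf{The missing idea.} The kernel $\ker(\pi_{J\setminus E}|_W)=W\cap\F_q^J[E]\cong(V_{>l}|_E)^\perp$ has dimension $|E|-\dim(V_{>l}|_E)$. The \emph{first} correctable property of the $V_i$, summed over child blocks, gives $\dim(V_{>l}|_E)\ge\min\{|E|,\delta_{l+1}-1\}$. So the loss is at most $\delta_1-\delta_{l+1}$, not $\delta_1-1$. What you actually need is $n_l-2D\ge\delta_1+\delta_l-2\delta_{l+1}$; in the example, $8\ge 2$.

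\textbf{Verifying that inequality.} Your bound $D\le(n_l/n_{l+1})(\delta_{l+1}-1)$ is wrong for $l\le h-2$: there it is in general a strict \emph{lower} bound, because each child block carries more than $\delta_{l+1}-1$ accumulated checks. The statement does not fix $D$; like the paper, you must take it from the construction, which gives $D\le(n_l/n_h)(\delta_{l+1}-1)$. Using $2\delta_1\le n_h$,
\[
n_l-2D\ge n_h-2(\delta_{l+1}-1)\ge 2(\delta_1-\delta_{l+1})\ge\delta_1+\delta_l-2\delta_{l+1}.
\]

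\textbf{The $V_l^+$ step is not a literal swap.} Here $W'=(U_{>l}+U_l^+)^\perp$ has dimension $n_l-D-\Delta_l$. The intersection condition therefore needs $n_l-2D\ge 2\Delta_l$. For injectivity you must use the first correctable property of $U_{>l-1}$ at threshold $\delta_l-1$, which bounds the kernel by $\delta_1-\delta_l$. The requirement $n_l-2D-\Delta_l-(\delta_1-\delta_l)\ge\Delta_l$ is again exactly $n_l-2D\ge\delta_1+\delta_l-2\delta_{l+1}$. Your crude count would instead need $n_l-2D\ge 2\Delta_l+\delta_1-1$, which fails even more often.
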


\begin{proof}
We first record two consequences of the inductive hypotheses.

\noindent \textbf{Rank-Deficiency Bound:} Let \(E \subseteq J\) such that $|E| \leq \delta_l-1$. Decompose \(E\) among the level-\((l+1)\) child blocks: \[ E=\bigsqcup_i E_i. \]
By the level-\((l+1)\) correctable property,
\[
\dim(U_{>l}|_{E_i})
\ge
\min\{|E_i|,\delta_{l+1}-1\}
\]
which implies 
\[
\dim(U_{>l}|_E)
\ge
\sum_i \min\{|E_i|,\delta_{l+1}-1\}.
\]
Hence,
\[
|E|-\dim(U_{>l}|_E)
\le
\sum_i (|E_i|-\delta_{l+1}+1)_+.
\] where $y_+ = \max\{y,0\}$.
If the right-hand side is nonzero, then
\[
\sum_i (|E_i|-\delta_{l+1}+1)_+
\le
|E|-\delta_{l+1}+1
\le
\delta_l -1 - \delta_{l+1}+1
=
\Delta_l.
\]
If it is zero, the same inequality is immediate. Thus,
\(
|E|-\dim(U_{>l}|_E)\le \Delta_l.
\)
The same argument gives
\(
|E|-\dim(V_{>l}|_E)\le \Delta_l.
\)

\medskip
\noindent \textbf{Dimension Margin Inequality:} By hypothesis, \(D := \dim U_{>l} = \dim V_{> l}.\) Inside a level-\(l\) block, the accumulated lower-level dimension is 
\[
D = \frac{n_l}{n_h}(\delta_h-1) + \sum_{u=l+1}^{h-1} \frac{n_l}{n_u}(\delta_u - \delta_{u+1})
\] and dividing by $n_l$, we get 
\[
\frac{D}{n_l} = \frac{\delta_h-1}{n_h} + \sum_{u=l+1}^{h-1} \frac{\delta_u - \delta_{u+1}}{n_u} \leq \frac{\delta_{l+1}-1}{n_h}.
\]
Thus,
\begin{equation}\label{eq:n_l-2D}
n_l - 2D  \geq n_l\left(1-\frac{2(\delta_{l+1}-1)}{n_h}\right) = \frac{n_l}{n_h}\left(n_h - 2(\delta_{l+1}-1)\right).
\end{equation}

Decompose \(J\) among the level-\((l+1)\) child blocks: \[ J=\bigsqcup_i J_i,\] then \(V_{>l}\) decomposes as a direct sum over the \(J_i\): \[V_{>l} = \bigoplus_i V_i\] where each $V_i \subseteq \F_q^{J_i}$. By induction hypothesis, each \(V_i\) has the \((\delta_{l+1}-1,\delta_1-1)\)-correctable property on \(J_i\). Hence, for every \(E_i\subseteq J_i\), with \(|E_i| \leq \delta_1-1\), we have \(V_i \cap \F_q^J[E_i] = \{0\}\). Now take any subset \(E \subseteq J\) with \(|E| \leq \delta_1-1\) and write \(E_i = E \cap J_i\), then \(|E_i| \leq \delta_1-1.\) Suppose \(v \in V_{>l} \cap \F_q^J[E]\). We can decompose \(v = \sum_i v_i\) for \(v_i \in V_i\). Since \(v\) is supported inside $E$, each component is supported inside $E_i$. Thus, \[v_i \in V_i \cap \F_q^J[E_i],\] but \(|E_i| \leq \delta_1-1\) so $v_i = 0$. Hence, \(v = 0\) and \(V_{>l} \cap \F_q^J[E] =\{0\}\),

Recall by the hypotheses of the construction, \(n_h\mid n_l\) and \(\delta_1-1 < n_h/2\). For every \(E \subseteq J\) with \(|E| \leq \delta_1-1\), let
\[
\pi_{J\setminus E}: \F_q^J \to \F_q^{J\setminus E}
\]
be the coordinate projection. Since \(V_{>l}\cap \F_q^J[E]=\{0\}\) for
\(|E|\le \delta_1-1\), we have
\begin{align*}
\dim \pi_{J\setminus E}(V_{>l}^{\perp})
&= \dim V_{>l}^\perp - \dim \ker (\pi_{J\setminus E}|_{V_{>l}^\perp})\\
&= n_l - D - \dim (V_{>l}^\perp \cap \F_q^J[E])\\
%&= n_l - D - \dim \ker(V_{>l}|_E)\\
&= n_l-D-|E|+\dim(V_{>l}|_E).
\end{align*} The last equality follows from recognizing that \(V_{>l}^\perp \cap \F_q^J[E] \cong (V_{>l}|_E)^\perp\). 

Also, since \(U_{>l}\cap \F_q^J[E]=\{0\}\), \[\dim \pi_{J\setminus E}(U_{>l}) = \dim U_{>l} - \dim \ker(\pi_{J\setminus E}|_{U_{>l}}) = D - \dim (U_{>l} \cap \F_q^J[E]) = D.\]
Therefore,
\[
\dim \pi_{J\setminus E}(V_{>l}^{\perp})
-
\dim \pi_{J\setminus E}(U_{>l})
=
n_l-2D-|E|+\dim(V_{>l}|_E).
\]
By the level-\(l+1\) correctable property,
\[
\dim(V_{>l}|_E)\ge \min\{|E|,\delta_{l+1}-1\}
\]
so
\[
|E|-\dim(V_{>l}|_E)\le \delta_1-1 - \delta_{l+1} + 1 = \delta_1 -\delta_{l+1}
\]
and 
\[
\dim \pi_{J\setminus E}(V_{>l}^{\perp})
-
\dim \pi_{J\setminus E}(U_{>l})
\geq
n_l-2D-\delta_1 + \delta_{l+1}.
\]
By \labelcref{eq:n_l-2D}, 
\[
n_l-2D\ge \frac{n_l}{n_h}\left(n_h - 2(\delta_{l+1}-1)\right) \geq n_h - 2(\delta_{l+1}-1)
\] and
\[
n_h - 2(\delta_{l+1}-1) = n_h - 2(\delta_1-1) + 2(\delta_1-\delta_{l+1}) \geq 2(\delta_1-\delta_{l+1}) \geq \delta_1 + \delta_l - 2\delta_{l+1}
\]
Consequently,
\[
\dim \pi_{J\setminus E}(V_{>l}^{\perp})
-
\dim \pi_{J\setminus E}(U_{>l})
\ge
\delta_l - \delta_{l+1}
=
\Delta_l.
\]
The same argument, with \(U_{>l}\) and \(V_{>l}\) interchanged, gives
\[
\dim \pi_{J\setminus E}(U_{>l}^{\perp})
-
\dim \pi_{J\setminus E}(V_{>l})
\ge
\Delta_l.
\]

\medskip
\noindent \textbf{Existence:} We prove the existence of \(U_l^+\); the reasoning for \(V_l^+\) is identical after \(U_l^+\) has been chosen. We choose \(U_l^+\) from the Grassmannian \(\Gr_{\F_q}(\Delta_l,V_{>l}^{\perp}).\) Equivalently, after fixing a basis of \(V_{>l}^{\perp}\), we may parameterize an ordered
\(\Delta_l\)-tuple of vectors in \(V_{>l}^{\perp}\) by a matrix of variables
\[
W\in \F_q^{\Delta_l\times \dim V_{>l}^{\perp}}.
\] We impose three types of conditions.

First, we require \(U_l^+\cap U_{>l}=\{0\}.\) This is equivalent to
\[
\dim(U_{>l}+U_l^+)=\dim U_{>l}+\Delta_l,
\]
and is the nonvanishing of some \((D + \Delta_l)\times(D+\Delta_l)\) minor in the
coordinates \(W\). Such a determinant polynomial is not identically zero because
\[
\dim(V_{>l}^{\perp}/U_{>l})=n_l-2D\ge \Delta_l.
\]
Hence there exist \(w_1,\ldots,w_{\Delta_l}\in V_{>l}^{\perp}\) whose images
are linearly independent modulo \(U_{>l}\). For this choice, the matrix obtained
by adjoining the \(w_i\)'s to a fixed basis of \(U_{>l}\) has rank
\(\dim U_{>l}+\Delta_l\). Therefore at least one full-rank minor of that matrix
is nonzero at this choice, and the corresponding minor polynomial is not
identically zero.

Second, for every \(E\subseteq J\) with \(|E|\le \delta_l-1\), we require
\[
\dim((U_{>l}+U_l^+)|_E)=|E|.
\]
The rank-deficiency bound proved above says that at most \(\Delta_l\) new
directions are needed on \(E\). Since \(V_{>l} \cap \F_q^J[E] = \{0\}\), the
restriction \((V_{>l}^{\perp})|_E\) has dimension \(|E|\). In other words, \((V_{>l}^{\perp})|_E = \F_q^E\). We can choose $\Delta_l$ vectors from \(V_{>l}^{\perp}\) whose restrictions can complete the dimension of
\(U_{>l}|_E\) to \(|E|\). Hence, for each fixed
\(E\), at least one \(|E|\times |E|\) minor of the restricted matrix is a
nonzero polynomial in the entries of \(W\).

Third, for every \(E\subseteq J\) with \(|E|\le \delta_1-1\), we require
\[
(U_{>l}+U_l^+)\cap \F_q^J[E]=\{0\}.
\]
Equivalently, restriction to \(J\setminus E\) is injective on
\(U_{>l}+U_l^+\), i.e.
\[
\dim((U_{>l}+U_l^+)|_{J\setminus E})
=
\dim(U_{>l}+U_l^+).
\]
The
dimension margin inequality above says that inside \(\pi_{J\setminus E}(V_{>l}^{\perp})\) there are at least \(\Delta_l\) dimensions available modulo \(\pi_{J\setminus E}(U_{>l}).\) Thus, the injectivity condition is also the nonvanishing of a suitable maximal minor.

Thus all required conditions are finitely many polynomial nonvanishing
conditions in the entries of \(W\). Let \(F(W)\) be the product of one nonzero
polynomial for each condition. Then \(F\) is a nonzero polynomial. By the
Schwartz--Zippel lemma,
\[
\Pr[F(W)=0]\le \frac{\deg F}{q}
\]
where the probability is over random choices of entries of \(W\). Therefore, for \(q>\deg F\), there exists a choice of entries of \(W\) such that
\(F(W)\neq 0\). The span of the \(\Delta_l\) chosen vectors in \(V_{>l}^\perp\) is the desired subspace \(U_l^+\).

Now set \(U_{>l-1}:=U_{>l}+U_l^+.\) By construction, \(U_{>l-1}\) has the
\((\delta_l-1,\delta_1-1)\)-correctable property and \(\dim U_{>l-1}=D+\Delta_l.\) We now choose \(V_l^+\) from \(\Gr_{\F_q}(\Delta_l,U_{>l-1}^{\perp}).\) The rank-deficiency bound for \(V_{>l}\) was already proved:
\[
|E|-\dim(V_{>l}|_E)\le \Delta_l
\qquad (|E|\le \delta_l-1).
\]
Also, since \(U_{>l-1}\) has no nonzero word supported on a set of size at most
\(\delta_1-1\), the restriction of \(U_{>l-1}^{\perp}\) to any such \(E\) has
full rank. It remains to check the dimension margin inequality. For \(|E|\le \delta_1-1\), we have 
\begin{align*}
    \dim \pi_{J\setminus E}(U_{>l-1}^{\perp}) &= \dim U_{>l-1}^{\perp} - \dim \ker(\pi_{J\setminus E}|_{U_{>l-1}^{\perp}})\\
    &= n_l - D - \Delta_l - \dim(U_{>l-1}^{\perp} \cap \F_q^J[E])\\
    %&= n_l - D- \Delta_l - \dim \ker(U_{>l-1}|_E)\\
    &= n_l - D - \Delta_l - |E| + \dim(U_{>l-1}|_E)
\end{align*} and since \(V_{>l} \cap \F_q^J[E] = \{0\}\)
\[
\dim \pi_{J\setminus E}(U_{>l-1}^{\perp})
-
\dim \pi_{J\setminus E}(V_{>l})
=
n_l-(D+\Delta_l)-D-|E|+\dim(U_{>l-1}|_E).
\]
Since \(U_{>l-1}\) has the \((\delta_l-1,\delta_1-1)\)-correctable property, for every \(E\subseteq J\) with \(|E| \leq \delta_1-1\), we have \[\dim (U_{>l-1}|_E) \geq \min \{|E|, \delta_l-1\}\] so
\[
|E|-\dim(U_{>l-1}|_E)\le \delta_1-1 - (\delta_l-1) = \delta_1 - \delta_l.
\]
Therefore,
\[
\dim \pi_{J\setminus E}(U_{>l-1}^{\perp})
-
\dim \pi_{J\setminus E}(V_{>l})
\ge
n_l-2D-\Delta_l-(\delta_1-\delta_l).
\]
Using
\[
n_l-2D\ge \delta_1+\delta_l-2\delta_{l+1},
\]
we get
\[
n_l-2D-\Delta_l-(\delta_1-\delta_l)
\ge
\delta_l-\delta_{l+1}
=
\Delta_l.
\]
Thus, the same Schwartz--Zippel argument produces a subspace \(V_l^+\subseteq U_{>l-1}^{\perp}\) such that \(V_l^+\cap V_{>l}=\{0\}\) and \(V_{>l-1}:=V_{>l}+V_l^+\) has the \((\delta_l-1,\delta_1-1)\)-correctable property.

Finally, \(U_l^+\subseteq V_{>l}^{\perp}\) and \(V_l^+\subseteq U_{>l-1}^\perp=(U_{>l}+U_l^+)^\perp\), so by construction,
\[
U_{>l-1}\perp V_{>l-1} \quad \text{and} \quad \dim U_{>l-1} = \dim V_{>l-1}. \] This proves the theorem.
\end{proof}

\section{\texorpdfstring{$(r,\delta)$ Quantum Tamo--Barg codes}{(r,delta) Quantum Tamo--Barg codes}}
\label{sec:r_delta_qTB}
Here we extend the quantum Tamo--Barg code definition in \cite{golowich2025quantum} to a $(r,\delta)$ quantum Tamo--Barg (QTB) code. Our parameters are adjusted to match the convention of classical $(r,\delta)$ Tamo--Barg code constructions presented in \cite{tamo.barg.lrc}.

\begin{definition}[$(r,\delta)$ quantum Tamo--Barg code]
\label{def:r.delta.qTB.qlrc}
Let $p$ be a prime number and $m$ be a positive integer, and $q = p^m$. Given an integer $\delta \geq 2$, a locality parameter $r\geq \delta $ such that $(r + \delta-1)|(q-1)$, and an integer $q/2\leq \ell \leq q-1$, the $(r,\delta)$ quantum Tamo--Barg code is defined to be the CSS code $\mathcal{Q} = \mathrm{CSS}(C,C)$ with $C = \ev(\F_q[X]^S)$, where \begin{align*}S &= \{i \in [\ell]\mid i \not\equiv -j \bmod(r+\delta-1) \text{ for any } j \in\{1,\ldots,\delta-1\}\}\\&\qquad \cup \{i \in [q-1]\mid i \equiv j \bmod(r+\delta-1) \text{ for some } j \in\{1,\ldots,\delta-1\}\}.\end{align*}
\end{definition}

In what follows, we will require this lemma \cite[Lemma 54]{golowich2025quantum} which we restate without proof. Consider classical codes $A_X,A_Z,B_X,B_Z$ in $\mathbb{F}_q^n$ such that $A_X^\perp \subseteq A_Z$ and $B_X^\perp \subseteq B_Z$.
\begin{lemma}\label{lem:css_from_pair_of_css}
For CSS codes $\mathcal{A} = \mathrm{CSS}(A_X , A_Z)$ and $\mathcal{B} = \mathrm{CSS}(B_X, B_Z)$ of block length $n$ over $\F_q$,
there exists a CSS code $\mathcal{Q} = \mathrm{CSS}(C_X , C_Z)$ given by
\begin{equation}
   C_X = (A_X \cap B_X) + B^\perp_Z
\end{equation}
\begin{equation}
   C_Z = (A_Z \cap B_Z) + B^\perp_X,
\end{equation}
so that
\begin{equation}
C^\perp_X = (A^\perp_X \cap B_Z) + B^\perp_X
\end{equation}
\begin{equation}
    C^\perp_Z = (A^\perp_Z \cap B_X) + B^\perp_Z.
\end{equation}
\end{lemma}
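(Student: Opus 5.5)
We need to show that $C_X\supseteq C_Z^\perp$, i.e. $C_Z^\perp\subseteq C_X$ (the CSS condition), and that the stated formulas for $C_X^\perp$ and $C_Z^\perp$ hold. We use only linear algebra: $(U+V)^\perp=U^\perp\cap V^\perp$, $(U\cap V)^\perp=U^\perp+V^\perp$, and $U^{\perp\perp}=U$. Throughout we use the hypotheses $A_X^\perp\subseteq A_Z$ and $B_X^\perp\subseteq B_Z$, together with their dual forms $A_Z^\perp\subseteq A_X$ and $B_Z^\perp\subseteq B_X$.

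\textbf{Dual formulas.} For the first formula,
\[
C_X^\perp=\bigl((A_X\cap B_X)+B_Z^\perp\bigr)^\perp=(A_X^\perp+B_X^\perp)\cap B_Z .
\]
We then apply the modular law. Since $B_X^\perp\subseteq B_Z$, for subspaces $P$ and $R\subseteq T$ we have $(P+R)\cap T=(P\cap T)+R$. Taking $P=A_X^\perp$, $R=B_X^\perp$ and $T=B_Z$ gives
\[
C_X^\perp=(A_X^\perp\cap B_Z)+B_X^\perp .
\]
The formula for $C_Z^\perp$ follows by the same argument with $X$ and $Z$ swapped, now using $B_Z^\perp\subseteq B_X$.

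\textbf{CSS condition.} We must show $C_Z^\perp=(A_Z^\perp\cap B_X)+B_Z^\perp\subseteq (A_X\cap B_X)+B_Z^\perp=C_X$. It suffices to check each summand.
\begin{itemize}
\item The summand $B_Z^\perp$ is contained in $C_X$ directly.
\item For the other summand, $A_Z^\perp\subseteq A_X$ gives $A_Z^\perp\cap B_X\subseteq A_X\cap B_X$.
\end{itemize}
Hence $C_Z^\perp\subseteq C_X$, so $\mathcal Q=\CSS(C_X,C_Z)$ is well defined.

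The only delicate step is the modular-law application, and it works precisely because $B_X^\perp\subseteq B_Z$.
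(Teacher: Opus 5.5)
Your proof is correct and complete. The dual formulas follow from $(U\cap V)^\perp=U^\perp+V^\perp$ together with the modular law, which applies because $B_X^\perp\subseteq B_Z$ and $B_Z^\perp\subseteq B_X$. Checking $C_Z^\perp\subseteq C_X$ summand by summand suffices, since that inclusion is equivalent to $C_X^\perp\subseteq C_Z$. The paper gives no proof of its own to compare against: it restates this lemma from Golowich--Guruswami (their Lemma 54) without proof.
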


Consider the following sets:
\begin{align*}
    S_{+} &= \bigcup_{j=1}^{\delta-1} (j+(r+\delta-1)\Z),\\
    S_{-} &= \bigcup_{j=1}^{\delta-1} (-j+(r+\delta-1)\Z).
\end{align*}
Then \[S = \left([\ell] \cap ([q-1]\setminus S_{-})\right)\cup ([q-1]\cap S_{+}) = \left([\ell] \setminus S_{-}\right)\cup ([q-1]\cap S_{+}).\]
%\VG{Above and elsewhere $[\ell] \cap ([q-1]\setminus S_{-}))$ can be replaced by $[\ell] \setminus S_{-}$.}
We construct the $(r,\delta)$ quantum Tamo--Barg codes as follows:

\begin{lemma}
\label{lem:qtb.construction}
For integers $ r\geq \delta \geq 2$ such that $(r + \delta -1)|(q - 1)$, and an integer $q/2\leq \ell \leq q-1$, let 
\begin{equation}
    \label{eq.a}
    A = \ev(\F_q[X]^{[\ell]}),
\end{equation}
\begin{equation}
    \label{eq.b}
    B = \ev(\F_q[X]^{[q-1]\setminus S_{-}}).
\end{equation}
Then $A \cap B$ is a $(r,\delta)$ TB code. Furthermore, 
\begin{equation}
    \label{eq.a.perp}
    A^\perp = \ev(\F_q[X]^{[q-\ell]\setminus \{0\}}) \text { and }
\end{equation}
\begin{equation}
    \label{eq.b.perp}
    B^\perp = \ev(\F_q[X]^{[q-1]\cap S_{+}}) \subseteq B.
\end{equation}
If $\ell \geq q/2$, then $A^\perp \subseteq A$. Letting $C = (A \cap B) + B^\perp$, we obtain that $\mathrm{CSS}(C,C)$ is a QTB code with 
\begin{equation}
    \label{eq.c.perp}
    C^\perp = (A^\perp \cap B) + B^\perp = \ev(\F_q[X]^T) \subseteq C
\end{equation}   
where
\begin{equation}\label{eq.t}
    T = \left(([q-\ell]\setminus \{0\}) \cap ([q-1]\setminus S_{-})\right) \cup ([q-1]\cap S_{+}).
\end{equation}
\end{lemma}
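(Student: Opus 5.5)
The plan rests on one standard tool: the character-sum orthogonality for evaluation codes on $\F_q^*$. For exponents $a,b\in[q-1]$ we have $\sum_{x\in\F_q^*}x^{a+b}=0$ unless $a+b\equiv 0\pmod{q-1}$, in which case the sum is $q-1\neq 0$. Consequently, for $S'\subseteq[q-1]$,
\[
\ev(\F_q[X]^{S'})^\perp=\ev\bigl(\F_q[X]^{\{i\in[q-1]:\,-i \bmod (q-1)\notin S'\}}\bigr).
\]
This holds because the monomial evaluation vectors $\ev(X^i)$, $i\in[q-1]$, form a basis of $\F_q^{q-1}$. Moreover, $\ev(X^i)\cdot\ev(X^j)\ne0$ exactly when $i+j\equiv 0 \pmod{q-1}$, so the pairing is a perfect matching $i\leftrightarrow -i$.

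The steps, in order, are as follows.

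\textbf{$A^\perp$.} Apply the duality formula to $S'=[\ell]$. The exponents $i$ with $-i\bmod (q-1)\notin[\ell]$ are $i=0$ (since $0\in[\ell]$ is excluded) together with $i\in\{1,\dots,q-1-\ell\}$. Hence $A^\perp=\ev(\F_q[X]^{[q-\ell]\setminus\{0\}})$.

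\textbf{$B^\perp$.} For $S'=[q-1]\setminus S_-$, we get $B^\perp$ spanned by the $i$ with $-i\in S_-$, i.e.\ $i\in S_+$. This step uses $(r+\delta-1)\mid(q-1)$, so that residues mod $r+\delta-1$ are well defined on $\Z/(q-1)$ and negation maps $S_-$ to $S_+$.

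\textbf{$B^\perp\subseteq B$.} Show $S_+\cap S_-=\emptyset$ modulo $n=r+\delta-1$. Indeed, $j\equiv -j'$ with $1\le j,j'\le\delta-1$ forces $n\mid j+j'\le 2\delta-2<n$, using $r\ge\delta$.

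\textbf{$A^\perp\subseteq A$.} When $\ell\ge q/2$ we have $q-\ell\le\ell$, which gives the containment.

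\textbf{$A\cap B$.} This equals $\ev$ of polynomials with exponents in $[\ell]\setminus S_-$, which is precisely the classical $(r,\delta)$ Tamo--Barg exponent set of \cite{tamo.barg.lrc}. Equivalently, it is the good-polynomial construction with $y=x^{n}$, grouping exponents by residue mod $n$ and dropping the residues $n-1,\dots,n-\delta+1$ so that each residue class contributes degree at most $r-1$ locally.

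Next I apply \Cref{lem:css_from_pair_of_css} with $A_X=A_Z=A$ and $B_X=B_Z=B$. Its hypotheses are the two self-orthogonality containments just shown. It yields $C=(A\cap B)+B^\perp$ with $C^\perp=(A^\perp\cap B)+B^\perp$.

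All codes involved are spanned by monomial evaluations, so intersections and sums correspond to intersections and unions of exponent sets. This holds because the monomial vectors form a basis. Therefore $C=\ev(\F_q[X]^S)$, and
\[
C^\perp=\ev\bigl(\F_q[X]^{(([q-\ell]\setminus\{0\})\setminus S_-)\cup([q-1]\cap S_+)}\bigr),
\]
which is the set $T$. Finally, $T\subseteq S$ because $[q-\ell]\subseteq[\ell]$, and this gives $C^\perp\subseteq C$.

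The main point requiring care is the modular bookkeeping. Exponents live in $[q-1]$ while the sets $S_\pm$ are defined mod $n$, so I must check that negation mod $q-1$ preserves residues mod $n$ up to sign. This relies on $n\mid q-1$. The other delicate step is the edge case of exponent $0$ in $A^\perp$. Everything else is routine set manipulation.
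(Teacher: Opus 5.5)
Your proposal is correct and follows essentially the same route as the paper: character-sum orthogonality of monomial evaluations on $\F_q^*$ identifies $A^\perp$ and $B^\perp$, the bound $\delta\le r$ gives $S_+\cap S_-=\emptyset$, and then \Cref{lem:css_from_pair_of_css} plus exponent-set bookkeeping finishes the proof. The only cosmetic difference is that you read off the duals exactly from the perfect matching $i\leftrightarrow -i$ on the monomial basis, whereas the paper proves one inclusion and concludes by comparing dimensions. One wording slip: in the $A^\perp$ step, $i=0$ is \emph{not} among the exponents with $-i \bmod (q-1)\notin[\ell]$, because $0\in[\ell]$; your final formula $[q-\ell]\setminus\{0\}$ correctly reflects this, but the sentence before it says otherwise.
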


\begin{proof}
We first prove the equality in \labelcref{eq.a.perp} by dimension counting. Notice that $\dim A = \ell$, so $\dim A^\perp = q- 1- \ell$. Since $\dim \ev(\F_q[X]^{[q-\ell]\setminus \{0\}}) = q-\ell - 1$, it suffices to prove $\ev(\F_q[X]^{[q-\ell]\setminus \{0\}}) \subseteq A^\perp$. To that end, by choosing a monomial basis for $\F_q[X]^{[\ell]}$ and $\F_q[X]^{[q-\ell]\setminus \{0\}}$, it is sufficient to prove $\ev(X^i)\cdot\ev(X^s) =0$ for every $i \in [\ell]$ and $s \in [q-\ell]\setminus\{0\}$. We have $\ev(X^i)\cdot\ev(X^s) = \sum_{\alpha \in \F_q^*} \alpha^{i+s} = 0$ for $i+s \not\equiv 0 \mod{q-1}$.

To see this last result, let $\omega_{q-1}$ be a generator for $\mathbb{F}_q^*$. Then $\alpha = \omega_{q-1}^j$ for some $j$. Therefore, \begin{align*}
    \sum_{\alpha\in \mathbb{F}_q^*} \alpha^{i+s} = \sum_{j=1}^{q-1}\omega_{q-1}^{j(i+s)} = \sum_{j=1}^{q-1}(\omega_{q-1}^{(i+s)})^j = \frac{(\omega_{q-1}^{i+s})^q - \omega_{q-1}^{i+s}}{\omega_{q-1}^{i+s} - 1} = \frac{\omega_{q-1}^{i+s} - \omega_{q-1}^{i+s}}{\omega_{q-1}^{i+s} - 1}= 0
\end{align*} because $i+s \not \equiv 0 \pmod{q-1}$ so the denominator is not zero.
%\VG{The more common way to see this is to replace $\alpha$ by $\gamma^j$ for primitive $\gamma$, and the use the geometric formula.}To see this last result, first observe that \begin{align*}X^{q-1}-1 &= \prod_{\alpha\in\mathbb{F}_q^*}(X-\alpha)\\&= \sum_{i=0}^{q-1}(-1)^{q-1-i}e_{q-1-i}X^i\end{align*} where $e_{q-1-i}$ are the elementary symmetric polynomials in $q-1$ variables evaluated at $\alpha\in\mathbb{F}_q^*$. Therefore, $e_{q-1-i} = 0$ for $i = 1,\ldots, q-2$. Applying Newton's formulas\cite[(2.11')]{Macdonald_2008} relating power sums to elementary symmetric polynomials gives the result.

Now we prove the equality in \labelcref{eq.b.perp}. It is straightforward to see that \[\dim B = q-1 - \frac{q-1}{r+\delta -1}(\delta-1)\] and \[\dim B^\perp = \frac{q-1}{r+\delta -1}(\delta-1).\] Notice, $\dim \ev(\F_q[X]^{[q-1]\cap S_{+}}) = \dim B^\perp$ so it suffices to prove \[\ev(\F_q[X]^{[q-1]\cap S_{+}}) \subseteq B^\perp.\] Choose a monomial basis for $\F_q[X]^{[q-1]\setminus S_{-}}$ and $\F_q[X]^{[q-1]\cap S_{+}}$. For all $i \in [q-1]\setminus S_{-}$ and $s \in [q-1]\cap S_{+}$, we have $\ev(X^i)\cdot\ev(X^s) = \sum_{\alpha \in \F_q^*} \alpha^{i+s} = 0$ for $i+s \not\equiv 0 \mod{q-1}$.

The inclusion $B^\perp \subseteq B$ follows from \begin{align*}\delta \leq r &\iff \delta -1 <\frac{r+\delta-1}{2}\\ &\implies S_{-}\cap S_{+} = \emptyset\end{align*} and the remaining arguments follow immediately using \Cref{lem:css_from_pair_of_css}.
\end{proof}

We now compute the dimension of the $(r,\delta)$ quantum Tamo--Barg code defined in \Cref{def:r.delta.qTB.qlrc}. 
\begin{lemma}\label{lem:dim.r.delta.TB}
The $(r,\delta)$ quantum Tamo--Barg code $\mathcal{Q} = \mathrm{CSS}(C,C)$ with parameters $q, r, \ell, \delta$ as defined in \Cref{def:r.delta.qTB.qlrc} has dimension
\begin{align*}k &= 1 + |\{q - \ell \leq i \leq \ell - 1 : i \not\in (S_{+} \cup S_{-})\}|\\
&= 1 + (2\ell - q)\left(1-\frac{2(\delta-1)}{r+\delta-1}\right) + \epsilon\end{align*} for some $\epsilon \in [-2(\delta-1),2(\delta-1)]$.
\end{lemma}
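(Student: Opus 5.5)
The plan is to compute $k=\dim C-\dim C^\perp$ directly from the monomial descriptions in \Cref{lem:qtb.construction}. There we have $C=\ev(\F_q[X]^S)$ and $C^\perp=\ev(\F_q[X]^T)$ with $T\subseteq S$. Since $\ev$ is injective on polynomials supported on $[q-1]$ (distinct monomials of degree $<q-1$ give linearly independent evaluation vectors on $\F_q^*$), we get $k=|S|-|T|=|S\setminus T|$.

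\textbf{Step 1: identify $S\setminus T$.} We have $S=([\ell]\setminus S_-)\cup([q-1]\cap S_+)$ and $T=(([q-\ell]\setminus\{0\})\setminus S_-)\cup([q-1]\cap S_+)$. Since $S_+\cap S_-=\emptyset$, the $S_+$ parts cancel, and
\[
S\setminus T=\{i\in[\ell]: i\notin S_-,\ i\notin S_+,\ i\notin[q-\ell]\setminus\{0\}\}.
\]
This set consists of $i=0$ (note $0\notin S_\pm$ because $0\not\equiv \pm j$ for $1\le j\le \delta-1<n$) together with those $i$ with $q-\ell\le i\le \ell-1$ and $i\notin S_+\cup S_-$. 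This gives the first equality. The hypothesis $\ell\ge q/2$ guarantees $q-\ell\le \ell$, so the range makes sense.

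\textbf{Step 2: count the set asymptotically.} The interval $[q-\ell,\ell-1]$ has $2\ell-q$ integers. Among any $n=r+\delta-1$ consecutive integers, exactly $2(\delta-1)$ lie in $S_+\cup S_-$, since the residues $\pm1,\dots,\pm(\delta-1)$ are distinct modulo $n$ because $2(\delta-1)<n$. Write $2\ell-q=an+b$ with $0\le b<n$. The count of excluded elements is then $2(\delta-1)a+e$, where $0\le e\le\min\{b,2(\delta-1)\}$. The quantity we want is $(2\ell-q)\cdot\frac{2(\delta-1)}{n}$ plus an error, and the error
\[
e-\frac{2(\delta-1)b}{n}
\]
lies in $[-2(\delta-1),2(\delta-1)]$, since both terms lie in $[0,2(\delta-1)]$. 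Subtracting from $2\ell-q$ yields the stated formula with $\epsilon$ in the claimed range.

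\textbf{Main obstacle.} The only delicate point is Step 1: checking that the $S_+$ parts of $S$ and $T$ coincide exactly, and that no element of $[\ell]\cap S_+$ survives in $S\setminus T$. This relies on the disjointness $S_+\cap S_-=\emptyset$ proved in \Cref{lem:qtb.construction}. The rest is routine residue counting.
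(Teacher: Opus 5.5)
Your proposal is correct and follows the paper's proof: you compute $k=|S\setminus T|$ from the monomial descriptions in \Cref{lem:qtb.construction}, isolate the exponent $0$, and count residues in the window $[q-\ell,\ell-1]$, and your division $2\ell-q=an+b$ makes explicit the paper's ``differs by at most $2(\delta-1)$'' step for the case $n\nmid(2\ell-q)$. One small correction to your closing remark: Step 1 does not need $S_+\cap S_-=\emptyset$, since all of $[q-1]\cap S_+$ already lies in $T$; disjointness is actually used in Step 2, where it gives exactly $2(\delta-1)$ excluded residues per period of length $n$.
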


\begin{proof}
Recall $S \subseteq [q-1]$ as defined in \Cref{def:r.delta.qTB.qlrc}, $T \subseteq S$ as defined in \Cref{lem:qtb.construction} such that $C = \ev(\F_q[X]^{S})$ and $C^{\perp} = \ev(\F_q[X]^{T})$. Then, the dimension of $\mathcal{Q}$ can be computed as follows:
\begin{align*}
    \dim(\mathcal{Q}) &= \dim(C) - \dim(C^{\perp}) \\
    &= |S \setminus T| \\
    &= \left| \{0\} \cup \{q - \ell \leq i \leq \ell - 1 : i \not\in (S_{+} \cup S_{-})\}\right|.
\end{align*} which proves the first equality. For the second equality, we inspect \[\{q - \ell \leq i \leq \ell - 1 : i \not\in (S_{+} \cup S_{-})\}.\] If $(r+\delta-1) \mid (2\ell-q)$ then this set has size exactly \[(2\ell - q)\left(1-\frac{2(\delta-1)}{r+\delta-1}\right)\] because we exclude $2(\delta-1)$ elements in every block of $(r+\delta-1)$. Note, this exclusion will leave a nontrivial number of elements precisely because \[2(\delta-1) < r + \delta -1 \iff \delta - 1< r \iff \delta \leq r.\]If $(r+\delta-1) \nmid (2\ell-q)$, we differ from \[(2\ell - q)\left(1-\frac{2(\delta-1)}{r+\delta-1}\right)\] by at most $2(\delta-1)$ in absolute value and the second equality follows.
\end{proof}

We need to show that the code, $\mathcal{Q} = \mathrm{CSS}(C,C)$ defined in \Cref{def:r.delta.qTB.qlrc} is an $(r,\delta)$-QLRC. By \Cref{thm:css_qlrc_iff_clrc}, it suffices to show that $C^{\perp}$ contains low-weight parity checks whose supports cover all $q-1$ code components. As defined in \Cref{lem:qtb.construction}, $B^{\perp} \subseteq C^{\perp}$ so it suffices to show that $B^{\perp}$ contains low-weight parity checks. The result is summarized in \Cref{lem:low.wt.parity.checks} and \Cref{cor:locality_qtb}.

\begin{lemma}
\label{lem:low.wt.parity.checks}
Consider the code $B^\perp = \ev(\F_q[X]^{[q-1]\cap S_{+}})$ as defined in \Cref{lem:qtb.construction}. Let $\Omega_{r+\delta-1} = \{x \in \F_q^* : x^{r+\delta-1} = 1\}$ be the subgroup of $(r+\delta-1)$th roots of unity. Then a function $f : \F_q^* \rightarrow \F_q$ lies in $B^\perp$ if and only if for every coset $\alpha\Omega_{r+\delta-1}$ for some $\alpha \in \mathbb{F}_q^{*}$, there exists a polynomial $P_\alpha(X) \in \mathbb{F}_q[X]$ with $\deg P_\alpha \leq \delta-1$ and no constant term such that \[f(\alpha \omega) = P_\alpha(\omega)\] for all $\omega \in \Omega_{r+\delta-1}$.
\end{lemma}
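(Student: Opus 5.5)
Write $n:=r+\delta-1$. The plan is to prove the forward direction by a direct computation on each coset, and the reverse direction by a dimension count.

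\emph{Forward direction.} Take $f=\ev(F)$ with $F(X)=\sum_{i\in[q-1]\cap S_+}F_iX^i$. Every exponent in $S_+$ can be written $i=j+kn$ with $1\le j\le\delta-1$ and $k\ge 0$, so we may group
\[
F(X)=\sum_{j=1}^{\delta-1}X^j\,G_j(X^n),\qquad G_j(Y)=\sum_k F_{j+kn}Y^k .
\]
For $\omega\in\Omega_n$ we have $(\alpha\omega)^n=\alpha^n$. Hence
\[
f(\alpha\omega)=\sum_{j=1}^{\delta-1}\alpha^jG_j(\alpha^n)\,\omega^j ,
\]
and this equals $P_\alpha(\omega)$ with $P_\alpha(X):=\sum_{j=1}^{\delta-1}\alpha^jG_j(\alpha^n)X^j$. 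This $P_\alpha$ has no constant term and degree at most $\delta-1$; in fact its degree is at most $\delta-1$ with only the powers $X^1,\dots,X^{\delta-1}$ present. One care point: $P_\alpha$ depends only on the coset, not on the representative $\alpha$. Replacing $\alpha$ by $\alpha\omega_0$ just reparametrizes $\omega$, so this is consistent.

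\emph{Reverse direction.} Let $V$ be the space of all functions $f:\F_q^*\to\F_q$ such that on every coset $f(\alpha\omega)=P_\alpha(\omega)$ with $P_\alpha\in\operatorname{span}\{X,\dots,X^{\delta-1}\}$. Here the statement's condition ``$\deg P_\alpha\le\delta-1$ and no constant term'' is read as exactly this span. Fix one representative $\alpha$ for each of the $(q-1)/n$ cosets.
\begin{itemize}
\item The evaluation map from $\operatorname{span}\{X,\dots,X^{\delta-1}\}$ to functions on $\Omega_n$ is injective: a polynomial of degree $\le\delta-1<n$ vanishing on the $n$ points of $\Omega_n$ is zero.
\item So $V$ is a direct sum over cosets and $\dim V=\frac{q-1}{n}(\delta-1)$.
\end{itemize}
By the proof of \Cref{lem:qtb.construction}, $\dim B^\perp=\frac{q-1}{n}(\delta-1)$ as well. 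The forward direction gives $B^\perp\subseteq V$, so the two spaces are equal.

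The one point needing a check is the degree bound in the forward direction: the statement says $\deg P_\alpha\le\delta-1$, while the construction only ever produces exponents $1,\dots,\delta-1$. The boundary case $\omega^{\delta-1}$ is therefore included, which is consistent. Otherwise the argument is routine.
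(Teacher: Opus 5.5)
Your proof is correct and follows the paper's own argument. You use the same grouping $F(X)=\sum_{j=1}^{\delta-1}X^jG_j(X^n)$ and evaluation on $\alpha\Omega_n$ for the forward direction, and the converse by dimension count, which you spell out (injectivity on each coset and $\dim V=\frac{q-1}{n}(\delta-1)=\dim B^\perp$) where the paper only says ``by dimension counting.''

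One small slip in a side remark: $P_\alpha$ does depend on the representative, since $P_{\alpha\omega_0}(X)=P_\alpha(\omega_0X)$. Only the existence of such a polynomial in $\operatorname{span}\{X,\dots,X^{\delta-1}\}$ needs to be representative-independent, and that is what your reparametrization shows.
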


Before we prove the lemma, we make some observations about the statement. Since $P_\alpha$ has no constant term, $P_\alpha(0) = 0$. An equivalent formulation would be to say that $f$ lies in $B^\perp$ if and only if $f$ is a piecewise polynomial in $\omega$ of degree $\leq \delta -1$ on each coset $\alpha \Omega_{r+\delta-1}$ and that polynomial has no constant term.

\begin{proof} By dimension counting, it suffices to show the forward direction.

 $(\implies)$ Let $n := r + \delta - 1$. Fix $f(X) = \sum_{i\in[q-1]} f_iX^i$ with $\ev(f) \in B^\perp$. By definition, $f_i$ for $i \not\in [q-1]\cap S_{+}$ must be zero. Grouping terms by residues modulo $n$, we can write \[f(X) = \sum_{j=1}^{\delta-1}X^jF_j(X^{n})\] where $F_j \in \mathbb{F}_q[Y]$ is a polynomial. Fix a coset $\alpha\Omega_{n}$. For any $x\in \alpha\Omega_{n}$, we can write $x = \alpha\omega$ for some $\omega\in\Omega_{n}$. Since $\omega^{n} = 1$, we have \[f(x) = f(\alpha\omega) = \sum_{j=1}^{\delta-1}(\alpha\omega)^jF_j((\alpha\omega)^{n}) = \sum_{j=1}^{\delta-1}(\alpha\omega)^jF_j(\alpha^{n}) = \sum_{j=1}^{\delta-1}(\alpha^jF_j(\alpha^{n}))\omega^j.\] The final sum is a polynomial in $\omega$ of degree $\leq \delta-1$ and no constant term. Therefore, we can define $P_\alpha$ to be \[P_\alpha(X) = \sum_{j=1}^{\delta-1}(\alpha^jF_j(\alpha^{n}))X^j\] and it is clear that $\deg P_\alpha \leq \delta-1$, $P_\alpha$ has no constant term, and $f(\alpha\omega) = P_\alpha(\omega)$ for all $\omega \in \Omega_{r+\delta-1}$.

\end{proof}

\begin{corollary}\label{cor:locality_qtb}
The $(r,\delta)$-QTB code defined in \Cref{def:r.delta.qTB.qlrc} is a QLRC with locality $r$ such that $\delta-1$ erasures are corrected by each local repair group.
\end{corollary}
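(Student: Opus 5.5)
The plan is to reduce quantum locality to classical locality via \Cref{thm:css_qlrc_iff_clrc}, and then read off the classical locality of $C$ from the coset structure of $B^\perp$ in \Cref{lem:low.wt.parity.checks}.

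\textbf{Hypotheses of \Cref{thm:css_qlrc_iff_clrc}.} First, $C^\perp\subseteq C$ holds by \Cref{lem:qtb.construction}, since $\ell\ge q/2$. Second, we need $\delta\le d(C^\perp)$. The code $C^\perp=\ev(\F_q[X]^T)$ is spanned by monomials of degree at most $q-2$. On the one hand, $T\subseteq [q-\ell]\cup([q-1]\cap S_+)$. On the other hand, $C^\perp\subseteq C$, and every codeword of $C$ restricted to a coset is controlled by the local checks established below. A cleaner route is to apply the local argument to $C^\perp$ itself: it will show that any nonzero word of $C^\perp$ has either zero weight or weight at least $\delta$ on each coset. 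If the paper's version of \Cref{thm:css_qlrc_iff_clrc} really needs $d(C^\perp)\ge\delta$ globally, this local statement gives it at once, because a nonzero word is nonzero on some coset.

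\textbf{Classical $(r,\delta)$ locality of $C$.} Fix a coset $A=\alpha\Omega_n$ with $n=r+\delta-1$, of size exactly $r+\delta-1$. By \Cref{lem:low.wt.parity.checks}, for each $j=1,\ldots,\delta-1$ the function equal to $\omega^j$ at $\alpha\omega$ and $0$ off $A$ lies in $B^\perp\subseteq C^\perp$. Hence every $c\in C$ satisfies
\[
\sum_{\omega\in\Omega_n} c(\alpha\omega)\,\omega^j=0,\qquad j=1,\ldots,\delta-1 .
\]
These are $\delta-1$ rows of a scaled Vandermonde matrix in the distinct nodes $\omega\in\Omega_n$ (equivalently, the Vandermonde rows $\omega^0,\ldots,\omega^{\delta-2}$ with each column scaled by $\omega\neq0$). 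By \Cref{lem:vandermonde_local_mds}, any $\delta-1$ columns are independent. So no nonzero word of $C|_A$ has weight at most $\delta-1$, giving $d(C|_A)\ge\delta$. The same computation applies to $C^\perp|_A$, since $C^\perp\subseteq C$.

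\textbf{Conclusion.} The cosets partition $\F_q^*$, so every coordinate lies in a repair set of size $r+\delta-1$. Thus $C$ is a classical $(r,\delta)$-LRC, and \Cref{thm:css_qlrc_iff_clrc} gives the quantum statement. The recovery map corrects $\delta-1$ erasures in a coset by solving the local linear system inside that coset.

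The main obstacle is the global hypothesis $\delta\le d(C^\perp)$. It holds here because $C^\perp$ is itself locally constrained: every nonzero word has weight at least $\delta$ on some coset. It is worth double-checking that \Cref{thm:css_qlrc_iff_clrc} is applied with this justification, rather than with an unproven global distance claim.
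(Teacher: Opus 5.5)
Your proposal is correct and follows the paper's proof. You reduce to classical locality via \Cref{thm:css_qlrc_iff_clrc}, use the $\delta-1$ coset-supported checks in $B^\perp\subseteq C^\perp$ from \Cref{lem:low.wt.parity.checks} (the paper takes $x\mapsto x^j$ on $\alpha\Omega_n$, which differs from your $\alpha\omega\mapsto\omega^j$ only by the scalar $\alpha^j$) to get $d(C|_{\alpha\Omega_n})\ge\delta$, and then check $d(C^\perp)\ge\delta$ using $C^\perp\subseteq C$.

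Your covering argument for that last step is the right justification: a nonzero word of $C$ is nonzero on some coset, so it has weight at least $\delta$ there. The paper instead writes the chain $d(C^\perp)\ge d(C)\ge d(C|_{\alpha\Omega_n})\ge\delta$, citing that puncturing only decreases distance. That is false for an arbitrary single puncturing, but it holds here because every coset has local distance at least $\delta$ and the cosets cover all coordinates.
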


\begin{proof}
    By \Cref{thm:css_qlrc_iff_clrc}, it is sufficient to show that $C$ is classically $(r,\delta)$-locally recoverable. Let $n := r + \delta -1$ and fix $\alpha\Omega_{n}$. By \Cref{lem:qtb.construction}, we know $C^\perp \supseteq B^\perp = \ev(\F_q[X]^{[q-1]\cap S_{+}})$ and \Cref{lem:low.wt.parity.checks} implies that $B^\perp$ contains the $\delta-1$ functions \[f_{\alpha,j}(x) = \begin{cases}
        x^j,& x\in \alpha\Omega_{n}\\
        0,&x\not\in\alpha\Omega_{n}
    \end{cases}\] for $j = 1,\ldots, \delta -1$. These are linearly independent. Their restriction to $\alpha\Omega_{n}$ forms a $(\delta-1) \times n$ Vandermonde-type matrix \[\begin{bmatrix}
        x_1 & x_2 & \cdots & x_{n}\\
        x_1^2 & x_2^2 & \cdots & x_{n}^2\\
        \vdots & \vdots & \ddots & \vdots\\
        x_1^{\delta-1} & x_2^{\delta-1} & \cdots &x_{n}^{\delta-1}
    \end{bmatrix}.\] Hence, any $\delta -1$ columns are linearly independent so $d(C|_{\alpha\Omega_{n}}) \geq \delta$. Since $|\alpha\Omega_{n}| = r+\delta-1$, this shows $C$ is an $(r,\delta)$-LRC.

    Additionally, since puncturing can only decrease the distance and $C^\perp \subseteq C$, we have the following \[d(C^\perp) \geq d(C) \geq d(C|_{\alpha \Omega_{n}}) \geq \delta.\] Therefore, by \Cref{thm:css_qlrc_iff_clrc}, $\mathcal{Q} = \mathrm{CSS}(C,C)$ is an $(r,\delta)$-QLRC.
\end{proof}
\subsection{Non-vanishing theorem}
It remains to compute a lower bound on the distance of the $(r,\delta)$-QTB. For the distance bound, we will need the following theorem on the number of roots of unity at which a particular polynomial vanishes. In order to prove this theorem, we rely on proving a particular combinatorial fact on homogeneous symmetric polynomials (\Cref{lem:homogenous_poly_roots_of_unity}) which, to the best of our knowledge, does not appear in the current literature and may be of independent interest.

\begin{theorem}\label{thm:Q_b_has_delta-1_roots}
    Let $r\geq \delta \geq 3$ and let $n = r + \delta -1$. Let $\zeta \in \mathbb{C}$ be a primitive $n$th root of unity. For each $b \in \{\delta - 1,\ldots, n - 1\}$, let \[Q_b(Y) = Y^b + \sum_{t = 0}^{\delta-2}v_tY^t\] be a polynomial satisfying $Q_b(\zeta^t) = 0$ for $t = 0,\ldots, \delta - 2$. Then $Q_b(\zeta^s) \neq 0$ for any $s \in \{\delta-1,\ldots, n-1\}$. In other words, $Q_b(\zeta^s) = 0 \iff s = 0,\ldots, \delta-2$.
\end{theorem}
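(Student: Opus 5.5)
The plan is to prove the identity
\[
Q_b(\zeta^s)=h_{b-\delta+1}(1,\zeta,\ldots,\zeta^{\delta-2},\zeta^s)\prod_{t=0}^{\delta-2}(\zeta^s-\zeta^t),
\]
and then show the $h$-factor is nonzero; the product is already nonzero for $s\ge\delta-1$.

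\textbf{Step 1 (determinant formula).} The conditions $Q_b(\zeta^t)=0$ for $t\le\delta-2$ form a linear system for $(v_0,\ldots,v_{\delta-2})$. Its matrix is the Vandermonde matrix in $x_t=\zeta^t$, which is invertible, so $Q_b$ exists and is unique. Set $z=\zeta^s$. By Cramer's rule, $Q_b(z)$ is the ratio of two determinants:
\begin{itemize}
\item the numerator is the generalized Vandermonde determinant in $(x_0,\ldots,x_{\delta-2},z)$ with exponents $(0,1,\ldots,\delta-2,b)$;
\item the denominator is the ordinary Vandermonde determinant in $x_0,\ldots,x_{\delta-2}$.
\end{itemize}
By Jacobi's bialternant formula, the numerator equals $s_\lambda$ times the full $\delta$-variable Vandermonde, where $\lambda=(b-\delta+1)$ is a one-row partition. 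By Jacobi--Trudi, $s_{(m)}=h_m$. Cancelling the common Vandermonde part leaves exactly the displayed identity.

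\textbf{Step 2 (reduction to elementary symmetric functions).} Let $m=b-\delta+1\in\{0,\ldots,r-1\}$. Let $D=\{0,\ldots,\delta-2\}\cup\{s\}$ and let $U=\{\delta-1,\ldots,n-1\}\setminus\{s\}$, so $|U|=r-1$. Since $\prod_{u=0}^{n-1}(1-\zeta^uT)=1-T^n$, we get
\[
\sum_{j\ge 0} h_j\bigl(\zeta^d:d\in D\bigr)T^j=\frac{\prod_{u\in U}(1-\zeta^uT)}{1-T^n}.
\]
Hence for $j<n$,
\[
h_j(1,\zeta,\ldots,\zeta^{\delta-2},\zeta^s)=(-1)^j e_j\bigl(\zeta^u:u\in U\bigr).
\]
Since every $\zeta^u$ is a unit, $e_{r-1-j}(U)=\prod_{u\in U}\zeta^u\cdot\overline{e_j(U)}$. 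So it suffices to treat $j\le (r-1)/2$.

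Let $A=\{\zeta^{\delta-1},\ldots,\zeta^{n-1}\}$ be the full arc of $r$ consecutive powers. The relation $e_j(A)=e_j(A\setminus\{z\})+z\,e_{j-1}(A\setminus\{z\})$ unrolls to
\[
e_j(U)=P_j(z):=\sum_{i=0}^{j}(-z)^i e_{j-i}(A).
\]
Each $e_k(A)$ is a Gaussian binomial evaluated at a root of unity:
\[
e_k(A)=\zeta^{k(\delta-1)+\binom{k}{2}}\binom{r}{k}_{\zeta}.
\]
Its modulus is an explicit ratio of sine products,
\[
\prod_{i=1}^{k}\frac{\sin(\pi(r-k+i)/n)}{\sin(\pi i/n)},
\]
and its argument is explicit.

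\textbf{Step 3 (main obstacle).} We must show $P_j(z)\neq 0$ on the unit circle, in particular at $z=\zeta^s$, for $0\le j\le (r-1)/2$. First substitute $z=\omega w$ with a suitable root of unity $\omega$ to absorb the phases $\zeta^{k(\delta-1)+\binom k2}$, so that the coefficients of $P_j$ become real and positive up to a common factor. Then apply the Eneström--Kakeya theorem in the form of \cite{Gardner2014}: if the coefficient moduli are strictly monotone in $i$, all zeros lie strictly inside (or strictly outside) the unit disk, so none is a root of unity.

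Strict monotonicity reduces to comparing consecutive ratios
\[
\frac{|e_{k}(A)|}{|e_{k-1}(A)|}=\frac{\sin(\pi(r-k+1)/n)}{\sin(\pi k/n)}.
\]
For $k\le j\le (r-1)/2$ we have $k<r-k+1\le n-k$, which should make this ratio exceed $1$. This is exactly where the restriction to $j\le(r-1)/2$ and the reciprocal symmetry of Step 2 are used.

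I expect the delicate points to be two. The first is making the phase normalization genuinely uniform in $i$. The second is handling equality or near-equality cases so that the inequality stays strict and no root lands exactly on $|z|=1$. If the phases do not align, the fallback is to bound $|P_j(z)|$ from below directly by its dominant term $|e_j(A)|$, using the geometric decay of the remaining terms.
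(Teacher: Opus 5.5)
This is the paper's proof in different notation. As polynomials in $z$ one has $P_j(z)=(-1)^j h_j(1,\zeta,\ldots,\zeta^{\delta-2},z)$, because $h_k(1,\zeta,\ldots,\zeta^{\delta-2})=(-1)^k e_k(A)$ for $k<n$. The paper runs the same bialternant/Jacobi--Trudi identity, reciprocal symmetry, sine-ratio monotonicity and Enestr\"om--Kakeya step on this polynomial, writing its coefficients as $\binom{k+\delta-2}{k}_\zeta$ rather than through $\binom{r}{k}_\zeta$.

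Three details need to be added, and all of them work out:
\begin{itemize}
\item First reduce to $\zeta=e^{2\pi i/n}$ by Galois conjugation, as the paper does. Your sine formula for $|e_k(A)|$ holds only for $\zeta=e^{\pm 2\pi i/n}$.
\item The phases $\zeta^{k(\delta-1)+\binom k2}$ you propose to absorb are quadratic in $k$, so no substitution $z=\omega w$ absorbs them alone. The fix is that $\binom{r}{k}_\zeta$ carries its own phase $\zeta^{k(r-k)/2}$ (with $\zeta^{1/2}=e^{i\pi/n}$), which cancels the quadratic part. This gives $e_k(A)=(-1)^k e^{i\pi k(\delta-2)/n}\,|e_k(A)|$. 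Then $w=e^{-i\pi(\delta-2)/n}z$ makes the coefficients positive and decreasing in the power of $w$, so your fallback is not needed.
\item Your chain $k<r-k+1\le n-k$ only gives a ratio $\ge 1$. You need strictness, and it holds because $(n-k)-(r-k+1)=\delta-2\ge 1$. This is exactly where $\delta\ge 3$ is used.
\end{itemize}
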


\begin{proof}
    Fix $s,b \in \{\delta-1,\ldots, n-1\}$. Consider the $\delta \times \delta$ matrix $M_{s,b} = (\zeta^{ac})_{a \in R, c \in C}$ where $R = \{0, 1, \ldots, \delta-2, s\}$ and $C = \{0, 1, \ldots, \delta - 2, b\}$. More explicitly, \[M_{s,b} = \begin{bmatrix}
        \zeta^{0\cdot 0} & \zeta^{0 \cdot 1} & \cdots & \zeta^{0 \cdot (\delta-2)} & \zeta^{0\cdot b}\\
        \zeta^{1\cdot 0} & \zeta^{1 \cdot 1} & \cdots & \zeta^{1 \cdot (\delta-2)} & \zeta^{1 \cdot b}\\
        \vdots & \vdots & \ddots & \vdots &\vdots\\
        \zeta^{s\cdot 0} & \zeta^{s \cdot 1} & \cdots & \zeta^{s \cdot (\delta-2)} & \zeta^{s \cdot b}
    \end{bmatrix}.\] Replacing the last column $Y^b$, by $Q_b(Y) = Y^b + \sum_{t = 0}^{\delta-2}v_tY^t$ does not change the determinant because we are only performing column operations. By hypothesis, $Q_b(\zeta^t) = 0$ for $t = 0, \ldots, \delta -2$. Hence, the last column is $[0,\ldots, 0, Q_b(\zeta^s)]^T$. Therefore, $\det M_{s,b} = \det(\zeta^{a,c})_{0\leq a,c \leq \delta-2} \cdot Q_b(\zeta^s)$. 
    
    Let $x_1 = \zeta^0, x_2 = \zeta^1, \ldots, x_{\delta-1} = \zeta^{\delta-2}, x_\delta = \zeta^s$. Then, \begin{align}
        Q_b(\zeta^s) &= \frac{\det M_{s,b}}{\det(\zeta^{a,b})_{0\leq a,b \leq \delta-2}}\label{eq:q = h 1}\\
        &= \frac{(-1)^{\lfloor \frac{\delta}{2}\rfloor}\begin{vmatrix}
            x_1^b & x_2^b & \cdots & x_\delta^b\\
            x_1^{\delta-2} & x_2^{\delta-2} & \cdots & x_\delta^{\delta-2}\\
            \vdots & \vdots & \ddots & \vdots\\
            x_1 & x_2 & \cdots & x_\delta\\
            1 & 1 & \cdots & 1
        \end{vmatrix}}{(-1)^{\lfloor \frac{\delta-1}{2}\rfloor}\begin{vmatrix}
            x_1^{\delta-2} & x_2^{\delta-2} & \cdots & x_{\delta-1}^{\delta-2}\\
            \vdots & \vdots & \ddots & \vdots\\
            x_1 & x_2 & \cdots & x_{\delta-1}\\
            1 & 1 & \cdots & 1
        \end{vmatrix}}\label{eq:q = h 2}\\
        &= (-1)^{\delta-1}\frac{\begin{vmatrix}
            x_1^b & x_2^b & \cdots & x_\delta^b\\
            x_1^{\delta-2} & x_2^{\delta-2} & \cdots & x_\delta^{\delta-2}\\
            \vdots & \vdots & \ddots & \vdots\\
            x_1 & x_2 & \cdots & x_\delta\\
            1 & 1 & \cdots & 1
        \end{vmatrix}}{\begin{vmatrix}
            x_1^{\delta-1} & x_2^{\delta-1} & \cdots & x_{\delta}^{\delta-1}\\
            x_1^{\delta-2} & x_2^{\delta-2} & \cdots & x_{\delta}^{\delta-2}\\
            \vdots & \vdots & \ddots & \vdots\\
            x_1 & x_2 & \cdots & x_{\delta}\\
            1 & 1 & \cdots & 1
        \end{vmatrix}} \cdot \frac{\begin{vmatrix}
            x_1^{\delta-1} & x_2^{\delta-1} & \cdots & x_{\delta}^{\delta-1}\\
            x_1^{\delta-2} & x_2^{\delta-2} & \cdots & x_{\delta}^{\delta-2}\\
            \vdots & \vdots & \ddots & \vdots\\
            x_1 & x_2 & \cdots & x_{\delta}\\
            1 & 1 & \cdots & 1
        \end{vmatrix}}{\begin{vmatrix}
            x_1^{\delta-2} & x_2^{\delta-2} & \cdots & x_{\delta-1}^{\delta-2}\\
            \vdots & \vdots & \ddots & \vdots\\
            x_1 & x_2 & \cdots & x_{\delta-1}\\
            1 & 1 & \cdots & 1
        \end{vmatrix}}\label{eq:q = h 3}\\
        &= (-1)^{\delta-1} \cdot s_{\lambda = (b - \delta +1, 0, \ldots, 0)}(x_1,\ldots, x_\delta) \cdot (-1)^{\delta-1} \cdot \prod_{t=0}^{\delta-2}(x_\delta - x_{t+1})\label{eq:q = h 4}\\
        &= \begin{vmatrix}
            h_{b - (\delta-1)} & h_{b - (\delta-2)} & \cdots & h_b\\
            h_{-1} & h_0 & \cdots & h_{\delta-2}\\
            \vdots & \vdots & \ddots & \vdots\\
            h_{-(\delta-1)} & h_{-(\delta-2)} & \cdots & h_0
        \end{vmatrix} \cdot \prod_{t=0}^{\delta-2}(x_\delta - x_{t+1})\label{eq:q = h 5}\\
        &= \begin{vmatrix}
            h_{b - (\delta-1)} & h_{b - (\delta-2)} & \cdots & h_b\\
            0 & 1 & \cdots & h_{\delta-2}\\
            \vdots & \vdots & \ddots & \vdots\\
            0 & 0 & \cdots & 1
        \end{vmatrix} \cdot \prod_{t=0}^{\delta-2}(x_\delta - x_{t+1})\label{eq:q = h 6}\\
        &= h_{b-\delta +1}(x_1,\ldots, x_\delta) \cdot \prod_{t=0}^{\delta-2}(x_\delta - x_{t+1})\label{eq:q = h 7}\\
        &= h_{b-\delta +1}(1,\ldots, \zeta^{\delta-2}, \zeta^s) \cdot \prod_{t=0}^{\delta-2}(\zeta^s - \zeta^t)\label{eq:q = h 8}.
    \end{align}
     Here, $s_\lambda(x_1,\ldots, x_\delta)$ is the Schur polynomial over the partition $\lambda$ and $h_{b-\delta+1}$ is the complete homogeneous symmetric polynomial of degree $b-\delta+1$.
     \Cref{eq:q = h 2} follows from \labelcref{eq:q = h 1} because the determinant is unchanged under column swaps and transposition up to sign. The first term in \labelcref{eq:q = h 4} follows from \labelcref{eq:q = h 3} by Jacobi's bialternant formula for Schur polynomials (also a special case of Weyl character formula)\cite{cauchy1815memoire, Jacobi1841} and the second term follows from taking a quotient of two Vandermonde determinants. \Cref{eq:q = h 5} follows from \labelcref{eq:q = h 4} by the Jacobi-Trudi formula \cite{Jacobi1841,trudi1862teoria}. For $s \in \{\delta-1,\ldots, n-1\}$, $\zeta^s - \zeta^t \neq 0$ for all $t = 0,\ldots, \delta-2$. Hence, $Q_b(\zeta^s) \neq 0$ if and only if $h_{b-\delta + 1}(1, \zeta, \ldots, \zeta^{\delta-2}, \zeta^s) \neq 0$ which follows from \Cref{lem:homogenous_poly_roots_of_unity}.
\end{proof}

\begin{remark}\label{rem:Q_b_delta=2}
    \Cref{thm:Q_b_has_delta-1_roots} is not true in the case of $\delta = 2$. Consider the following example. Let $\delta = 2$ then $n = r + 1$. We must choose $b \in \{1,\ldots ,r\}$. $Q_b(Y) = Y^b + v_0$ and $Q_b(1) = 0$ so $v_0 = -1$ and $Q_b(Y) = Y^b - 1$. Now for $s \in \{1,\ldots, r\}$, $Q_b(\zeta^s) = 0$ if $\zeta^{sb} = 1$. Let $ r= 3$ so $n = 4$ and choose $b = s = 2$ then $Q_2(\zeta^2) = \zeta^4 - 1 = 0$.

    Hence, if \(s\) is chosen such that $n \mid sb$, then $\zeta^s$ is a root.
    If $\delta = 2$ and we strengthen the hypothesis of \Cref{thm:Q_b_has_delta-1_roots} to choosing $b$ such that $\gcd(b, r + 1) = 1$, then the claim will follow.
    For more details on where the proof particularly breaks see \Cref{rem:homogeneous_delta=2}.
\end{remark}
\Cref{thm:Q_b_has_delta-1_roots} is over $\C$, but we need results over finite fields so we present the following reduction. In particular, the same result will hold over finite fields once we exclude finitely many prime characteristics.

\begin{corollary}\label{cor:Q_b_finite_field_version_outside_finitely_many_char}
Fix integers \(r\ge \delta\ge 3\), and set $ n = r + \delta -1$.
For each \[ m\in\{0,\dots,r-1\} \qquad\text{and}\qquad s\in\{\delta-1,\dots,n-1\},\] define \[ A_{m,s}(X):=h_m(1,X,X^2,\dots,X^{\delta-2},X^s)\in \mathbb{Z}[X]. \]
Let \(\Phi_n(X)\in \mathbb Z[X]\) denote the \(n\)th cyclotomic polynomial, and define
\[ \mathcal{M}_{r,\delta}:=\prod_{m=0}^{r-1}\ \prod_{s=\delta-1}^{n-1} \Res\!\left(\Phi_n(X),A_{m,s}(X)\right)\in \mathbb Z.\]
Then \(\mathcal{M}_{r,\delta}\neq 0\). Consequently, if \(\F_q\) is a finite field of characteristic \(p\) such that \(p\nmid \mathcal{M}_{r,\delta}\) and \(n \mid (q-1)\), then for every primitive \(n\)th root of unity \(\omega\in \F_q\), every \(b\in\{\delta-1,\dots,n-1\}\), and every \(s\in\{\delta-1,\dots,n-1\}\), the polynomial \[ Q_b(Y)=Y^b+\sum_{t=0}^{\delta-2}v_tY^t \] satisfying $Q_b(\omega^t)=0$ for $t = 0,\ldots, \delta-2$ also satisfies $Q_b(\omega^s) \neq 0$. In particular, for fixed \(r\) and \(\delta\), the conclusion of \Cref{thm:Q_b_has_delta-1_roots} holds over every finite field of characteristic outside a finite set of primes.
\end{corollary}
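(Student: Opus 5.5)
The plan has two halves: show $\mathcal M_{r,\delta}\neq 0$ over $\C$, then transfer the statement to $\F_q$ by reducing the identity from the proof of \Cref{thm:Q_b_has_delta-1_roots}.

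\textbf{Nonvanishing of $\mathcal M_{r,\delta}$.} Each $A_{m,s}$ lies in $\Z[X]$, since $h_m$ has integer coefficients. By \Cref{cor:cyclotomic_norm_formula}, $\Res(\Phi_n,A_{m,s})=N_{\Q(\zeta_n)/\Q}(A_{m,s}(\zeta_n))$. This norm is nonzero exactly when $A_{m,s}(\zeta_n)\neq 0$. Equivalently, by \Cref{cor:resultant_common_root}, $A_{m,s}$ must have no common root with $\Phi_n$. (If $A_{m,s}$ were the zero polynomial we would need a separate convention, but $A_{m,s}(1)=\binom{m+\delta-1}{m}\neq0$.) Now $m=b-\delta+1$ ranges over $\{0,\dots,r-1\}$ as $b$ ranges over $\{\delta-1,\dots,n-1\}$. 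The identity \labelcref{eq:q = h 8}, together with the nonvanishing $Q_b(\zeta^s)\neq0$ from \Cref{thm:Q_b_has_delta-1_roots}, gives $h_m(1,\zeta,\dots,\zeta^{\delta-2},\zeta^s)\neq 0$ for every primitive $n$th root $\zeta\in\C$. In particular this holds for $\zeta=\zeta_n$. Hence each factor is a nonzero integer, and so is the product.

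\textbf{Transfer to $\F_q$.} Let $p\nmid\mathcal M_{r,\delta}$ and $n\mid q-1$, and fix a primitive $n$th root $\omega\in\F_q$. Since $p\nmid n$, $\omega$ is a root of $\Phi_n$ mod $p$. By \Cref{prop:resultant_mod_p}, the reduction $\overline{A_{m,s}}$ then cannot vanish at $\omega$, so $h_m(1,\omega,\dots,\omega^{\delta-2},\omega^s)\neq0$ in $\F_q$.

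It remains to redo the determinant computation over $\F_q$. The powers $1,\omega,\dots,\omega^{\delta-2}$ are distinct, so the Vandermonde matrix $(\omega^{ac})_{0\le a,c\le\delta-2}$ is invertible. Hence $Q_b$ exists, is unique, and $Q_b(\omega^s)=\det M_{s,b}/\det(\omega^{ac})$. The Jacobi bialternant and Jacobi--Trudi steps are polynomial identities in $\Z[x_1,\dots,x_\delta]$: the bialternant divided by the Vandermonde equals $s_\lambda$, and this division is exact over $\Z$. So the identity
\[Q_b(\omega^s)=h_{b-\delta+1}(1,\omega,\dots,\omega^{\delta-2},\omega^s)\prod_{t=0}^{\delta-2}(\omega^s-\omega^t)\]
specializes to any commutative ring, in particular to $\F_q$. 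The right side is a product of nonzero factors. The second factor is nonzero because $\omega^s\neq\omega^t$ for $s\in\{\delta-1,\dots,n-1\}$ and $t\le\delta-2$, using that $\omega$ has order exactly $n$. Therefore $Q_b(\omega^s)\neq0$.

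\textbf{Main obstacle.} The main care point is justifying that \labelcref{eq:q = h 8} holds over $\F_q$. The clean way is to state it as the integer polynomial identity
\[\det(x_i^{c})_{c\in\{0,\dots,\delta-2,b\}}=h_{b-\delta+1}(x)\,\det(x_i^{c})_{c\le\delta-1}\]
and then evaluate it. Evaluation at $x=(1,\omega,\dots,\omega^{\delta-2},\omega^s)$ and the column-operation argument both go through verbatim, because the denominator is a nonzero Vandermonde determinant. The "finitely many characteristics" claim is then immediate, since the excluded primes are exactly the prime divisors of the fixed nonzero integer $\mathcal M_{r,\delta}$.
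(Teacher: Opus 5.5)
Your proposal is correct and follows the paper's proof. Each $\Res(\Phi_n,A_{m,s})$ is nonzero because $h_m(1,\zeta,\dots,\zeta^{\delta-2},\zeta^s)\neq 0$ over $\C$, and the transfer to $\F_q$ combines \Cref{prop:resultant_mod_p} with the bialternant identity specialized at $(1,\omega,\dots,\omega^{\delta-2},\omega^s)$. The remaining differences are cosmetic: you argue directly where the paper argues by contradiction; you obtain the complex nonvanishing by going back through \Cref{thm:Q_b_has_delta-1_roots} and its determinant identity rather than citing \Cref{lem:homogenous_poly_roots_of_unity} directly, which is equivalent since the theorem is proved from that lemma; and phrasing the identity as an exact division in $\Z[x_1,\dots,x_\delta]$ justifies its validity over $\F_q$ more cleanly than the paper's ``purely algebraic'' remark.
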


\begin{proof}
Let \(\zeta\in \mathbb C\) be a primitive \(n\)th root of unity. By \Cref{lem:homogenous_poly_roots_of_unity}, for every
\[
m\in\{0,\dots,r-1\},\qquad s\in\{\delta-1,\dots,n-1\},
\]
we have
\[
A_{m,s}(\zeta)
=
h_m(1,\zeta,\zeta^2,\dots,\zeta^{\delta-2},\zeta^s)\neq 0.
\]
\Cref{lem:homogenous_poly_roots_of_unity} applies to every primitive \(n\)-th root \(\eta\). Hence, \(A_{m,s}(\eta)\neq 0\) for every root \(\eta\) of \(\Phi_n\), so \(A_{m,s}\) and \(\Phi_n\) have no common root in \(\mathbb C\). Therefore, they are coprime in \(\mathbb Q[X]\) and 
\[
\Res\!\left(\Phi_n(X),A_{m,s}(X)\right)\neq 0
\]
for every such pair \((m,s)\). Thus \(\mathcal{M}_{r,\delta}\neq 0\).

Now let \(\F_q\) be a finite field of characteristic \(p\) such that \(p\nmid \mathcal M_{r,\delta}\) and \(n\mid(q-1)\), and let \(\omega\in \F_q\) be a primitive \(n\)th root of unity. Fix
\[
b\in\{\delta-1,\dots,n-1\},\qquad s\in\{\delta-1,\dots,n-1\},
\]
and let $m = b - \delta + 1$. Suppose, for contradiction, $Q_b(\omega^s) = 0$.

By the same determinant/Jacobi--Trudi computation used in the proof of \Cref{thm:Q_b_has_delta-1_roots}---which is purely algebraic and therefore valid over any field containing a primitive \(n\)th root of unity---we have
\[
Q_b(\omega^s)=0 \implies A_{m,s}(\omega)=0,
\]
because
\[
\prod_{t=0}^{\delta-2}(\omega^s-\omega^t)\neq 0
\]
in any field containing a primitive \(n\)-th root of unity.
Hence, $A_{m,s}(\omega) = 0 \in \F_q$ so $\omega$ is a root of the reduction $\overline{A_{m,s}}(X) \in \F_p[X]$.

We next show that \(\omega\) is also a root of the reduction of \(\Phi_n(X)\) modulo \(p\). Since \(n\mid(q-1)\), we have \(p\nmid n\). Reducing the factorization
\[
X^n-1=\prod_{d\mid n}\Phi_d(X)
\]
modulo \(p\), we obtain in \(\F_p[X]\)
\[
X^n-1=\prod_{d\mid n}\overline{\Phi_d}(X).
\]
Now \(\omega^n=1\), so \(\omega\) is a root of \(X^n-1\). Moreover, because \(\omega\) has exact multiplicative order \(n\), it is not a root of \(X^d-1\) for any proper divisor \(d\mid n\). Since every root of \(\overline{\Phi_d}(X)\) is also a root of \(X^d-1\), it follows that \(\omega\) cannot be a root of \(\overline{\Phi_d}(X)\) for any proper divisor \(d<n\). Therefore, \(\omega\) must be a root of \(\overline{\Phi_n}(X)\).

Thus, \(\omega\) is a common root of the reductions modulo \(p\) of \(A_{m,s}(X)\) and \(\Phi_n(X)\). By \Cref{prop:resultant_mod_p},
\[
p\mid \Res\!\left(\Phi_n(X),A_{m,s}(X)\right).
\]
But \(\Res(\Phi_n(X), A_{m,s}(X))\) is one of the factors of \(\mathcal{M}_{r,\delta}\), so this forces
\(
p\mid \mathcal{M}_{r,\delta},
\)
contrary to hypothesis. Therefore, our assumption was false, and we conclude that
\[
Q_b(\omega^s)\neq 0
\qquad
(s=\delta-1,\dots,n-1).
\]
This proves the corollary.
\end{proof}

\begin{remark}\label{rem:Q_b_delta=2_finite_field}
    As noted in \Cref{rem:Q_b_delta=2}, \Cref{thm:Q_b_has_delta-1_roots} does not hold for $\delta = 2$ without the additional hypothesis of $\gcd(b,r+1) = 1$. However, that additional assumption is all that is necessary even in the finite field setting.
    More formally, if $\delta = 2$, then $Q_b(Y) = Y^b - 1$ so its roots among the $n$th roots of unity are \[\omega^0, \omega^{\frac{n}{\gcd(b,n)}},\omega^{\frac{2n}{\gcd(b,n)}}, \ldots, \omega^{\frac{(\gcd(b,n)-1)n}{\gcd(b,n)}}.\] There are exactly $\gcd(b,n)$ roots so we must impose $\gcd(b,r+1) = 1$.
     
    Let \(b\in\{1,\dots,n-1\}\), \(s\in\{1,\dots,n-1\}\), and set \(m=b-1\). We claim, if \(\gcd(b,n)=1\), then 
    \[
        \Res\!\left(\Phi_n(X),h_m(1,X^s)\right)=1.
    \] Since
    \[
        h_m(1,X^s)=1+X^s+\cdots+X^{ms}=1+X^s+\cdots+X^{(b-1)s},
    \]
    and \(\Phi_n(X)\) is monic with roots \(\zeta_n^a\) for \(a\in(\Z/n\Z)^\times\), we have
    \[
        \Res\!\left(\Phi_n(X),h_m(1,X^s)\right)
    =
    \prod_{a\in(\Z/n\Z)^\times}\left(1+\zeta_n^{as}+\cdots+\zeta_n^{(b-1)as}\right).
    \]
    Because \(1\le s\le n-1\), we have \(\zeta_n^{as}\neq 1\), so
    \[
    1+\zeta_n^{as}+\cdots+\zeta_n^{(b-1)as}
    =
    \frac{1-\zeta_n^{asb}}{1-\zeta_n^{as}}.
    \]
    Thus
    \[
    \Res\!\left(\Phi_n(X),h_m(1,X^s)\right)
    =
    \frac{\prod_{a\in(\Z/n\Z)^\times}(1-\zeta_n^{asb})}
    {\prod_{a\in(\Z/n\Z)^\times}(1-\zeta_n^{as})}.
    \]
    Since \(\gcd(b,n)=1\), multiplication by \(b\) permutes \((\Z/n\Z)^\times\), so the numerator and denominator are equal. Therefore the resultant is \(1\) and it follows that $\mathcal{M}_{r,2} = 1$ so there are no characteristics $p$ which we need to exclude.

    In \cite{golowich2025quantum}, the authors chose $r+1$ to be prime which is stronger than necessary because with $r+1$ prime, $\gcd(b,r+1) = 1$ is immediate.
\end{remark}
\medskip
We now give an example of a pair of values \(r, \delta\) for which there exists a prime characteristic $p$ such that $Q_b(\omega^s) = 0$ for some $b \in \{\delta-1,\ldots, n-1\}$ and some $s \in \{\delta-1,\ldots, n-1\}$.

\begin{example}\label{example: Q_b_equal_0}
    Let $\delta = 3$ and $r = 9$. Then $n = 11$. Let $q = 23$ so $n \mid (q-1)$ and let $\omega = 2$ be the primitive $11$th root of unity in $\F_q$. Let $b = 4$ and $s = 5$.

    By definition,
    \[
        Q_4(Y) = Y^4 + v_1Y + v_0
    \] and $Q_4(1) = Q_4(\omega) = 0$. Solving for $v_1$ and $v_0$, we obtain $Q_4(Y) = Y^4 + 8Y + 14$. Now substituting $\omega^5 = 2^5 = 9$, we get
    \[
        Q_4(9) = 9^4 + 72 + 14 = 81^2 + 17 = 12^2 + 17 = 161 = 0 \in \F_{23}.
    \]
    Hence, for characteristic $p = 23$, there exists $b \in \{\delta-1,\ldots, n-1\}$ and $s \in \{\delta-1,\ldots, n-1\}$ such that $Q_b(\omega^s) = 0$.

    Now we show that $23$ is the only such characteristic for these values of $r$ and $\delta$. We need to compute $\mathcal{M}_{9,3}$. Since $n = 11$ is prime, \(\Phi_{11}(X) = X^{10} + X^9 + \cdots + 1\) and \(A_{m,s}(X) = h_m(1,X,X^s)\) so by \Cref{cor:cyclotomic_norm_formula}, \[\Res(\Phi_n(X),A_{m,s}(X)) = N_{\Q(\zeta_n)/\Q}\left(A_{m,s}(\zeta_n)\right).\] Using SageMath, of the $81$ pairs $(m,s)$ with $0 \leq m \leq 8$ and $2 \leq s \leq 10$, $36$ pairs had a resultant of $23$ and the remaining $45$ pairs had a resultant of $1$. The $36$ pairs are \((m,s)\) for
    \[
        m \in \{1,2,3,5,6,7\}, \qquad s\in \{3,4,5,7,8,9\}.
    \]
    We will compute 
    \[
        \Res \!\left(\Phi_{11}(X), A_{2,5}(X)\right) = N_{\Q(\zeta_{11})/\Q}\left(A_{2,5}(\zeta_{11})\right)
    \] to show the computation for one of the $81$ pairs with the others proceeding similarly.

    Let $\zeta = \zeta_{11}$. Then
    \[
        \alpha = A_{2,5}(\zeta) = 1 + \zeta +  \zeta^2 + \zeta^5 + \zeta^6 + \zeta^{10} \in \Q(\zeta)
    \] and \[\mathrm{Gal}(\Q(\zeta)/\Q) \cong (\Z/11\Z)^{\times} = \{1,\ldots, 10\}.\] For each $a \in \{1,\ldots, 10\}$, the corresponding automorphism is $\sigma_a : \Q(\zeta) \to \Q(\zeta)$ where $\sigma_a(\zeta) = \zeta^a$ so the conjugates of $\alpha$ are exactly \[\alpha_a:= \sigma_a(\alpha) = A_{2,5}(\zeta^a) = 1 + \zeta^a +  \zeta^{2a} + \zeta^{5a} + \zeta^{6a} + \zeta^{10a}.\]
    Thus, \[N_{\Q(\zeta_{11})/\Q}\left(A_{2,5}(\zeta_{11})\right) = \prod_{a=1}^{10} \alpha_a = 23.\] One could also see this by computing the minimal polynomial of $\alpha$, \(\prod_{a=1}^{10}(X-\alpha_a),\) and noting that the constant of this polynomial is exactly the norm. 
\end{example}

\subsection{\texorpdfstring{Bound on the minimum distance of $(r,\delta)$ Quantum Tamo--Barg codes}{Bound on the minimum distance of (r,delta) Quantum Tamo--Barg codes}}

We begin with a small lemma whose properties we will use when proving a bound on the minimum distance of the code.

\begin{restatable}{lemma}{PsiConvexMonotone}\label{lem:psi_convex_monotone}
Let \(r\ge \delta\ge 2\), set \(n=r+\delta-1\), and define
\[
\psi(t):=(r-t)_+\left(n-(\delta-1)t\right)_+,
\qquad y_+:=\max\{y,0\}.
\]
Then \(\psi\) is nonincreasing and convex on \([0,\infty)\).
\end{restatable}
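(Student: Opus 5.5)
We propose to argue by splitting $[0,\infty)$ at the zeros of the two factors. The factor $(r-t)_+$ vanishes for $t\ge r$. The factor $(n-(\delta-1)t)_+$ vanishes for $t\ge n/(\delta-1)$. Since $r\ge\delta\ge 2$, we have
\[
\frac{n}{\delta-1}=\frac{r}{\delta-1}+1\ \ge\ \frac{r}{r-1}+1>1,
\]
but we need to compare $n/(\delta-1)$ with $r$, and either order can occur. Let $t_0:=\min\{r,\,n/(\delta-1)\}$. On $[0,t_0]$ both factors are nonnegative, so there
\[
\psi(t)=(r-t)(n-(\delta-1)t),
\]
and on $[t_0,\infty)$ we have $\psi\equiv 0$.

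\textbf{Step 1 (the quadratic piece).} On $[0,t_0]$, $\psi$ is the quadratic $p(t)=(\delta-1)t^2-\bigl(n+(\delta-1)r\bigr)t+rn$. Its second derivative is $2(\delta-1)>0$, so $p$ is convex there. Its derivative is
\[
p'(t)=-(n-(\delta-1)t)-(\delta-1)(r-t).
\]
This is a sum of two nonpositive terms whenever $t\le t_0$, so $p'\le 0$ on $[0,t_0]$ and $\psi$ is nonincreasing there. Moreover $p(t_0)=0$, because one factor vanishes at $t_0$, and $p(t)\ge 0$ on $[0,t_0]$.

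\textbf{Step 2 (gluing).} Write $\psi=\max\{\tilde p,0\}$ on $[0,\infty)$, where $\tilde p$ equals $p$ on $[0,t_0]$. Simply taking $\tilde p=p$ everywhere is not quite right: beyond $t_0$ one factor becomes negative while the other stays positive, so $p<0$, until both are negative past the larger root, where $p$ becomes positive again. That would break the representation, so we argue directly instead. Monotonicity is immediate: $\psi$ is nonincreasing on $[0,t_0]$, constant $0$ afterwards, and continuous at $t_0$.

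For convexity, a continuous piecewise-$C^1$ function is convex iff each piece is convex and the one-sided derivatives are nondecreasing at the breakpoint. The left derivative at $t_0$ is $p'(t_0)\le 0$ by Step 1, and the right derivative is $0$. So the slopes increase across $t_0$, and $\psi$ is convex on $[0,\infty)$.

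An alternative that avoids derivatives: on $[0,t_0]$, $\psi$ coincides with the convex function $p\cdot\mathbf 1$. We can write
\[
\psi=\max\{\ell,0\}\ \text{ where }\ \ell(t)=p(t)\ \text{for } t\le t_0,\quad \ell(t)=p'(t_0)(t-t_0)\ \text{for } t>t_0.
\]
Here $\ell$ is convex, being a convex function extended by its tangent line. Moreover $\ell\le 0$ past $t_0$, and the maximum of two convex functions is convex.

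\textbf{Main obstacle.} The only delicate point is the breakpoint $t_0$ and the sign bookkeeping there. One must avoid the tempting but false claim that the product of two convex nonincreasing functions $(r-t)_+$ and $(n-(\delta-1)t)_+$ is automatically convex. That claim fails in general, so the case analysis at $t_0$, with $p'(t_0)\le 0$ and $p(t_0)=0$, is what must be checked carefully.
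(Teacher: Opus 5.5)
Your proof is correct and follows essentially the paper's argument: split at $t_0=\min\{r,n/(\delta-1)\}$, observe that $\psi$ is a convex quadratic on $[0,t_0]$ and identically zero afterwards, and check $\psi'_-(t_0)\le 0=\psi'_+(t_0)$; your product-rule form $p'(t)=-(n-(\delta-1)t)-(\delta-1)(r-t)$ settles the sign in one step, where the paper treats the cases $t_0=r$ and $t_0=n/(\delta-1)$ separately. One correction to your closing remark: both factors are \emph{nonnegative} as well as nonincreasing and convex, and for such $f,g$ the product is convex, so that route is a valid one-line proof here and fails only when the factors may take negative values---indeed, at $z=\lambda x+(1-\lambda)y$ one has $f(z)g(z)\le(\lambda f(x)+(1-\lambda)f(y))(\lambda g(x)+(1-\lambda)g(y))$, and this right-hand side falls short of $\lambda f(x)g(x)+(1-\lambda)f(y)g(y)$ by exactly $\lambda(1-\lambda)(f(x)-f(y))(g(x)-g(y))\ge 0$.
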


\begin{proof}
Set
\[
a:=\delta-1,
\qquad
T:=\min\left\{r,\frac{n}{a}\right\}.
\]
Then \(\psi(t)>0\) precisely for \(0\le t<T\), and on this interval
\[
\psi(t)=(r-t)(n-at)=rn-(n+ar)t+at^2.
\]
Hence,
\[
\psi'(t)=-(n+ar)+2at,
\qquad
\psi''(t)=2a>0.
\]
Thus, \(\psi'\) is strictly increasing on \([0,T)\), so \(\psi\) is convex on \([0,T)\).

For \(t\ge T\), at least one of the two factors \((r-t)_+\) and \((n-at)_+\) is zero, so \(\psi(t) = 0\).
Thus, \(\psi\) is constant on \([T,\infty)\), and its right derivative at \(T\) is \(\psi'_+(T)=0\).
We now check that the derivative does not jump downward at \(T\). The left derivative is
\[
\psi'_-(T)=-(n+ar)+2aT.
\]
There are two cases. If \(T=r\), then \(r\le n/a\), equivalently \(ar\le n\). Hence,
\[
\psi'_-(T)=\psi'_-(r)=-(n+ar)+2ar=ar-n\le 0.
\]
If \(T=n/a\), then \(n/a\le r\), equivalently \(n\le ar\). Hence,
\[
\psi'_-(T)
=
\psi'_-\left(\frac{n}{a}\right)
=
-(n+ar)+2n
=
n-ar
\le 0.
\]
Therefore, in both cases,
\[
\psi'_-(T)\le 0=\psi'_+(T)
\]
so the derivative is nondecreasing through the clipping point. Since \(\psi'\) is increasing on \([0,T)\) and is equal to \(0\) on \([T,\infty)\), the derivative is nondecreasing on all of \([0,\infty)\). Therefore \(\psi\) is convex on \([0,\infty)\).

We now show that \(\psi\) is nonincreasing. Since \(\psi'\) is increasing on \([0,T)\) and \(\psi'_-(T)\le 0\), we have \(\psi'(t)\le 0\) for all \(0 \le t < T\). For \(t\ge T\), \(\psi(t)=0\) is constant. Hence \(\psi\) is nonincreasing on \([0,\infty)\).
\end{proof}

\begin{theorem}\label{thm:qtb_distance}
    Consider $(r,\delta)$-QTB code $\mathcal{Q} = \mathrm{CSS}(C,C)$ defined in \Cref{def:r.delta.qTB.qlrc}. If $\delta \geq 3$, assume $\mathrm{char}(\F_q) \nmid \mathcal{M}_{r,\delta}$ or if $\delta = 2$, assume $r+1$ is prime.
    %and assume that $r + \delta - 1$ is prime such that \Cref{prop:uncertainty} holds. 
    Then $\mathcal{Q}$ has distance at least \begin{equation}\label{eqn:qtb_distance_bound}\frac{q-1}{2}\left(\frac{1}{\delta-1} + \frac{r}{r+\delta-1} - \sqrt{\left(\frac{r}{r+\delta-1} - \frac{1}{\delta-1}\right)^2 + \frac{4r}{(\delta-1)(r+\delta-1)}\cdot \frac{\ell-1}{q-1}}\right).\end{equation}
\end{theorem}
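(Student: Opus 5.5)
Since $\mathcal{Q}=\CSS(C,C)$ with $C^\perp\subseteq C$, it suffices to show that every $f$ with $\ev(f)\in C\setminus C^\perp$ has weight at least the displayed quantity. Write $n:=r+\delta-1$ and $\omega$ for a fixed primitive $n$th root of unity in $\F_q$. Using \Cref{lem:qtb.construction}, reduce $f$ modulo $C^\perp$ and decompose $f=g+h$. Here $\ev(h)\in B^\perp$, so $h$ is the piecewise low-degree part described in \Cref{lem:low.wt.parity.checks}. The remaining part $g$ is supported on exponents in $[\ell]\setminus(S_+\cup S_-)$, and $g\neq 0$ because $\ev(f)\notin C^\perp$.

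For each $i\in\{\delta-1,\ldots,n-1\}$, let $Q_i(Y)=Y^i+\sum_{t=0}^{\delta-2}v_{i,t}Y^t$ vanish at $\omega^0,\ldots,\omega^{\delta-2}$, and form
\[
f_i(X):=\omega^{-i}f(\omega^iX)+\sum_{t=0}^{\delta-2}v_{i,t}\omega^{-t}f(\omega^tX).
\]
A monomial $X^e$ is multiplied by $\omega^{-i}\omega^{ie}+\sum_t v_{i,t}\omega^{-t}\omega^{te}=Q_i(\omega^{e-1})$ after the same reindexing. For $e\equiv j\pmod n$ with $1\le j\le\delta-1$, this factor is $Q_i(\omega^{j-1})=0$, so $h$ is annihilated and $f_i=g_i$.

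For the exponents $e$ appearing in $g$, $e-1\not\equiv 0,\ldots,\delta-2 \pmod n$. By \Cref{cor:Q_b_finite_field_version_outside_finitely_many_char} when $\delta\ge 3$, or by primality of $r+1$ when $\delta=2$, every coefficient of $g$ is multiplied by a nonzero scalar. Hence each $g_i$ is nonzero with $\deg g_i\le\ell-1$, so $G:=\prod_{i=\delta-1}^{n-1}g_i$ is a nonzero polynomial of degree at most $r(\ell-1)$. The plan is then to count zeros of $G$ on $\F_q^*$. It is convenient to count incidences $(x,i)$ with $g_i(x)=0$; their number is at most $\sum_i\deg g_i\le r(\ell-1)$.

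Fix a coset $A=\alpha\Omega_n$ on which $f$ has weight $w_A$. A point $x=\alpha\omega^{a}\in A$ is a zero of $g_i=f_i$ whenever $f$ vanishes at $\omega^{i}x$ and at $\omega^{t}x$ for $t=0,\ldots,\delta-2$. Call $a$ a good start if $f$ vanishes on the $\delta-1$ cyclically consecutive points $\alpha\omega^{a},\ldots,\alpha\omega^{a+\delta-2}$. Each nonzero position kills at most $\delta-1$ starts, so there are at least $(n-(\delta-1)w_A)_+$ good starts. For a good start, the indices $i\in\{\delta-1,\ldots,n-1\}$ with $f(\alpha\omega^{a+i})=0$ number at least $(r-w_A)_+$, since those $r$ points miss at most $w_A$ nonzeros. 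So coset $A$ contributes at least $\psi(w_A)$ incidences. Summing over the $(q-1)/n$ cosets and applying Jensen's inequality with \Cref{lem:psi_convex_monotone} gives
\[
\frac{q-1}{n}\,\psi\!\left(\frac{n\,\mathrm{wt}(f)}{q-1}\right)\le r(\ell-1).
\]

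Finally, put $t=n\,\mathrm{wt}(f)/(q-1)$. Because $\psi$ is nonincreasing, $t$ must be at least the smaller root $t^\ast$ of $(r-t)(n-(\delta-1)t)=rn(\ell-1)/(q-1)$ on the branch where both factors are positive. Solving the quadratic $(\delta-1)t^2-(n+(\delta-1)r)t+rn(1-\frac{\ell-1}{q-1})=0$ and multiplying by $(q-1)/n$ should reproduce \eqref{eqn:qtb_distance_bound}. This relies on the identity $(n+(\delta-1)r)^2-4(\delta-1)rn=(n-(\delta-1)r)^2$ to recombine the discriminant. If $t\ge$ the clipping point $T$, the bound holds trivially, since the displayed quantity does not exceed $(q-1)T/n$.

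The main obstacle is the incidence count in the third paragraph. It needs care with cyclic indexing inside each coset. It also needs care in passing from incidences to the degree bound: the roots of distinct $g_i$ may coincide, which is why incidences rather than roots of $G$ are counted. The nonvanishing input is exactly where the hypotheses on $\operatorname{char}(\F_q)$ or on primality of $r+1$ enter.
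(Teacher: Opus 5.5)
Your proposal is correct and follows essentially the same route as the paper. You decompose $f=g+h$, annihilate $h$ with the $Q_i$-operators, use the non-vanishing corollary to get $G=\prod_i g_i\neq 0$ with $\deg G\le r(\ell-1)$, lower-bound the zero incidences on each coset by $\psi(w_A)$, apply Jensen, and solve the quadratic for its smaller root. Your count of incidences $(x,i)$ is the same as the paper's count of roots of $G$ with multiplicity, and your treatment of the clipping point matches the paper's use of the smaller root together with the monotonicity of $\psi$.
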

\begin{proof}
    Fix an arbitrary $\ev(f(X)) \in C\setminus C^\perp$. Since  $C = \ev(\F_q[X]^S)$ for \[S = \left([\ell] \setminus S_{-}\right)\cup ([q-1]\cap S_{+})\] we may write $f(X) = g(X) + h(X)$ where $g(X) \in \mathbb{F}_q[X]^{[\ell]\setminus (S_{-}\cup S_{+})}$ and $h(X) \in \mathbb{F}_q[X]^{ [q-1]\cap S_{+}}$. By \Cref{lem:low.wt.parity.checks}, $h$ is piecewise of degree at most $\delta - 1$. Since $\ev(h) \in C^\perp$ and $\ev(f) \not \in C^\perp$ we must have $g \neq 0$.

    Let $n = r + \delta -1$. Choose integers $\Gamma_i = \{0, 1, \ldots, \delta -2, i\}$ that are distinct modulo $n$ so $i \in \{\delta -1, \ldots, n-1\}$. There are exactly $r$ such $\Gamma_i$s. Then for a fixed primitive root of unity $\omega \in \Omega_{n}$, we construct the Vandermonde-type matrix \[V_i = \begin{bmatrix}
        1 & 1 & 1 & \cdots & 1 & 1\\
        1 & \omega & \omega^2 & \cdots & \omega^{\delta-2} &  \omega^{i}\\
        \vdots & \vdots & \vdots & \ddots & \vdots & \vdots\\
        1 & \omega^{\delta-2} & \omega^{(\delta-2)2} & \cdots & \omega^{(\delta-2)(\delta-2)} & \omega^{(\delta-2)i}
    \end{bmatrix}.\] Clearly $\mathrm{rank}(V_i)  = \delta - 1$ so there is a nontrivial vector $v_i = (v_{i,0},\ldots, v_{i, \delta-2}, v_{i,\delta-1})$ such that $V_i v_i = 0$. If $v_{i,\delta-1} = 0$, then the first $\delta-1$ columns of $V_i$ would be linearly dependent, but the first $\delta-1$ columns of $V_i$ form a Vandermonde matrix. Hence, $v_{i,\delta-1} \neq 0$ and by normalizing, we may assume $v_{i,\delta-1} = 1$. Consider the polynomial \[Q_i(Y) = Y^i + \sum_{t=0}^{\delta-2}v_{i,t}Y^t.\] By construction, $Q_i(\omega^\gamma) = 0$ for $\gamma = 0,\ldots, \delta -2$. Let $\alpha \Omega_{n}\in \mathbb{F}_q^*/\Omega_{n}$. For a coset element $x\in \alpha\Omega_{n}$, \[\omega^{-i}(\omega^ix)^j + \sum_{t=0}^{\delta-2} v_{i,t}\omega^{-t}(\omega^{t}x)^j= x^j \left(\omega^{(j-1)i} + \sum_{t=0}^{\delta-2}v_{i,t}\omega^{(j-1)t} \right) = x^jQ_i(\omega^{j-1}) = 0\] for $j =1,\ldots, \delta-1$. It follows, for all $x \in \mathbb{F}_q^*$, \[\omega^{-i}h(\omega^ix)  + \sum_{t=0}^{\delta-2} v_{i,t}\omega^{-t}h(\omega^{t}x)= 0.\] Define \[g_i^{(1)}(X) = \omega^{-i}g(\omega^iX) + \sum_{t=0}^{\delta-2} v_{i,t}\omega^{-t}g(\omega^{t}X).\] Note that $\deg g_i^{(1)} \leq \deg g \leq \ell -1$. Additionally, if $x\in \mathbb{F}_q^*$, is such that $f(\omega^{\gamma}x) = 0$ for all $\gamma \in \Gamma_i$, then \[0 = \omega^{-i}f(\omega^ix) + \sum_{t=0}^{\delta-2} v_{i,t}\omega^{-t}f(\omega^{t}x) = g_i^{(1)}(x) + \omega^{-i}h(\omega^ix) + \sum_{t=0}^{\delta-2} v_{i,t}\omega^{-t}h(\omega^{t}x) = g_i^{(1)}(x).\] Hence, any $x \in \mathbb{F}_q^*$ such that $f(\omega^{\gamma}x) = 0$ for all $\gamma \in \Gamma_i$, is a root of $g_i^{(1)}(X)$.

    We define \[G(X) = \prod_{i = \delta-1}^{n-1} g_i^{(1)}(X),\] which has degree at most $r(\ell-1)$. It remains to show that $G \neq 0$ for which it suffices to show that each $g_i^{(1)} \neq 0$. 
    %Notice that $|Q_i| \leq \delta$ so by \Cref{prop:uncertainty}, we have \[|\ev(Q_i)|_{\Omega_{r+\delta-1}} \geq (r+\delta-1) + 1 - \delta = r.\] This implies $Q_i$ can have at most $\delta-1$ roots in $\Omega_{r+\delta-1}$, but as we already noted 
    Fix $i \in \{\delta-1,\ldots, n-1\}$. Writing $g(X) = \sum_{j \in [\ell]\setminus(S_-\cup S_+)} g_jX^j$, we see that \begin{align*}
        g_i^{(1)}(X) &= \sum_{j \in [\ell]\setminus(S_-\cup S_+)} g_j \left(\omega^{-i}(\omega^iX)^j + \sum_{t=0}^{\delta-2} v_{i,t}\omega^{-t}(\omega^{t}X)^j \right)\\
        &= \sum_{j \in [\ell]\setminus(S_-\cup S_+)} g_j \left(\omega^{(j-1)i} + \sum_{t=0}^{\delta-2} v_{i,t}\omega^{(j-1)t}\right)X^j\\
        &= \sum_{j \in [\ell]\setminus(S_-\cup S_+)} g_j Q_i(\omega^{j-1})X^j.
    \end{align*} Since $j \not \in S_+$, by \Cref{cor:Q_b_finite_field_version_outside_finitely_many_char} (and \Cref{rem:Q_b_delta=2_finite_field} if $\delta = 2$), $Q_i(\omega^{j-1}) \neq 0$ so $g_i^{(1)}(X) \neq 0$. Since $i$ was arbitrary, $G \neq 0$.
    
    We now bound the number of roots of $G$ in terms of the number of roots of $f$. Fix a coset $A = \alpha \Omega_{n}$ and let $|\ev(f)|_{A}$ denote the Hamming weight of the restriction of $\ev(f)$ to $A$. Let \[N_{A} = |\{x \in A \mid f(x) = f(\omega x) = \cdots = f(\omega^{\delta-2}x) = 0\}|\] so $N_{A}$ counts the number of starting points of a consecutive block of $\delta-1$ zeros. For a fixed $x \in A$ which is counted by $N_{A}$, $g_i^{(1)}(x)$ vanishes precisely when $f(\omega^i x) = 0$. Each nonzero position blocks at most $\delta-1$ possible starts of a $(\delta-1)$-zero block so \[N_A \geq (n - (\delta-1)|\ev(f)|_A)_+.\] Now fix an \(x\in A\) counted by \(N_A\). Then
    \[
    f(x)=f(\omega x)=\cdots=f(\omega^{\delta-2}x)=0.
    \]
    For an allowed value \(i\in\{\delta-1,\ldots,n-1\}\), the polynomial \(g_i^{(1)}\)
    vanishes at \(x\) whenever
    \(
    f(\omega^i x)=0.
    \)
    Among the \(r\) allowed values of \(i\), at least \((r-|\ev(f)|_A)_+\) satisfy this
    condition, since there are only \(|\ev(f)|_A\) nonzero positions in the coset \(A\).
    Therefore the total root multiplicity contribution from the coset \(A\) is at
    least
    \[
    (r-|\ev(f)|_A)_+N_A
    \ge
    (r-|\ev(f)|_A)_+(n-(\delta-1)|\ev(f)|_A)_+
    =
    \psi(|\ev(f)|_A).
    \]
    
    Summing over all cosets, we obtain
    \[
    |\{\text{roots of }G\text{ in }\F_q^*\text{, counted with multiplicity}\}|
    \ge
    \sum_{A\in\F_q^*/\Omega_n}\psi(|\ev(f)|_A).
    \]
    By the convexity of \(\psi\) as proven in \Cref{lem:psi_convex_monotone}, Jensen's inequality gives
    \[
    \sum_{A\in\F_q^*/\Omega_n}\psi(|\ev(f)|_A)
    \ge
    \frac{q-1}{n}
    \psi\left(\frac{n|\ev(f)|}{q-1}\right).
    \]
    Since \(G\) is nonzero and
    \(
    \deg G\le r(\ell-1),
    \)
    we get
    \[
    \frac{q-1}{n}
    \psi\left(\frac{n|\ev(f)|}{q-1}\right)
    \le
    r(\ell-1) \iff
    \psi\left(\frac{n|\ev(f)|}{q-1}\right)
    \le
    rn\frac{\ell-1}{q-1}.
    \]
    By \Cref{lem:psi_convex_monotone}, the function \(\psi\) is nonincreasing on
    \([0,\infty)\). Moreover, on the interval where \(\psi(t)>0\), it is given by
    \[
    \psi(t)=(r-t)(n-(\delta-1)t).
    \]
    Let \(y\) be the smaller solution of
    \[
    (r-y)(n-(\delta-1)y)=rn\frac{\ell-1}{q-1}.
    \]
    Then \(\psi(t)>rn\frac{\ell-1}{q-1}\) for \(0\le t<y\), while
    \(\psi(t)\le rn\frac{\ell-1}{q-1}\) for \(t\ge y\). Since
    \[
    \psi\left(\frac{n|\ev(f)|}{q-1}\right)\le rn\frac{\ell-1}{q-1},
    \]
    we conclude that
    \[
    \frac{n|\ev(f)|}{q-1}\ge y.
    \]
    Let
    \[
    \lambda:=\frac{\ell-1}{q-1}.
    \]
    Solving the quadratic in terms of $y$, substituting, and rearranging gives
    \[
    |\ev(f)|
    \ge
    \frac{q-1}{n}\cdot
    \frac{
    n+(\delta-1)r-
    \sqrt{(n+(\delta-1)r)^2-4(\delta-1)rn\left(1-\frac{\ell-1}{q-1}\right)}
    }{2(\delta-1)}.
    \]
    Simplifying, we obtain
    \[
    |\ev(f)|
    \ge
    \frac{q-1}{2}
    \left(
    \frac{1}{\delta-1}
    +
    \frac{r}{n}
    -
    \sqrt{
    \left(\frac{r}{n}-\frac{1}{\delta-1}\right)^2
    +
    \frac{4r}{(\delta-1)n}
    \cdot\frac{\ell-1}{q-1}
    }
    \right)
    \]
    and substituting $n = r+\delta-1$ gives \labelcref{eqn:qtb_distance_bound}, as desired.
    \end{proof}
    We make some important remarks regarding the relation of \labelcref{eqn:qtb_distance_bound} to the bound presented in \cite[Theorem 62]{golowich2025quantum} and illustrate the necessity of \Cref{thm:Q_b_has_delta-1_roots} and \Cref{cor:Q_b_finite_field_version_outside_finitely_many_char}.
    \begin{remark}
    \label{rem:qtb.dist.bound.comparison}
        Setting $\delta = 2$ in \labelcref{eqn:qtb_distance_bound} and shifting $r+1$ to $r$ recovers the exact bound presented in \cite[Theorem 62]{golowich2025quantum}.
    \end{remark}

    \begin{remark}
        Since we require \Cref{cor:Q_b_finite_field_version_outside_finitely_many_char} and \Cref{rem:Q_b_delta=2_finite_field} to hold for $(r,\delta)$, we must first fix $\delta$, then $r$, and then compute $\mathcal{M}_{r,\delta}$. After that we choose $q$, a prime power such that $(r+\delta-1) \mid (q-1)$ and $\mathrm{char}(\F_q)\nmid \mathcal{M}_{r,\delta}$. There are finitely many characteristics to exclude and, by Dirichlet's theorem on arithmetic progressions \cite{Dirichlet1837}, there are infinitely many primes of the form $p \equiv 1 \pmod{r+\delta-1}$. Hence, there are  infinitely many prime powers $q$ such that $(r+\delta-1) \mid (q-1)$ and $\mathrm{char}(\F_q)\nmid \mathcal{M}_{r,\delta}$. In particular, one may take $q = p$ for infinitely many such primes.
    \end{remark}

    \begin{remark}\label{rem:Q_b_prime_uncertainty_vs_composite}
        There is a shorter proof of \Cref{cor:Q_b_finite_field_version_outside_finitely_many_char} in the case that \(n=r+\delta-1\) is prime and \Cref{cor:uncertainty_prime_finite_field} holds over \(\F_q\). Indeed, for
        \[
        Q_b(Y)=Y^b+\sum_{t=0}^{\delta-2} v_tY^t,
        \]
        we have \(|Q_b|\le \delta.\)
        Hence, by \Cref{cor:uncertainty_prime_finite_field},
        \[
        |\ev(Q_b)|_{\Omega_n}\ge n+1-\delta=r,
        \]
        so the number of roots of \(Q_b\) in \(\Omega_n\) is at most \(n-r = \delta -1\).
        Since \(Q_b\) was constructed to vanish at $\omega^t$ for $t = 0,\ldots, \delta-2$ it follows that these are its only roots in \(\Omega_n\).
        
        However, this argument is insufficient for our purposes for two reasons. First, it requires \(n\) to be prime, whereas \Cref{thm:Q_b_has_delta-1_roots} and \Cref{cor:Q_b_finite_field_version_outside_finitely_many_char} hold for arbitrary \(n=r+\delta-1\), outside finitely many characteristics. Second, the composite-order uncertainty principle from \Cref{cor:uncertainty_composite_finite_field} is generally too weak to recover the same conclusion. Indeed, if \(d_1<d_2\) are the consecutive divisors of \(n\) such that \(d_1\leq |Q_b| \leq d_2\), then it yields only
        \[
        |\ev(Q_b)|_{\Omega_n}\ge \frac{n}{d_1d_2}\left(d_1+d_2-|Q_b|\right),
        \]
        which in general does not imply that the number of roots of \(Q_b\) in \(\Omega_n\) is at most \(\delta-1\).
        
        This distinction becomes especially important in the hierarchical setting. In order to derive a distance bound which meaningfully utilizes each level, one would like to apply the analogue of \Cref{cor:Q_b_finite_field_version_outside_finitely_many_char} at each level \(l\), with \(n_l = r_l + \delta_l - 1\).
        Using \Cref{cor:uncertainty_prime_finite_field} would, therefore, force one to assume that each \(n_l\) is prime, which is prohibitively restrictive. In particular, once $n_1$ is prime, the hierarchy necessarily collapses to one level. By contrast, \Cref{cor:Q_b_finite_field_version_outside_finitely_many_char} avoids any primality assumption on the \(n_l\), and for this reason is better suited to the hierarchical constructions developed later.
    \end{remark}
    
    We obtain the asymptotic distance of \((r,\delta)\) QTBs as an immediate corollary of \Cref{thm:qtb_distance}:
        \begin{corollary}\label{cor:asymptotic_qtb_distance}
            Let \(\delta\ge 2\) and \(r\ge \delta\), and set \(n=r+\delta-1\).
            Assume the following admissibility condition:
            \[
            \begin{cases}
            \operatorname{char}(\F_q)\nmid \mathcal M_{r,\delta}, & \text{if } \delta\ge 3,\\
            r+1 \text{ is prime}, & \text{if } \delta=2.
            \end{cases}
            \]
            Then, for every
            \[
            0<R<\frac{r-\delta+1}{r+\delta-1},
            \]
            there exists an explicit family of QLRCs over finite fields \(\F_q\) satisfying
            \(n\mid(q-1)\) and the admissibility condition above, with locality
            \((r,\delta)\), rate at least \(R\), and relative distance at least\begin{multline}\label{eqn:asymptotic_qtb_distance} \frac{1}{2}\Biggl(\frac{1}{\delta-1} + \frac{r}{r+\delta-1} \\- \sqrt{\left(\frac{r}{r+\delta-1} - \frac{1}{\delta-1}\right)^2 + \frac{2r}{(\delta-1)(r+\delta-1)}\ \left(1 + R\cdot \frac{r+\delta-1}{r-\delta + 1}\right)}\Biggr).\end{multline}
        \end{corollary}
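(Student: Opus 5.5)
The plan is to take the family of $(r,\delta)$-QTB codes from \Cref{def:r.delta.qTB.qlrc}, with $q\to\infty$ ranging over admissible prime powers. We choose $\ell$ as a function of $q$ so that the rate tends to $R$, and then substitute into \Cref{thm:qtb_distance}.

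\textbf{Existence of the family.} Admissible fields exist in abundance. By the remark following \Cref{thm:qtb_distance} (Dirichlet's theorem), there are infinitely many primes $p\equiv 1\pmod{n}$ with $p\nmid\mathcal M_{r,\delta}$. In the case $\delta=2$, we only need $r+1$ prime, which is an assumption and not a restriction on $q$. So take $q=p$ along such a sequence. Locality $(r,\delta)$ holds by \Cref{cor:locality_qtb}, and the construction is explicit.

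\textbf{Choice of $\ell$.} By \Cref{lem:dim.r.delta.TB}, with block length $N=q-1$,
\[
\frac{k}{q-1}=\frac{2\ell-q}{q-1}\cdot\frac{r-\delta+1}{r+\delta-1}+O\!\left(\frac{\delta}{q}\right),
\]
using $1-\frac{2(\delta-1)}{r+\delta-1}=\frac{r-\delta+1}{r+\delta-1}$. Set
\[
\frac{2\ell-q}{q-1}\approx R\cdot\frac{r+\delta-1}{r-\delta+1}=:R'.
\]
Concretely, take $\ell=\lceil (q+R'(q-1))/2\rceil$ plus a constant slack of size $O(\delta)$, so that the rate is at least $R$ for every $q$. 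The hypothesis $R<\frac{r-\delta+1}{r+\delta-1}$ gives $R'<1$. Hence $q/2\le\ell\le q-1$ for large $q$, and the construction is legitimate.

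\textbf{Distance.} With this choice,
\[
\frac{\ell-1}{q-1}=\frac{1+R'}{2}+O\!\left(\frac1q\right).
\]
Substituting into \Cref{thm:qtb_distance} and dividing by the block length $q-1$, the term $\frac{4r}{(\delta-1)n}\cdot\frac{\ell-1}{q-1}$ becomes
\[
\frac{2r}{(\delta-1)n}\,(1+R')+o(1).
\]
This gives exactly \labelcref{eqn:asymptotic_qtb_distance} up to an $o(1)$ error.

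\textbf{Main obstacle.} The only delicate point is the $o(1)$ error. The bound is continuous and decreasing in $\lambda=(\ell-1)/(q-1)$, so the claim should be read asymptotically: either interpret ``at least'' in the limit, or absorb the error by taking $R$ slightly perturbed. It is also worth checking that the quantity under the square root stays meaningful. It does, because it corresponds to the smaller root of $\psi(y)=rn\lambda$ with $\lambda\le 1$.
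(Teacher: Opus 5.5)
Your proposal is correct and matches the paper's proof. The paper likewise uses \Cref{lem:dim.r.delta.TB} to solve for $\frac{\ell-1}{q-1}=\frac12+\frac R2\cdot\frac{r+\delta-1}{r-\delta+1}+O(1/q)$, substitutes this into \Cref{thm:qtb_distance}, and lets $q\to\infty$. Your extra details (admissible $q$ via Dirichlet's theorem, an integer $\ell$ with $O(\delta)$ slack, and reading the bound up to an $o(1)$ term) just make explicit what the paper leaves implicit in ``as $q\to\infty$''.
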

        \begin{proof}
            By \Cref{lem:dim.r.delta.TB}, the dimension of $\mathcal{Q}$ is \[k = 1 + (2\ell -q)\left(1 - \frac{2(\delta-1)}{r + \delta-1}\right) + \epsilon\] for some $\epsilon \in [-2(\delta-1),2(\delta-1)]$. Let $k = R(q-1)$. Then \begin{align*}
                \frac{R(q-1) - 1 - \epsilon}{1 - \frac{2(\delta-1)}{r + \delta-1}} = 2\ell -2 - q + 2 = 2(\ell-1) - (q-2).
            \end{align*} Dividing by $(q-1)$, \begin{align*}
                \frac{R - \frac{1+\epsilon}{q-1}}{1 - \frac{2(\delta-1)}{r + \delta-1}} = 2\left(\frac{\ell-1}{q-1}\right) - \frac{q-2}{q-1} \implies \frac{\ell-1}{q-1} = \frac{\frac{q-2}{q-1} + \frac{R - \frac{1+\epsilon}{q-1}}{1 - \frac{2(\delta-1)}{r + \delta-1}}}{2}
            \end{align*} 
            and \begin{align}\label{eqn:ell-1/q-1}
                \frac{\ell-1}{q-1} &= \frac{\frac{q-2}{q-1} + \frac{R - \frac{1+\epsilon}{q-1}}{1 - \frac{2(\delta-1)}{r + \delta-1}}}{2}\nonumber\\ &= \frac{1}{2} - \frac{1}{2(q-1)}\left(1 + (1+\epsilon)\frac{r + \delta - 1}{r -\delta + 1}\right) + \frac{R}{2}\cdot \frac{r+\delta-1}{r - \delta + 1}.
            \end{align}Hence, the relative distance is at least \begin{multline*}
                \frac{1}{2}\Biggl(\frac{1}{\delta-1} + \frac{r}{r+\delta-1} \\- \sqrt{\left(\frac{r}{r+\delta-1} - \frac{1}{\delta-1}\right)^2 + \frac{2r}{(\delta-1)(r+\delta-1)}\cdot \left(1 + R \frac{r+\delta-1}{r-\delta + 1}\right)}\Biggr)
            \end{multline*} as $q \to \infty$.
        \end{proof}
        
        Note that for fixed $\delta \geq 2$, as $r \to \infty$, \labelcref{eqn:asymptotic_qtb_distance} tends to\[
            \frac{1}{2}\left(1 + \frac{1}{\delta-1}  - \sqrt{ \left(1 - \frac{1}{\delta-1}\right)^2 + \frac{2(1+R)}{\delta-1}}\right) = \frac{1}{2}\left(\frac{\delta}{\delta-1}  - \sqrt{ \left(\frac{\delta-2}{\delta-1}\right)^2 + \frac{2(1+R)}{\delta-1}}\right).\]
\subsection{Non-vanishing of complete homogeneous symmetric polynomial}
In this section, we present the technical lemma used to conclude the non-vanishing of \(Q_b\) in \Cref{thm:Q_b_has_delta-1_roots}.

\begin{lemma}\label{lem:homogenous_poly_roots_of_unity}
    Let \(d \geq 3\) and let \(n \geq 2d-1\). Let $\zeta \in \C$ be a primitive \(n\)th root of unity and write \(h_t(x_1,\ldots, x_d)\) for the complete homogeneous symmetric polynomial of degree \(t\) in \(d\) variables. Then, for every \(0 \leq t \leq n-d\) and \(d-1 \leq s \leq n-1\), we have \[h_{t}(1,\zeta,\ldots,\zeta^{d-2},\zeta^s) \neq 0.\]
\end{lemma}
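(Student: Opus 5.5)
The plan is to turn the specialized complete homogeneous polynomial into a coefficient of a polynomial built from an arc of consecutive roots of unity, and then apply a strict Eneström--Kakeya argument in the variable $z=\zeta^s$.

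\textbf{Generating function.} Set $m:=n-d$ and $x=(1,\zeta,\ldots,\zeta^{d-2},z)$ with $z=\zeta^s$. Since $\prod_{u=0}^{n-1}(1-\zeta^uT)=1-T^n$, we have
\[
\sum_{t\ge 0}h_t(x)T^t=\frac{1}{(1-zT)\prod_{u=0}^{d-2}(1-\zeta^uT)}=\frac{\prod_{u=d-1}^{n-1}(1-\zeta^uT)}{(1-zT)(1-T^n)}.
\]
Let $P(T):=\prod_{u=d-1}^{n-1}(1-\zeta^uT)=\sum_{k=0}^{m+1}a_kT^k$, a product over the arc of $m+1$ consecutive roots. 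For $t\le m<n$ the factor $(1-T^n)^{-1}$ does not contribute, so
\[
h_t(x)=\sum_{k=0}^{t}a_k z^{t-k}.
\]
Equivalently, $h_t(x)=[T^t]\prod_{u\ne s}(1-\zeta^uT)$, with $u$ ranging over $\{d-1,\ldots,n-1\}$. That product has degree $m$ and its roots are roots of unity.

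\textbf{Reciprocity.} For a degree-$m$ product $\prod(1-\eta_iT)$ with $|\eta_i|=1$, the coefficient of $T^{m-t}$ equals a unimodular constant times the complex conjugate of the coefficient of $T^t$. Hence it suffices to treat $0\le t\le m/2$.

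\textbf{Arc coefficients.} By the $q$-binomial theorem at $q=\zeta$,
\[
a_k=(-1)^k\zeta^{(d-1)k+\binom k2}\binom{m+1}{k}_{\zeta}.
\]
Writing each factor as $1-\zeta^j=-2i\,\zeta^{j/2}\sin(\pi j/n)$ gives
\[
\binom{m+1}{k}_\zeta=\zeta^{k(m+1-k)/2}\,b_k,\qquad b_k:=\prod_{i=1}^{k}\frac{\sin(\pi(m+2-i)/n)}{\sin(\pi i/n)}>0.
\]
The quadratic terms in the phase cancel: $\binom k2+\tfrac{k(m+1-k)}{2}=\tfrac{km}{2}$. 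Therefore $a_k=\theta^k b_k$ with $\theta:=-\zeta^{(d-1)+m/2}$ unimodular. Substituting $w:=z/\theta$ yields
\[
h_t(x)=\theta^t\sum_{k=0}^{t}b_k w^{t-k},
\]
which is a polynomial in $w$ with positive real coefficients, and $|w|=1$.

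\textbf{Monotonicity, the main step.} We need $b_0<b_1<\cdots<b_t$ for $t\le m/2$. The ratio is
\[
\frac{b_k}{b_{k-1}}=\frac{\sin(\pi(m+2-k)/n)}{\sin(\pi k/n)}.
\]
Since $\sin(\pi j/n)=\sin(\pi\min(j,n-j)/n)$ is strictly increasing in $\min(j,n-j)\le n/2$, the ratio exceeds $1$ exactly when $\min(m+2-k,\;d-2+k)>k$. Here $n-(m+2-k)=d-2+k$, so we need $d\ge 3$ and $k<(m+2)/2$, and both hold. This is precisely where $d\ge3$ enters; for $d=2$ the ratio equals $1$, consistent with the remark on $\delta=2$.

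\textbf{Eneström--Kakeya.} Take $p(w)=\sum_{j=0}^t c_jw^j$ with $c_j=b_{t-j}$, so that $c_0>c_1>\cdots>c_t>0$. Then
\[
(1-w)p(w)=c_0-\sum_{j=1}^{t}(c_{j-1}-c_j)w^j-c_tw^{t+1}.
\]
On $|w|=1$ the sum of the moduli of the nonconstant terms is exactly $c_0$. Equality in the triangle inequality would force every such term to equal $c_0$ in phase, so $w^j=-1$ for $j=1$ and also $-w^{t+1}=1$, i.e.\ $w=-1$ and $(-1)^{t}=1$. This must be checked directly: at $w=-1$ the positive coefficients make $p(-1)$ an alternating sum of a strictly decreasing sequence, which is nonzero. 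This gives $p(w)\ne0$ on the unit circle (for $t=0$ the claim is trivial), and hence $h_t(x)\ne0$.

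I expect the main obstacle to be the phase bookkeeping that makes all the $a_k$ share the linear phase $\theta^k$, including the square-root choice of $\zeta^{1/2}$, which must be fixed consistently. The boundary case of Eneström--Kakeya needs care as well.
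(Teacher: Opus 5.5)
Your route is essentially the paper's. You use the same generating-function identity and the same reciprocity reduction to $t\le (n-d)/2$. Your $a_k$ are exactly the paper's $h_k(1,\zeta,\ldots,\zeta^{d-2})=\binom{k+d-2}{k}_\zeta$: your $b_k$ is its $\rho_k$ because $\sin(\pi(m+2-i)/n)=\sin(\pi(d-2+i)/n)$, and your $\theta=-\zeta^{d-1+m/2}$ equals its phase $\zeta^{(d-2)/2}$. The monotonicity and Enestr\"om--Kakeya steps then follow the paper as well. Two steps need repair, both easy.

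\textbf{First: the reduction to $\zeta=e^{2\pi i/n}$ is missing.} The identity $1-\zeta^j=-2i\,\zeta^{j/2}\sin(\pi j/n)$ with positive sines only holds for $\zeta=e^{2\pi i/n}$. For another primitive root $e^{2\pi i a/n}$ the relevant sines are $\sin(\pi a j/n)$, which change sign, so positivity and monotonicity of the $b_k$ fail. No choice of square root fixes this. You need the reduction the paper makes in its first line. The expression $h_t(1,X,\ldots,X^{d-2},X^s)$ lies in $\Z[X]$, and the Galois automorphism of $\Q(e^{2\pi i/n})$ sending $e^{2\pi i/n}\mapsto\zeta$ carries the value at $e^{2\pi i/n}$ to the value at $\zeta$, with the same $s$. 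Hence non-vanishing transfers to every primitive root.

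\textbf{Second: the boundary case of Enestr\"om--Kakeya has the sign reversed.} For $(1-w)p(w)=0$ with $|w|=1$ you need
\[
\sum_{j=1}^t(c_{j-1}-c_j)w^j+c_tw^{t+1}=c_0 .
\]
The moduli of these terms sum to $c_0$, so every term must be a positive real. This forces $w^j=1$ for all $j$, hence $w=1$, not $w=-1$. Your condition, that the minus-signed terms are in phase with $c_0$, describes where $|(1-w)p(w)|$ is maximal, namely $2c_0$. It is also inconsistent already at $j=2$, since $w=-1$ gives $w^2=1$.

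So the only zero of $(1-w)p(w)$ on the circle is the spurious one from the factor $1-w$. It is excluded because $p(1)=\sum_j c_j>0$. Your check at $w=-1$ is true but irrelevant. Alternatively, quote the annulus form of Enestr\"om--Kakeya as the paper does: every zero satisfies $|w|\ge\min_j c_j/c_{j+1}>1$, which needs no boundary analysis.
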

\begin{proof}
    It suffices to prove the result for \(\zeta = e^{2\pi i /n}\) as any other primitive \(n\)th root of unity is obtained from this one by a Galois automorphism of \(\Q(\zeta)\) and such automorphisms preserve non-vanishing.
    
    Set $W = (1,\zeta,\ldots, \zeta^{d-2})$ and fix \(s \in \{d-1,\ldots, n-1\}\). Define \[F_s(T) = \sum_{j\geq 0}h_j(W,\zeta^s)T^j.\]

     Using the generating function for complete homogeneous symmetric polynomials and the identity \[\prod_{j=0}^{n-1}(1-\zeta^jT)=1-T^n,\]
     we get \[
     F_s(T)= \frac{1}{(1-\zeta^sT)\displaystyle\prod_{u=0}^{d-2}(1-\zeta^uT)}=
     %\frac{1}{(1-T)(1-\zeta T)\cdots(1-\zeta^{\delta-2}T)(1-\zeta^sT)}=
     \frac{\displaystyle\prod_{\substack{u=d-1\\ u\neq s}}^{n-1}(1-\zeta^uT)}{1-T^n}\]
     The numerator has degree $(n-1)-(d-2) - 1 = n-d < n$. Hence, for every $0\le j\le n-d$, the coefficient of $T^j$ is unaffected by the factor $(1-T^n)^{-1}$, and therefore \[ h_j(W,\zeta^s) = [T^j]\prod_{\substack{u=d-1\\ u\neq s}}^{n-1}(1-\zeta^u T).\]
     Writing \[ \prod_{\substack{u=d-1\\ u\neq s}}^{n-1}(1-\zeta^u T) =\sum_{j=0}^{n-d} c_j(s)T^j,\] it follows that $c_j(s) = h_j(W,\zeta^s)$.
    
     We next record a symmetry of the coefficients $c_j(s)$. Let $\Lambda_s = (\zeta^{d-1},\ldots, \widehat{\zeta^s}, \ldots, \zeta^{n-1})$ be the vector of length $n-d$ comprised of the consecutive roots $\zeta^{d-1},\ldots, \zeta^{n-1}$, but excluding $\zeta^s$. Since \[ \prod_{\lambda\in\Lambda_s}(1-\lambda T) =\sum_{j=0}^{n-d}(-1)^j e_j(\Lambda_s)T^j,\] we have \[ c_j(s)=(-1)^j e_j(\Lambda_s)\] where $e_j$ is the elementary symmetric polynomial.
     Because every $\lambda\in\Lambda_s$ satisfies $|\lambda|=1$, we have $\lambda^{-1}=\overline{\lambda}$, and hence \[ e_{n-d-j}(\Lambda_s) = e_{n-d}(\Lambda_s)\,e_j(\Lambda_s^{-1}) = e_{n-d}(\Lambda_s)\,\overline{e_j(\Lambda_s)}.\] Therefore \[ c_{n-d-j}(s) = (-1)^{n-d-j}e_{n-d-j}(\Lambda_s) = (-1)^{n-d} e_{n-d}(\Lambda_s)\,\overline{c_j(s)}. \] In particular, \[ c_{n-d-j}(s)=0 \iff c_j(s)=0 \] because \[e_{n-d}(\Lambda_s) = \prod_{\lambda \in \Lambda_s}\lambda =  \prod_{\substack{u=d-1\\ u\neq s}}^{n-1} \zeta^u = \zeta^{\sum_{u=d-1, u\neq s}^{n-1} u} \neq 0.\] Since $c_j(s)=h_j(W,\zeta^s)$, it follows that \[
     h_{n-d-j}(W,\zeta^s)=0 \iff h_j(W,\zeta^s)=0. \]
     Thus, it suffices to prove that the polynomial $h_t(W,\zeta^s) \neq 0$ for \[ 0\le t\le \left\lfloor \frac{n-d}{2}\right\rfloor. \]
    
     Fix such a $t$. If \(t = 0\), then $h_0(W,\zeta^s) = 1$ so there is nothing to prove. Hence, assume \(t \geq 1\) and consider the polynomial \[ f_t(Y):=h_t(W,Y)=h_t(1,\zeta,\dots,\zeta^{d-2},Y). \]
     By the defining recurrence for complete homogeneous symmetric polynomials,
     \[ f_t(Y)=\sum_{u=0}^{t} h_{t-u}(W)\,Y^u. \]
     Additionally, by definition \begin{align*}
          h_j(W) = h_j(1,\ldots, \zeta^{d-2}) = \sum_{a_0 + \cdots + a_{d-2} = j} (1)^{a_0}(\zeta)^{a_1} \cdots (\zeta^{d-2})^{a_{d-2}},
      \end{align*} but we can also interpret this combinatorially. We have $d-1$ objects, labeled $0,\ldots, d-2$ and we choose $a_0$ copies of $0$, $a_1$ copies of $1$, \ldots, $a_{d-2}$ copies of $d-2$ so in total we have \[j = a_0 + a_1 + \cdots + a_{d-2}\] objects. Hence, we have $j$ objects each of size at most $d -2$ and this is the standard interpretation of the Gaussian binomial \[\binom{j + d -2}{j}_\zeta.\]
      Now we can simplify the Gaussian binomial by substituting $\zeta = e^{2 \pi i /n}$: \begin{align*}
         \binom{j + d -2}{j}_\zeta 
         &= \prod_{k = 1}^{j} \frac{1-\zeta^{d - 2 + k}}{1-\zeta^k}\\
         &= \prod_{k = 1}^{j} \frac{1-(e^{2 \pi i /n})^{d - 2 + k}}{1-(e^{2 \pi i /n})^k}\\
         &= \zeta^{\frac{j(d-2)}{2}}\prod_{k=1}^j \frac{\sin\left(\frac{\pi(d - 2 +k)}{n}\right)}{\sin\left(\frac{\pi k}{n}\right)}\\
         &= \zeta^{\frac{j(d-2)}{2}}\frac{\prod_{k=d-1}^{d-2+j}\sin\left(\frac{\pi k}{n}\right)}{\prod_{k=1}^j\sin\left(\frac{\pi k}{n}\right)}\\
         &= \zeta^{\frac{j(d-2)}{2}}\frac{\prod_{k=1}^{d-2+j}\sin\left(\frac{\pi k}{n}\right)}{\prod_{k=1}^j\sin\left(\frac{\pi k}{n}\right) \prod_{k=1}^{d-2} \sin\left(\frac{\pi k}{n}\right)}\\
         &= \zeta^{\frac{j(d-2)}{2}}\prod_{k=1}^{d-2} \frac{\sin\left(\frac{\pi(j +k)}{n}\right)}{\sin\left(\frac{\pi k}{n}\right)}
         \end{align*} 
     and get \[h_j(W) = \zeta^{\frac{j(d-2)}{2}} \rho_j\] for \[\rho_j = \prod_{k=1}^{d-2}\frac{\sin\left(\frac{\pi(j +k)}{n}\right)}{\sin\left(\frac{\pi k}{n}\right)}.\] The coefficient $\rho_j > 0$ because $0 \leq j \leq n-d$ and $1 \leq j + k \leq  n-d + d-2 \leq n-2$ so both sine terms in each fraction are positive.

     Substituting this expression into $f_t(Y)$ gives \[ f_t(Y) = \sum_{u=0}^t h_{t-u}(W) \, Y^u =\sum_{u=0}^t \zeta^{\frac{(t-u)(d-2)}{2}} \rho_{t-u}\, Y^u = \zeta^{\frac{t(d-2)}{2}}\rho_t \sum_{u=0}^t \zeta^{\frac{-u(d-2)}{2}}\frac{\rho_{t-u}}{\rho_t}\,Y^u.\] Let \[ P_t(z):=\sum_{u=0}^{t} p_u z^u, \qquad p_u:=\frac{\rho_{t-u}}{\rho_t}\] so \[f_t(Y) =\zeta^{\frac{t(d-2)}{2}}\rho_t P_t(\zeta^{-\frac{d-2}{2}}Y).\] Thus the zeros of $f_t(x)$ are exactly the zeros of $P_t(z)$, rotated by the unit scalar $\zeta^{(d-2)/2}$.
    
     We now show that the coefficients of $P_t$ are positive reals such that \[1= p_0> p_1 > \cdots > p_t > 0.\] Indeed, \[\frac{\rho_{j+1}}{\rho_j} = \frac{\sin\left(\frac{\pi(j +d-1)}{n}\right)}{\sin\left(\frac{\pi (j+1)}{n}\right)}\] by telescoping. If $0\le j\le t-1$, then \[2j + d \leq 2t + d -2\] and since $t\leq \lfloor (n-d)/2\rfloor$, we have \[2t + d -2\leq n -2 < n.\] Therefore, \[ 0<\frac{(2j+d)\pi}{2n}<\frac{\pi}{2}, \] and using \[ \sin A-\sin B = 2\cos\!\left(\frac{A+B}{2}\right)\sin\!\left(\frac{A-B}{2}\right), \] with \[ A=\frac{\pi (j+d-1)}{n},\qquad B=\frac{\pi(j+1)}{n},\] we obtain \[ \sin\!\left(\frac{\pi(j+d-1)}{n}\right) - \sin\!\left(\frac{\pi(j+1)}{n}\right) = 2\cos\!\left(\frac{\pi(2j+d)}{2n}\right) \sin\!\left(\frac{\pi(d-2)}{2n}\right) >0. \] Hence, $\rho_{j+1}>\rho_j$ for $0 \leq j \leq t-1$. It follows that \[\rho_t > \rho_{t-1} > \cdots > \rho_0>0\] and dividing by $\rho_t$, since $p_u = \rho_{t-u}/\rho_t$, we obtain \[1 = p_0 > p_1 > \cdots > p_t> 0.\]
    
     Now apply the Enestr\"om--Kakeya theorem \cite[Theorem 4]{Gardner2014} to $P_t(z)$. Since every $p_u$ is a positive real and $1 = p_0 > p_1 > \cdots > p_t > 0$, every zero $z^*$ of $P_t$ satisfies \[ |z^*| \geq \min_{0 \leq u \leq t-1} \frac{p_u}{p_{u+1}} = \min_{0 \leq u \leq t-1} \frac{\rho_{t-u}}{\rho_{t-(u+1)}} = \min_{1 \leq u \leq t}\frac{\rho_{u}}{\rho_{u-1}} > 1 .\] Thus every zero of $P_t$ lies strictly outside the unit disk. Since \[f_t(Y) =\zeta^{\frac{t(d-2)}{2}}\rho_t P_t(\zeta^{-\frac{d-2}{2}}Y)\] the same is true for $f_t(Y)$: if $z^*$ is a zero of $P_t$, then $\zeta^{(d-2)/2}z^*$ is a zero of $f_t$ and \[|\zeta^{(d-2)/2}z^*| = |\zeta|^{\frac{d-2}{2}} |z^*| = |z^*| > 1.\]
    
     Since $|\zeta^s|=1$, $\zeta^s$ cannot be a zero of $f_t$. Hence \[h_t(W,\zeta^s) = f_t(\zeta^s) \neq 0\] for $0 \leq t \leq \lfloor (n-d)/2\rfloor$ and $d - 1 \leq s \leq n-1$. By the symmetry \[h_{n-d-t}(W,\zeta^s) \neq 0 \iff h_t(W,\zeta^s) \neq 0,\] for $0 \leq t \leq \lfloor (n-d)/2\rfloor$, this extends to all $0\leq t \leq n-d$. Therefore, $h_t(W,\zeta^s) \neq 0$ for $0 \leq t \leq n-d$ and $d - 1 \leq s \leq n-1$. Since \[h_{t}(1,\zeta, \ldots, \zeta^{d-2}, \zeta^s) = h_t(W, \zeta^s)\] we conclude $h_{t}(1,\zeta, \ldots, \zeta^{d-2}, \zeta^s)\neq 0$, as desired.
\end{proof}

\begin{remark}\label{rem:homogeneous_delta=2}
    \Cref{lem:homogenous_poly_roots_of_unity} is not true if $d = 2$. Let \(d = 2\) and suppose $n = 6$ and $t = 2$. Then, \[h_2(1,\zeta^s) = 1 + \zeta^s + \zeta^{2s} = \frac{\zeta^{3s}-1}{\zeta^s - 1}\] and for $s = 2$, this quantity is zero.

    The proof particularly breaks when we define $\rho_j$. For $d = 2$, all $\rho_j = 1$ so all $p_u = 1$ and $P_t(z) = 1 + z + \cdots + z^t$. Since, all the coefficients of $P_t$ are equal to $1$, every root of $P_t$ has modulus $1$ by Enestr\"om-Kakeya. In our setting, all the roots of $P_t$ are the nontrivial $(t+1)$th roots of unity. Therefore, if $s$ is chosen such that $n\mid s(t+1)$, $\zeta^s$ is a root.

    In the case of $d = 2$, if we strengthen the hypothesis of \Cref{lem:homogenous_poly_roots_of_unity} to choosing \(t\) such that \(\gcd(t+1,n) = 1\), then the claim will follow.
\end{remark}
    
\section{\texorpdfstring{$h$-level Quantum Tamo--Barg codes}{h-level Quantum Tamo--Barg codes}}
\label{sec:r_delta_hierarchical_qTB}

We now extend the $(r,\delta)$ quantum Tamo--Barg code to an $h$-level quantum hierarchical LRC. Let $r_l,\delta_l \in \mathbb{Z}^+$ for $l = 1, \ldots, h$. Consider the following sets 
\begin{align*}
    S_{l,+} &= \bigcup_{j_l=1}^{\delta_l-1} (j_l+(r_l+\delta_l-1)\Z)\\
    S_{l,-} &= \bigcup_{j_l=1}^{\delta_l-1} (-j_l+(r_l+\delta_l-1)\Z)\end{align*}
and define \begin{align*}
    S_{+} = \bigcup_{l=1}^h S_{l,+} \quad\text{and} \quad
    S_{-} = \bigcup_{l=1}^h S_{l,-}.
\end{align*} 
\begin{definition}[$h$-level quantum Tamo--Barg code]\label{def:h_level_qtb} Given a prime $p$ and $m\in \mathbb{Z}^+$, let $q = p^m$. Given locality and distance parameters $(r_l,\delta_l)_{l=1,\ldots,h}$ such that \[r_1 \geq \cdots \geq r_h \geq \delta_1 \geq \cdots \geq \delta_h \geq 2\] and $n_h\mid n_{h-1}\mid \cdots \mid n_1\mid (q-1)$ where $n_l := r_l + \delta_l-1$, and an integer $q/2\leq \ell \leq q-1$, the $h$-level quantum Tamo--Barg code is defined to be the CSS code $\mathcal{Q} = \mathrm{CSS}(C,C)$ with $C = \ev(\mathbb{F}_q[X]^S)$ for \begin{align}S &= ([\ell] \cap ([q-1]\setminus S_{-})) \cup ([q-1] \cap S_{+}).\end{align}
\end{definition}
An immediate consequence of the definition of the sets $S_{l,+}$ and $S_{l,-}$ and \Cref{def:h_level_qtb} is the disjointness of the sets.
\begin{lemma}\label{lem:disjointness}
    Suppose we are given locality and distance parameters $(r_l,\delta_l)_{l=1,\ldots,h}$ such that \[r_1 \geq \cdots \geq r_h \geq \delta_1 \geq \cdots \geq \delta_h \geq 2\] and $r_{l+1} + \delta_{l+1} -1 \mid r_l + \delta_l - 1$ for $l = 1,\ldots, h-1$ and $r_1+\delta_1-1\mid q-1$. Then for $1 \leq u \leq v \leq h$, $S_{u,+} \cap S_{v, -} = \emptyset = S_{u,-} \cap S_{v,+}$.
\end{lemma}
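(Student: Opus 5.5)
The plan is to reduce everything modulo the smaller modulus $n_v$ and then use a size comparison. Throughout write $n_l := r_l+\delta_l-1$. The divisibility chain $n_h\mid n_{h-1}\mid\cdots\mid n_1$ in the hypotheses gives $n_v\mid n_u$ whenever $u\le v$.

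\textbf{First intersection.} Suppose for contradiction that $x\in S_{u,+}\cap S_{v,-}$.
\begin{itemize}
\item By the definition of $S_{u,+}$ there is $j\in\{1,\ldots,\delta_u-1\}$ with $x\equiv j \pmod{n_u}$.
\item By the definition of $S_{v,-}$ there is $k\in\{1,\ldots,\delta_v-1\}$ with $x\equiv -k\pmod{n_v}$.
\end{itemize}
Since $n_v\mid n_u$, the first congruence descends to $x\equiv j\pmod{n_v}$. Combining the two congruences gives $n_v\mid j+k$.

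\textbf{Size comparison.} The integer $j+k$ satisfies
\[
2\le j+k\le (\delta_u-1)+(\delta_v-1).
\]
We claim the upper bound is strictly less than $n_v=r_v+\delta_v-1$. This is equivalent to $\delta_u-1<r_v$, i.e.\ $\delta_u\le r_v$. That inequality holds because the parameter chain gives
\[
r_v\ge r_h\ge \delta_1\ge \delta_u .
\]
So $0<j+k<n_v$, which means $n_v\nmid j+k$, a contradiction. Hence $S_{u,+}\cap S_{v,-}=\emptyset$.

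\textbf{Second intersection.} The same argument, with signs interchanged, handles $S_{u,-}\cap S_{v,+}$. Now $x\equiv -j\pmod{n_u}$, which descends to $x\equiv -j\pmod{n_v}$, and $x\equiv k\pmod{n_v}$. This again forces $n_v\mid j+k$, with the same bounds on $j+k$ as before.

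\textbf{Where the hypotheses enter.} The only delicate point is the direction of the reduction: we must reduce the level-$u$ congruence to the smaller modulus $n_v$, which is exactly where $u\le v$ and the divisibility chain are used. The inequality $r_h\ge\delta_1$ is what makes the residue windows fit strictly inside one period of length $n_v$. This recovers the one-level disjointness $S_+\cap S_-=\emptyset$ from the proof of \Cref{lem:qtb.construction} as the case $u=v$.
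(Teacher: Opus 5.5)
Your proof is correct and follows essentially the paper's argument. Reduce the level-$u$ congruence modulo $n_v$ using $n_v\mid n_u$, obtain $n_v\mid j+k$, and derive a contradiction from $2\le j+k\le \delta_u+\delta_v-2<n_v$, which holds because $\delta_u\le r_v$. The only difference is that you absorb the case $u=v$ into the same computation, whereas the paper treats it separately by a residue-count remark modulo $n_u$; your uniform treatment is slightly cleaner.
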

\begin{proof}
    We break the proof into cases.
    \begin{enumerate}
        \item Let $n_l = r_l + \delta_1-1$. Suppose $u= v$. $|S_{u,+}\bmod n_u| = |S_{u,-}\bmod n_u| = \delta_u - 1$. Since \[2(\delta_u - 1) < r_u + \delta_u - 1 \iff \delta_u - 1 < r_u\] is satisfied by hypothesis, $S_{u,+} \cap S_{u,-} = \emptyset$.
        \item Suppose $u < v$. By hypothesis, $r_u \geq r_v \geq \delta_u \geq \delta_v$ and $r_v + \delta_v - 1 \mid r_u + \delta_u - 1$. Fix $s \in S_{v,-}$ so $s \equiv -j_v \mod(r_v + \delta_v - 1)$. For the sake of contradiction, suppose $s \in S_{u, +}$. Then $s \equiv j_u \mod(r_u + \delta_u -1)$. Since $r_v + \delta_v - 1\mid r_u + \delta_u - 1$, $ s \equiv j_u \mod r_v + \delta_v - 1$. Hence, $r_v + \delta_v - 1\mid j_u+j_v$ and $j_u + j_v \geq 2$ so $j_u + j_v$ is a nonzero multiple of $r_v + \delta_v - 1$. However, $j_u + j_v \leq \delta_u + \delta_v - 2$ and \[\delta_u + \delta_v - 2 < r_v + \delta_v -1 \iff \delta_u - 1 < r_v \iff \delta_u \leq r_v\] which is a contradiction.

        Now fix $s \in S_{v,+}$ and suppose for the sake of contradiction that $s \in S_{u,-}$. By similar reasoning, $s \equiv j_v \mod (r_v + \delta_v - 1)$ and $s \equiv -j_u \mod (r_v + \delta_v - 1)$. Again $r_v + \delta_v - 1\mid j_u + j_v$ so we have a contradiction.
    \end{enumerate}
    Thus, $1 \leq u \leq v \leq h$, $S_{u,+} \cap S_{v, -} = \emptyset = S_{u,-} \cap S_{v,+}$.
\end{proof}
We now present the hierarchical analogue of \Cref{lem:qtb.construction}. For \(l=1,\ldots,h\), define
\[
B_l:=\ev(\F_q[X]^{[q-1]\setminus S_{l,-}}),
\qquad
B_l^\perp:=\ev(\F_q[X]^{[q-1]\cap S_{l,+}}).
\]
Then
\[
B=\bigcap_{l=1}^h B_l
=
\ev(\F_q[X]^{[q-1]\setminus S_-}).
\]
Moreover, since the dual of an intersection is the sum of the duals,
\[
B^\perp
=
\left(\bigcap_{l=1}^h B_l\right)^\perp
=
\sum_{l=1}^h B_l^\perp.
\]
Equivalently, because the spaces \(B_l^\perp\) are monomial evaluation spaces,
\[
B^\perp
=
\sum_{l=1}^h \ev(\F_q[X]^{[q-1]\cap S_{l,+}})
=
\ev(\F_q[X]^{[q-1]\cap S_+}).
\]
\begin{lemma}\label{lem:hqtb.construction}
For $q, (r_l,\delta_l)_{l=1,\ldots, h}$, and $\ell$ as defined in \Cref{def:h_level_qtb} let 
\begin{equation}
    A = \ev(\F_q[X]^{[\ell]}),
\end{equation}
\begin{equation}
    B = \ev(\F_q[X]^{[q-1]\setminus S_{-}}).
\end{equation}
Then $A \cap B$ is an $h$-level TB code. Furthermore, 
\begin{equation}
    A^\perp = \ev(\F_q[X]^{[q-\ell]\setminus \{0\}}) \text { and }
\end{equation}
\begin{equation}\label{eqn:bperp_hqtb}
    B^\perp = \ev(\F_q[X]^{[q-1]\cap S_{+}}) \subseteq B.
\end{equation}
If $\ell \geq q/2$, then $A^\perp \subseteq A$. Letting $C = (A \cap B) + B^\perp$, we obtain that $\mathrm{CSS}(C,C)$ is an $h$-level QTB code with 
\begin{equation}
    C^\perp = (A^\perp \cap B) + B^\perp = \ev(\F_q[X]^T) \subseteq C
\end{equation}   
where
\begin{equation}
    T = \left(([q-\ell]\setminus \{0\}) \cap ([q-1]\setminus S_{-})\right) \cup ([q-1]\cap S_{+}).
\end{equation}
\end{lemma}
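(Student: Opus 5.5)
The proof follows that of \Cref{lem:qtb.construction} step by step. The only new ingredient is \Cref{lem:disjointness}, which replaces the single-level fact $S_+\cap S_-=\emptyset$.

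\textbf{Duals of $A$ and $B$.} The identity $A^\perp=\ev(\F_q[X]^{[q-\ell]\setminus\{0\}})$ is proved verbatim as before. Inclusion holds since $\sum_{\alpha\in\F_q^*}\alpha^{i+s}=0$ whenever $i+s\not\equiv 0 \pmod{q-1}$. Equality then follows by dimension count.

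For $B^\perp$ we again use a monomial pairing argument. Fix $i\in[q-1]\setminus S_-$ and $s\in[q-1]\cap S_+$. If $i+s\equiv 0\pmod{q-1}$, then $i\equiv -s \pmod{q-1}$. Since every $n_l$ divides $q-1$, this congruence also holds modulo each $n_l$. Because $s\in S_{l,+}$ for some $l$, it follows that $i\in S_{l,-}\subseteq S_-$, a contradiction. So $\ev(X^i)\cdot\ev(X^s)=0$ for all such pairs, and $\ev(\F_q[X]^{[q-1]\cap S_+})\subseteq B^\perp$.

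\textbf{Dimension count for $B^\perp$.} Negation modulo $q-1$ is a bijection of $[q-1]$ sending $S_+\cap[q-1]$ onto $S_-\cap[q-1]$. Both sets are periodic modulo each $n_l$, hence modulo $q-1$. Therefore $|[q-1]\cap S_+|=|[q-1]\cap S_-|=q-1-\dim B$, which is exactly $\dim B^\perp$, giving equality.

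\textbf{The inclusion $B^\perp\subseteq B$.} This is equivalent to $S_+\cap S_-=\emptyset$. We need $S_{u,+}\cap S_{v,-}=\emptyset$ for all pairs $u,v$. \Cref{lem:disjointness} gives this for $u\le v$ (both orientations). The case $u>v$ is covered by the second orientation of that lemma with the roles of $u$ and $v$ swapped.

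\textbf{$A^\perp\subseteq A$ and the CSS code.} When $\ell\ge q/2$ we have $q-\ell\le \ell$, so $A^\perp\subseteq A$. Now apply \Cref{lem:css_from_pair_of_css} with $\mathcal{A}=\CSS(A,A)$ and $\mathcal{B}=\CSS(B,B)$. This yields $C=(A\cap B)+B^\perp$ and $C^\perp=(A^\perp\cap B)+B^\perp$.

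Each of $A$, $B$, $A^\perp$, $B^\perp$ is a monomial evaluation space. Intersections and sums of such spaces correspond to intersections and unions of exponent sets. Hence $C^\perp=\ev(\F_q[X]^T)$ with $T$ as stated, and $C=\ev(\F_q[X]^S)$ with $S=([\ell]\setminus S_-)\cup([q-1]\cap S_+)$, matching \Cref{def:h_level_qtb}.

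The inclusion $C^\perp\subseteq C$ follows from $T\subseteq S$. This holds because $[q-\ell]\setminus\{0\}\subseteq[\ell]$.

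\textbf{Identifying $A\cap B$.} The claim that $A\cap B=\ev(\F_q[X]^{[\ell]\setminus S_-})$ is an $h$-level Tamo--Barg code is by definition: it is the hierarchical exponent set of \cite{ballentine.barg.vladut.hlrc}. Its locality is verified separately later, in the same way as \Cref{lem:low.wt.parity.checks}.

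\textbf{Main obstacle.} The only delicate point is the disjointness $S_+\cap S_-=\emptyset$ across different levels. This needs the full chain of inequalities $r_h\ge\delta_1$ together with the divisibility $n_{l+1}\mid n_l$, and both are exactly what \Cref{lem:disjointness} packages. Everything else is bookkeeping of exponent sets.
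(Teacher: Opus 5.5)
Your proposal is correct and follows the paper's proof. The only new ingredient is $S_+\cap S_-=\emptyset$ from \Cref{lem:disjointness}, and you rightly note that the case $u>v$ follows from the second orientation of that lemma; everything else reduces to the one-level \Cref{lem:qtb.construction} together with \Cref{lem:css_from_pair_of_css}. The one cosmetic difference is that you compute $B^\perp$ directly, via monomial pairing and a negation-bijection count, whereas the paper obtains it as $\bigl(\bigcap_l B_l\bigr)^\perp=\sum_l B_l^\perp$ from the single-level duals.
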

\begin{proof}

It suffices to show \(B^\perp\subseteq B\) because the remaining claims follow from \Cref{lem:qtb.construction}.  By
\Cref{lem:disjointness}, we have
\(
S_{u,+}\cap S_{v,-}=\emptyset
\)
for all \(1\le u,v\le h\).  Hence
\(
S_+\cap S_-=\emptyset.
\)
Thus every exponent in \([q-1]\cap S_+\) lies in \([q-1]\setminus S_-\), and so
\[
B^\perp
=
\ev(\F_q[X]^{[q-1]\cap S_+})
\subseteq
\ev(\F_q[X]^{[q-1]\setminus S_-})
=
B.
\]
\end{proof}

We now compute the dimension of the $h$-level quantum Tamo--Barg code defined in \Cref{def:h_level_qtb}.
\begin{lemma}\label{lem:dim_qhTB}
The $h$-level quantum Tamo--Barg code $\mathcal{Q} = \mathrm{CSS}(C,C)$ with parameters $q$, $\ell$, and $(r_l,\delta_l)_{l=1,\ldots,h}$ as defined in \Cref{def:h_level_qtb} has dimension
\begin{align*}k &= 1 + |\{q - \ell \leq i \leq \ell - 1 : i \not\in (S_{+} \cup S_{-})\}|\\
&= 1 + (2\ell - q)\left(1-2\sum_{l=1}^{h-1}\frac{\delta_l-\delta_{l+1}}{r_l+\delta_l-1} - 2\frac{\delta_h-1}{r_h+\delta_h-1}\right) + \epsilon\end{align*} for some \[\epsilon \in \left[-2(\delta_1-1),2(\delta_1-1)\right].\]
\end{lemma}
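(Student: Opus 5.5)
The plan is to mirror the proof of \Cref{lem:dim.r.delta.TB}, using \Cref{lem:hqtb.construction} in place of \Cref{lem:qtb.construction}. By \Cref{lem:hqtb.construction}, $C=\ev(\F_q[X]^S)$ and $C^\perp=\ev(\F_q[X]^T)$ with $T\subseteq S$. Since evaluation on $\F_q^*$ is injective on exponent sets inside $[q-1]$, we get $\dim\mathcal Q=\dim C-\dim C^\perp=|S\setminus T|$. Now compute $S\setminus T$ directly. The part $[q-1]\cap S_+$ lies in both sets. An exponent $i\in[\ell]\setminus S_-$ with $i\notin S_+$ belongs to $T$ exactly when $1\le i\le q-\ell-1$. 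Hence
\[
S\setminus T=\{0\}\cup\{q-\ell\le i\le \ell-1: i\notin S_+\cup S_-\},
\]
and here $0\notin S_+\cup S_-$ because all residues $\pm j_l$ are nonzero modulo $n_l$. This gives the first equality.

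For the asymptotic count, note that the condition $i\in S_+\cup S_-$ depends only on $i \bmod n_1$, since every $n_l$ divides $n_1$. So I will count, in one period $\Z/n_1\Z$, the residues lying in $S_+\cup S_-$. By \Cref{lem:disjointness}, $S_+\cap S_-=\emptyset$, and by the symmetry $i\mapsto -i$ it suffices to compute $|S_+ \bmod n_1|$ and double it.

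Write $S_+$ as a disjoint union of the ``new'' residues contributed at each level. The key observation is $S_{l+1,+}\subseteq S_{l,+}$ fails in general, but we have the reverse nesting $S_{l,+}\supseteq$? Instead, I would argue the following. Modulo $n_l$, the set $S_{l+1,+}$ consists of the residues $j+kn_{l+1}$ with $1\le j\le\delta_{l+1}-1$. Among these, the ones with $k=0$ are $1,\dots,\delta_{l+1}-1$, which already lie in $S_{l,+}$. The remaining $k$ produce residues $j+kn_{l+1}$ with $kn_{l+1}\ge n_{l+1}>\delta_l-1$ and $j+kn_{l+1}\le n_l-n_{l+1}+\delta_{l+1}-1<n_l$, so they avoid $\{1,\dots,\delta_l-1\}$. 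Consequently the density of $S_+$ is
\[
\frac{\delta_h-1}{n_h}+\sum_{l=1}^{h-1}\frac{\delta_l-\delta_{l+1}}{n_l},
\]
which follows by inclusion–exclusion down the chain, or by induction on $h$ using $S_{\ge l,+}=S_{l,+}\cup S_{\ge l+1,+}$ with overlap exactly the $\delta_{l+1}-1$ residues per $n_l$-block. Over a window of length $2\ell-q$, the excluded count is then $2(2\ell-q)$ times this density, up to boundary error.

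The error term comes from the fact that $2\ell-q$ need not be a multiple of $n_1$. A partial period can change the count by at most the number of excluded residues clustered near the window endpoints. Since the excluded residues lie in runs of length at most $\delta_1-1$ around multiples of $n_l$, I expect the deviation to be bounded by $2(\delta_1-1)$. The main obstacle is making this boundary estimate rigorous in the hierarchical case, because a partial block of length less than $n_1$ can contain several lower-level runs. I will try to control it by aligning the window $[q-\ell,\ell-1]$, which is symmetric about $q/2$, with the symmetry $i\mapsto q-1-i$ that swaps $S_+$ and $S_-$. If the sharp constant proves hard to obtain, I will settle for an $O(\delta_1)$ bound as stated.
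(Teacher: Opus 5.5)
Your computation of $S\setminus T=\{0\}\cup\{q-\ell\le i\le\ell-1: i\notin S_+\cup S_-\}$ is correct and follows the paper's route, and so is your density $\frac{\delta_h-1}{n_h}+\sum_{l=1}^{h-1}\frac{\delta_l-\delta_{l+1}}{n_l}$ for $S_+$. You should delete the unfinished sentence about nesting. Also note that the overlap needed in your induction is $S_{l,+}\cap\bigcup_{u>l}S_{u,+}$, not just $S_{l,+}\cap S_{l+1,+}$. The same observation handles it: every $1\le x\le\delta_l-1$ satisfies $x<n_u$ because $n_u\ge r_u\ge\delta_1$, so the overlap is exactly $\{1,\dots,\delta_{l+1}-1\}+n_l\Z$.

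The genuine gap is the error term. The lemma asserts $|\epsilon|\le 2(\delta_1-1)$, but you only conjecture this. Your fallback to an unspecified $O(\delta_1)$ bound would not prove the statement as written.

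The obstacle you describe, a partial period of length $<n_1$ containing many lower-level runs, disappears if you stop reducing modulo $n_1$ and give each level its own period. Your density argument already yields the disjoint decomposition
\[
S_+\cup S_-=\bigsqcup_{l=1}^{h}N_l,\qquad N_l:=\{\pm j:\ \delta_{l+1}\le j\le\delta_l-1\}+n_l\Z,\qquad \delta_{h+1}:=1.
\]
Disjointness follows from \Cref{lem:disjointness} together with the observation above. Each $N_l$ has period $n_l$ and exactly $2\Delta_l$ residues per period, where $\Delta_l:=\delta_l-\delta_{l+1}$.

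Now write $L=2\ell-q=a_ln_l+b_l$ with $0\le b_l<n_l$. The window $[q-\ell,\ell-1]$ contains $2\Delta_la_l+c_l$ points of $N_l$ with $0\le c_l\le 2\Delta_l$. Meanwhile $2\Delta_lL/n_l=2\Delta_la_l+2\Delta_lb_l/n_l$ with $0\le 2\Delta_lb_l/n_l<2\Delta_l$. Therefore
\[
\epsilon=\sum_{l=1}^{h}\Bigl(\frac{2\Delta_l L}{n_l}-\bigl|[q-\ell,\ell-1]\cap N_l\bigr|\Bigr),
\]
where each summand lies in $[-2\Delta_l,2\Delta_l)$. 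Telescoping gives $|\epsilon|\le 2\sum_{l=1}^h\Delta_l=2(\delta_1-1)$.

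This per-level accounting is what the paper's one-line claim, a deviation of at most $2\sum_{l=1}^{h-1}(\delta_l-\delta_{l+1})+2(\delta_h-1)$, implicitly relies on. The symmetry $i\mapsto q-1-i$ you planned to use is not needed.
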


\begin{proof}
We use a similar counting argument as in the proof of \Cref{lem:dim.r.delta.TB}. Recall $S \subseteq [q-1]$ as defined in \Cref{def:h_level_qtb}, $T \subseteq S$ as defined in \Cref{lem:hqtb.construction} such that $C = \ev(\F_q[X]^{S})$ and $C^{\perp} = \ev(\F_q[X]^{T})$. Then, dimension of $\mathcal{Q}$ can be computed as follows:
\begin{align*}
    \dim(\mathcal{Q}) &= \dim(C) - \dim(C^{\perp}) \\
    &= |S \setminus T| \\
    &= \left| \{0\} \cup \{q-\ell \leq i \leq \ell-1 : i \not\in (S_{+} \cup S_{-})\}\right|.
\end{align*} This proves the first equality. For the second equality, we inspect \[\{q-\ell \leq i \leq \ell-1 : i \not\in (S_{+} \cup S_{-})\}.\] If $(r_1+\delta_1-1) \mid (2\ell-q)$ then this set has size exactly \[(2\ell - q)\left(1-2\sum_{l=1}^{h-1}\frac{\delta_l-\delta_{l+1}}{r_l+\delta_l-1} - 2\frac{\delta_h-1}{r_h+\delta_h-1}\right)\] because we exclude $2(\delta_h-1)$ elements in every block of size $(r_h+\delta_h-1)$ and as we move up the layers, we exclude another $2(\delta_l-\delta_{l+1})$ elements in every block of size $(r_l+\delta_l-1)$ for $l = 1,\ldots,h-1$. Note, this exclusion will leave a nontrivial number of elements precisely because \[2(\delta_h-1) < r_h + \delta_h -1 \iff \delta_h - 1< r_h \iff \delta_h \leq r_h\] and \begin{align*}
    2(\delta_l-\delta_{l+1}) < r_l + \delta_l -1 \iff (\delta_l-1) - 2(\delta_{l+1}-1) <r_l \iff \delta_l \leq r_l + 2(\delta_{l+1}-1)
\end{align*} which is satisfied by the assumption on $(r_l,\delta_l)_{l=1,\ldots,h}$. If $(r_1+\delta_1-1) \nmid (2\ell-q)$, we differ from \[(2\ell - q)\left(1-2\sum_{l=1}^{h-1}\frac{\delta_l-\delta_{l+1}}{r_l+\delta_l-1} - 2\frac{\delta_h-1}{r_h+\delta_h-1}\right)\] by at most \[2\sum_{l=1}^{h-1}(\delta_l-\delta_{l+1}) + 2(\delta_h-1) = 2(\delta_1-1)\] in absolute value and the second equality follows.
\end{proof}

We give a small example to illustrate the sizes of $[q-1]\cap S_{+}$ and $[q-1] \cap S_{-}$ which appear in the proof of \Cref{lem:dim_qhTB}.
\begin{example} Let $q = 49$ and let $(21,4), (10,3), (5,2)$ be the locality and distance parameters. It is easy to see that \[r_1\geq r_2\geq r_3 \geq \delta_1\geq \delta_2\geq \delta_3 \geq 2\] and \begin{align*}
    r_3+\delta_3-1 &= 6\\
    r_2 + \delta_2-1 &= 12\\
    r_1+\delta_1-1 &= 24
\end{align*} so the divisibility condition follows. Now we compute $[q-1]\cap S_{+}$ and $[q-1]\cap S_{-}$. The constituent sets are \begin{align*}
    [q-1] \cap S_{1,-} &= \bigcup_{j_1 =1}^3 (-j_1 + 24 \Z) = \{21,\underline{22,\boxed{23}},45,\underline{46,\boxed{47}}\}\\
    [q-1] \cap S_{1,+} &= \bigcup_{j_1 =1}^3 (j_1 + 24 \Z)= \{\underline{\boxed{1},2},3,\underline{\boxed{25},26},27\}\\
    [q-1] \cap S_{2,-} &= \bigcup_{j_2 =1}^2 (-j_2 + 12 \Z) = \{10,\boxed{11},\underline{22,\boxed{23}},34,\boxed{35},\underline{46,\boxed{47}}\}\\
    [q-1] \cap S_{2,+} &= \bigcup_{j_2 =1}^2 (j_2 + 12 \Z)= \{\underline{\boxed{1},2},\boxed{13},14,\underline{\boxed{25},26},\boxed{37},38\}\\
    [q-1] \cap S_{3,-} &= \bigcup_{j_3 =1}^1 (-j_3 + 6 \Z) = \{5,\boxed{11},17,\boxed{23},29,\boxed{35},41,\boxed{47}\}\\
    [q-1] \cap S_{3,+} &= \bigcup_{j_3 =1}^1 (j_3 + 6 \Z)= \{\boxed{1},7,\boxed{13},19,\boxed{25},31,\boxed{37},43\}
\end{align*} so \begin{align*}
    |[q-1]\cap S_{+}| = \left\vert\bigcup_{l=1}^3 [q-1]\cap S_{l,+}\right\vert = 2 + 4 + 8 = \frac{48}{24}(4-3) + \frac{48}{12}(3-2) + \frac{48}{6}(2-1)
\end{align*} and an analogous statement holds for $[q-1] \cap S_{-}$. Thus, \[|[q-1] \cap (S_- \cup S_+)| = 2(q-1)\left(\frac{\delta_1-\delta_2}{r_1+\delta_1-1} + \frac{\delta_2-\delta_3}{r_2+\delta_2-1} + \frac{\delta_3-1}{r_3+\delta_3-1}\right).\] This example shows why in level $l$, we only contribute an extra $(\delta_l-\delta_{l+1})$ elements per length $r_l + \delta_l-1$ blocks as $\delta_{l+1}-1$ of the $\delta_l-1$ elements have already been accounted for from levels $l+1,\ldots, h$.
\end{example}

We now exhibit the hierarchical locality of the code $\mathcal{Q} = \mathrm{CSS}(C,C)$ as defined in \Cref{def:h_level_qtb} by stating and proving statements similar to \Cref{lem:low.wt.parity.checks} and \Cref{cor:locality_qtb}.
\begin{lemma}\label{lem:low.wt.parity.checks.hqtb}
    Let
    \(
    B^\perp=\ev(\F_q[X]^{[q-1]\cap S_+})
    \)
    be as in \Cref{lem:hqtb.construction}.  If \(f\in B^\perp\), then $f$ can be written as
    \[
    f=f_1+\cdots+f_h,
    \qquad
    f_l\in B_l^\perp.
    \]
    For each level \(l\), each \(f_l\in B_l^\perp\) has the following local
    description: for every coset \(\alpha\Omega_{n_l}\), there exists a polynomial
    \(P_{\alpha,l}(X)\in\F_q[X]\) with
    \[
    \deg P_{\alpha,l}\le \delta_l-1,
    \qquad
    P_{\alpha,l}(0)=0,
    \]
    such that
    \[
    f_l(\alpha\omega)=P_{\alpha,l}(\omega)
    \qquad
    \text{for all }\omega\in\Omega_{n_l}.
    \] The converse is also true.
\end{lemma}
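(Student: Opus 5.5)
The plan is to reduce to the one-level argument of \Cref{lem:low.wt.parity.checks}, applied separately at each level $l$ with $n_l$ in place of $r+\delta-1$.

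\textbf{Decomposition.} Write $f=\sum_{i\in[q-1]\cap S_+} f_iX^i$. Since $S_+=\bigcup_l S_{l,+}$, assign each exponent $i$ to one level $l$ with $i\in S_{l,+}$; for definiteness, the smallest such $l$. Let $f_l$ collect the monomials assigned to level $l$. Then $f=f_1+\cdots+f_h$, and each $f_l$ lies in $\F_q[X]^{[q-1]\cap S_{l,+}}$. Its evaluation is therefore in $B_l^\perp=\ev(\F_q[X]^{[q-1]\cap S_{l,+}})$, where the identification of $B_l^\perp$ is exactly \labelcref{eq.b.perp} with $(r,\delta)$ replaced by $(r_l,\delta_l)$. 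This step is pure bookkeeping; the decomposition need not be unique, and we do not need uniqueness.

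\textbf{Local description of a fixed level.} Fix $l$. Group the exponents of $f_l$ by residue modulo $n_l$ to write
\[
f_l(X)=\sum_{j=1}^{\delta_l-1}X^jF_{l,j}(X^{n_l}).
\]
On a coset $\alpha\Omega_{n_l}$ we have $\omega^{n_l}=1$, so, as in the one-level proof,
\[
f_l(\alpha\omega)=\sum_{j=1}^{\delta_l-1}\alpha^jF_{l,j}(\alpha^{n_l})\,\omega^j=:P_{\alpha,l}(\omega),
\]
which has degree at most $\delta_l-1$ and vanishing constant term.

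\textbf{Converse.} Suppose $f=\sum_l f_l$, where each $f_l$ is a function that restricts on every coset $\alpha\Omega_{n_l}$ to such a polynomial. Apply the dimension count of \Cref{lem:low.wt.parity.checks} at level $l$. The space of these piecewise functions has dimension $\frac{q-1}{n_l}(\delta_l-1)=|[q-1]\cap S_{l,+}|=\dim B_l^\perp$, and the forward direction shows $B_l^\perp$ is contained in it, so the two spaces are equal. Hence $f_l\in B_l^\perp$, and therefore $f\in\sum_l B_l^\perp=B^\perp$. The identity $\sum_l B_l^\perp=B^\perp$ was recorded just before \Cref{lem:hqtb.construction}.

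\textbf{Main obstacle.} The only delicate point is the converse: a function written as a sum of level-wise piecewise polynomials must be matched with the monomial space. This is handled by the per-level equality of spaces rather than by any disjointness of the $S_{l,+}$, which overlap across levels. So I expect no real difficulty beyond carefully reapplying the one-level dimension count at each level.
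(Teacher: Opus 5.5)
Your proposal is correct and follows essentially the same route as the paper: you decompose via $B^\perp=\sum_l B_l^\perp$ (done explicitly at the monomial level) and then apply the one-level argument of \Cref{lem:low.wt.parity.checks} at each level $(r_l,\delta_l)$. Your level-by-level handling of the converse, where each level-$l$ piecewise space equals $B_l^\perp$ by dimension and hence the sum lies in $B^\perp$, spells out what the paper's one-line appeal to dimension counting leaves implicit: the count must be done per level, because the $S_{l,+}$ overlap across levels.
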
 It follows immediately that each level $l$ contributes $\delta_l-1$ independent local parity checks per level-$l$ coset. Additionally, this lemma reduces to \Cref{lem:low.wt.parity.checks} when $h=1$.
\begin{proof}
By dimension counting, it suffices to show the forward direction.

$(\implies)$ The equality
\[
B^\perp=\sum_{l=1}^h B_l^\perp
\]
was shown in \Cref{lem:hqtb.construction}.  Hence every \(f\in B^\perp\) can be
written as \(f=f_1+\cdots+f_h\) with \(f_l\in B_l^\perp\).

The final statement is exactly the one-level description of \(B_l^\perp\) from
\Cref{lem:low.wt.parity.checks}, applied with parameters
\((r_l,\delta_l)\) and \(n_l=r_l+\delta_l-1\).
\end{proof}

\begin{corollary}\label{cor:locality_hqtb}
The $h$-level $(r_l,\delta_l)_{l=1,\ldots,h}$ quantum Tamo--Barg code in \Cref{def:h_level_qtb} is an $h$-level $((r_1,\delta_1),\ldots,(r_h,\delta_h))$-QHLRC.
\end{corollary}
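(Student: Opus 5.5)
The plan is to reduce the quantum statement to a classical one and then verify the classical hierarchical locality level by level using the local parity checks of \Cref{lem:low.wt.parity.checks.hqtb}. The natural route is induction on the level, using \Cref{thm:css_qlrc_iff_clrc} at each step together with \Cref{lem:puncturing_css} to pass to punctured codes.

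\textbf{Step 1: level-$l$ checks.} Fix a level $l$ and a coset $A=\alpha\Omega_{n_l}$. By \Cref{lem:low.wt.parity.checks.hqtb} (equivalently \Cref{lem:low.wt.parity.checks} applied to $B_l^\perp$), $B_l^\perp\subseteq B^\perp\subseteq C^\perp$ contains the $\delta_l-1$ functions $x\mapsto x^j\mathds{1}_A(x)$ for $j=1,\dots,\delta_l-1$. Restricted to $A$, these form a $(\delta_l-1)\times n_l$ Vandermonde-type matrix with distinct nonzero nodes. Every $\delta_l-1$ of its columns are independent, so $d(C|_A)\ge\delta_l$ and $|A|=r_l+\delta_l-1$.

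\textbf{Step 2: nesting.} Since $n_h\mid\cdots\mid n_1\mid q-1$, we have $\Omega_{n_{l+1}}\le\Omega_{n_l}$. Hence each level-$(l+1)$ coset lies inside a unique level-$l$ coset, and the cosets give nested partitions of $\F_q^*$. Take $J_1(i)$ to be the level-$1$ coset containing $i$, and inside it the level-$2$ coset, and so on.

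\textbf{Step 3: transfer to the quantum side.} For the top level, \Cref{cor:locality_qtb}'s argument applies verbatim: $d(C^\perp)\ge d(C)\ge d(C|_{J_1})\ge\delta_1$. So \Cref{thm:css_qlrc_iff_clrc} makes $\mathcal Q$ an $(r_1,\delta_1)$-QLRC.

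For the recursive condition, consider $\mathcal Q|_{J_1}=\CSS(C|_{J_1},C|_{J_1})$, which is a valid CSS code by \Cref{lem:puncturing_css}. Its classical code is $C' = C|_{J_1}$, and $(C')^\perp=\sigma_{J_1}(C^\perp)\subseteq \sigma_{J_1}(C)\subseteq C'$, so $C'$ is again dual-containing. Moreover $(C')^\perp$ contains the restrictions to $J_1$ of the level-$l$ checks for all $l\ge2$ supported on sub-cosets of $J_1$. These checks are supported inside $J_1$, so they lie in the shortened code.

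Applying Step 1 inside $C'$ shows that $C'$ has classical $(r_2,\delta_2)$ locality. We also need $d((C')^\perp)\ge\delta_2$. This follows from $d((C')^\perp)\ge d(C')\ge d(C'|_{J_2})\ge\delta_2$, exactly as before, and then \Cref{thm:css_qlrc_iff_clrc} applies to $C'$. Iterating this down the chain, by induction on $h$, yields the $(h-1)$-level structure on $\mathcal Q|_{J_1(i)}$, and so on down to level $h$.

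The only delicate point is this recursion: checking that after puncturing, the dual-containment hypothesis and the distance hypothesis $d((C')^\perp)\ge\delta$ of \Cref{thm:css_qlrc_iff_clrc} still hold. Both follow from the puncturing/shortening duality $(C|_I)^\perp=\sigma_I(C^\perp)$ and the monotonicity of distance under puncturing.
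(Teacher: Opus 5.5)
Your proposal is correct and follows essentially the same route as the paper's proof. Both arguments use the Vandermonde checks $x\mapsto x^j$ on each level-$l$ coset coming from $B_l^\perp\subseteq C^\perp$, the dual-containment of each punctured code via $(C|_I)^\perp=\sigma_I(C^\perp)\subseteq\sigma_I(C)\subseteq C|_I$, and the chain $d((C')^\perp)\ge d(C')\ge\delta$, and both then apply \Cref{thm:css_qlrc_iff_clrc} at every level. The only difference is direction: the paper inducts upward from the bottom level $h$, while you recurse downward from the top.

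One small correction, which the paper shares: $d(C')\ge d(C'|_{J_2})$ is not a general monotonicity fact for a single block, since a codeword vanishing on $J_2$ escapes it. The bound $d(C')\ge\delta_2$ still holds because the level-$2$ cosets partition $J_1$. Every nonzero word of $C'$ is therefore nonzero on some $J_2$, so its weight is at least $d(C'|_{J_2})\ge\delta_2$.
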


\begin{proof}
Let $\mathcal Q=\mathrm{CSS}(C,C)$ be the code from \Cref{def:h_level_qtb}, and for each $l$ write $n_l:=r_l+\delta_l-1$.

We prove by induction on $l$ that for every level-$(l-1)$ repair group $A_{l-1}$, the punctured code $\mathcal Q|_{A_{l-1}}$ is an $(h-l+1)$-level $((r_l,\delta_l),\ldots,(r_h,\delta_h))$-QHLRC. Here, for $l=1$, we interpret $A_0=\{1,\ldots,q-1\}$ and $\mathcal Q|_{A_0}=\mathcal Q$.

\textbf{Base step: $l=h$.}
Fix a level-$(h-1)$ repair group $A_{h-1}$. Inside $A_{h-1}$, the level-$h$ repair groups are precisely the cosets $A_h=\alpha\Omega_{n_h}\subseteq A_{h-1}$.

By \Cref{lem:hqtb.construction}, we have $C^\perp \supseteq B^\perp=\ev(\F_q[X]^{[q-1]\cap S_+})$, and by \Cref{lem:low.wt.parity.checks.hqtb}, for each $j=1,\ldots,\delta_h-1$, the code $B_h^\perp \subseteq B^\perp$ contains the function
\[
f_{\alpha,h,j}(x)=
\begin{cases}
x^j,&x\in A_h,\\
0,&x\notin A_h.
\end{cases}
\]
Restricting to the coordinates in $A_h$, the functions $f_{\alpha,h,j}(x)$ give $\delta_h-1$ checks in $(C|_{A_h})^\perp$. Writing $A_h=\{x_1,\ldots,x_{n_h}\}$, we obtain a Vandermonde-type matrix
\[
H_{A_h}=
\begin{bmatrix}
x_1 & x_2 & \cdots & x_{n_h}\\
x_1^2 & x_2^2 & \cdots & x_{n_h}^2\\
\vdots & \vdots & \ddots & \vdots\\
x_1^{\delta_h-1} & x_2^{\delta_h-1} & \cdots & x_{n_h}^{\delta_h-1}
\end{bmatrix}.
\]
After scaling the $i$th column by $x_i^{-1}$, this becomes a Vandermonde matrix, so any $\delta_h-1$ columns are linearly independent. Hence, $d(C|_{A_h})\ge \delta_h$ and since \[(C|_{A_{h-1}})^\perp = \sigma_{A_{h-1}}(C^\perp) \subseteq \pi_{A_{h-1}}(C^\perp) \subseteq \pi_{A_{h-1}}(C) = C|_{A_{h-1}},\] we have \[d((C|_{A_{h-1}})^\perp) \geq d(C|_{A_{h-1}}) \geq d( (C|_{A_{h-1}})|_{A_h}) = d(C|_{A_h}) \geq \delta_h.\] Therefore, by \Cref{thm:css_qlrc_iff_clrc}, $\mathcal Q|_{A_{h-1}}=\mathrm{CSS}(C|_{A_{h-1}},C|_{A_{h-1}})$ is an $(r_h,\delta_h)$-QLRC. This proves the base case.

\textbf{Induction step.}
Assume $1\le l<h$, and assume that for every level-$l$ repair group $A_l$, the punctured code $\mathcal Q|_{A_l}$ is an $(h-l)$-level $((r_{l+1},\delta_{l+1}),\ldots,(r_h,\delta_h))$-QHLRC.

Now fix a level-$(l-1)$ repair group $A_{l-1}$. Inside $A_{l-1}$, the level-$l$ repair groups are the cosets $A_l=\alpha\Omega_{n_l}\subseteq A_{l-1}$. By \Cref{lem:low.wt.parity.checks.hqtb}, for each such $A_l$ and each $j=1,\ldots,\delta_l-1$, the code $B_l^\perp \subseteq B^\perp\subseteq C^\perp$ contains the function
\[
f_{\alpha,l,j}(x)=
\begin{cases}
x^j,&x\in A_l,\\
0,&x\notin A_l.
\end{cases}
\]
Restricting to $A_l$, these give $\delta_l-1$ checks in $(C|_{A_l})^\perp$. As in the base case, their parity-check matrix is Vandermonde after column scaling, so any $\delta_l-1$ columns are linearly independent and therefore $d(C|_{A_l})\ge \delta_l$. By the same reasoning as before $d((C|_{A_{l-1}})^\perp) \geq \delta_l$ so by \Cref{thm:css_qlrc_iff_clrc}, $ \mathcal Q|_{A_{l-1}}$ is an $(r_l,\delta_l)$-QLRC.

Moreover, by the induction hypothesis, for every such level-$l$ repair group $A_l$, the punctured code \(\mathcal Q|_{A_l}\) is an $(h-l)$-level $((r_{l+1},\delta_{l+1}),\ldots,(r_h,\delta_h))$-QHLRC. Therefore, \(\mathcal Q|_{A_{l-1}}\) satisfies the two conditions of \Cref{def:h_level_qhlrc} with top-level locality parameters \((r_l,\delta_l)\). Hence, \(\mathcal Q|_{A_{l-1}}\) is an $(h-l+1)$-level $((r_l,\delta_l),\ldots,(r_h,\delta_h))$-QHLRC.

Finally, taking $l=1$ and $A_0=\{1,\ldots,q-1\}$, we conclude that $\mathcal Q$ is indeed an $h$-level $((r_1,\delta_1),\ldots,(r_h,\delta_h))$-QHLRC.
\end{proof}

\subsection{Bound on the minimum distance of hierarchical Quantum Tamo--Barg codes}
We finish our analysis of the quantum Tamo--Barg HLRC by proving a distance bound using \Cref{thm:qtb_distance}. For clarity, we will first derive a distance bound for the case $h = 2$. The general $h$ derivation will naturally follow, but the notation will become quite heavy. Recall \Cref{lem:psi_convex_monotone} whose properties we will use repeatedly in the distance proofs.
\PsiConvexMonotone*
\subsubsection{Two-level bound}
Now we present the distance bound of the $2$-level hierarchical QTB code. The proof proceeds by iterating the one-level root-counting argument. At the bottom level, for each admissible shift \(i_2\), we apply a \(Q\)-operator to \(f\) and obtain a transformed polynomial \(g^{(2)}_{i_2}\). The \(Q_b\)-nonvanishing theorem guarantees that these transforms are nonzero on the nondual part of \(f\), while the defining vanishing of \(Q_b\) forces \(g^{(2)}_{i_2}\) to vanish whenever \(f\) has a suitable block of \(\delta_2-1\) zeros in a bottom-level repair group together with one additional zero.

We then apply the same idea at the top level. For each admissible top-level shift \(i_1\), we apply a level-\(1\) \(Q\)-operator to \(g^{(2)}_{i_2}\), producing \(g^{(1)}_{i_1,i_2}\). Taking the product over all choices of \(i_1\) and \(i_2\) gives a nonzero polynomial \(G\), whose degree is at most \(r_1r_2(\ell-1)\). Thus, an upper bound on the number of roots of \(G\) comes from its degree.

The lower bound on the number of roots is obtained by counting zero incidences level by level. Inside each bottom coset \(B\), if \(w_B\) is the weight of \(f\) on \(B\), the bottom-level argument contributes at least
\[
\psi_2(w_B)=(r_2-w_B)_+\left(n_2-(\delta_2-1)w_B\right)_+
\]
zero incidences. Averaging these incidences over the bottom cosets inside a fixed top coset gives an upper bound on the average weight of the transforms \(g^{(2)}_{i_2}\) on that top coset. The top-level root-counting argument then contributes
\[
\psi_1(t)=(r_1-t)_+\left(n_1-(\delta_1-1)t\right)_+
\]
roots as a function of this averaged weight. Finally, Jensen's inequality is used twice, once across the choices of \(i_2\) and once across the top cosets, to express the resulting lower bound only in terms of the total weight \(|\ev(f)|\). Comparing this lower bound with \(\deg G\le r_1r_2(\ell-1)\) yields the desired distance estimate.

\begin{theorem}[Two-level hierarchical QTB distance bound]\label{thm:two_level_hqtb_distance}
Let \(\mathcal Q=\mathrm{CSS}(C,C)\) be a two-level hierarchical QTB code with parameters \((r_1,\delta_1), (r_2,\delta_2)\)
and set
\[
n_1:=r_1+\delta_1-1,
\qquad
n_2:=r_2+\delta_2-1.
\]
Assume
\(
n_2\mid n_1\mid(q-1)
\).
Assume also that \Cref{cor:Q_b_finite_field_version_outside_finitely_many_char} holds at both levels; concretely, for every level \(l=1,2\), either \(\delta_l\ge 3\) and
\(
\mathrm{char}(\F_q)\nmid \mathcal M_{r_l,\delta_l},
\)
or \(\delta_l=2\) and \(n_l=r_l+1\) is prime.

For \(l=1,2\), define
\[
\psi_l(t):=(r_l-t)_+\left(n_l-(\delta_l-1)t\right)_+.
\]
Define
\[
\Theta_2(t):=
\frac{q-1}{n_1}\,r_2\,
\psi_1\left(
n_1-\frac{n_1}{r_2n_2}\psi_2(t)
\right),
\qquad 0\le t\le n_2.
\]
Let
\[
\tau_2:=
\inf\left\{
t\in[0,n_2]:
\Theta_2(t)\le r_1r_2(\ell-1)
\right\}.
\]
Then
\[
d(\mathcal Q)\ge \frac{q-1}{n_2}\tau_2.
\]
Equivalently, every nonzero codeword \(\ev(f)\in C\setminus C^\perp\) of weight
\[
w:=|\ev(f)|
\]
satisfies
\[
r_1r_2(\ell-1)
\ge
\frac{q-1}{n_1}\,r_2\,
\psi_1\left(
n_1-\frac{n_1}{r_2n_2}
\psi_2\left(\frac{n_2}{q-1}w\right)
\right).
\]
\end{theorem}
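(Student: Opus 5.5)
Take a nonzero $\ev(f)\in C\setminus C^\perp$ and split $f=g+h$ as in the proof of \Cref{thm:qtb_distance}. Here $g\in\F_q[X]^{[\ell]\setminus(S_-\cup S_+)}$ and $h\in\F_q[X]^{[q-1]\cap S_+}$. By \Cref{lem:low.wt.parity.checks.hqtb}, $h=h_1+h_2$ with $h_l\in B_l^\perp$, so $h_l$ is piecewise of degree $\le\delta_l-1$ with no constant term on each coset of $\Omega_{n_l}$. Since $\ev(f)\notin C^\perp$, we have $g\neq0$.

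Let $\omega_l$ be a primitive $n_l$th root of unity and $Q^{(l)}_{i}$ the level-$l$ polynomial from the one-level proof, for $i\in\{\delta_l-1,\dots,n_l-1\}$. Define the operators
\[
(\mathcal T^{(l)}_i F)(X)=\omega_l^{-i}F(\omega_l^iX)+\sum_{t=0}^{\delta_l-2}v^{(l)}_{i,t}\omega_l^{-t}F(\omega_l^tX).
\]
On monomials, $\mathcal T^{(l)}_i X^j=Q^{(l)}_i(\omega_l^{j-1})X^j$. So $\mathcal T^{(l)}_i$ is diagonal and preserves degree. It kills $X^j$ for $j\in S_{l,+}$, and it is nonzero on exponents $j\notin S_{l,+}$ by \Cref{cor:Q_b_finite_field_version_outside_finitely_many_char} (or \Cref{rem:Q_b_delta=2_finite_field}).

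Set $g^{(2)}_{i_2}=\mathcal T^{(2)}_{i_2}f$ and $G_{i_1,i_2}=\mathcal T^{(1)}_{i_1}\mathcal T^{(2)}_{i_2}f$. Since the operators are diagonal they commute, and each annihilates its own $h_l$. Hence $G_{i_1,i_2}=\mathcal T^{(1)}_{i_1}\mathcal T^{(2)}_{i_2}g$. This is nonzero because exponents of $g$ avoid $S_+$, and it has degree $\le\ell-1$. Then $G=\prod_{i_1,i_2}G_{i_1,i_2}\neq0$ has degree $\le r_1r_2(\ell-1)$.

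For the lower bound on roots, work coset by coset.

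\textbf{Bottom level.} On a bottom coset $B$ with $w_B=|\ev(f)|_B$, the one-level counting gives at least $\psi_2(w_B)$ pairs $(x,i_2)$ with $x\in B$ and $g^{(2)}_{i_2}(x)=0$.

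\textbf{Averaging inside a top coset.} Fix a top coset $A\supseteq B$. Summing over the $n_1/n_2$ bottom cosets in $A$, the average over $i_2$ of the number of zeros of $g^{(2)}_{i_2}$ on $A$ is at least $\frac1{r_2}\sum_{B\subseteq A}\psi_2(w_B)$. By convexity (\Cref{lem:psi_convex_monotone}) and Jensen, this is $\ge \frac{n_1}{r_2n_2}\psi_2(\bar w_A)$, where $\bar w_A=\frac{n_2}{n_1}w_A$. Equivalently, the average weight of $g^{(2)}_{i_2}$ on $A$ is at most $n_1-\frac{n_1}{r_2n_2}\psi_2(\bar w_A)$.

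\textbf{Top level.} Apply the one-level count to $g^{(2)}_{i_2}$ on $A$ with $\mathcal T^{(1)}$. The top-level root count of $G_{i_1,i_2}$ summed over $i_1$ is at least $\psi_1(|\ev(g^{(2)}_{i_2})|_A)$. Here we use that $G_{i_1,i_2}$ vanishes at $x$ whenever $g^{(2)}_{i_2}$ vanishes on $\{\omega_1^\gamma x:\gamma\in\Gamma_{i_1}\}$; this holds because $\mathcal T^{(1)}_{i_1}$ is a combination of evaluations at those points.

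Summing over $i_2$ and applying Jensen with convexity of $\psi_1$ gives
\[
\ge r_2\,\psi_1\!\Bigl(n_1-\tfrac{n_1}{r_2n_2}\psi_2(\bar w_A)\Bigr),
\]
using that $\psi_1$ is nonincreasing.

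The hard step is the final Jensen over top cosets. It needs $t\mapsto\psi_1\bigl(n_1-c\,\psi_2(t)\bigr)$ to be convex, or at least a usable composition inequality, so that summing over the $(q-1)/n_1$ top cosets gives
\[
\ge \frac{q-1}{n_1}\,r_2\,\psi_1\!\Bigl(n_1-\tfrac{n_1}{r_2n_2}\psi_2\bigl(\tfrac{n_2}{q-1}w\bigr)\Bigr).
\]
My first attempt would be a direct second-derivative check: $\psi_2$ is convex and decreasing, and $\psi_1$ is convex and decreasing. If that fails, I would apply Jensen across all bottom cosets at once before passing to the top level. That route uses only monotonicity of $\psi_1$, together with an averaging of the top-level argument that is linear in bottom-level weight contributions.

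Once the inequality is in hand, comparing it with $\deg G\le r_1r_2(\ell-1)$ gives the stated equivalent form. Since $\Theta_2$ is nonincreasing in $t$, it then forces $\frac{n_2}{q-1}w\ge\tau_2$, which is the claimed bound $d(\mathcal Q)\ge\frac{q-1}{n_2}\tau_2$.
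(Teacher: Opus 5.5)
Your argument is correct. The step you flag as hard is the only unfinished piece, and it is short. Put $c=\frac{n_1}{r_2n_2}$ and $u(t)=n_1-c\,\psi_2(t)$.

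\begin{itemize}
\item Since $\psi_2$ is convex, $u$ is concave.
\item Since $0\le\psi_2\le r_2n_2$, $u$ takes values in $[0,n_1]$.
\item By \Cref{lem:psi_convex_monotone}, $\psi_1$ is convex and nonincreasing on $[0,\infty)$.
\end{itemize}

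Hence
\[
\psi_1\bigl(u(\lambda t+(1-\lambda)t')\bigr)\le\psi_1\bigl(\lambda u(t)+(1-\lambda)u(t')\bigr)\le\lambda\,\psi_1(u(t))+(1-\lambda)\,\psi_1(u(t')),
\]
so $t\mapsto\psi_1(n_1-c\,\psi_2(t))$ is convex on $[0,n_2]$. This composition rule needs no second derivatives and handles the clipping points of $\psi_1,\psi_2$ automatically. Your proposed derivative check would need separate care at those kinks. With convexity in hand, Jensen over the $(q-1)/n_1$ top cosets finishes the proof, since the normalized weights $\bar w_A=\frac{n_2}{n_1}w_A\in[0,n_2]$ average to $\frac{n_2}{q-1}w$. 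Your ordering is exactly the $h=2$ case of \Cref{lem:recursive_psi_jensen}: Jensen for $\psi_2$ inside each top coset, then Jensen for the nested composite. That lemma proves convexity of $\Psi_l$ by the same rule.

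The paper's proof of this theorem orders the Jensen steps differently, essentially as in your fallback.
\begin{enumerate}
\item It keeps $S_A=\sum_{B\subset A}\psi_2(w_B)$ as the top-level variable.
\item It applies Jensen over top cosets to $S\mapsto r_2\psi_1(n_1-S/r_2)$, which is convex (an affine precomposition of $\psi_1$) and nondecreasing.
\item It then applies a single global Jensen to $\psi_2$ over all bottom cosets.
\item It finishes by monotonicity.
\end{enumerate}
You describe that route as using ``only monotonicity of $\psi_1$'', which is inaccurate. Monotonicity alone cannot turn $\sum_A\psi_1(n_1-S_A/r_2)$ into a function of $\sum_A S_A$; convexity of $\psi_1$ is still needed to average the top-level argument.

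The trade-off is as follows. The paper's two-level ordering uses convexity of $\psi_1$ and $\psi_2$ only individually, never of the composite. Your ordering yields a per-top-coset bound $r_2\,\psi_1\bigl(n_1-c\,\psi_2(\tfrac{n_2}{n_1}w_A)\bigr)$, which iterates directly up an $h$-level hierarchy.
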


\begin{proof}
Fix \(\ev(f)\in C\setminus C^\perp\). We decompose
\[
f=g+P_1+P_2,
\]
where \(g\) is supported outside \(S_{+}\cup S_{-}\), \(P_2\) is supported on \(S_{2,+}\), and \(P_1\) is supported on \(S_{1,+}\setminus S_{2,+}\). Since \(\ev(f)\notin C^\perp\), we have \(g\neq 0\).

For \(l=1,2\), let
\[
I_l:=\{\delta_l-1,\dots,n_l-1\}.
\]
For each \(i_l\in I_l\), let
\[
Q^{(l)}_{i_l}(Y)
=
Y^{i_l}+\sum_{t=0}^{\delta_l-2}v^{(l)}_{i_l,t}Y^t
\]
be the corresponding \(Q\)-polynomial at level \(l\). Let \(\omega_l\) be a primitive \(n_l\)th root of unity. Define the linear operator
\[
\mathcal L^{(l)}_{i_l}[p](X)
:=
\omega_l^{-i_l}p(\omega_l^{i_l}X)
+
\sum_{t=0}^{\delta_l-2}
v^{(l)}_{i_l,t}\omega_l^{-t}p(\omega_l^tX).
\]
On monomials, this operator acts diagonally:
\[
\mathcal L^{(l)}_{i_l}[X^j]
=
Q^{(l)}_{i_l}(\omega_l^{j-1})X^j.
\]

Define the level-2 transforms
\(
g^{(2)}_{i_2}:=\mathcal L^{(2)}_{i_2}[f],
\, i_2\in I_2,
\)
and the level-1 transforms
\(
g^{(1)}_{i_1,i_2}:=\mathcal L^{(1)}_{i_1}[g^{(2)}_{i_2}],
\,i_1\in I_1,\ i_2\in I_2.
\)
Finally define the aggregate polynomial
\[
G(X):=\prod_{i_2\in I_2}\prod_{i_1\in I_1}g^{(1)}_{i_1,i_2}(X).
\]

We first show that \(G\neq 0\). The operator \(\mathcal L^{(2)}_{i_2}\) ensures that the coefficient of every monomial whose exponent lies in \(S_{2,+}\) is zero, so
\[
g^{(2)}_{i_2}
=
\mathcal L^{(2)}_{i_2}[g]
+
\mathcal L^{(2)}_{i_2}[P_1].
\]
The two summands have disjoint supports. Since every exponent in the support of \(g\) does not lie in \(S_{2,+}\), \Cref{cor:Q_b_finite_field_version_outside_finitely_many_char} at level \(2\) implies
\(
\mathcal L^{(2)}_{i_2}[g]\neq 0
\). Hence \(g^{(2)}_{i_2}\neq 0\).

Next, \(\mathcal L^{(2)}_{i_2}[P_1]\) is still supported on \(S_{1,+}\) so applying the operator \(\mathcal L^{(1)}_{i_1}\) ensures that it is zero. Therefore,
\[
g^{(1)}_{i_1,i_2}
=
\mathcal L^{(1)}_{i_1}\mathcal L^{(2)}_{i_2}[g].
\]
Every exponent in the support of \(g\) avoids both \(S_{1,+}\) and \(S_{2,+}\). Thus, \Cref{cor:Q_b_finite_field_version_outside_finitely_many_char} at levels \(1\) and \(2\) imply
\(
g^{(1)}_{i_1,i_2}\neq 0
\)
so
\(
G \neq 0.
\)

Moreover, each operator preserves degree, so
\[
\deg g^{(1)}_{i_1,i_2}\le \deg g\le \ell-1.
\]
Since \(|I_1|=r_1\) and \(|I_2|=r_2\), we have
\[
\deg G \le r_1r_2(\ell-1).
\]

We now lower bound the number of roots of \(G\). Let \(A\) range over the cosets of \(\Omega_{n_1}\) in \(\F_q^*\), and let \(B\subset A\) range over the cosets of \(\Omega_{n_2}\) contained in \(A\). Write
\(
w_B:=|\ev(f)|_B.
\)
Fix a bottom coset \(B\) and let
\[
N_B:=
\left|\left\{
x\in B:
f(x)=f(\omega_2x)=\cdots=f(\omega_2^{\delta_2-2}x)=0
\right\}\right|.
\]
Each nonzero value of \(f\) on \(B\) can block at most \(\delta_2-1\) such starting points, so
\[
N_B\ge \left(n_2-(\delta_2-1)w_B\right)_+.
\]
For any such starting point \(x\), the number of \(i_2\in I_2\) for which
\(
f(\omega_2^{i_2}x)=0
\)
is at least \((r_2-w_B)_+\). For each such pair \((x,i_2)\), the definition of
\(\mathcal L^{(2)}_{i_2}\) gives
\(
g^{(2)}_{i_2}(x)=0.
\)
Therefore,
\[
\sum_{i_2\in I_2}|\{x\in B:g^{(2)}_{i_2}(x)=0\}|
\ge
\psi_2(w_B).
\]
Summing over all bottom cosets \(B\subset A\), we obtain
\[
\sum_{i_2\in I_2}|\{x\in A:g^{(2)}_{i_2}(x)=0\}|
\ge
\sum_{B\subset A}\psi_2(w_B).
\]
Equivalently, if
\[
w^{(2)}_{A,i_2}:=|\ev(g^{(2)}_{i_2})|_A,
\]
then
\[
\sum_{i_2\in I_2}w^{(2)}_{A,i_2}
\le
r_2n_1-\sum_{B\subset A}\psi_2(w_B).
\]
Thus,
\begin{equation}\label{eqn:upper_bound_average}
\frac{1}{r_2}\sum_{i_2\in I_2}w^{(2)}_{A,i_2}
\le
n_1-\frac{1}{r_2} \sum_{B\subset A}\psi_2(w_B).
\end{equation}

Now fix \(A\) and \(i_2\). Applying the same incidence count at level \(1\) to the family
\(
\{g^{(1)}_{i_1,i_2}\}_{i_1\in I_1}
\)
gives
\[
\sum_{i_1\in I_1}|\{x\in A:g^{(1)}_{i_1,i_2}(x)=0\}|
\ge
\psi_1(w^{(2)}_{A,i_2}).
\]
Therefore, the root multiplicity contributed by \(A\) to \(G\) is at least
\[
\sum_{i_2\in I_2}\psi_1(w^{(2)}_{A,i_2}).
\]
By \Cref{lem:psi_convex_monotone}, \(\psi_1\) is convex, so Jensen's inequality gives
\[
\sum_{i_2\in I_2}\psi_1(w^{(2)}_{A,i_2})
\ge
r_2\psi_1\left(
\frac{1}{r_2}\sum_{i_2\in I_2}w^{(2)}_{A,i_2}
\right).
\]
Since \(\psi_1\) is nonincreasing, substituting \labelcref{eqn:upper_bound_average} gives
\[
\sum_{i_2\in I_2}\psi_1(w^{(2)}_{A,i_2})
\ge
r_2\psi_1\left(n_1-\frac{1}{r_2}\sum_{B\subset A}\psi_2(w_B)\right).
\]
Now sum over all top cosets \(A\). The function
\[
S\longmapsto r_2\psi_1\left(n_1-\frac{S}{r_2}\right)
\]
is convex because we have composed $\psi_1$, a convex function, with an affine function which maps $S \mapsto n_1 - S/r_2$. It is also nondecreasing because as $S$ increases, the argument $n_1 - S/r_2$ decreases and $\psi_1$ is nonincreasing. Hence Jensen's inequality gives
\[
|\{\text{roots of }G\}|
\ge
\frac{q-1}{n_1}r_2
\psi_1\left(
n_1-\frac{1}{r_2}\cdot
\frac{n_1}{q-1}\sum_A \sum_{B\subset A}\psi_2(w_B)
\right).
\]
But,
\[
\sum_A \sum_{B\subset A}\psi_2(w_B)=\sum_B\psi_2(w_B),
\]
where \(B\) now ranges over all bottom cosets in \(\F_q^*\). By Jensen's inequality and the convexity of \(\psi_2\),
\[
\sum_B\psi_2(w_B)
\ge
\frac{q-1}{n_2}
\psi_2\left(\frac{n_2}{q-1}|\ev(f)|\right).
\]
Using that \(\psi_1\) is nonincreasing, we obtain
\[
|\{\text{roots of }G\}|
\ge
\frac{q-1}{n_1}r_2
\psi_1\left(
n_1-\frac{n_1}{r_2n_2}
\psi_2\left(\frac{n_2}{q-1}|\ev(f)|\right)
\right).
\]
Since \(G\neq 0\), its number of roots in \(\F_q^*\), counted with multiplicity, is at most its degree. Therefore
\[
r_1r_2(\ell-1)
\ge
\frac{q-1}{n_1}r_2
\psi_1\left(
n_1-\frac{n_1}{r_2n_2}
\psi_2\left(\frac{n_2}{q-1}|\ev(f)|\right)
\right).
\]
This proves the displayed inequality.

Finally, \(\Theta_2(t)\) is nonincreasing in \(t\). Hence the inequality
\[
r_1r_2(\ell-1)
\ge
\Theta_2\left(\frac{n_2}{q-1}|\ev(f)|\right)
\]
implies
\[
|\ev(f)|\ge \frac{q-1}{n_2}\tau_2.
\]
Taking the minimum over all \(\ev(f)\in C\setminus C^\perp\) proves the distance bound.
\end{proof}

\subsubsection{\texorpdfstring{$h$-level bound}{h-level bound}}
We will extend the distance bound to the general $h$-level hierarchical QTB code by iterating the same argument as in \Cref{thm:two_level_hqtb_distance}. We first present a lemma that will aid in the recursive argument.

\begin{lemma}
\label{lem:recursive_psi_jensen}
For \(l=1,\ldots,h\), set
\(
\psi_l(t):=(r_l-t)_+\left(n_l-(\delta_l-1)t\right)_+.
\)
Define
\(
\Psi_h(t):=\psi_h(t),
\)
and for \(l=h-1,h-2,\ldots,1\), define
\[
\Psi_l(t):=
\psi_l\left(
n_l-\frac{n_l}{r_{l+1}n_{l+1}}\Psi_{l+1}(t)
\right).
\]
Then each \(\Psi_l\) is convex and nonincreasing on \([0,n_h]\). Consequently,
\[
\Theta_h(t):=
\frac{q-1}{n_1}
\left(\prod_{l=2}^h r_l\right)\Psi_1(t)
\]
is also convex and nonincreasing on \([0,n_h]\).

Moreover, for each level-\(l\) coset \(B_l\), let
\(
w_{B_l}:=|\ev(f)|_{B_l}.
\)
Let \(I_l=\{\delta_l-1,\dots,n_l-1\}\) and let \(\mathcal R_l(B_l)\) denote the total number of zero incidences contributed
inside \(B_l\) as follows: for each choice of lower-level indices \[(i_{l+1},\ldots, i_h) \in I_{l+1}\times \cdots \times I_h,\] apply the level-\(l\) incidence count to the family \(\{g_{i_l,\ldots, i_h}^{(l)}\}_{i_l \in I_l}\) and sum over all choices of \((i_{l+1},\ldots, i_h).\) Then
\[
\mathcal R_l(B_l)
\ge
\left(\prod_{u=l+1}^{h} r_u\right)
\Psi_l\left(\frac{n_h}{n_l}w_{B_l}\right),
\]
where the empty product is interpreted as \(1\).
\end{lemma}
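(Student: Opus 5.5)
The plan has two parts: first the analytic properties of \(\Psi_l\) by backward induction on \(l\), then the incidence bound by a second backward induction that mirrors the proof of \Cref{thm:two_level_hqtb_distance}. Throughout, write \(P_{l}:=\prod_{u=l}^{h} r_u\), with \(P_{h+1}=1\).

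\textbf{Properties of \(\Psi_l\).} For \(l=h\), \(\Psi_h=\psi_h\) is convex and nonincreasing on \([0,\infty)\) by \Cref{lem:psi_convex_monotone}. Its maximum is \(\psi_h(0)=r_hn_h\).

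Now suppose \(\Psi_{l+1}\) is convex and nonincreasing with \(0\le \Psi_{l+1}\le r_{l+1}n_{l+1}\). Consider the inner map
\[
\phi(t)=n_l-\frac{n_l}{r_{l+1}n_{l+1}}\Psi_{l+1}(t).
\]
It takes values in \([0,n_l]\), so \(\psi_l\) is only ever evaluated on \([0,\infty)\), where \Cref{lem:psi_convex_monotone} applies. Since \(\Psi_{l+1}\) is convex and nonincreasing, \(\phi\) is concave and nondecreasing. The outer function \(\psi_l\) is convex and nonincreasing. A convex nonincreasing function of a concave function is convex: for \(\lambda\in[0,1]\), \(\psi_l(\phi(\lambda s+(1-\lambda)t))\le \psi_l(\lambda\phi(s)+(1-\lambda)\phi(t))\le \lambda\psi_l(\phi(s))+(1-\lambda)\psi_l(\phi(t))\). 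A nonincreasing function of a nondecreasing one is nonincreasing. Finally \(\Psi_l\le \psi_l(0)=r_ln_l\), which closes the induction. \(\Theta_h\) is a positive multiple of \(\Psi_1\), so it inherits both properties.

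\textbf{The incidence bound.} I would induct downward from \(l=h\).

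For the base case \(l=h\), there are no lower indices. On a coset \(B_h\), the one-level count from \Cref{thm:qtb_distance} applies verbatim: it yields at least \(N_{B_h}\ge(n_h-(\delta_h-1)w_{B_h})_+\) starting points of a block of \(\delta_h-1\) zeros. Each such point gives at least \((r_h-w_{B_h})_+\) admissible \(i_h\) with \(g^{(h)}_{i_h}\) vanishing there. So \(\mathcal R_h(B_h)\ge \psi_h(w_{B_h})=\Psi_h(\tfrac{n_h}{n_h}w_{B_h})\).

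For the inductive step from \(l+1\) to \(l\), fix a level-\(l\) coset \(B_l\) and fix a tuple \(\mathbf i=(i_{l+1},\dots,i_h)\). Let \(w_{\mathbf i}\) be the weight of \(g^{(l+1)}_{\mathbf i}\) on \(B_l\).

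\emph{Step 1.} The operator \(\mathcal L^{(l)}_{i_l}\) only combines values of its argument at points \(\omega_l^{\gamma}x\) in the same level-\(l\) coset. Hence the one-level count applied to \(g^{(l+1)}_{\mathbf i}\) on \(B_l\) gives at least \(\psi_l(w_{\mathbf i})\) zero incidences for the family \(\{g^{(l)}_{i_l,\mathbf i}\}_{i_l}\).

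\emph{Step 2.} Sum over the \(P_{l+1}\) tuples and apply Jensen for the convex function \(\psi_l\). This gives \(\mathcal R_l(B_l)\ge P_{l+1}\psi_l(\bar w)\), where \(\bar w\) is the average of the \(w_{\mathbf i}\).

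\emph{Step 3.} Bound \(\bar w\) from above. The total number of nonzeros over all tuples is \(\sum_{\mathbf i}w_{\mathbf i}\le P_{l+1}n_l-\sum_{B_{l+1}\subset B_l}\mathcal R_{l+1}(B_{l+1})\), because \(\mathcal R_{l+1}\) counts exactly the zero incidences of the level-\((l+1)\) family. Since \(\psi_l\) is nonincreasing, this upper bound on \(\bar w\) can be substituted into Step 2.

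\emph{Step 4.} Apply the induction hypothesis to each child, \(\mathcal R_{l+1}(B_{l+1})\ge P_{l+2}\Psi_{l+1}(\tfrac{n_h}{n_{l+1}}w_{B_{l+1}})\). Then apply Jensen over the \(n_l/n_{l+1}\) children, using the convexity of \(\Psi_{l+1}\) proved above. The children's weights sum to \(w_{B_l}\), so their average argument is \(\tfrac{n_h}{n_{l+1}}\cdot\tfrac{n_{l+1}}{n_l}w_{B_l}=\tfrac{n_h}{n_l}w_{B_l}\). This yields
\[
\sum_{B_{l+1}\subset B_l}\mathcal R_{l+1}(B_{l+1})\ \ge\ P_{l+2}\,\frac{n_l}{n_{l+1}}\,\Psi_{l+1}\!\left(\frac{n_h}{n_l}w_{B_l}\right).
\]

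\emph{Step 5.} Divide by \(P_{l+1}=r_{l+1}P_{l+2}\) and substitute into Step 2. The argument of \(\psi_l\) becomes exactly \(n_l-\frac{n_l}{r_{l+1}n_{l+1}}\Psi_{l+1}(\tfrac{n_h}{n_l}w_{B_l})\), which gives the claimed bound with \(\Psi_l\).

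The main obstacle is bookkeeping, not any single estimate. One point needs care: the one-level incidence count must be valid for the transformed polynomials \(g^{(l+1)}_{\mathbf i}\) and not only for codewords \(f\). This holds because it uses only the values on the coset, not membership in \(C\). I would also check that every argument fed to \(\psi_l\) and \(\Psi_{l+1}\) lies in the range where convexity and monotonicity were established: the weights lie in \([0,n_l]\), and \(\phi\) takes values in \([0,n_l]\).
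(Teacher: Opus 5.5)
Your proposal is correct and follows essentially the same route as the paper's proof. The paper first runs a downward induction showing that the inner map $n_l-\frac{n_l}{r_{l+1}n_{l+1}}\Psi_{l+1}$ is concave, nondecreasing and $[0,n_l]$-valued, so $\Psi_l$ is convex and nonincreasing. It then runs a second downward induction that combines the one-level count for $g^{(l+1)}_{\mathbf i}$ on $B_l$, Jensen for $\psi_l$ over the tuples, the complementary bound on the average weight via the children's $\mathcal R_{l+1}$, and Jensen for $\Psi_{l+1}$ over the $n_l/n_{l+1}$ children. The only differences are cosmetic: you apply Jensen to $\psi_l$ before bounding the average weight rather than after, and you prove the convex-composition rule directly instead of citing it.
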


\begin{proof}
We first prove the analytic claim. By \Cref{lem:psi_convex_monotone}, for each \(l\), the function
\(
\psi_l(t)
\)
is convex and nonincreasing on \([0,\infty)\). We show by downward induction that each \(\Psi_l\) is convex and
nonincreasing. The base case \(\Psi_h=\psi_h\) is immediate from \Cref{lem:psi_convex_monotone}.

Assume \(\Psi_{l+1}\) is convex and nonincreasing. Define
\[
A_l(t):=
n_l-\frac{n_l}{r_{l+1}n_{l+1}}\Psi_{l+1}(t).
\]
Since \(\Psi_{l+1}\) is convex and nonincreasing, the function \(A_l\) is
concave and nondecreasing. Moreover, \(0 \leq \Psi_{l+1}(t) \leq r_{l+1}n_{l+1}\) so \(0 \leq A_l(t) \leq n_l\). Hence, all inputs to \(\psi_l\) lie in the interval where \Cref{lem:psi_convex_monotone} applies. Because \(\psi_l\) is convex and nonincreasing, the
composition rule for convex functions implies that
\(
\Psi_l(t)=\psi_l(A_l(t))
\)
is convex. Also, since \(A_l\) is nondecreasing and \(\psi_l\) is nonincreasing,
\(\Psi_l\) is nonincreasing. Indeed, if \(t_1 \leq t_2\) then \(A_l(t_1)\leq A_l(t_2)\). Applying \(\psi_l\) reverses the inequality: \[\Psi_l(t_1) = \psi_l(A_l(t_1)) \geq \psi_l(A_l(t_2)) = \Psi_l(t_2).\] This proves the induction. The same properties for
\(\Theta_h\) follow because \(\Theta_h\) is a positive scalar multiple of
\(\Psi_1\).

We now prove the incidence bound by downward induction on \(l\).
For \(l=h\), fix a bottom-level coset \(B_h\). The one-level incidence count at
level \(h\) gives
\[
\mathcal R_h(B_h)
\ge
\psi_h(w_{B_h})
=
\Psi_h(w_{B_h}),
\]
which is the desired statement because \(n_h/n_h=1\) and the product
\(\prod_{u=h+1}^h r_u\) is empty.

Assume now that the claim holds at level \(l+1\), and fix a level-\(l\) coset
\(B_l\). Decompose \(B_l\) into its level-\((l+1)\) child cosets:
\[
B_l=\bigsqcup_{a=1}^{n_l/n_{l+1}}B_{l+1,a}.
\]
Set
\(
w_a:=|\ev(f)|_{B_{l+1,a}}.
\)
Let
\(
\Lambda_{l+1}:=I_{l+1}\times I_{l+2}\times\cdots\times I_h
\)
so
\[
|\Lambda_{l+1}|=\prod_{u=l+1}^{h}r_u.
\]
For \(\boldsymbol i = (i_{l+1},\ldots,i_h)\in\Lambda_{l+1}\) and a child coset \(B_{l+1,a}\), define
\[
w_{\boldsymbol i,a}^g:=
|\ev(g^{(l+1)}_{i_{l+1},\ldots,i_h})|_{B_{l+1,a}},
\]
the number of nonzero positions of the transformed polynomial
\(g^{(l+1)}_{i_{l+1},\ldots,i_h}\) on \(B_{l+1,a}\).

By the induction hypothesis applied to the child coset \(B_{l+1,a}\), the total
number of zero incidences in \(B_{l+1,a}\), summed over all lower-level operator
choices \(\boldsymbol i\in\Lambda_{l+1}\), is at least
\[
\left(\prod_{u=l+2}^{h}r_u\right)
\Psi_{l+1}\left(\frac{n_h}{n_{l+1}}w_a\right).
\]
Equivalently,
\[
\sum_{\boldsymbol i\in\Lambda_{l+1}}
\left(n_{l+1}-w_{\boldsymbol i,a}^g\right)
\ge
\left(\prod_{u=l+2}^{h}r_u\right)
\Psi_{l+1}\left(\frac{n_h}{n_{l+1}}w_a\right).
\]
Dividing by \(|\Lambda_{l+1}|\), we get
\[
\frac{1}{|\Lambda_{l+1}|}
\sum_{\boldsymbol i\in\Lambda_{l+1}}
w_{\boldsymbol i,a}^g
\le
n_{l+1}
-
\frac{1}{r_{l+1}}
\Psi_{l+1}\left(\frac{n_h}{n_{l+1}}w_a\right).
\]
Now define the total effective weight of \(g^{(l+1)}_{i_{l+1},\ldots,i_h}\) inside \(B_l\) by
\[
w_{\boldsymbol i}^g:=
|\ev(g^{(l+1)}_{i_{l+1},\ldots,i_h})|_{B_l}
=
\sum_{a=1}^{n_l/n_{l+1}}w_{\boldsymbol i,a}^g.
\]
Averaging over \(\boldsymbol i\in\Lambda_{l+1}\), we obtain
\begin{align}\label{eqn:upper_bound_average_h_level}
\frac{1}{|\Lambda_{l+1}|}
\sum_{\boldsymbol i\in\Lambda_{l+1}}w_{\boldsymbol i}^g
&=
\sum_{a=1}^{n_l/n_{l+1}}
\frac{1}{|\Lambda_{l+1}|}
\sum_{\boldsymbol i\in\Lambda_{l+1}}w_{\boldsymbol i,a}^g\nonumber\\
&\le
\sum_{a=1}^{n_l/n_{l+1}}
\left(
n_{l+1}
-
\frac{1}{r_{l+1}}
\Psi_{l+1}\left(\frac{n_h}{n_{l+1}}w_a\right)
\right)\nonumber\\
&=
n_l
-
\frac{1}{r_{l+1}}
\sum_{a=1}^{n_l/n_{l+1}}
\Psi_{l+1}\left(\frac{n_h}{n_{l+1}}w_a\right)\nonumber\\
&=
n_l
-
\frac{n_l}{r_{l+1}n_{l+1}}
\cdot
\frac{1}{n_l/n_{l+1}}
\sum_{a=1}^{n_l/n_{l+1}}
\Psi_{l+1}\left(\frac{n_h}{n_{l+1}}w_a\right).
\end{align}

For each fixed lower-level choice \(\boldsymbol i\), the level-\(l\) incidence
count gives at least
\(
\psi_l(w_{\boldsymbol i}^g)
\)
zero incidences inside \(B_l\), summed over the \(r_l\) choices of the
level-\(l\) operator. Therefore,
\[
\mathcal R_l(B_l)
\ge
\sum_{\boldsymbol i\in\Lambda_{l+1}}\psi_l(w_{\boldsymbol i}^g).
\]
Since \(\psi_l\) is convex, Jensen's inequality gives
\[
\sum_{\boldsymbol i\in\Lambda_{l+1}}\psi_l(w_{\boldsymbol i}^g)
\ge
|\Lambda_{l+1}|
\psi_l\left(
\frac{1}{|\Lambda_{l+1}|}
\sum_{\boldsymbol i\in\Lambda_{l+1}}w_{\boldsymbol i}^g
\right).
\]
And since \(\psi_l\) is nonincreasing, substituting the upper bound \labelcref{eqn:upper_bound_average_h_level} gives:
\[
\mathcal R_l(B_l)
\ge
\left(\prod_{u=l+1}^{h}r_u\right)
\psi_l\left(
n_l
-
\frac{n_l}{r_{l+1}n_{l+1}}
\cdot
\frac{1}{n_l/n_{l+1}}
\sum_{a=1}^{n_l/n_{l+1}}
\Psi_{l+1}\left(\frac{n_h}{n_{l+1}}w_a\right)
\right).
\]
Using the convexity of \(\Psi_{l+1}\), Jensen's inequality gives
\begin{equation}\label{eqn:h_level_average_Jensen_sub}
\frac{1}{n_l/n_{l+1}}
\sum_{a=1}^{n_l/n_{l+1}}
\Psi_{l+1}\left(\frac{n_h}{n_{l+1}}w_a\right)
\ge
\Psi_{l+1}\left(
\frac{1}{n_l/n_{l+1}}
\sum_{a=1}^{n_l/n_{l+1}}
\frac{n_h}{n_{l+1}}w_a
\right) = \Psi_{l+1}\left(
\frac{n_h}{n_l}
w_{B_l}
\right)
\end{equation}
since
\(
\sum_a w_a=w_{B_l}.
\)
Since \(\psi_l\) is nonincreasing, substituting \labelcref{eqn:h_level_average_Jensen_sub} can only increase the argument of \(\psi_l\), we obtain
\[
\mathcal R_l(B_l)
\ge
\left(\prod_{u=l+1}^{h}r_u\right)
\psi_l\left(
n_l-
\frac{n_l}{r_{l+1}n_{l+1}}
\Psi_{l+1}\left(\frac{n_h}{n_l}w_{B_l}\right)
\right).
\]
By the recursive definition of \(\Psi_l\), this is
\[
\mathcal R_l(B_l)
\ge
\left(\prod_{u=l+1}^{h}r_u\right)
\Psi_l\left(\frac{n_h}{n_l}w_{B_l}\right).
\]
This completes the induction.
\end{proof}

\begin{theorem}[\(h\)-level hierarchical QTB distance bound]\label{thm:h_level_hqtb_distance_bound}
Let \(\mathcal Q=\mathrm{CSS}(C,C)\) be an \(h\)-level hierarchical QTB code with parameters
\(
(r_1,\delta_1),\dots,(r_h,\delta_h),
\)
and set
\(
n_l:=r_l+\delta_l-1
\) for \(l = 1,\ldots,h.\)
Assume
\(
n_h\mid n_{h-1}\mid\cdots\mid n_1\mid(q-1).
\)
Also, assume that \Cref{cor:Q_b_finite_field_version_outside_finitely_many_char} holds at every level; concretely, for every \(l\), either \(\delta_l\ge 3\) and
\(
\operatorname{char}(\F_q)\nmid \mathcal M_{r_l,\delta_l},
\)
or \(\delta_l=2\) and \(n_l=r_l+1\) is prime.

For \(l=1,\dots,h\), define
\(
\psi_l(t):=(r_l-t)_+\left(n_l-(\delta_l-1)t\right)_+.
\)
Define functions \(\Psi_l:[0,n_h]\to \mathbb R_{\ge 0}\) recursively by
\(
\Psi_h(t):=\psi_h(t),
\)
and for \(l=h-1,h-2,\dots,1\),
\[
\Psi_l(t):=
\psi_l\left(
n_l-\frac{n_l}{r_{l+1}n_{l+1}}\Psi_{l+1}(t)
\right).
\]
Define
\[
\Theta_h(t):=
\frac{q-1}{n_1}
\left(\prod_{l=2}^h r_l\right)
\Psi_1(t).
\]
Let
\[
\tau_h:=
\inf\left\{
t\in[0,n_h]:
\Theta_h(t)\le
\left(\prod_{l=1}^h r_l\right)(\ell-1)
\right\}.
\]
Then
\[
d(\mathcal Q)\ge \frac{q-1}{n_h}\tau_h.
\]
Equivalently, every nonzero codeword \(\ev(f)\in C\setminus C^\perp\) of weight \(w\) satisfies
\[
\left(\prod_{l=1}^h r_l\right)(\ell-1)
\ge
\frac{q-1}{n_1}
\left(\prod_{l=2}^h r_l\right)
\Psi_1\left(\frac{n_h}{q-1}w\right).
\]
\end{theorem}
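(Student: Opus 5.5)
The proof will follow \Cref{thm:two_level_hqtb_distance}, with the recursive bookkeeping handled by \Cref{lem:recursive_psi_jensen}.

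\textbf{Decomposition.} Fix $\ev(f)\in C\setminus C^\perp$ and write $f=g+\sum_{l=1}^h P_l$. Here $g$ is supported on $[\ell]\setminus(S_+\cup S_-)$, and $P_l$ is supported on $S_{l,+}\setminus\bigcup_{u>l}S_{u,+}$. Since $\ev(f)\notin C^\perp$, we have $g\neq0$. For each level $l$, let $I_l=\{\delta_l-1,\dots,n_l-1\}$ and take the operators $\mathcal L^{(l)}_{i_l}$ exactly as in the two-level proof, so that $\mathcal L^{(l)}_{i_l}[X^j]=Q^{(l)}_{i_l}(\omega_l^{j-1})X^j$. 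Define the transforms recursively from the bottom up:
\[
g^{(h)}_{i_h}=\mathcal L^{(h)}_{i_h}[f],\qquad g^{(l)}_{i_l,\dots,i_h}=\mathcal L^{(l)}_{i_l}\bigl[g^{(l+1)}_{i_{l+1},\dots,i_h}\bigr],
\]
and set $G=\prod_{(i_1,\dots,i_h)}g^{(1)}_{i_1,\dots,i_h}$.

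\textbf{$G$ is nonzero and has small degree.} Each operator acts diagonally on monomials, so it preserves supports. Also, $\mathcal L^{(l)}_{i_l}$ kills every monomial with exponent in $S_{l,+}$, because $Q^{(l)}_{i_l}(\omega_l^{t})=0$ for $t=0,\dots,\delta_l-2$. It follows that after applying all levels every $P_l$ is annihilated, and $g^{(1)}_{i_1,\dots,i_h}=\mathcal L^{(1)}_{i_1}\cdots\mathcal L^{(h)}_{i_h}[g]$. The exponents of $g$ avoid every $S_{l,+}$. So \Cref{cor:Q_b_finite_field_version_outside_finitely_many_char} (or \Cref{rem:Q_b_delta=2_finite_field} when $\delta_l=2$), applied at each level, shows that each coefficient of $g$ is multiplied by a nonzero product. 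Hence every factor is nonzero and $G\neq0$. The operators do not raise degree, so $\deg G\le(\prod_l r_l)(\ell-1)$.

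\textbf{Counting roots of $G$.} On a coset $B$ of $\Omega_{n_h}$, fix a point $x$ at which $g^{(h)}_{i_h}$ is evaluated. If $f$ vanishes at $x,\omega_hx,\dots,\omega_h^{\delta_h-2}x$ and at $\omega_h^{i_h}x$, then $g^{(h)}_{i_h}(x)=0$. The same implication holds at higher levels with $g^{(l+1)}$ in place of $f$. This gives the one-level incidence count, and \Cref{lem:recursive_psi_jensen} already packages the recursion. At $l=1$ it says that for each top coset $B_1$, the number of zero incidences is
\[
\mathcal R_1(B_1)\ge\Bigl(\prod_{u=2}^h r_u\Bigr)\Psi_1\Bigl(\tfrac{n_h}{n_1}w_{B_1}\Bigr).
\]
Every zero incidence is a root of some factor of $G$, so summing over the $(q-1)/n_1$ top cosets lower bounds the number of roots of $G$ counted with multiplicity. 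Since $\Psi_1$ is convex (by the same lemma), Jensen's inequality turns the sum into
\[
\frac{q-1}{n_1}\Bigl(\prod_{l\ge2}r_l\Bigr)\Psi_1\Bigl(\frac{n_h}{n_1}\cdot\frac{n_1 w}{q-1}\Bigr)=\Theta_h\Bigl(\frac{n_h}{q-1}w\Bigr).
\]
Comparing this with $\deg G$ gives the displayed inequality. Because $\Theta_h$ is nonincreasing, the inequality forces $\frac{n_h}{q-1}w\ge\tau_h$, which is the claimed distance bound.

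\textbf{Main obstacle.} The delicate step is checking that the incidence count at level $l$ is legitimate when applied to $g^{(l+1)}$ rather than to $f$. The level-$l$ operator annihilates the level-$l$ piecewise-polynomial part, and zeros of $g^{(l+1)}$ on a block of $\delta_l-1$ consecutive $\omega_l$-shifts plus one more shift force zeros of $g^{(l)}$. In other words, the implication ``consecutive zeros imply a root'' must be applied to the full transformed function, including the surviving $P_u$ with $u\le l$, and not only to $g$. Since this holds pointwise for any function, the counting goes through. The remaining work is to verify the reindexing of cosets at each level so that the argument $\frac{n_h}{n_l}w_{B_l}$ in \Cref{lem:recursive_psi_jensen} matches, and that all $\Psi_l$ arguments stay within $[0,n_h]$.
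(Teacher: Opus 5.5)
Your proposal is correct and follows the paper's proof essentially step for step. It uses the same decomposition $f=g+\sum_l P_l$ and the same bottom-up composition of the diagonal operators $\mathcal L^{(l)}_{i_l}$ to get $G\neq 0$ with $\deg G\le(\prod_l r_l)(\ell-1)$, and the same root count from \Cref{lem:recursive_psi_jensen} on each top-level coset followed by Jensen's inequality for the convex function $\Psi_1$; your explicit appeal to \Cref{rem:Q_b_delta=2_finite_field} at levels with $\delta_l=2$ just makes visible a citation the paper leaves implicit.
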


\begin{proof}
The proof is an iteration of the two-level argument in \Cref{thm:two_level_hqtb_distance}. We give the details needed to track the notation. Write
\[
f=g+P_1+\cdots+P_h,
\]
where \(g\) is supported outside \(S_+ \cup S_{-}=\bigcup_{l=1}^h S_{l,+} \cup \bigcup_{l=1}^h S_{l,-}\), and \(P_l\) is supported on
\[
S_{l,+}\setminus\bigcup_{k=l+1}^h S_{k,+}
\]
for \(l<h\), while \(P_h\) is supported on \(S_{h,+}\). Since \(\ev(f)\notin C^\perp\), we have \(g\neq 0\). For each level \(l\), set
\[
I_l:=\{\delta_l-1,\dots,n_l-1\}.
\]
For \(i_l\in I_l\), define the operator
\[
\mathcal L^{(l)}_{i_l}[p](X)
:=
\omega_l^{-i_l}p(\omega_l^{i_l}X)
+
\sum_{t=0}^{\delta_l-2}
v^{(l)}_{i_l,t}\omega_l^{-t}p(\omega_l^tX),
\]
so that
\[
\mathcal L^{(l)}_{i_l}[X^j]
=
Q^{(l)}_{i_l}(\omega_l^{j-1})X^j.
\]

We recursively define transformed polynomials by ascending through the hierarchy. First, for \(i_h\in I_h\), set
\[
g^{(h)}_{i_h}:=\mathcal L^{(h)}_{i_h}[f].
\]
For \(l=h-1,h-2,\dots,1\), and for a tuple
\[
(i_l,i_{l+1},\dots,i_h)\in I_l\times I_{l+1}\times\cdots\times I_h,
\]
define
\[
g^{(l)}_{i_l, i_{l+1},\dots,i_h}
:=
\mathcal L^{(l)}_{i_l}
\left[g^{(l+1)}_{i_{l+1},\dots,i_h}\right].
\]
Finally, define
\[
G(X):=
\prod_{i_h\in I_h}
\prod_{i_{h-1}\in I_{h-1}}
\cdots
\prod_{i_1\in I_1}
g^{(1)}_{i_1,\dots,i_h}(X).
\]

We first show that \(G\neq 0\). At each level \(l\), the operator \(\mathcal L^{(l)}_{i_l}\) annihilates the positive-residue part. Since the operators act diagonally on monomials, the support of \(g\) remains disjoint from all positive-residue supports throughout the iteration. After all levels have been applied, we have
\[
g^{(1)}_{i_1,\dots,i_h}
=
\mathcal L^{(1)}_{i_1}
\mathcal L^{(2)}_{i_2}
\cdots
\mathcal L^{(h)}_{i_h}[g].
\]
For every exponent \(j\) in the support of \(g\), and for every level \(l\), we have
\(
j\notin S_{l,+}.
\)
Therefore,
\(
Q^{(l)}_{i_l}(\omega_l^{j-1})\neq 0
\)
by \Cref{cor:Q_b_finite_field_version_outside_finitely_many_char} at level \(l\). Thus, every nonzero coefficient of \(g\) remains nonzero after applying the product of diagonal operators so
\(
g^{(1)}_{i_1,\dots,i_h}\neq 0
\)
for every tuple \((i_1,\dots,i_h)\). This implies \(G\neq 0\). Moreover, each operator preserves degree, so
\[
\deg g^{(1)}_{u_1,\dots,u_h}\le \deg g\le \ell-1.
\]
Since \(|I_l|=r_l\), we obtain
\[
\deg G
\le
\left(\prod_{l=1}^h r_l\right)(\ell-1).
\]

We now count roots. By \Cref{lem:recursive_psi_jensen}, applied to each level-\(1\) coset \(B_1\),
we have
\[
|\{\text{roots of }G\}|
\ge
\left(\prod_{l=2}^h r_l\right)
\sum_{B_1}
\Psi_1\left(\frac{n_h}{n_1}|\ev(f)|_{B_1}\right).
\]
Since \(\Psi_1\) is convex, Jensen's inequality gives
\[
\sum_{B_1}
\Psi_1\left(\frac{n_h}{n_1}|\ev(f)|_{B_1}\right)
\ge
\frac{q-1}{n_1}
\Psi_1\left(
\frac{1}{(q-1)/n_1}
\sum_{B_1}
\frac{n_h}{n_1}|\ev(f)|_{B_1}
\right)
\]
and
\[
\frac{1}{(q-1)/n_1}
\sum_{B_1}
\frac{n_h}{n_1}|\ev(f)|_{B_1} = \frac{n_h}{q-1}|\ev(f)|.
\]
Thus,
\[
|\{\text{roots of }G\}|
\ge
\frac{q-1}{n_1}
\left(\prod_{l=2}^h r_l\right)
\Psi_1\left(\frac{n_h}{q-1}|\ev(f)|\right).
\]

Since \(G\neq 0\), the number of roots of \(G\), counted with multiplicity, is at most its degree. Hence,
\[
\left(\prod_{l=1}^h r_l\right)(\ell-1)
\ge
\frac{q-1}{n_1}
\left(\prod_{l=2}^h r_l\right)
\Psi_1\left(\frac{n_h}{q-1}|\ev(f)|\right).
\]
By construction, the function
\[
\Theta_h(t):=
\frac{q-1}{n_1}
\left(\prod_{l=2}^h r_l\right)
\Psi_1(t)
\]
is nonincreasing in \(t\). Therefore the preceding inequality implies
\[
|\ev(f)|\ge \frac{q-1}{n_h}\tau_h.
\]
Taking the minimum over all \(\ev(f)\in C\setminus C^\perp\) proves the theorem.
\end{proof}

\Cref{thm:two_level_hqtb_distance} and \Cref{thm:h_level_hqtb_distance_bound} give implicit distance bounds which we now make explicit in the following corollary. The final expression will be similar to \labelcref{eqn:qtb_distance_bound}. 
\begin{corollary}\label{cor:h_level_nested_square_root}
Assume the hypotheses of \Cref{thm:h_level_hqtb_distance_bound}. For each
\(l=1,\ldots,h\), set
\(
a_l:=\delta_l-1.
\)
Define the clipped inverse function \(\operatorname{Inv}_l:\mathbb R\to[0,n_l]\) by
\[
\operatorname{Inv}_l(\theta)
:=
\begin{cases}
0, & \theta\ge r_ln_l,\\[4pt]
T_l, & \theta\le 0,\\[4pt]
\displaystyle
\frac{
n_l+a_lr_l
-
\sqrt{(n_l+a_lr_l)^2-4a_l(r_ln_l-\theta)}
}{2a_l},
& 0<\theta<r_ln_l,
\end{cases}
\]
where
\[
T_l:=\min\left\{r_l,\frac{n_l}{a_l}\right\}.
\]
Define \(y_1,\ldots,y_h\) recursively by
\[
y_1:=
\operatorname{Inv}_1\left(
r_1n_1\frac{\ell-1}{q-1}
\right),
\]
and, for \(l=2,\ldots,h\), by
\[
y_l:=
\operatorname{Inv}_l\left(
\frac{r_ln_l}{n_{l-1}}(n_{l-1}-y_{l-1})
\right).
\]
Then
\begin{equation}\label{eqn:hqtb_distance_bound}
    d(\mathcal Q)\ge \frac{q-1}{n_h}y_h.
\end{equation}

\end{corollary}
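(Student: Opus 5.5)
We derive the corollary from \Cref{thm:h_level_hqtb_distance_bound} by inverting the recursion for \(\Theta_h\) one level at a time. Since \(d(\mathcal Q)\ge \frac{q-1}{n_h}\tau_h\), it suffices to show \(\tau_h\ge y_h\). Equivalently, we show that every \(t\in[0,n_h]\) with \(\Theta_h(t)\le (\prod_l r_l)(\ell-1)\) satisfies \(t\ge y_h\). Dividing by \(\frac{q-1}{n_1}\prod_{l\ge2}r_l\), the condition becomes
\[
\Psi_1(t)\le r_1n_1\frac{\ell-1}{q-1}=:\theta_1 .
\]

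\textbf{Step 1: inverting a single \(\psi_l\).} For each \(l\), let \(\theta\in\mathbb R\) and \(x\ge0\) satisfy \(\psi_l(x)\le\theta\). We claim \(x\ge \operatorname{Inv}_l(\theta)\). On \([0,T_l)\) we have \(\psi_l(x)=(r_l-x)(n_l-a_lx)\), which is positive and strictly decreasing there. By \Cref{lem:psi_convex_monotone}, \(\psi_l\) is nonincreasing, and it vanishes on \([T_l,\infty)\).
\begin{itemize}
\item If \(0<\theta<r_ln_l=\psi_l(0)\), then \(\operatorname{Inv}_l(\theta)\) is the smaller root of \(a_lx^2-(n_l+a_lr_l)x+(r_ln_l-\theta)=0\). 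This root lies in \((0,T_l)\): the vertex \((n_l+a_lr_l)/(2a_l)\) is at least \(T_l\), and the quadratic is \(\le0\) at \(T_l\). By monotonicity, \(\psi_l(x)\le\theta\) forces \(x\ge\operatorname{Inv}_l(\theta)\).
\item If \(\theta\ge r_ln_l\), the claimed bound is \(x\ge0\), which is trivial.
\item If \(\theta\le0\), then \(\psi_l(x)\le0\) forces \(x\ge T_l\).
\end{itemize}
This is exactly the same quadratic computation as at the end of \Cref{thm:qtb_distance}.

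\textbf{Step 2: peeling off the levels.} Write \(A_l(t)=n_l-\frac{n_l}{r_{l+1}n_{l+1}}\Psi_{l+1}(t)\), so that \(\Psi_l=\psi_l\circ A_l\).

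From \(\psi_1(A_1(t))\le\theta_1\), Step 1 gives \(A_1(t)\ge y_1\). Rearranging,
\[
\Psi_2(t)\le \frac{r_2n_2}{n_1}(n_1-y_1)=:\theta_2 .
\]
Inductively, suppose \(\Psi_l(t)\le\theta_l:=\frac{r_ln_l}{n_{l-1}}(n_{l-1}-y_{l-1})\).
\begin{itemize}
\item For \(l<h\): Step 1 applied to \(\psi_l(A_l(t))\le\theta_l\) gives \(A_l(t)\ge y_l\), and hence \(\Psi_{l+1}(t)\le\theta_{l+1}\).
\item For \(l=h\): \(\Psi_h=\psi_h\), so Step 1 applied directly at \(t\) gives \(t\ge y_h\).
\end{itemize}
This yields \(\tau_h\ge y_h\) and hence \eqref{eqn:hqtb_distance_bound}.

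We need the inputs to Step 1 to be legitimate. Each argument \(A_l(t)\) lies in \([0,n_l]\), as shown in the proof of \Cref{lem:recursive_psi_jensen}. Also \(y_{l-1}\le n_{l-1}\), so \(\theta_l\ge0\). The only case distinctions that arise are therefore the clipping ones, which the definition of \(\operatorname{Inv}_l\) handles.

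\textbf{Main obstacle.} The main care point is the clipping boundary in Step 1 when \(\theta_l=0\), i.e. \(y_{l-1}=n_{l-1}\). In that case \(\psi_l(x)\le 0\) only forces \(x\ge T_l\), not \(x\ge n_l\). This is consistent with the definition \(\operatorname{Inv}_l(0)=T_l\), so the recursion still propagates correctly. The remaining detail is to verify carefully that the smaller root lies in \([0,T_l]\), including the endpoints, so that the monotonicity argument applies.
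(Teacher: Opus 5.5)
Your proposal is correct and follows the paper's proof essentially verbatim: you reduce the theorem's conclusion to $\Psi_1(t)\le r_1n_1\frac{\ell-1}{q-1}$ and peel off one level at a time using $\psi_l(x)\le\theta\Rightarrow x\ge\operatorname{Inv}_l(\theta)$, which you justify more carefully than the paper does; the only cosmetic difference is that you phrase the argument through $\tau_h$ rather than per codeword. Your ``main obstacle'' never actually arises, because $\operatorname{Inv}_{l-1}$ takes values in $[0,T_{l-1}]$ and $T_{l-1}\le r_{l-1}<n_{l-1}$, so $\theta_l>0$ always.
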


\begin{proof}
Recall that
\[
\psi_l(t)=(r_l-t)_+(n_l-a_lt)_+.
\]
On the interval where \(\psi_l(t)>0\), we have
\[
\psi_l(t)=(r_l-t)(n_l-a_lt)
=
r_ln_l-(n_l+a_lr_l)t+a_lt^2.
\]
Solving
\(
\psi_l(t)=\theta
\)
for \(t\) gives
\[
t=
\frac{
n_l+a_lr_l
-
\sqrt{(n_l+a_lr_l)^2-4a_l(r_ln_l-\theta)}
}{2a_l}.
\]
This is the smaller root, and it is the relevant one because \(\psi_l\) is nonincreasing on
\([0,n_l]\). With the clipping convention in the definition of \(\operatorname{Inv}_l\), we have
\begin{equation}\label{eqn:psi_inv_relation_implication}
\psi_l(t)\le \theta
\qquad\Longrightarrow\qquad
t\ge \operatorname{Inv}_l(\theta).
\end{equation}

By \Cref{thm:h_level_hqtb_distance_bound}, every nonzero codeword of weight \(w\) satisfies
\[
\left(\prod_{l=1}^h r_l\right)(\ell-1)
\ge
\frac{q-1}{n_1}
\left(\prod_{l=2}^h r_l\right)
\Psi_1\left(\frac{n_h}{q-1}w\right),
\]
where
\[
\Psi_h(t)=\psi_h(t),
\]
and, for \(l=h-1,\ldots,1\),
\[
\Psi_l(t)
=
\psi_l\left(
n_l-\frac{n_l}{r_{l+1}n_{l+1}}\Psi_{l+1}(t)
\right).
\]
Canceling the common factor \(\prod_{l=2}^h r_l\), we get
\[
\Psi_1\left(\frac{n_h}{q-1}w\right)
\le
r_1n_1\frac{\ell-1}{q-1}.
\]
Set
\[
t_0:=\frac{n_h}{q-1}w.
\]
Then
\[
\psi_1\left(
n_1-\frac{n_1}{r_2n_2}\Psi_2(t_0)
\right)
\le
r_1n_1\frac{\ell-1}{q-1}.
\]
By \labelcref{eqn:psi_inv_relation_implication}, this implies
\[
n_1-\frac{n_1}{r_2n_2}\Psi_2(t_0)
\ge y_1.
\]
Equivalently,
\[
\Psi_2(t_0)
\le
\frac{r_2n_2}{n_1}(n_1-y_1).
\]

Applying the same argument at level \(2\), we obtain
\[
n_2-\frac{n_2}{r_3n_3}\Psi_3(t_0)
\ge y_2,
\]
and hence
\[
\Psi_3(t_0)
\le
\frac{r_3n_3}{n_2}(n_2-y_2).
\]
Continuing recursively, after level \(h-1\) we get
\[
\Psi_h(t_0)
\le
\frac{r_hn_h}{n_{h-1}}(n_{h-1}-y_{h-1}).
\]
Since \(\Psi_h(t_0)=\psi_h(t_0)\), another application of \labelcref{eqn:psi_inv_relation_implication} gives
\(
t_0\ge y_h.
\)
Substituting back \(t_0=\frac{n_h}{q-1}w\), we obtain
\[
w\ge \frac{q-1}{n_h}y_h.
\]
Taking the minimum over all nonzero codewords gives
\[
d(\mathcal Q)\ge \frac{q-1}{n_h}y_h,
\]
as desired.
\end{proof}

For $h = 2$, we can unravel the bound in \labelcref{eqn:hqtb_distance_bound} to see that \begin{equation}\label{eqn:two_level_hqtb_distance_bound}
    d(\mathcal{Q}) \geq  \frac{q-1}{n_2}\cdot\frac{n_2 + (\delta_2-1)r_2 - \sqrt{(n_2 + (\delta_2-1)r_2)^2 - 4(\delta_2-1)r_2n_2\frac{y_1}{n_1}}}{2(\delta_2-1)}
\end{equation} is the explicit bound for \Cref{thm:two_level_hqtb_distance} where \begin{equation*}
    y_1 = \frac{n_1 + (\delta_1-1)r_1 - \sqrt{(n_1 + (\delta_1-1)r_1)^2 - 4(\delta_1-1)r_1n_1\left(1-\frac{\ell-1}{q-1}\right)}}{2(\delta_1-1)}.
\end{equation*}  Additionally, $y_1$ is exactly the one-level bound in \Cref{thm:qtb_distance}.

\begin{remark}
    In order to apply the distance bound \Cref{thm:h_level_hqtb_distance_bound}, we need \Cref{cor:Q_b_finite_field_version_outside_finitely_many_char} to hold at each level (and potentially \Cref{rem:Q_b_delta=2_finite_field} at the bottom level). In particular, we must first fix $(r_l,\delta_l)_{l= 1,\ldots, h}$ such that \[r_1\geq r_2\geq\cdots\geq r_h \geq \delta_1\geq \delta_2\geq \cdots \geq \delta_h\geq 2\] and $n_h\mid n_{h-1}\mid \cdots \mid n_1$. Then we compute $\mathcal{M}_{r_l,\delta_l}$ for each $l = 1,\ldots, h$ and obtain the following set of characteristics to exclude: \[\mathcal{P} = \bigcup_{l=1}^h \{p : p \mid\mathcal{M}_{r_l,\delta_l}\}.\] Finally, we choose $q$, a prime power such that $n_1\mid (q-1)$ and $\mathrm{char}(\F_q) \not \in \mathcal{P}$.
\end{remark}

% Since any $h$-level $(r_l,\delta_l)_{l=1,\ldots, h}$ quantum Tamo--Barg is itself an $(r_1,\delta_1)$ quantum Tamo--Barg (with additional structure), we get the following distance bound immediately from \Cref{thm:qtb_distance}:
% \begin{corollary}\label{cor:hqtb_distance_loose}
%     Consider the $h$-level $(r_l,\delta_l)_{l=1,\ldots,h}$ quantum Tamo--Barg code defined in \Cref{def:h_level_qtb}. If $\delta_1 \geq 3$, assume $\mathrm{char}(\F_q) \nmid \mathcal{M}_{r_1,\delta_1}$ or if $\delta_1 = 2$, assume $r_1+1$ is prime. Then $\mathcal{Q}$ has distance at least \begin{equation}\label{eqn:hqtb_distance_bound_one_level}
%         \frac{q-1}{2}\left(\frac{1}{\delta_1-1} + \frac{r_1}{n_1} -\sqrt{\left(\frac{r_1}{n_1} - \frac{1}{\delta_1-1}\right)^2 + \frac{4r_1}{(\delta_1-1)n_1}\cdot \frac{\ell-1}{q-1}}\right).
%     \end{equation}
% \end{corollary}

The recursive bound in \Cref{thm:h_level_hqtb_distance_bound} is a direct
multilevel analogue of the one-level root-counting proof. However, the hierarchy does not necessarily improve the distance when compared to the one-level $(r_1,\delta_1)$ QTB code. Indeed, the explicit form in \Cref{cor:h_level_nested_square_root} gives
\[
d(\mathcal Q)\ge \frac{q-1}{n_h}y_h,
\]
whereas the top-level bound is
\[
d(\mathcal Q)\ge \frac{q-1}{n_1}y_1.
\]
We now explain why the recursive estimate cannot be expected to beat the
top-level estimate.

\begin{corollary}\label{cor:h_level_bound_loss}
 For \(l=1,\ldots,h\), set \(a_l := \delta_l-1\) and \(n_l := r_l + a_l\).
Recall that
\[
\psi_l(t)=(r_l-t)_+(n_l-a_lt)_+.
\]
The recursion defining \(y_l\) says that, for \(l\ge 2\),
\[
y_l
=
\operatorname{Inv}_l\left(
\frac{r_ln_l}{n_{l-1}}(n_{l-1}-y_{l-1})
\right).
\]
We claim that \[\frac{y_h}{n_h} \leq \frac{y_{h-1}}{n_{h-1}} \leq \cdots \leq \frac{y_1}{n_1}.\]
\end{corollary}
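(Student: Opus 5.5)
The plan is a chord argument based on the convexity of $\psi_l$ (\Cref{lem:psi_convex_monotone}), proved one level at a time. Fix $l\ge 2$ and set $x:=y_{l-1}/n_{l-1}$. By the definition of $\operatorname{Inv}_{l-1}$, its range lies in $[0,T_{l-1}]$. Since $T_{l-1}\le r_{l-1}\le n_{l-1}$, we get $x\in[0,1]$. The recursion reads
\[
y_l=\operatorname{Inv}_l\bigl(\theta\bigr),\qquad \theta:=r_ln_l(1-x),
\]
so the claim $y_l/n_l\le y_{l-1}/n_{l-1}$ is exactly the inequality $\operatorname{Inv}_l(r_ln_l(1-x))\le n_lx$. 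Chaining these inequalities over $l=2,\ldots,h$ gives the full chain in the statement.

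\textbf{Key step: $\psi_l$ lies below its chord on $[0,n_l]$.} First compute the endpoint values. We have $\psi_l(0)=r_ln_l$. Since $n_l=r_l+a_l\ge r_l$, the factor $(r_l-n_l)_+$ vanishes, so $\psi_l(n_l)=0$. Because $\psi_l$ is convex, it lies below the chord joining these two points:
\[
\psi_l(t)\le r_ln_l\Bigl(1-\frac{t}{n_l}\Bigr)\qquad (0\le t\le n_l).
\]
Taking $t=n_lx$ gives $\psi_l(n_lx)\le \theta$.

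\textbf{Converting to a bound on $\operatorname{Inv}_l$.} For $0<\theta<r_ln_l$, the quantity $\operatorname{Inv}_l(\theta)$ is the smaller root of $\psi_l(t)=\theta$. Because $\psi_l$ is nonincreasing, this is the smallest $t$ with $\psi_l(t)\le\theta$. Since $\psi_l(n_lx)\le\theta$, it follows that $\operatorname{Inv}_l(\theta)\le n_lx$.

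\textbf{Boundary cases of the clipping.} These need separate care.
\begin{itemize}
\item If $\theta\ge r_ln_l$, then $x=0$ and $\operatorname{Inv}_l(\theta)=0$, so the inequality $0\le 0$ holds.
\item If $\theta\le 0$, then $x=1$ and $\operatorname{Inv}_l(\theta)=T_l\le n_l=n_lx$.
\end{itemize}

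\textbf{Expected obstacle.} The only delicate point is confirming that the formula for $\operatorname{Inv}_l$ really equals the first crossing point of $\psi_l$ below $\theta$. In particular, the smaller root must lie in $[0,T_l]$, where $\psi_l$ agrees with the quadratic $(r_l-t)(n_l-a_lt)$. I expect this to follow from the monotonicity established in \Cref{lem:psi_convex_monotone}, together with the facts that the quadratic takes the value $r_ln_l$ at $t=0$ and is at most $0$ at $t=T_l$.
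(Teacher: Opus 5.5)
Your proof is correct, and its skeleton matches the paper's. You set $x=y_{l-1}/n_{l-1}$, use the threshold property of $\operatorname{Inv}_l$ (the implication $\psi_l(t)\le\theta\Rightarrow t\ge\operatorname{Inv}_l(\theta)$ from the proof of \Cref{cor:h_level_nested_square_root}) to reduce to $\psi_l(n_lx)\le r_ln_l(1-x)$, and chain over $l$. The difference is in how you prove that key inequality.

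The paper expands it directly. In the unclipped range it computes
\[
r_l(1-x)-(r_l-n_lx)(1-a_lx)=x\bigl(r_l(a_l-1)+n_l-a_ln_lx\bigr)\ge 0,
\]
and dismisses the clipped range as immediate. You instead note that $\psi_l(0)=r_ln_l$ and $\psi_l(n_l)=0$. So $r_ln_l(1-t/n_l)$ is exactly the chord of $\psi_l$ over $[0,n_l]$, and the convexity in \Cref{lem:psi_convex_monotone} places $\psi_l$ below it.

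What each route buys:
\begin{itemize}
\item Your route reuses a lemma already in hand and treats clipped and unclipped arguments uniformly. It also explains why the recursion can only lose: the value fed to $\operatorname{Inv}_l$ is a chord value of a convex function, so the first crossing below it occurs at or before $n_lx$.
\item The paper's computation is self-contained and quantifies the slack. The slack is strictly positive whenever $x>0$ and $\psi_l(n_lx)>0$.
\end{itemize}

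You also make explicit that $x\in[0,1]$, via the range of $\operatorname{Inv}_{l-1}$. The paper uses this tacitly, both for the sign of the factor $x$ and in its clipped case. Finally, your boundary case $\theta\le 0$ is vacuous: $y_{l-1}\le T_{l-1}\le r_{l-1}<n_{l-1}$ forces $x<1$, hence $\theta>0$.
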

\begin{proof}
Set \(x = y_{l-1}/n_{l-1}\) so we have 
\(
y_l
=
\operatorname{Inv}_l\left(r_ln_l(1-x)\right).
\)
We claim that \(y_l/n_l \leq x\).
Since \(\operatorname{Inv}_l(\theta)\) is the smallest threshold beyond which
\(\psi_l(t)\le \theta\), it suffices to show
\[
\psi_l(n_lx)\le r_ln_l(1-x).
\]
If one of the clipped factors in \(\psi_l(n_lx)\) is zero, then this is immediate. Otherwise,
\[
\psi_l(n_lx)=(r_l-n_lx)(n_l-a_ln_lx).
\]
Dividing by \(n_l\), it is enough to prove
\(
(r_l-n_lx)(1-a_lx)\le r_l(1-x).
\)
Subtracting the left-hand side from the right-hand side gives
\[
r_l(1-x)-(r_l-n_lx)(1-a_lx)
=
x\left(r_l(a_l-1)+n_l-a_ln_lx\right).
\]
In the nonzero range of \(\psi_l(n_lx)\), we have \(1-a_lx>0\), and hence,
\(
a_ln_lx<n_l.
\)
Therefore,
\[
r_l(a_l-1)+n_l-a_ln_lx\ge 0.
\]
Thus,
\(
\psi_l(n_lx)\le r_ln_l(1-x),
\)
and consequently
\(
y_l/n_l\le y_{l-1}/n_{l-1}.
\)
Iterating this inequality gives
\[
\frac{y_h}{n_h}\le
\frac{y_{h-1}}{n_{h-1}}
\le \cdots \le
\frac{y_1}{n_1}.
\]
% Hence the recursive multilevel bound
% \[
% d(\mathcal Q)\ge \frac{q-1}{n_h}y_h
% \]
% is generally no stronger than the top-level bound
% \[
% d(\mathcal Q)\ge \frac{q-1}{n_1}y_1.
% \]
\end{proof}
The loss comes from two sources. First, at each level the proof averages zero
incidences using Jensen's inequality, which discards information about how the
zeros are distributed among lower-level repair groups. Second, the aggregate
polynomial obtained by multiplying all transformed polynomials has degree
\[
\left(\prod_{l=1}^h r_l\right)(\ell-1),
\]
so each additional level increases the degree bound by a factor of \(r_l\). This
degree growth can overwhelm the extra zero incidences forced by the hierarchy. It remains an interesting open problem to develop a multilevel distance proof
which uses the reduced nondual support
\(
[\ell]\setminus(S_+\cup S_-)
\)
more efficiently. %One possible approach is to replace the product polynomial by a determinant polynomial whose degree depends on the actual residue support profile of the nondual part of the codeword, analogous to the determinant method used in the folded setting.

We end this section with an example of a two-level hierarchical QTB code and its computed dimension and distance.

\begin{example}\label{ex:two_level_qtb_vs_one_level_true_distance}
Consider the two-level QTB code with parameters
\((r_1,\delta_1)=(9,4)\) and
\((r_2,\delta_2)=(4,3)\).
Then $n_1 = 12$ and $n_2 = 6$
so \(n_2\mid n_1\). The bad-characteristic products from
\Cref{cor:Q_b_finite_field_version_outside_finitely_many_char} are
\[
\mathcal M_{4,3}=2^8
\quad \text{ and }\quad
\mathcal M_{9,4}=2^{76}3^{44}13^{20}37^4.
\]
Thus the excluded characteristics are
\(
\{2,3,13,37\}.
\)
The smallest prime power \(q\) satisfying \(12\mid(q-1)\) and whose characteristic
is not excluded is
\(
q=25.
\)

We compare the two-level hierarchical QTB code with the one-level QTB code having
top-level parameters \((r,\delta)=(9,4)\), over the same field \(\F_{25}\) and
with the same value of \(\ell\). The exact CSS distance was computed using the
shortened-support criterion
\[
d=\min\left\{
|U|:
\dim(C\cap \F_q^U)>\dim(C^\perp\cap \F_q^U)
\right\}.
\]
The results for \(\ell=13,\ldots,24\) are as follows:
\[
\begin{array}{c|cc|cc|c}
\ell
& k_{\mathrm{1lev}} & d_{\mathrm{1lev}}
& k_{\mathrm{hier}} & d_{\mathrm{hier}}
& d_{\mathrm{bound}}
\\
\hline
13 & 2 & 9 & 2 & 7 & 4\\
14 & 2 & 9 & 2 & 7 & 3\\
15 & 2 & 9 & 2 & 7 & 3\\
16 & 2 & 9 & 2 & 7 & 3\\
17 & 4 & 8 & 2 & 7 & 2\\
18 & 6 & 7 & 2 & 7 & 2\\
19 & 8 & 6 & 4 & 4 & 2\\
20 & 10 & 5 & 4 & 4 & 2\\
21 & 12 & 4 & 4 & 4 & 1\\
22 & 12 & 4 & 4 & 4 & 1\\
23 & 12 & 4 & 4 & 4 & 1\\
24 & 12 & 4 & 4 & 4 & 1
\end{array}
\]
In this example, the two-level hierarchical code does not improve the true global
distance over a one-level code with the same top level parameters. It sometimes has the same distance, but it is often worse. In the table, $d_{\mathrm{bound}}$ is the distance bound of the one-level $(r_1,\delta_1)$ QTB code as in \labelcref{eqn:qtb_distance_bound}. \Cref{cor:h_level_bound_loss} shows the $h$-level distance bound \labelcref{eqn:hqtb_distance_bound} is at best equal to \labelcref{eqn:qtb_distance_bound}. However, the example codes do beat the proven bound so the bound is not tight. This illustrates that hierarchy should be viewed primarily as improving the repair structure, but not automatically improving the global minimum distance.
\end{example}

\section{\texorpdfstring{Folded $(r,\delta)$ Quantum Tamo--Barg Codes}{Folded (r,delta) Quantum Tamo--Barg Codes}}\label{sec:folded_qtb}
We now present a variant of the quantum Tamo--Barg (QTB) codes in \Cref{def:r.delta.qTB.qlrc} by using a \textit{folding} operation. This code generalizes the folded quantum Tamo--Barg code which appears in \cite{golowich2025quantum}.
    \begin{definition}[Folded $(r,\delta)$ Quantum Tamo--Barg code]\label{def:fqtb}
        Let $\mathcal{Q} = \mathrm{CSS}(C, C)$ be the QTB code with parameters $q, r, \delta, \ell$. Given an additional folding parameter $s \mid (q-1)/(r+\delta-1)$, we define the \textbf{folded Quantum Tamo--Barg (fQTB) code} $\widetilde{\mathcal{Q}}$ to be the quantum code with local dimension $q^s$ and block length $(q-1)/s$ obtained as follows. Fix a generator $\omega_{q-1} \in \mathbb{F}_q^*$ and for every $i \in [(q-1)/s]$, we block together $s$ components at positions $F_{\omega_{q-1}^{i\cdot s}} = \{\omega_{q-1}^{i\cdot s},\ldots, \omega_{q-1}^{i\cdot s+s-1}\}$ in $\mathcal{Q}$ into a single component of $\widetilde{\mathcal{Q}}$. Let $\widetilde{C}$ denote the $\mathbb{F}_q$-linear codes obtained by folding $C$ so that $\widetilde{\mathcal{Q}} = \mathrm{CSS}(\widetilde{C}, \widetilde{C})$.
    \end{definition}
    For a given $x = \omega_{q-1}^b \in \mathbb{F}_q^*$, let $F_x \in \widetilde{\mathbb{F}}_q^*$ be the unique element of $\widetilde{\mathbb{F}}_q^*$ that contains $x$. Formally, \[F_x = \{\omega_{q-1}^{\lfloor b/s\rfloor\cdot s + j} : j \in [s]\}.\]

    Folding a code by definition preserves the rate, so a fQTB has the same rate as the associated unfolded QTB. We computed this rate in \Cref{lem:dim.r.delta.TB}. We will now show that fQTBs are also QLRCs. First, we need a small lemma.
    \begin{lemma}\label{lem:coset_block_intersection}
        Assume $s \mid (q-1)/(r+\delta-1)$. Let $\alpha\Omega_{r+\delta-1} \subseteq \mathbb{F}_q^*$ be a coset of $\Omega_{r+\delta-1}$ and let $F_{\omega_{q-1}^{i\cdot s}}$ be a block of positions as defined in \Cref{def:fqtb}. Then $|\alpha\Omega_{r+\delta-1} \cap F_{\omega_{q-1}^{i\cdot s}} | \leq 1$ for all $i$.
    \end{lemma}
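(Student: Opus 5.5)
The plan is to translate both sets into exponents of the fixed generator $\omega_{q-1}$ and compare them modulo $q-1$. Write $n := r+\delta-1$ and $m := (q-1)/n$. The subgroup $\Omega_n$ is exactly $\{\omega_{q-1}^{jm} : j \in [n]\}$, since $\omega_{q-1}^m$ has order $n$. Writing $\alpha = \omega_{q-1}^a$, the coset $\alpha\Omega_n$ becomes $\{\omega_{q-1}^{a+jm} : j\in[n]\}$. Its exponent set modulo $q-1$ is therefore the residue class $a + m\Z$, reduced modulo $q-1$. By contrast, the block $F_{\omega_{q-1}^{is}}$ has exponent set $\{is, is+1, \ldots, is+s-1\}$, which consists of $s$ consecutive integers inside $[q-1]$.

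Next, suppose two distinct elements $\omega_{q-1}^{b_1}, \omega_{q-1}^{b_2}$ lie in the intersection, with $b_1, b_2$ taken as representatives in the block. Then $b_1, b_2$ are distinct integers with $0 < |b_1 - b_2| \le s-1$. Since both elements lie in the same coset, $b_1 \equiv b_2 \pmod m$. This is because $b_1 - b_2$ is congruent modulo $q-1$ to a multiple of $m$, and $m \mid q-1$. Hence $m$ divides the nonzero integer $b_1 - b_2$, so $m \le |b_1-b_2| \le s-1$.

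Finally, invoke the hypothesis $s \mid (q-1)/n = m$, which gives $s \le m$. This contradicts $m \le s-1$, so the intersection has at most one element. The only point needing care, and the main potential pitfall, is the passage from congruence modulo $q-1$ to congruence modulo $m$. The block exponents must be taken as genuine integers in $[q-1]$, so that the difference $|b_1-b_2|$ is truly less than $s$; they should not be treated only as residues, where wrap-around could occur. This is guaranteed because $is + s - 1 \le q-2$ for $i \in [(q-1)/s]$.
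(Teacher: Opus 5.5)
Your proof is correct and is essentially the paper's argument: both write the elements as powers of $\omega_{q-1}$ and subtract the two exponent congruences. The paper reduces that difference modulo $s$, using $s \mid (q-1)/(r+\delta-1)$, to force equal offsets $e_1=e_2$ inside the block; you instead reduce modulo $m=(q-1)/(r+\delta-1)$ and use $0<|b_1-b_2|\le s-1<m$, which shows that for this particular lemma the weaker condition $s\le (q-1)/(r+\delta-1)$ already suffices.
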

    \begin{proof}
        Fix $i \in [(q-1)/s]$ and let $\alpha = \omega_{q-1}^b$ for some $b \in \mathbb{Z}$. Then \[\alpha \Omega_{r+\delta-1} = \left\{\omega_{q-1}^{b + t\cdot \frac{q-1}{r+\delta-1}} : t\in [r+\delta-1]\right\}.\] Suppose there exist $t_1 \neq t_2$ such that \[\omega_{q-1}^{b + t_1\cdot \frac{q-1}{r+\delta-1}}, \omega_{q-1}^{b + t_2\cdot \frac{q-1}{r+\delta-1}} \in F_{\omega_{q-1}^{i\cdot s}}.\] Then we must have \[b + t_1\cdot \frac{q-1}{r+\delta-1} \equiv i\cdot s + e_1 \mod (q-1)\quad \text{and}\quad b + t_2\cdot \frac{q-1}{r+\delta-1} \equiv i\cdot s + e_2 \mod (q-1)\] for some $e_1,e_2 \in [s]$. Subtracting, we obtain \[(t_1 - t_2) \cdot \frac{q-1}{r+\delta-1} \equiv e_1 - e_2 \mod (q-1)\] with $e_1-e_2 \in \{-(s-1),\ldots,(s-1)\}$. Since $s\mid \frac{q-1}{r+\delta-1}$, we get $s \mid (e_1-e_2)$ which is not possible unless $e_1 = e_2$. Thus \[(t_1 - t_2) \cdot \frac{q-1}{r+\delta-1} \equiv 0 \mod (q-1) \implies t_1-t_2 \equiv 0 \mod (r+\delta-1),\] but we assumed $t_1 \neq t_2$ and $0 \leq t_1, t_2 < r+\delta-1$. This is a contradiction. Hence, each element of $\alpha \Omega_{r+\delta-1}$ resides in a distinct folded block.
    \end{proof}
    \begin{corollary}\label{cor:locality_fqtb}
        Assume $s \mid (q-1)/(r+\delta-1)$. Let $\mathcal Q=\mathrm{CSS}(C,C)$ be the QTB code from \Cref{def:r.delta.qTB.qlrc}, and let $\widetilde{\mathcal Q}=\mathrm{CSS}(\widetilde C,\widetilde C)$ be its folded version. Then $\widetilde{\mathcal Q}$ is a QLRC with parameters $(r,\delta)$ i.e., every folded repair group has size $r+\delta-1$ and can correct $\delta-1$ erasures using only symbols inside the group.
    \end{corollary}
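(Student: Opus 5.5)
The plan is to define the folded repair groups as the sets of folded symbols met by a coset, and then to transport the unfolded local parity checks from \Cref{cor:locality_qtb} to these groups. For a coset $A=\alpha\Omega_{n}$, with $n:=r+\delta-1$, set $\widetilde A:=\{F_x : x\in A\}$. By \Cref{lem:coset_block_intersection}, the map $x\mapsto F_x$ is injective on $A$, so $|\widetilde A|=n=r+\delta-1$. Every folded position $F_{\omega_{q-1}^{is}}$ contains some $x$, hence lies in the group $\widetilde{\alpha\Omega_n}$ for $\alpha=x$. The groups therefore cover all folded coordinates.

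Next I would prove the classical statement that $\widetilde C$ is an $(r,\delta)$-LRC over the alphabet $\F_q^s$, with repair group $\widetilde A$. Consider an erasure pattern $E\subseteq\widetilde A$ with $|E|\le\delta-1$. Each erased folded symbol $F\in E$ consists of $s$ scalar coordinates, and exactly one of them, namely the unique element of $F\cap A$, lies in $A$. The remaining $s-1$ scalar coordinates of $F$ lie in other cosets $A'\neq A$.

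This is the main obstacle. A scalar coordinate $y\in F\setminus A$ cannot be recovered from checks supported on $A'$ alone, because the group $\widetilde{A'}$ also contains folded symbols outside $\widetilde A$. The fix is to change the cosets used for each scalar coordinate. By the proof of \Cref{lem:coset_block_intersection}, $F$ contains the elements $\omega_{q-1}^{is+e}$ for $e\in[s]$. Since $s\mid (q-1)/n$, the set $\omega_{q-1}^{e}A$ is again a coset of $\Omega_n$, and the folded symbols it meets are exactly the elements of $\widetilde A$. Indeed, multiplication by $\omega_{q-1}^{e}$ shifts exponents by $e$ within the same length-$s$ block, after aligning $A$ so that its representative has exponent $\equiv 0 \pmod s$. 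So $\widetilde A$ is the union over $e\in[s]$ of the $s$ cosets $\omega_{q-1}^{e}A_0$, where $A_0$ is the coset of $\Omega_n$ through the block starts $\omega_{q-1}^{js}$ of the folded symbols in $\widetilde A$. Each scalar coordinate of an erased folded symbol therefore lies in exactly one of these $s$ cosets, and each such coset meets at most $|E|\le\delta-1$ erased scalar positions. Applying the $\delta-1$ Vandermonde checks from \Cref{lem:low.wt.parity.checks} (as in \Cref{cor:locality_qtb}) coset by coset recovers all erased scalars using only scalars inside $\widetilde A$. This gives $d(\widetilde C|_{\widetilde A})\ge\delta$.

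Finally I would pass to the quantum statement. Folding preserves the containment $C^\perp\subseteq C$ and the CSS structure, since $\widetilde{\mathcal Q}$ is just a regrouping of qudits. The unfolded recovery operation from \Cref{thm:css_qlrc_iff_clrc}, applied on the $s$ scalar cosets making up $\widetilde A$, acts only on the qudits of $\widetilde A$. Alternatively, one can invoke \Cref{thm:css_qlrc_iff_clrc} over $\F_q^s$ after checking $d(\widetilde C^\perp)\ge d(\widetilde C)\ge\delta$ by the same puncturing argument as in \Cref{cor:locality_qtb}.
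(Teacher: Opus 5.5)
Your proposal is correct and follows essentially the same route as the paper. In both, the folded repair group is the image of a coset $\alpha\Omega_n$ (of size $n$ by \Cref{lem:coset_block_intersection}), its scalar positions split by internal offset into the $s$ cosets $\omega_{q-1}^{e}A_0$ (the paper's $A_{\alpha,e}$), and the unfolded local recovery of \Cref{cor:locality_qtb} is applied on each coset independently before passing to the quantum code. Your added remark, that the quantum recovery can be obtained directly by composing the unfolded local recovery maps on these disjoint cosets, is a small improvement: it avoids applying \Cref{thm:css_qlrc_iff_clrc} to an $\F_q$-linear code over the alphabet $\F_q^s$, a case that theorem does not literally cover.
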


    \begin{proof}
    By \Cref{thm:css_qlrc_iff_clrc}, it suffices to prove that $\widetilde C$ is a classical LRC with parameters $(r,\delta)$.
    
    Fix a coset $\alpha\Omega_{r+\delta-1}$. Define the folded repair group
    \[
    \widetilde R_{\alpha} \;:=\; \{F_x : x\in \alpha\Omega_{r+\delta-1}\}.
    \]
    By \Cref{lem:coset_block_intersection}, the map $x\mapsto F_x$ is injective on $\alpha\Omega_{r+\delta-1}$,
    so
    \[
    |\widetilde R_{\alpha}| = |\alpha\Omega_{r+\delta-1}| = r+\delta-1.
    \]
    
    Write \(n:=r+\delta-1\). Let
    \(\alpha=\omega_{q-1}^b\), and let \(e_0\in\{0,\ldots,s-1\}\)
    be the residue of \(b\) modulo \(s\). The elements of
    \(\alpha\Omega_n\) are
    \[
        \omega_{q-1}^{b+t(q-1)/n},
        \qquad
        t=0,\ldots,n-1.
    \]
    Since \(s\mid (q-1)/n\), all of these elements occur in their folded
    blocks with the same internal offset \(e_0\). Thus the unfolded
    local repair on the coset \(\alpha\Omega_n\) recovers the
    \(e_0\)-th scalar component of the erased folded symbols.

    We now repeat this argument for every internal offset. For
    \(e\in\{0,\ldots,s-1\}\), define
    \[
        A_{\alpha,e}
        :=
        \left\{
        \omega_{q-1}^{b-e_0+e+t(q-1)/n}
        :
        t=0,\ldots,n-1
        \right\}.
    \]
    Then
    \(
        A_{\alpha,e}
        =
        \omega_{q-1}^{b-e_0+e}\Omega_n
    \)
    is a coset of \(\Omega_n\). Moreover, the elements of
    \(A_{\alpha,e}\) are precisely the \(e\)-th scalar components of
    the folded blocks in \(\widetilde R_{\alpha}\), since
    \[
        F_{\omega_{q-1}^{b-e_0+e+t(q-1)/n}}
        =
        F_{\omega_{q-1}^{b+t(q-1)/n}}
    \]
    for every \(t\).

    By \Cref{cor:locality_qtb}, applied to the unfolded code \(C\),
    each coset \(A_{\alpha,e}\) has \(\delta-1\) independent local
    parity checks and can correct any \(\delta-1\) erased scalar
    positions using only scalar positions in \(A_{\alpha,e}\). If at
    most \(\delta-1\) folded symbols in \(\widetilde R_{\alpha}\) are
    erased, then for each fixed offset \(e\), at most \(\delta-1\)
    scalar positions are erased in the coset \(A_{\alpha,e}\). Applying
    the unfolded local recovery independently for
    \(e=0,\ldots,s-1\) recovers all \(s\) scalar components of every
    erased folded symbol in \(\widetilde R_{\alpha}\).
    Thus $\widetilde C$ is a classical LRC. Therefore
    $\widetilde{\mathcal Q}=\mathrm{CSS}(\widetilde C,\widetilde C)$ is a QLRC with the same locality parameters.
    \end{proof}
\subsection{Distance bound}
Lastly, we will prove a distance bound for the fQTB. A trivial bound follows from \Cref{thm:qtb_distance}: if we denote the distance of the fQTB as $\widetilde{d}$, then $\widetilde{d} \geq d/s$, but we will shortly present a more involved proof for a stronger bound. Before we do so, we state a lemma that we require in the proof.
\begin{lemma}\label{lem:det_poly_root}
    For every \(v\in \mathbb N\), the determinant polynomial \(\det\in \mathbb F_q[(Y_{ij})_{i,j\in[v]}]\) has order of vanishing \(v-t\) at every matrix \((x_{ij})_{i,j\in[v]}\) of rank \(t\).
\end{lemma}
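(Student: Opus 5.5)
The plan is to reduce to the case of a diagonal-type normal form by a change of coordinates, and then read off the lowest-degree homogeneous part of the Taylor expansion of $\det$ at the given point.

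Fix $x=(x_{ij})$ of rank $t$. Choose invertible $P,Q\in\mathrm{GL}_v(\F_q)$ with $PxQ=D:=\begin{bmatrix} I_t&0\\0&0\end{bmatrix}$. The substitution $Y\mapsto P^{-1}YQ^{-1}$ is an invertible linear change of variables, and it sends the point $D$ to $x$. Under it, $\det$ changes by the nonzero constant $\det(P)^{-1}\det(Q)^{-1}$. Order of vanishing is invariant under invertible affine changes of variables and under multiplication by nonzero constants, so it suffices to treat the point $D$.

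At $D$, write $Y=D+Z$ and expand $\det(D+Z)$ as a polynomial in the entries of $Z$.

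For the lower bound, use the Leibniz expansion: $\det(D+Z)=\sum_\sigma \mathrm{sgn}(\sigma)\prod_i (D+Z)_{i\sigma(i)}$. For each row $i>t$, the entry $(D+Z)_{i\sigma(i)}=Z_{i\sigma(i)}$ is a pure variable, since rows $t+1,\dots,v$ of $D$ vanish. Hence every monomial contains at least $v-t$ factors from $Z$, and all monomials of total degree $<v-t$ in $Z$ vanish.

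For the upper bound, exhibit a nonzero term of degree exactly $v-t$. Restricting the sum to monomials of degree $v-t$ forces $\sigma$ to use only the $D$-entries $1$ in rows $1,\dots,t$. This means $\sigma(i)=i$ for $i\le t$ and $\sigma$ permutes $\{t+1,\dots,v\}$. The degree-$(v-t)$ component is therefore exactly $\det\big((Z_{ij})_{i,j>t}\big)$. This is a nonzero polynomial, since it contains the monomial $\prod_{i>t}Z_{ii}$ with coefficient $1$. So the order of vanishing is exactly $v-t$; the same argument covers $t=v$ (order $0$, $\det D=1$) and $t=0$.

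The only step needing care is the invariance of order of vanishing under the linear reparametrization. I would justify it by noting that the ideal $\mathfrak m_x^k$ of polynomials vanishing to order $\ge k$ at $x$ is carried onto $\mathfrak m_D^k$ by an invertible affine map. Equivalently, Hasse derivatives of order $<k$ transform linearly among themselves.

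Over $\F_q$ everything is purely formal (polynomial identities), so there are no characteristic issues.
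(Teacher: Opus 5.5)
Your proof is correct. Its skeleton matches the paper's: both reduce by \(\mathrm{GL}_v(\F_q)\)-equivalence to the normal form \(\left[\begin{smallmatrix} I_t & 0\\ 0 & 0\end{smallmatrix}\right]\). Both then show that the lowest-degree homogeneous part of the shifted determinant is the determinant of the lower-right \((v-t)\times(v-t)\) block of the perturbation \(Z\).

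You differ only in how that last computation is done. The paper writes \(X_0+Z=\left[\begin{smallmatrix} I_t+A & B\\ C & D\end{smallmatrix}\right]\), where its \(D\) is the lower-right block of \(Z\), not your normal form. It then factors \(\det(X_0+Z)=\det(I_t+A)\det\bigl(D-C(I_t+A)^{-1}B\bigr)\) via the Schur complement in a formal power series ring. From there it uses that \(\det(I_t+A)\) is a unit with constant term \(1\), and that the Schur complement equals \(D\) plus terms of degree at least \(2\).

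You instead read both halves straight off the Leibniz expansion. The lower bound comes from each of the last \(v-t\) rows contributing a pure variable. The exact leading form comes from the fact that only permutations fixing \(1,\dots,t\) survive in degree \(v-t\).

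Your version is slightly more elementary, since it never leaves the polynomial ring or inverts anything. The paper's version is more structural: near a rank-\(t\) point, \(\det\) is a unit times the determinant of a perturbed \((v-t)\times(v-t)\) block.
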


\begin{proof}
Let \(X\in M_v(\mathbb F_q)\) have rank \(t\). If \(t=v\), then
\(\det(X)\neq 0\), so the order of vanishing of \(\det\) at \(X\) is
\(0=v-t\).

Now suppose \(t<v\). Since \(X\) has rank \(t\), there exist invertible
matrices \(P,Q\in \operatorname{GL}_v(\mathbb F_q)\) such that
\[
PXQ=
X_0:=
\begin{bmatrix}
I_t & 0\\
0 & 0
\end{bmatrix}.
\]
The change of variables \(Y\mapsto PYQ\) is an invertible linear change of
coordinates on \(M_v(\mathbb F_q)\), and
\[
\det(PYQ)=\det(P)\det(Q)\det(Y),
\]
where \(\det(P)\det(Q)\neq 0\). Hence this change of variables does not affect
the order of vanishing of the determinant polynomial. Therefore, it suffices to
compute the order of vanishing of \(\det\) at \(X_0\).

Write a matrix near \(X_0\) as
\[
X_0+Z=
\begin{bmatrix}
I_t+A & B\\
C & D
\end{bmatrix},
\]
where \(A,B,C,D\) are matrices of indeterminates of the appropriate sizes, and
where \(D\) is \((v-t)\times (v-t)\). In the formal power series
ring in the entries of \(A,B,C,D\), the matrix \(I_t+A\) is invertible, since
\[
\det(I_t+A)=1+\text{terms of positive degree}.
\]
Thus, by the Schur complement formula,
\[
\det(X_0+Z)
=
\det(I_t+A)\det\left(D-C(I_t+A)^{-1}B\right).
\]

Now \(\det(I_t+A)\) is a unit with constant term \(1\). Also, the entries of
\(D\) have degree \(1\), while the entries of \(C(I_t+A)^{-1}B\) have degree at
least \(2\), because each term contains one entry from \(C\) and one entry from
\(B\). Therefore,
\[
D-C(I_t+A)^{-1}B
=
D+\text{terms of degree at least }2.
\]
It follows that
\[
\det\left(D-C(I_t+A)^{-1}B\right)
=
\det(D)+\text{terms of degree at least }v-t+1.
\]
Since \(\det(D)\) is a nonzero homogeneous polynomial of degree \(v-t\), and since
\(\det(I_t+A)\) has constant term \(1\), the first nonzero homogeneous part of
\(\det(X_0+Z)\) has degree \(v-t\). Thus, \(\det\) has multiplicity \(v-t\) at \(X_0\), and hence also at the
original matrix \(X\).
\end{proof}

Now we present a more involved argument than the unfolded distance proof that gives a significant boost to the relative distance. This argument follows the structure of the proof of \cite[Theorem 63]{golowich2025quantum}, but this proof has to deal with more general parameters.

\begin{theorem}\label{thm:fqtb_distance}
    Let $\widetilde{\mathcal{Q}}$ be the code from \Cref{def:fqtb} with parameters $q, r, \delta, \ell, s$ such that $r+\delta-1$ is prime and the uncertainty principle in \Cref{cor:uncertainty_prime_finite_field} holds for $r+\delta-1$ over $\mathbb{F}_q$. Then $\widetilde{\mathcal{Q}}$ has distance at least \[d = \frac{q-1}{s}\left(1 - \frac{\ell-1}{q-1} - \epsilon\right)\] for \[\epsilon = \max_{1 \leq m_f \leq r+\delta-1} \min \left\{\left(1 - \frac{\ell-1}{q-1}\right) \frac{m_f-1}{r+\delta-1}, \max_{\max\{1,m_f-(\delta-1)\} \leq m_g \leq m_f}\frac{\delta-1}{m_g} + \frac{m_g-1}{s}\right\}.\]
\end{theorem}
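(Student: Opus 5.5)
Fix a nonzero folded codeword coming from $\ev(f)\in C\setminus C^\perp$, write $f=g+h$ as in the proof of \Cref{thm:qtb_distance}, and set $n=r+\delta-1$ and $\lambda=1-\frac{\ell-1}{q-1}$. Let $m_f$ be the number of residue classes modulo $n$ that contain an exponent of $f$, and $m_g$ the analogous count for $g$. Since $h$ occupies at most the $\delta-1$ classes $1,\ldots,\delta-1$, we have $\max\{1,m_f-(\delta-1)\}\le m_g\le m_f$. The target is a folded weight of at least $\frac{q-1}{s}(\lambda-\epsilon)$. For a fixed $m_f$ it suffices to prove either of two lower bounds: one where the loss is $\lambda\frac{m_f-1}{n}$, and one where the loss is $\frac{\delta-1}{m_g}+\frac{m_g-1}{s}$ for the actual $m_g$. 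Taking the better of the two and then the worst case over $m_f,m_g$ gives $\epsilon$.

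\textbf{Case 1: uncertainty bound.} On a coset $\alpha\Omega_n$, write $f(\alpha\omega)=\sum_{c}\alpha^cF_c(\alpha^n)\omega^c$, where $c$ runs over residues. This restriction is a polynomial of degree $<n$ in $\omega$ whose coefficient support has size at most $m_f$. If it is nonzero, \Cref{cor:uncertainty_prime_finite_field} (with $n$ prime) gives at least $n+1-m_f$ nonzero evaluations on the coset. The restriction vanishes only when every $F_c(\alpha^n)$ is zero. Each $F_c$ has degree at most about $(\ell-1)/n$ in $Y=X^n$, and some $F_c$ coming from $g$ is nonzero, so the number of zero cosets is at most about $(\ell-1)/n$. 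The number of nonzero cosets is therefore at least $\frac{q-1}{n}\lambda$, and the scalar weight is at least $\frac{q-1}{n}\lambda(n+1-m_f)=(q-1)\lambda\bigl(1-\frac{m_f-1}{n}\bigr)$, up to rounding. By \Cref{lem:coset_block_intersection} each folded block meets a coset in at most one point. Summing over offsets as in \Cref{cor:locality_fqtb} then converts this into folded weight $\frac{q-1}{s}\bigl(\lambda-\lambda\frac{m_f-1}{n}\bigr)$. The step to check is whether dividing by $s$ is lossless here. Since the scalar weight is spread over folded blocks, the folded weight is at least the scalar weight divided by $s$, so the bound holds.

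\textbf{Case 2: determinant polynomial.} Let $c_1,\ldots,c_{m_g}$ be the residues present in $g$, and write $g=\sum_k X^{c_k}G_k(X^n)$. Following \cite[Theorem 63]{golowich2025quantum}, I will build an $m_g\times m_g$ matrix polynomial $M(X)$ whose columns are indexed by the shifts $X\mapsto\omega_{q-1}^{j}X$ for suitable $j$ inside a folding window, together with transforms that annihilate $h$. The annihilating transforms are the $\mathcal L$-type operators from the proof of \Cref{thm:qtb_distance}, i.e.\ combinations over the $\delta-1$ columns of a local Vandermonde, with \Cref{cor:Q_b_finite_field_version_outside_finitely_many_char} keeping $g$'s coefficients alive. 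Then $D(X)=\det M(X)$ factors as a nonzero Vandermonde-type determinant times $\prod_k G_k$-type terms, so $D\ne0$ and $\deg D\lesssim m_g(\ell-1)+O(m_g^2 s)$ after rescaling. If a folded block is entirely zero, the window matrix there has rank at most $m_g-(\text{number of zero windows})$. \Cref{lem:det_poly_root} then converts each zero folded block into root multiplicity about $s-m_g+1$ over the roughly $s$ points of the block. The $\delta-1$ columns used to kill $h$ cost the $\frac{\delta-1}{m_g}$ term, and the window overlap costs the $\frac{m_g-1}{s}$ term. Comparing the number of roots to the degree yields at most $\frac{q-1}{s}\bigl(1-\lambda+\frac{\delta-1}{m_g}+\frac{m_g-1}{s}\bigr)$ zero folded blocks.

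\textbf{Main obstacle.} Case 2 is the main obstacle: the matrix has to be arranged so that the $h$-part is annihilated, the determinant stays nonzero, and the multiplicity bookkeeping yields exactly the $\frac{\delta-1}{m_g}$ and $\frac{m_g-1}{s}$ losses. My first attempt will be to use the $g_i^{(1)}$ transforms of the unfolded proof as the entries, and to shift by $\omega_{q-1}^{j}$ for $j<s$ within blocks.
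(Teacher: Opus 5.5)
\textbf{Case~1.} Your Case~1 is the paper's argument. One precision: the residue classes $1,\ldots,\delta-1$ carry exponents of $h$ up to $q-2$, so only the $F_c$ with $c\in M_g$ are guaranteed to have degree at most $(\ell-1)/n$. That is all you actually use, so the argument stands.

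\textbf{Case~2 is a genuine gap.} You leave it as the ``main obstacle,'' and the bookkeeping you sketch does not hold.
\begin{itemize}
\item If the entries are $\mathcal L$-transforms of $f$, each entry $\mathcal L_i[f](y)$ involves $f$ at $\delta$ points of $y\Omega_n$. By \Cref{lem:coset_block_intersection} these lie in $\delta$ distinct folded blocks, so a single zero folded block does not produce a zero row. Your rank claim ``at most $m_g$ minus the number of zero windows'' is therefore unjustified.
\item That rank claim is also too strong. It would leave no source for any $\delta-1$ loss, and the $\frac{\delta-1}{m_g}$ term is only asserted, never derived.
\item Your per-block multiplicity ``about $s-m_g+1$'' omits a factor $m_g$, since each zero block is seen from $m_g$ rows. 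Without that factor, comparing with a degree of order $m_g(\ell-1)$ is vacuous once $m_g\ge 2$, because $\ell\ge q/2$.
\item The determinant does not factor into $\prod_kG_k$-type terms. Nonvanishing comes from writing the matrix as a Vandermonde matrix times a matrix whose determinant is nonzero by a leading-term argument.
\item There is no $O(m_g^2s)$ degree correction. The shifts $X\mapsto\omega_{q-1}^jX$ preserve degree, so the degree is at most $m_g(\ell-1)$.
\end{itemize}

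\textbf{The missing idea: $h$ never has to be annihilated.} The paper takes the untransformed matrix $A_g(X)=(\omega_n^{-i}g(\omega_n^i\omega_{q-1}^jX))_{i,j\in[m_g]}$ and $G=\det A_g$.
\begin{itemize}
\item Fix a window point $x=\omega_{q-1}^b$ with $b\bmod s\le s-m_g$, and let $Z_x$ be the set of $i$ for which $F_{\omega_n^ix}$ is zero. On those rows $A_g(x)=-A_h(x)$.
\item By \Cref{lem:low.wt.parity.checks}, the $j$th column of $A_h(x)|_{Z_x}$ is $\sum_{t=1}^{\delta-1}p_{j,x,t}(\omega_n^{(t-1)i})_{i\in Z_x}$. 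Hence $\operatorname{rank}A_g(x)\le m_g-|Z_x|+\delta-1$, and \Cref{lem:det_poly_root} gives root multiplicity at least $|Z_x|-(\delta-1)$. This is where the $\delta-1$ loss comes from.
\item Use $\sum_x|Z_x|=m_g\bigl(q-1-s|\widetilde{\ev}(f)|\bigr)$ and charge at most $\frac{q-1}{s}m_g(m_g-1)$ for the non-window points. Comparing with $\deg G\le m_g(\ell-1)$ gives exactly the stated second bound.
\end{itemize}

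\textbf{Your annihilation route can also be completed, with a different count.}
\begin{itemize}
\item Take $\tilde g:=g^{(1)}_{\delta-1}$ as in \Cref{thm:qtb_distance}. Its coefficients are $g_eQ_{\delta-1}(\omega_n^{e-1})$ with $Q_{\delta-1}(Y)=\prod_{t=0}^{\delta-2}(Y-\omega_n^t)$, so $\tilde g$ has the same support as $g$. This needs no appeal to \Cref{cor:Q_b_finite_field_version_outside_finitely_many_char}, which is not among this theorem's hypotheses.
\item At a window point, $(\omega_n^{-i}\tilde g(\omega_n^i\omega_{q-1}^jx))_{i,j}$ equals a banded $m_g\times(m_g+\delta-1)$ coefficient matrix times the matrix of raw rows $(\omega_n^{-c}f(\omega_n^c\omega_{q-1}^jx))_j$, for $0\le c\le m_g+\delta-2$.
\item The multiplicity is then at least the number of zero blocks among these $m_g+\delta-1$ rows, minus $\delta-1$.
\item The same summation gives $\frac{q-1}{s}\cdot\frac{m_g}{m_g+\delta-1}\bigl(\lambda-\frac{m_g-1}{s}\bigr)$, which is at least the stated bound.
\end{itemize}
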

\begin{proof}
    Fix an arbitrary $f(x) = \sum_{j \in [q-1]}f_jX^j \in \F_q[X]$ such that  $\ev(f(X)) \in C \setminus C^\perp$ with associated folded codeword $\widetilde{\ev}(f(X)) \in \widetilde{C}\setminus \widetilde{C}^\perp$. Our goal is to show $|\widetilde{\ev}(f)| \geq d$.

    Since  $C = \ev(\F_q[X]^S)$ for \[S = \left([\ell] \setminus S_{-}\right)\cup ([q-1]\cap S_{+})\] we may write $f(X) = g(X) + h(X)$ where $g(X) = \sum_j g_jX^j \in \mathbb{F}_q[X]^{[\ell]\setminus (S_{-}\cup S_{+})}$ and $h(X) = \sum_j h_jX^j  \in \mathbb{F}_q[X]^{ [q-1]\cap S_{+}}$.

    Let $M_f = \{i \in [r+\delta-1]: \exists j \equiv i \mod r+\delta-1 \text{ with } f_j \neq 0\}$ and $m_f = |M_f|$. Similarly, define $M_g = \{i \in [r+\delta-1]: \exists j \equiv i \mod r+\delta-1 \text{ with } g_j \neq 0\}$ and $m_g = |M_g|$. Note that \[\max\{1,m_f-(\delta-1)\}\leq m_g\leq m_f \leq r+\delta-1.\] We show two lower bounds on $|\widetilde{\ev}(f)|$; the first is tighter when $m_f$ is small and the second is tighter when $m_g$ is large.

    \begin{enumerate}
        \item Using the uncertainty principle in \Cref{cor:uncertainty_prime_finite_field}, we will bound $|\ev(f)|$ and then apply the fact $|\widetilde{\ev}(f)|\geq |\ev(f)|/s$. For every $\alpha \in \mathbb{F}_q^*$, on the restriction to inputs in the coset $\alpha \Omega_{r+\delta-1}$, $f$ agrees with $f \pmod{X^{r+\delta-1} - \alpha^{r+\delta-1}}$. Clearly, $f \pmod{X^{r+\delta-1} - \alpha^{r+\delta-1}}$ is a polynomial of degree $< r+\delta-1$ whose coefficients are supported within $M_f$. Therefore, \[|f \pmod{X^{r+\delta-1} - \alpha^{r+\delta-1}}| \leq m_f\] and by \Cref{cor:uncertainty_prime_finite_field}, either $\ev(f)|_{\alpha\Omega_{r+\delta-1}} = 0$ or \[|\ev(f)|_{\alpha\Omega_{r+\delta-1}} \geq r+\delta-1 + 1 - m_f = r + \delta-m_f.\]

        If $M_f \subseteq \{1, \ldots, \delta-1\}$, then $\ev(f) \in C^\perp$ by \Cref{lem:qtb.construction}. This contradicts the assumption on $\ev(f)$ so there is some $i \in M_f\setminus\{1,\ldots, \delta-1\}$. Knowing this, $\ev(f)|_{\alpha \Omega_{r+\delta-1}} = 0$ if and only if $f\pmod{X^{r+\delta-1} - \alpha^{r+\delta-1}} = 0$ which is only possible if the $i$th coefficient of $f\pmod{X^{r+\delta-1} - \alpha^{r+\delta-1}}$: \[\sum_{j \in [\frac{q-1}{r+\delta-1}]}f_{i + (r+\delta-1)j}(\alpha^{r+\delta-1})^j\] is zero. The polynomial \[\sum_{j \in [\frac{q-1}{r+\delta-1}]}f_{i + (r+\delta-1)j}Y^j\] has degree $ \leq (\ell-1)/(r+\delta-1)$ because $f_{i+(r+\delta-1)j} =0$ if $i + (r+\delta-1)j \geq \ell$ by definition of $C$. Therefore, it has $\leq (\ell-1)/(r+\delta-1)$ roots which implies there are $\leq (\ell-1)/(r+\delta-1)$ cosets $\alpha \Omega_{r+\delta-1}$ for which $\ev(f)|_{\alpha \Omega_{r+\delta-1}} = 0$.

        Hence,\begin{align*}
            |\ev(f)| &= \sum_{\alpha\Omega_{r+\delta-1} \in \mathbb{F}_q^*/\Omega_{r+\delta-1}} |\ev(f)|_{\alpha \Omega_{r+\delta-1}}\\
            &\geq \left(\frac{q-1}{r+\delta-1} - \frac{\ell-1}{r+\delta-1}\right)(r+\delta-1 + 1-m_f)\\
            &= (q-1)\left(1 - \frac{\ell-1}{q-1}\right)\left(1 - \frac{m_f-1}{r+\delta-1}\right) \ .
        \end{align*} Now, by definition, \[|\widetilde{\ev}(f)| \geq \frac{|\ev(f)|}{s} \geq \frac{q-1}{s}\left(1 - \frac{\ell-1}{q-1}\right)\left(1 - \frac{m_f-1}{r+\delta-1}\right)\]
        
        \item By \Cref{lem:low.wt.parity.checks}, $h$ is piecewise of degree at most $\delta - 1$. Since $\ev(h) \in C^\perp$ and $\ev(f) \not \in C^\perp$ we must have $g \neq 0$.

        Let $\det \in \mathbb{F}_q[(Y_{i,j})_{i,j \in [v]}]$ denote the determinant polynomial which takes as input a $v \times v$ matrix of variables $(Y_{i,j})_{i,j \in [v]}$ over $\mathbb{F}_q$ and outputs the determinant of the matrix.
        Define the matrices \begin{align*}
        A(X) &= (\omega_{r+\delta-1}^{i}\omega_{q-1}^j X)_{i, j \in [m_g]},\\ A_g(X) &= (\omega_{r+\delta-1}^{-i}g(\omega_{r+\delta-1}^{i}\omega_{q-1}^j X))_{i, j \in [m_g]},\\
        A_h(X) &= (\omega_{r+\delta-1}^{-i}h(\omega_{r+\delta-1}^{i}\omega_{q-1}^j X))_{i, j \in [m_g]},\\
        \text{and } A_f(X) &= (\omega_{r+\delta-1}^{-i}f(\omega_{r+\delta-1}^{i}\omega_{q-1}^j X))_{i, j \in [m_g]}\end{align*}
        and the polynomial $G(X) = \det(A_g(X)) \in \mathbb{F}_q[X]$. It follows that $\deg G \leq m_g\cdot \deg g \leq m_g(\ell-1)$. Now we show that $G(X)$ is a nonzero polynomial. We know $g(X) = \sum_{\alpha \in [\ell]\setminus(S_{-}\cup S_{+})}g_\alpha X^\alpha$ so \begin{align*}
            A_g(X) &= (\omega_{r+\delta-1}^{-i}g(\omega_{r+\delta-1}^{i}\omega_{q-1}^j X))_{i, j \in [m_g]}\\
            &=\sum_{\alpha \in [\ell]\setminus(S_{-}\cup S_{+})} g_\alpha X^\alpha (\omega_{r+\delta-1}^{(\alpha -1)\cdot i}\omega_{q-1}^{\alpha \cdot j})_{i, j \in [m_g]}\\
            &= \sum_{\alpha \in [\ell]\setminus(S_{-}\cup S_{+})} g_\alpha X^\alpha \begin{bmatrix}
                \omega_{r+\delta-1}^{(\alpha-1) \cdot 0}\\
                \omega_{r+\delta-1}^{(\alpha-1) \cdot 1}\\
                \vdots\\
                \omega_{r+\delta-1}^{(\alpha-1) \cdot (m_g-1)}
            \end{bmatrix} \cdot \begin{bmatrix}
                \omega_{q-1}^{\alpha \cdot 0} & \omega_{q-1}^{\alpha \cdot 1} & \cdots & \omega_{q-1}^{\alpha \cdot (m_g-1)}
            \end{bmatrix}\\
            &= \sum_{u\in M_g}\begin{bmatrix}
                \omega_{r+\delta-1}^{(u-1) \cdot 0}\\
                \omega_{r+\delta-1}^{(u-1) \cdot 1}\\
                \vdots\\
                \omega_{r+\delta-1}^{(u-1)\cdot(m_g-1)}
            \end{bmatrix} \cdot \sum_{\substack{\alpha \in [\ell]\setminus(S_{-}\cup S_{+})\\\alpha \equiv u \bmod {(r+\delta-1)}}} g_\alpha X^\alpha \begin{bmatrix}
                \omega_{q-1}^{\alpha \cdot 0} & \omega_{q-1}^{\alpha \cdot 1} & \cdots & \omega_{q-1}^{\alpha \cdot (m_g-1)}
            \end{bmatrix}.
        \end{align*}
        $A_g(X)$ is an $m_g \times m_g$ matrix and $G(X) = \det(A_g(X))$ so in order to show $G(X)$ is nonzero, it suffices to show that $A_g(X)$ has full rank. It is then sufficient to show that the set of vectors \[\left\{\begin{bmatrix}
                \omega_{r+\delta-1}^{(u-1) \cdot 0}\\
                \omega_{r+\delta-1}^{(u-1) \cdot 1}\\
                \vdots\\
                \omega_{r+\delta-1}^{(u-1) \cdot(m_g-1)}
            \end{bmatrix} : u\in M_g\right\}\] and \[\left\{\sum_{\substack{\alpha \in [\ell]\setminus(S_{-}\cup S_{+})\\\alpha \equiv u \bmod {(r+\delta-1)}}} g_\alpha X^\alpha \begin{bmatrix}
                \omega_{q-1}^{\alpha \cdot 0} & \omega_{q-1}^{\alpha \cdot 1} & \cdots & \omega_{q-1}^{\alpha \cdot (m_g-1)}
            \end{bmatrix} : u \in M_g\right\}\] 
        are linearly independent over $\mathbb{F}_q[X]$. The first set of vectors form the columns of an $m_g\times m_g$ Vandermonde matrix which has full rank. If there is a nontrivial $\mathbb{F}_q[X]$ linear dependency in the second set of the vectors then taking the highest-degree term of the associated polynomials over $X$ gives a nontrivial dependency among the vectors $[\omega_{q-1}^{\alpha \cdot j}]_{j\in[m_g]}$ for $m_g$ distinct values of $\alpha$. This forms another $m_g \times m_g$ Vandermonde matrix. Therefore, both sets of vectors are linearly independent so $G(X)$ is indeed nonzero.
        
        It remains to bound the number of roots of $G(X)$. For a given $x= \omega_{q-1}^b \in \mathbb{F}_q^*$, recall that $F_x$ is the index of the folded component of $\widetilde{C}$ that contains the component $x$ of $C$.

        If $b \pmod{s} \in \{0,\ldots, s-m_g\}$, then because $s \mid \frac{q-1}{r+\delta-1}$ it follows that for each $i \in [m_g]$, \[\{\omega_{r+\delta-1}^{i} \omega_{q-1}^j x: j \in [m_g]\} = \{\omega_{q-1}^{b + j + i \cdot \frac{q-1}{r+\delta-1}}: j \in [m_g]\} \subseteq F_{\omega_{r+\delta-1}^{i}x}.\] The $i$th row of $A_f(x)$ consists of $m_g$ elements, all members of $F_{\omega_{r+\delta-1}^{i}x}$, so the $i$th row of $A_f(x)$ consists of $m_g$ out of the $s$ components of $\widetilde{\ev}(f)_{F_{\omega_{r+\delta-1}^{i}x}}$.

        Let $Z_x = \{i \in [m_g]: \widetilde{\ev}(f)_{F_{\omega_{r+\delta-1}^{i}x}} = 0\}$. If $b \pmod{s} \in \{0,\ldots, s-m_g\}$, where this set is understood to be empty when $s < m_g$, then for every $i \in Z_x$, by definition, \begin{align*}(0)_{j \in [m_g]} &= (\omega_{r+\delta-1}^{-i}f(\omega_{r+\delta-1}^{i}\omega_{q-1}^j x))_{j\in[m_g]}\\ &= (\omega_{r+\delta-1}^{-i}g(\omega_{r+\delta-1}^{i}\omega_{q-1}^j x))_{j\in[m_g]} + (\omega_{r+\delta-1}^{-i}h(\omega_{r+\delta-1}^{i}\omega_{q-1}^j x))_{j\in[m_g]}\end{align*} so \[(\omega_{r+\delta-1}^{-i}g(\omega_{r+\delta-1}^{i}\omega_{q-1}^j x))_{j\in[m_g]} =  (-\omega_{r+\delta-1}^{-i}h(\omega_{r+\delta-1}^{i}\omega_{q-1}^j x))_{j\in[m_g]}.\]

        Fix $j \in [m_g]$ and consider the following column vector whose rows are restricted to $Z_x$, \[\mathbf{c}_j = (\omega_{r+\delta-1}^{-i}h(\omega_{r+\delta-1}^{i} \omega_{q-1}^j x))_{i \in Z_x}  \in \mathbb{F}_q^{|Z_x|}.\] The points $\omega_{r+\delta-1}^{i}\omega_{q-1}^j x$, $i \in Z_x$ lie in the coset $\omega_{q-1}^jx \Omega_{r+\delta-1}$ because $\omega_{r+\delta-1}^{i} \in \Omega_{r+\delta-1}$ for all $i$. By \Cref{lem:low.wt.parity.checks}, on this coset, there exists a polynomial $P_{j,x}(Y) = \sum_{t = 1}^{\delta-1} p_{j,x,t}Y^t$ such that $\deg P_{j,x} \leq \delta - 1$, $P_{j,x}(0) = 0$, and for every $\omega \in \Omega_{r+\delta-1}$, $h(\omega_{q-1}^j x \cdot \omega) = P_{j,x}(\omega)$. Therefore, \[\mathbf{c}_j = (\omega_{r+\delta-1}^{-i}P_{j,x}(\omega_{r+\delta-1}^{i}))_{i\in Z_x} = \sum_{t=1}^{\delta-1} p_{j,x,t}\cdot(\omega_{r+\delta-1}^{(t-1)\cdot i})_{i \in Z_x}\] so $\mathbf{c}_j$ lies in the span of the $\delta - 1$ vectors $(\omega_{r+\delta-1}^{0 \cdot i})_{i \in Z_x}, \ldots,(\omega_{r+\delta-1}^{(\delta-2)\cdot i})_{i \in Z_x}$. Since $j$ was arbitrary, this is true for every $\mathbf{c}_j$, $j \in [m_g]$. Thus, the rank of \[(-\omega_{r+\delta-1}^{-i}h(\omega_{r+\delta-1}^{i}\omega_{q-1}^j x))_{i \in Z_x, j\in[m_g]} = (\omega_{r+\delta-1}^{-i}g(\omega_{r+\delta-1}^{i}\omega_{q-1}^j x))_{i \in Z_x, j\in[m_g]}\] is at most $\delta - 1$. Since $[m_g] = Z_x \cup ([m_g] \setminus Z_x)$, the span of rows in $A_g(x)$ indexed by $[m_g] \setminus Z_x$ has dimension at most $m_g - |Z_x|$ and the span of the rows in $A_g(x)$ indexed by $Z_x$, as we just showed, has dimension at most $\delta - 1$. Hence, $\mathrm{rank}(A_g(x)) \leq m_g - |Z_x| + \delta-1$. By \Cref{lem:det_poly_root}, $G(X)$ has a root of multiplicity $\geq |Z_x| - \delta + 1$ at $X = x$.

        Summing over all $x = \omega_{q-1}^b \in \mathbb{F}_q^*$, it follows that the number of roots (with multiplicity) of $G(X)$ is at least \begin{align*}
            &\sum_{\substack{b \in [q-1]\\b\bmod{s} \in \{0,\ldots, s-m_g\}}}(|Z_{\omega_{q-1}^b}| - \delta + 1)\\ &\geq \sum_{x \in \mathbb{F}_q^*} (|Z_x| - \delta + 1) - \frac{(q-1)}{s} \cdot m_g \cdot (m_g-1)\\
            &= \sum_{x \in \mathbb{F}_q^*}|Z_x| - (q-1)\left(\frac{m_g(m_g-1)}{s} + \delta - 1\right)\\
            &= (q-1 - |\widetilde{\ev}(f)|s)m_g - (q-1)\left(\frac{m_g(m_g-1)}{s} + \delta - 1\right)\\
            &= (q-1)m_g\left(1 - \frac{|\widetilde{\ev}(f)|\cdot s}{q-1} - \frac{m_g-1}{s} - \frac{\delta-1}{m_g}\right).
        \end{align*} Since $\deg G(X) \leq m_g(\ell - 1)$, $G(X)$ has at most $m_g(\ell - 1)$ roots so \[(q-1)m_g\left(1 - \frac{|\widetilde{\ev}(f)|\cdot s}{q-1} - \frac{m_g-1}{s} - \frac{\delta-1}{m_g}\right) \leq m_g(\ell - 1)\] and rearranging gives \[|\widetilde{\ev}(f)| \geq \frac{q-1}{s} \left(1 - \frac{\ell - 1}{q-1} - \frac{m_g-1}{s} - \frac{\delta - 1}{m_g}\right).\]
        \end{enumerate}
       Combining the two bounds we obtain \[d = \frac{q-1}{s}\left(1 - \frac{\ell-1}{q-1} - \epsilon\right)\] for \[\epsilon = \max_{1 \leq m_f \leq r+\delta-1} \min \left\{\left(1 - \frac{\ell-1}{q-1}\right) \frac{m_f-1}{r+\delta-1}, \max_{\max\{1,m_f-(\delta-1)\} \leq m_g \leq m_f}\frac{\delta-1}{m_g} + \frac{m_g-1}{s}\right\},\] as desired.
       
    \end{proof}

    \begin{remark}
    The second case in the proof is a generalization of \cite[Claim 68]{golowich2025quantum}. However, for us arguing the rank drop was more involved because the rows of $A_g(X)$ corresponding to $Z_x$ were not all equal unlike in the proof of \cite[Claim 68]{golowich2025quantum}. Instead, we looked at the columns of $A_g(X)$ restricted to $Z_x$ and showed that its column rank must be $\leq \delta - 1$ from which we could claim the multiplicity of a root of $G(X)$.
    \end{remark}

    \begin{remark}
        Since we require the uncertainty principle to hold for $r + \delta-1$ over $\mathbb{F}_q$, we must first fix $\delta$ then $r$ and let $q$ be a prime power that lies outside the finite set of characteristics where the uncertainty principle for $r+\delta-1$ does not hold.
    \end{remark}

    \begin{remark}
        We show that in the case of $s = 1$, the bound presented for the unfolded QTB code in \Cref{thm:qtb_distance} is tighter than the bound for the folded QTB presented in \Cref{thm:fqtb_distance}. Let $n = r+\delta-1$. Indeed when $s = 1$, \[\max_{\max\{1,m_f-(\delta-1)\} \leq m_g \leq m_f}\frac{\delta-1}{m_g} + m_g-1 \geq 1,\] but \[\left(1 - \frac{\ell-1}{q-1}\right) \frac{m_f-1}{r+\delta-1} < 1\] so \[\epsilon = \left(1 - \frac{\ell-1}{q-1}\right) \frac{n-1}{n}\] and the distance bound becomes \(\frac{q-\ell}{n}\). It remains to show \[\frac{q-\ell}{n} < \frac{q-1}{2}
        \left(
        \frac{1}{\delta-1}
        +
        \frac{r}{n}
        -
        \sqrt{
        \left(\frac{r}{n}-\frac{1}{\delta-1}\right)^2
        +
        \frac{4r}{(\delta-1)n}
        \cdot\frac{\ell-1}{q-1}
        }
        \right)\]

        Dividing both sides by \(q-1\), it remains to show
        \[
        \frac{q-\ell}{(q-1)n}
        <
        \frac{1}{2}
                \left(
                \frac{1}{\delta-1}
                +
                \frac{r}{n}
                -
                \sqrt{
                \left(\frac{r}{n}-\frac{1}{\delta-1}\right)^2
                +
                \frac{4r}{(\delta-1)n}
                \cdot\frac{\ell-1}{q-1}
                }
                \right).
        \]
        Equivalently, we want to show
        \[
        \sqrt{
                \left(\frac{r}{n}-\frac{1}{\delta-1}\right)^2
                +
                \frac{4r}{(\delta-1)n}
                \cdot\frac{\ell-1}{q-1}
                }
        <
        \frac{1}{\delta-1}
        +
        \frac{r}{n}
        -
        \frac{2(q-\ell)}{(q-1)n}.
        \]
        The right-hand side is positive, since \(\ell< q\) implies
        \[
        \frac{2(q-\ell)}{(q-1)n}\leq \frac{2}{n},
        \]
        and hence
        \[
        \frac{1}{\delta-1}
        +
        \frac{r}{n}
        -
        \frac{2(q-\ell)}{(q-1)n}
        \geq
        \frac{1}{\delta-1}
        +
        \frac{r-2}{n}
        >0,
        \]
        using \(r\geq \delta\geq 2\). Therefore, it suffices to square both sides.
        
        After squaring, the desired inequality becomes
        \[
        \left(\frac{r}{n}-\frac{1}{\delta-1}\right)^2
        +
        \frac{4r}{(\delta-1)n}
        \cdot
        \frac{\ell-1}{q-1}
        <
        \left(
        \frac{1}{\delta-1}
        +
        \frac{r}{n}
        -
        \frac{2(q-\ell)}{(q-1)n}
        \right)^2.
        \]
        Subtracting the left-hand side from the right-hand side gives
        \[
        \begin{aligned}
        &
        \left(
        \frac{1}{\delta-1}
        +
        \frac{r}{n}
        -
        \frac{2(q-\ell)}{(q-1)n}
        \right)^2
        -
        \left[
        \left(\frac{r}{n}-\frac{1}{\delta-1}\right)^2
        +
        \frac{4r}{(\delta-1)n}
        \cdot
        \frac{\ell-1}{q-1}
        \right]
        \\
        &\qquad =
        \frac{4(q-\ell)}
        {(q-1)(\delta-1)n^2}
        \left(
        r(r-1)
        -
        (\delta-1)\frac{\ell-1}{q-1}
        \right).
        \end{aligned}
        \]
        This quantity is strictly positive since \(\ell<q\) and \(r\geq \delta\); we have
        \[
        r(r-1)\geq \delta(\delta-1)>\delta-1.
        \]
        Thus the squared inequality holds strictly, and hence
        \[
        \frac{q-\ell}{n}
        <
        \frac{q-1}{2}
                \left(
                \frac{1}{\delta-1}
                +
                \frac{r}{n}
                -
                \sqrt{
                \left(\frac{r}{n}-\frac{1}{\delta-1}\right)^2
                +
                \frac{4r}{(\delta-1)n}
                \cdot\frac{\ell-1}{q-1}
                }
                \right).
        \]
        Therefore, for \(s=1\), the distance bound from \Cref{thm:qtb_distance} is tighter than the specialization of the bound from \Cref{thm:fqtb_distance}.
    \end{remark}

    \subsection{Asymptotic distance bound}
    
    In order to obtain an asymptotic formulation of the distance, we desire to remove the $m_f$ and $m_g$ dependency in the distance bound. To that end, we present a technical lemma that will allow us to bound the $\epsilon$ in \Cref{thm:fqtb_distance}.

    \begin{lemma}\label{lem:epsilon_bound}
    Let
    \[
    \epsilon
    =
    \max_{1 \leq m_f \leq r+\delta-1}
    \min \left\{
    \left(1 - \frac{\ell-1}{q-1}\right) \frac{m_f-1}{r+\delta-1},\;
    \max_{\max\{1,m_f-(\delta-1)\} \leq m_g \leq m_f}
    \left(\frac{\delta-1}{m_g} + \frac{m_g-1}{s}\right)
    \right\}.
    \]
    Let \(n=r+\delta-1\) and \(\lambda = 1 - \frac{\ell-1}{q-1}\). If \(s = c n^2\) for \(c \geq 2\), then
    \[
    \epsilon
    \leq
    \lambda\frac{\delta-2}{n}
    +
    \left(1 + \frac{1}{c}\right)\sqrt{\frac{\lambda(\delta-1)}{n}}.
    \]
    \end{lemma}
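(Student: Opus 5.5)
We bound the max over $m_f$ of the min of the two terms by choosing a threshold $m^*$. For $m_f\le m^*$ we use the first term; for $m_f>m^*$ we use the second. Set $a:=\delta-1$, and write
\[
A(m_f):=\lambda\frac{m_f-1}{n},
\qquad
B(m_f):=\max_{\max\{1,m_f-a\}\le m_g\le m_f}\Bigl(\frac{a}{m_g}+\frac{m_g-1}{s}\Bigr).
\]
Then $\epsilon\le\max_{m_f}\min\{A(m_f),B(m_f)\}$. $A$ is nondecreasing in $m_f$. For $B$, the function $\phi(m)=a/m+(m-1)/s$ is convex in $m$, so its maximum over the interval $[\max\{1,m_f-a\},m_f]$ is attained at an endpoint. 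Hence
\[
B(m_f)\le \max\{\phi(m_f-a),\phi(m_f)\}
\]
when $m_f-a\ge1$.

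Choose the threshold $m^*:=a+\lceil\sqrt{an/\lambda}\,\rceil$, or rather work with the real number $\mu:=\sqrt{an/\lambda}$. For $m_f\le a+\mu+1$ we use $A$:
\[
A(m_f)\le\lambda\frac{a+\mu}{n}=\lambda\frac{\delta-1}{n}+\sqrt{\frac{\lambda a}{n}}.
\]
The target has $\lambda(\delta-2)/n$, so the bookkeeping must be arranged to save one unit. For this, take the cutoff as $m_f-1\le a-1+\mu$, i.e. $m_f\le a+\mu$. This gives
\[
A\le \lambda\frac{a-1}{n}+\lambda\frac{\mu}{n}=\lambda\frac{\delta-2}{n}+\sqrt{\frac{\lambda a}{n}}.
\]

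For $m_f>a+\mu$, every $m_g$ in range satisfies $m_g\ge m_f-a>\mu$, so $a/m_g<a/\mu=\sqrt{\lambda a/n}$. Also $m_g-1\le n-1<n$, so $(m_g-1)/s<n/(cn^2)=1/(cn)$. This is where the $1/c$ factor must come from. We need $1/(cn)\le \frac1c\sqrt{\lambda a/n}$, i.e. $\lambda a n\ge1$. This holds whenever $\lambda>0$ is not tiny. The degenerate case, where $\lambda a n<1$ or $\mu\ge n$, needs separate handling. If $\mu\ge n-a$, the second regime is empty and only the $A$ bound is used. In the remaining small-$\lambda$ case, $A\le\lambda(n-1)/n\le\lambda$. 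Combining, in each regime the quantity is at most
\[
\lambda\frac{\delta-2}{n}+\Bigl(1+\frac1c\Bigr)\sqrt{\frac{\lambda(\delta-1)}{n}}.
\]

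The main obstacle is this last comparison $(m_g-1)/s\le\frac1c\sqrt{\lambda a/n}$. The crude bound $m_g<n$ gives only $1/(cn)$, and that is dominated by $\frac1c\sqrt{\lambda a/n}$ only when $\lambda an\ge1$. In the opposite regime I would show the whole $\epsilon$ is already bounded by the first term. There $\lambda<1/(an)$, so $A\le\lambda<\sqrt{\lambda a/n}$ trivially, because $\lambda<\sqrt{\lambda/(an)}\le\sqrt{\lambda a/n}$ when $a\ge1$. This closes the case split. Integrality of $m_f$ is handled by taking the real cutoff $a+\mu$ and noting that each integer $m_f$ falls on one side of it.
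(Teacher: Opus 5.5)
Your argument is correct, but it takes a different route from the paper's.

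\textbf{The paper's route.} It first shows that $\phi(m)=\frac{\delta-1}{m}+\frac{m-1}{s}$ is nonincreasing on $[1,n]$, using $s\ge n^2/(\delta-1)$. Hence the inner maximum is attained at $m_g=\max\{1,m_f-(\delta-1)\}$. It then substitutes $u=m_f-\delta+1$ and peels off $\lambda\frac{\delta-2}{n}$ via $\min\{A+B,C\}\le A+\min\{B,C\}$. Next it bounds $\max_u\min\{\lambda u/n,\frac{\delta-1}{u}+\frac{u}{s}\}$ by its value at the exact crossing point $u^*=\sqrt{(\delta-1)/(\lambda/n-1/s)}$. Finally it invokes $1/\sqrt{1-1/c}\le 1+1/c$, which is where $c\ge 2$ is used, and treats $\lambda\le 1/n$ as a separate case.

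\textbf{Your route.} You cut at the crossing point of the $s=\infty$ problem, $m_f=(\delta-1)+\sqrt{(\delta-1)n/\lambda}$, and bound each side by a single one of the two terms. You absorb the folding penalty crudely through $(m_g-1)/s<1/(cn)\le\frac1c\sqrt{\lambda(\delta-1)/n}$, which needs only $\lambda(\delta-1)n\ge 1$. The complementary case is closed by $\epsilon\le\lambda<\sqrt{\lambda(\delta-1)/n}$, essentially as in the paper.

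\textbf{What each buys.} Your version avoids the monotonicity of $\phi$, the change of variables, and the square-root estimate. Nowhere in it is $c\ge 2$ used, so it proves the stated inequality for every $c>0$. The paper's version keeps the $u/s$ term inside the crossing-point computation rather than replacing it by its worst case $(n-1)/s$, but it pays for this by needing $c$ bounded below.

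\textbf{Write-up.} Drop the false starts: the cutoff $a+\mu+1$ and the ceiling threshold. Also drop the convexity/endpoint remark on $\phi$, which your final argument never uses. The case $\mu\ge n-a$ needs no separate mention, since your first-regime bound already covers every $m_f\le a+\mu$.
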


    \begin{proof}
    Let \(n := r + \delta - 1\) and \(\lambda:=1-\frac{\ell-1}{q-1}.\)
    Define
    \[
    \phi(m):=\frac{\delta-1}{m}+\frac{m-1}{s}.
    \]
    Then
    \[
    \epsilon
    =
    \max_{1\le m_f\le n}
    \min\left\{
    \lambda\frac{m_f-1}{n},
    \;
    \max_{\max\{1,m_f-(\delta-1)\}\le m_g\le m_f}\phi(m_g)
    \right\}.
    \]
    
    We first show that \(\phi\) is nonincreasing on \([1,n]\). Indeed,
    \[
    \phi'(m)
    =
    -\frac{\delta-1}{m^2}+\frac{1}{s}.
    \]
    Since \(s=cn^2\) with \(c\ge 2\), we have
    \[
    s\ge n^2\ge \frac{n^2}{\delta-1}.
    \]
    Thus
    \[
    \frac1s\le \frac{\delta-1}{n^2}\le \frac{\delta-1}{m^2}
    \qquad
    (1\le m\le n),
    \]
    and hence
    \(
    \phi'(m)\le 0
    \)
    on \([1,n]\). Therefore, for fixed \(m_f\), the maximum of \(\phi(m_g)\)
    % \[
    % \max\{1,m_f-(\delta-1)\}\le m_g\le m_f
    % \]
    is attained at the left endpoint:
    \[
    \max_{\max\{1,m_f-(\delta-1)\}\le m_g\le m_f}\phi(m_g)
    =
    \phi(\max\{1,m_f-(\delta-1)\}).
    \]
    Hence
    \[
    \epsilon
    =
    \max_{1\le m_f\le n}
    \min\left\{
    \lambda\frac{m_f-1}{n},
    \;
    \phi(\max\{1,m_f-(\delta-1)\})
    \right\}.
    \]
    
    We now split the maximum over \(m_f\) into two ranges.
    
    \smallskip
    \noindent\textbf{Small-support range: \(1\le m_f<\delta\).}
    In this range,
    \(
    m_f-1\le \delta-2.
    \)
    Therefore
    \[
    \min\left\{
    \lambda\frac{m_f-1}{n},
    \;
    \phi(\max\{1,m_f-(\delta-1)\})
    \right\}
    \le
    \lambda\frac{m_f-1}{n}
    \le
    \lambda\frac{\delta-2}{n}.
    \]
    Thus the contribution of all \(m_f<\delta\) is at most
    \(
    \lambda\frac{\delta-2}{n}.
    \)
    
    \smallskip
    \noindent\textbf{Large-support range: \(\delta\le m_f\le n\).}
    In this range,
    \(
    m_f-(\delta-1)=m_f-\delta+1\ge 1.
    \)
    Therefore,
    \(
    \max\{1,m_f-(\delta-1)\}=m_f-\delta+1.
    \)
    Thus the large-support contribution is
    \[
    \max_{\delta\le m_f\le n}
    \min\left\{
    \lambda\frac{m_f-1}{n},
    \;
    \frac{\delta-1}{m_f-\delta+1}+\frac{m_f-\delta}{s}
    \right\}.
    \]
    Set
    \(
    u:=m_f-\delta+1.
    \)
    Then
    \(
    1\le u\le r,
    \)
    and
    \(
    m_f-1=u+\delta-2.
    \)
    Hence
    \[
    \lambda\frac{m_f-1}{n}
    =
    \lambda\frac{u+\delta-2}{n}
    =
    \lambda\frac{\delta-2}{n}
    +
    \lambda\frac{u}{n},
    \]
    while
    \[
    \frac{\delta-1}{m_f-\delta+1}+\frac{m_f-\delta}{s}
    =
    \frac{\delta-1}{u}+\frac{u-1}{s}.
    \]
    Therefore the large-support contribution is at most
    \[
    \max_{1\le u\le r}
    \min\left\{
    \lambda\frac{\delta-2}{n}+\lambda\frac{u}{n},
    \;
    \frac{\delta-1}{u}+\frac{u-1}{s}
    \right\}.
    \]
    Using
    \[
    \min\{A+B,C\}\le A+\min\{B,C\},
    \]
    with
    \[
    A=\lambda\frac{\delta-2}{n},
    \qquad
    B=\lambda\frac{u}{n},
    \qquad
    C=\frac{\delta-1}{u}+\frac{u-1}{s},
    \]
    we get
    \[
    \max_{1\le u\le r}
    \min\left\{
    \lambda\frac{\delta-2}{n}+\lambda\frac{u}{n},
    \;
    \frac{\delta-1}{u}+\frac{u-1}{s}
    \right\}
    \le
    \lambda\frac{\delta-2}{n}
    +
    \max_{1\le u\le r}
    \min\left\{
    \lambda\frac{u}{n},
    \;
    \frac{\delta-1}{u}+\frac{u-1}{s}
    \right\}.
    \]
    Furthermore, \((u-1)/s < u/s\),
    and enlarging the range from \(1\le u\le r\) to \(1\le u\le n\) gives
    \[
    \max_{1\le u\le r}
    \min\left\{
    \lambda\frac{u}{n},
    \;
    \frac{\delta-1}{u}+\frac{u-1}{s}
    \right\}
    \le
    \max_{1\le u\le n}
    \min\left\{
    \lambda\frac{u}{n},
    \;
    \frac{\delta-1}{u}+\frac{u}{s}
    \right\}.
    \]
    Combining the small-support and large-support ranges, we obtain
    \[
    \epsilon
    \le
    \lambda\frac{\delta-2}{n}
    +
    \max_{1\le u\le n}
    \min\left\{
    \lambda\frac{u}{n},
    \;
    \frac{\delta-1}{u}+\frac{u}{s}
    \right\}.
    \]
    
    It remains to bound the final maximum. We split into two cases.
    
    \smallskip
    \noindent\textbf{Case 1: \(\lambda\le \frac1n\).}
    By definition,
    \[
    \epsilon
    \le
    \max_{1\le m_f\le n}\lambda\frac{m_f-1}{n}
    \le
    \lambda.
    \]
    Since \(\delta\ge 2\), we have
    \[
    \lambda\le \frac1n
    \implies
    \lambda\le \sqrt{\frac{\lambda}{n}}
    \le
    \sqrt{\frac{\lambda(\delta-1)}{n}}.
    \]
    Therefore
    \[
    \lambda
    \le
    \lambda\frac{\delta-2}{n}
    +
    \left(1+\frac1c\right)\sqrt{\frac{\lambda(\delta-1)}{n}},
    \]
    and the claim follows in this case.
    
    \smallskip
    \noindent\textbf{Case 2: \(\lambda>\frac1n\).}
    Set
    \[
    B(u):=\lambda\frac{u}{n},
    \qquad
    C(u):=\frac{\delta-1}{u}+\frac{u}{s}.
    \]
    Then \(B(u)\) is increasing. Also,
    \[
    C'(u)
    =
    -\frac{\delta-1}{u^2}+\frac1s
    \le 0
    \qquad
    (1\le u\le n),
    \]
    because \(s \geq n^2/(\delta-1)\).
    Hence \(C(u)\) is nonincreasing on \([1,n]\).
    
    The two curves \(B(u)\) and \(C(u)\) meet at the positive real number
    \[
    u^*
    =
    \sqrt{\frac{\delta-1}{\frac{\lambda}{n}-\frac1s}},
    \]
    because the equation \(B(u)=C(u)\) is
    \[
    \lambda\frac{u}{n}
    =
    \frac{\delta-1}{u}+\frac{u}{s}
    \iff
    \left(\frac{\lambda}{n}-\frac1s\right)u^2=\delta-1.
    \]
    Since \(B\) is increasing and \(C\) is nonincreasing, we have
    \(
    \max_{1\le u\le n}\min\{B(u),C(u)\}
    \le
    B(u^*).
    \)
    Thus,
    \[
    \max_{1\le u\le n}
    \min\left\{
    \lambda\frac{u}{n},
    \;
    \frac{\delta-1}{u}+\frac{u}{s}
    \right\}
    \le
    \lambda\frac{u^*}{n}.
    \]
    Substituting the value of \(u^*\), we obtain
    \[
    \lambda\frac{u^*}{n}
    =
    \frac{\lambda}{n}
    \sqrt{
    \frac{\delta-1}{\frac{\lambda}{n}-\frac1s}
    }
    =
    \sqrt{\frac{\lambda(\delta-1)}{n}}
    \cdot
    \frac{1}{\sqrt{1-\frac{n}{\lambda s}}}.
    \]
    Since \(\lambda>1/n\), we have
    \[
    \frac{n}{\lambda s}
    <
    \frac{n^2}{s}
    =
    \frac1c.
    \]
    Therefore
    \[
    \lambda\frac{u^*}{n}
    \le
    \sqrt{\frac{\lambda(\delta-1)}{n}}
    \cdot
    \frac{1}{\sqrt{1-\frac1c}}.
    \]
    For \(c\ge 2\), we have \(1/c\le 1/2\), and
    \[
    \frac{1}{\sqrt{1-x}}\le 1+x
    \qquad
    (0\le x\le 1/2).
    \]
    Applying this with \(x=1/c\), we get
    \[
    \frac{1}{\sqrt{1-\frac1c}}
    \le
    1+\frac1c.
    \]
    Hence
    \[
    \max_{1\le u\le n}
    \min\left\{
    \lambda\frac{u}{n},
    \;
    \frac{\delta-1}{u}+\frac{u}{s}
    \right\}
    \le
    \left(1+\frac1c\right)
    \sqrt{\frac{\lambda(\delta-1)}{n}}.
    \]
    
    Combining this with the previous reduction, we obtain
    \[
    \epsilon
    \le
    \lambda\frac{\delta-2}{n}
    +
    \left(1+\frac1c\right)
    \sqrt{\frac{\lambda(\delta-1)}{n}},
    \]
    as desired.
    \end{proof}
    Now we get the following corollary for the asymptotic bound using \Cref{lem:epsilon_bound}.
    \begin{corollary}\label{cor:fqtb_distance_asymptotic}

    Let $\widetilde{\mathcal Q}$ be the folded QTB code from \Cref{thm:fqtb_distance}. If
    \[
    s \geq 2(r+\delta-1)^2,
    \]
    then $\widetilde{\mathcal Q}$ has distance at least
    \[
    d \geq \frac{q-1}{s}\left(
    \left(1-\frac{\ell-1}{q-1}\right)\frac{r+1}{r+\delta-1}
    -\left(1+\frac{(r+\delta-1)^2}{s}\right)
    \sqrt{\frac{\delta-1}{r+\delta-1}\left(1-\frac{\ell-1}{q-1}\right)}
    \right).
    \]
    Moreover, for every $0 < R < (r-\delta+1)/(r+\delta-1)$, there exists an explicit family of QLRCs of locality $(r,\delta)$, rate $\geq R$, and relative distance at least
    \begin{align*}
  \left(\frac{1}{2}-\frac{R}{2}\cdot\frac{r+\delta-1}{r-\delta+1}\right)\frac{r+1}{r+\delta-1}
    -\frac{3}{2}\sqrt{
    \frac{\delta-1}{r+\delta-1}
    \left(
    \frac{1}{2}-\frac{R}{2}\cdot\frac{r+\delta-1}{r-\delta+1}
    \right)
    }.
    \end{align*}
    \end{corollary}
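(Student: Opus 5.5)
The plan is to combine \Cref{thm:fqtb_distance} with \Cref{lem:epsilon_bound}, and then translate $\lambda:=1-\frac{\ell-1}{q-1}$ into the rate using \Cref{lem:dim.r.delta.TB}. Set $n=r+\delta-1$ and write $s=cn^2$ with $c=s/n^2\ge 2$. Strictly speaking, \Cref{lem:epsilon_bound} is stated for $s=cn^2$. Its proof, however, only uses $s\ge n^2/(\delta-1)$ and $n^2/s\le 1/2$, so it applies to any real $c\ge 2$. The bound $1/\sqrt{1-x}\le 1+x$ is then used with $x=n^2/s$.

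This gives
\[
\epsilon\le \lambda\frac{\delta-2}{n}+\Bigl(1+\frac{n^2}{s}\Bigr)\sqrt{\frac{\lambda(\delta-1)}{n}}.
\]
Substituting into $d=\frac{q-1}{s}(\lambda-\epsilon)$, we use
\[
\lambda-\lambda\frac{\delta-2}{n}=\lambda\frac{n-\delta+2}{n}=\lambda\frac{r+1}{n}.
\]
This is exactly the first displayed bound.

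For the asymptotic statement, choose $\ell$ as in the proof of \Cref{cor:asymptotic_qtb_distance}. Equation \labelcref{eqn:ell-1/q-1} gives
\[
\frac{\ell-1}{q-1}=\frac12+\frac R2\cdot\frac{n}{r-\delta+1}+O\!\left(\frac{1}{q-1}\right).
\]
Hence
\[
\lambda\to\lambda_R:=\frac12-\frac R2\cdot\frac{n}{r-\delta+1}
\]
as $q\to\infty$, and $\lambda_R>0$ exactly when $R<(r-\delta+1)/n$. Folding preserves the rate, so the rate is the one computed in \Cref{lem:dim.r.delta.TB}. The folded code has block length $(q-1)/s$, so the relative distance is $d\cdot s/(q-1)\ge \lambda\frac{r+1}{n}-(1+n^2/s)\sqrt{(\delta-1)\lambda/n}$.

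Taking $s=2n^2$ (or any $s\ge 2n^2$) gives $1+n^2/s\le 3/2$. Passing to the limit $\lambda\to\lambda_R$, and handling the $O(1/q)$ error by a slight perturbation of $R$ or by monotonicity, yields the stated relative distance.

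Explicitness comes from choosing the fields. We need primes $p$ in suitable residue classes, by Dirichlet, so that $sn\mid q-1$. We also need $\mathrm{char}(\F_q)$ to avoid the finitely many bad primes for \Cref{cor:uncertainty_prime_finite_field}, and $n$ prime as required by \Cref{thm:fqtb_distance}. Locality is given by \Cref{cor:locality_fqtb}.

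The main obstacle is bookkeeping rather than new ideas. First, we must justify using \Cref{lem:epsilon_bound} with non-integer $c$. Second, we must track the $O(1/q)$ and $\epsilon$-dimension error terms when replacing $\lambda$ by $\lambda_R$; the coefficient of $\sqrt{\lambda}$ is negative, so this limit has to be controlled carefully. I would handle both by proving the bound for $\lambda$ and noting that $\lambda\frac{r+1}{n}-\frac32\sqrt{(\delta-1)\lambda/n}$ is continuous in $\lambda$.
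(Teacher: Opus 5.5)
Your proposal is correct and follows essentially the same route as the paper. You substitute \Cref{lem:epsilon_bound} with $c=s/n^2$ (where $n=r+\delta-1$) into \Cref{thm:fqtb_distance}, use $\lambda\bigl(1-\frac{\delta-2}{n}\bigr)=\lambda\frac{r+1}{n}$, and then pass to $q\to\infty$ via \labelcref{eqn:ell-1/q-1} together with $1+n^2/s\le 3/2$. Your extra bookkeeping (non-integer $c$, choosing fields by Dirichlet, and continuity in $\lambda$) makes explicit what the paper's terse proof leaves implicit.
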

    
    \begin{proof}
    The first claim follows immediately from \Cref{lem:epsilon_bound}, since if
    $s = c(r+\delta-1)^2$,
    then
    \[
    1+\frac{1}{c}=1+\frac{(r+\delta-1)^2}{s}.
    \]
    For the second claim, using \labelcref{eqn:ell-1/q-1} and taking \(q\to\infty\), we obtain
    \[
    1-\frac{\ell-1}{q-1}
    =
    \frac{1}{2}-\frac{R}{2}\cdot\frac{r+\delta-1}{r-\delta+1}.
    \]
    Therefore the asymptotic relative distance is at least
    \begin{align*}
    &\left(1-\frac{\ell-1}{q-1}\right)\frac{r+1}{r+\delta-1}
    -\left(1+\frac{(r+\delta-1)^2}{s}\right)
    \sqrt{\frac{\delta-1}{r+\delta-1}\left(1-\frac{\ell-1}{q-1}\right)}\\
    &=
    \left(\frac{1}{2}-\frac{R}{2}\cdot\frac{r+\delta-1}{r-\delta+1}\right)\frac{r+1}{r+\delta-1}
    -\left(1+\frac{(r+\delta-1)^2}{s}\right)
    \sqrt{
    \frac{\delta-1}{r+\delta-1}
    \left(
    \frac{1}{2}-\frac{R}{2}\cdot\frac{r+\delta-1}{r-\delta+1}
    \right)
    }.
    \end{align*}
    Since \(s \geq 2(r+\delta-1)^2\), we have
    \[
    1+\frac{(r+\delta-1)^2}{s}\le \frac{3}{2},
    \]
    which yields the claimed bound.
    \end{proof}

    \subsection{Composite-order distance relaxation}
    Using \Cref{cor:uncertainty_composite_finite_field}, we can relax the primality requirement on $r + \delta -1$ at the cost of a slightly weaker bound. This will be particularly useful when we generalize the fQTB code to its hierarchical analogue. 

    \begin{corollary}\label{cor:fqtb_distance_composite_uncertainty}
    Let \(\widetilde{\mathcal Q}\) be the folded QTB code from \Cref{def:fqtb} with parameters \(q,r,\delta,\ell, s\)
    and set \(n:= r +\delta -1\).
    Assume that $n \mid (q-1)$ and $s \mid (q-1)/n$. Also assume that the finite-field composite-order uncertainty principle from
    \Cref{cor:uncertainty_composite_finite_field} holds for order \(n\) over \(\F_q\).
    
    For \(1\le m\le n\), define
    \[
    u_n(m):=
    \frac{n}{d_1d_2}(d_1+d_2-m),
    \]
    where \(d_1<d_2\) are consecutive divisors of \(n\) satisfying
    \(
    d_1\le m\le d_2.
    \)
    Equivalently, \(u_n(m)\) is Meshulam's composite-order uncertainty lower bound for a polynomial with coefficient support size at most \(m\).
    
    Let \(\lambda:=1-(\ell-1)/(q-1).\)
    Then \(\widetilde{\mathcal Q}\) has distance at least
    \[
    d(\widetilde{\mathcal Q})
    \ge
    \frac{q-1}{s}\left(\lambda-\epsilon_{\mathrm{comp}}\right),
    \]
    where
    \[
    \epsilon_{\mathrm{comp}}
    =
    \max_{1\le m_f\le n}
    \min\left\{
    \lambda\left(1-\frac{u_n(m_f)}{n}\right),
    \;
    \max_{\max\{1,m_f-(\delta-1)\}\le m_g\le m_f}
    \left(
    \frac{\delta-1}{m_g}+\frac{m_g-1}{s}
    \right)
    \right\}.
    \]
    \end{corollary}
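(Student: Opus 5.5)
The plan is to rerun the proof of \Cref{thm:fqtb_distance} essentially unchanged. Only the first case, where the uncertainty principle is invoked, needs to be modified. Fix $\ev(f)\in C\setminus C^\perp$ and split $f=g+h$ as before, with $g$ supported on $[\ell]\setminus(S_-\cup S_+)$ and $h$ on $[q-1]\cap S_+$. Define $M_f,M_g,m_f,m_g$ exactly as in that proof, so that $\max\{1,m_f-(\delta-1)\}\le m_g\le m_f\le n$. We will derive two lower bounds on $|\widetilde{\ev}(f)|$ and then combine them by taking, for each $m_f$, the better of the two. Maximizing the resulting loss over $m_f$ gives exactly $\epsilon_{\mathrm{comp}}$.

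For the first bound, fix a coset $\alpha\Omega_n$. On that coset $f$ agrees with the reduction $f \bmod (X^n-\alpha^n)$, which has degree $<n$ and coefficient support contained in $M_f$. Rescaling the variable by $\alpha$ leaves the coefficient support unchanged and turns this into a polynomial evaluated on $\Omega_n$. Applying \Cref{cor:uncertainty_composite_finite_field} in place of \Cref{cor:uncertainty_prime_finite_field}, either $f$ vanishes identically on $\alpha\Omega_n$ or it has at least $u_n(|f \bmod (X^n-\alpha^n)|)$ nonzero evaluations there. The support size may be strictly smaller than $m_f$, so we need $u_n$ to be nonincreasing in $m$ on $[1,n]$. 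This holds because on each interval $[d_1,d_2]$ the function $u_n$ is linear with negative slope, and at a divisor $d$ both adjacent formulas give $u_n(d)=n/d$, so the pieces glue continuously. Hence each nonvanishing coset contributes at least $u_n(m_f)$.

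Next we count vanishing cosets, verbatim from the prime case. Since $\ev(f)\notin C^\perp$, some residue $i\in M_f\setminus\{1,\ldots,\delta-1\}$ exists. A coset $\alpha\Omega_n$ on which $f$ vanishes forces $\alpha^n$ to be a root of $\sum_j f_{i+nj}Y^j$, a nonzero polynomial of degree at most $(\ell-1)/n$. So at most $(\ell-1)/n$ cosets vanish. This gives
\[
|\ev(f)|\ge \frac{q-\ell}{n}\,u_n(m_f)=(q-1)\lambda\frac{u_n(m_f)}{n},
\]
and dividing by $s$,
\[
|\widetilde{\ev}(f)|\ge \frac{q-1}{s}\left(\lambda-\lambda\left(1-\frac{u_n(m_f)}{n}\right)\right).
\]
The only new checks in this step are the monotonicity of $u_n$ and the observation that \Cref{cor:uncertainty_composite_finite_field} applies to the rescaled polynomial.

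For the second bound, we reuse the determinant-polynomial argument (case 2 of \Cref{thm:fqtb_distance}) word for word. That argument never uses primality of $n$. It relies only on \Cref{lem:low.wt.parity.checks}, which bounds the rank of the rows indexed by $Z_x$ by $\delta-1$, on \Cref{lem:det_poly_root}, and on Vandermonde nonsingularity, together with the nonvanishing of $G=\det A_g$. That nonvanishing uses only the distinctness of the $m_g$ residues $u\in M_g$ modulo $n$ and of $m_g$ exponents $\alpha$ modulo $q-1$. This yields
\[
|\widetilde{\ev}(f)|\ge \frac{q-1}{s}\left(\lambda-\frac{\delta-1}{m_g}-\frac{m_g-1}{s}\right)\ge \frac{q-1}{s}\left(\lambda-\max_{m_g}\left(\frac{\delta-1}{m_g}+\frac{m_g-1}{s}\right)\right),
\]
where the maximum is over $\max\{1,m_f-(\delta-1)\}\le m_g\le m_f$. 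Taking the better of the two bounds for the given $m_f$, then the worst case over $m_f$, gives the stated $\epsilon_{\mathrm{comp}}$. I expect the main (mild) obstacle to be confirming that nothing in the determinant case secretly depended on $n$ being prime, and the monotonicity of $u_n$ used in the first case.
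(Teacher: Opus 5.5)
Your proposal is correct and follows the paper's proof essentially verbatim. As in the paper, you replace \Cref{cor:uncertainty_prime_finite_field} by \Cref{cor:uncertainty_composite_finite_field} in the first case, keep the same count of at most $(\ell-1)/n$ vanishing cosets, reuse the determinant-polynomial case unchanged, and combine the two bounds for each $m_f$. Your explicit checks that $u_n$ is nonincreasing (needed to pass from support size at most $m_f$ to the bound $u_n(m_f)$) and that rescaling by $\alpha$ preserves coefficient support are steps the paper uses only implicitly, and both are correct.
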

    
    \begin{proof}
    The proof is the same two-case argument as in \Cref{thm:fqtb_distance}, except that the prime-order uncertainty principle, \Cref{cor:uncertainty_prime_finite_field}, is replaced by the composite-order uncertainty principle, \Cref{cor:uncertainty_composite_finite_field}.
    
    Fix \(\ev(f)\in C\setminus C^\perp\) and write \(f = g +h \)
    where
    \[
    g\in \F_q[X]^{[\ell]\setminus(S_-\cup S_+)},
    \qquad
    h\in \F_q[X]^{[q-1]\cap S_+}.
    \]
    Let $M_f = \{i \in [r+\delta-1]: \exists j \equiv i \mod r+\delta-1 \text{ with } f_j \neq 0\}$ and $m_f = |M_f|$ and similarly define \(M_g\) and \(m_g\) for \(g\).
    
    We first modify Case \(1\) of \Cref{thm:fqtb_distance}. For a coset
    \(
    \alpha\Omega_n\subseteq \F_q^*,
    \)
    the restriction of \(f\) to \(\alpha\Omega_n\) is represented by
    \[
    f \pmod{X^n-\alpha^n},
    \]
    a polynomial of degree \(<n\) whose coefficient support has size at most \(m_f\). Hence, by the composite-order uncertainty principle, either
    \(
    \ev(f)|_{\alpha\Omega_n}=0,
    \)
    or
    \(
    |\ev(f)|_{\alpha\Omega_n}\ge u_n(m_f).
    \)
    As in the proof of \Cref{thm:fqtb_distance}, since \(\ev(f)\notin C^\perp\), there is at least one residue class
    \(
    i\in M_f\setminus\{1,\dots,\delta-1\}.
    \)
    For this residue class, the coefficient of \(X^i\) in
    \(
    f \pmod{X^n-\alpha^n}
    \)
    is a polynomial in \(\alpha^n\) of degree at most \((\ell-1)/n\). Therefore this coefficient vanishes for at most \((\ell-1)/n\) cosets \(\alpha\Omega_n\). Thus the number of cosets on which \(f\) does not vanish identically is at least
    \[
    \frac{q-1}{n}-\frac{\ell-1}{n}
    =
    \frac{q-1}{n}\left(1-\frac{\ell-1}{q-1}\right)
    =
    \frac{q-1}{n}\lambda.
    \]
    Consequently,
    \[
    |\ev(f)|
    \ge
    \frac{q-1}{n}\lambda\cdot u_n(m_f).
    \]
    Since folding can reduce Hamming weight by at most a factor of \(s\), we get
    \[
    |\widetilde{\ev}(f)|
    \ge
    \frac{q-1}{s}\lambda\frac{u_n(m_f)}{n}.
    \]
    Equivalently, Case \(1\) gives
    \[
    |\widetilde{\ev}(f)|
    \ge
    \frac{q-1}{s}
    \left[
    \lambda
    -
    \lambda\left(1-\frac{u_n(m_f)}{n}\right)
    \right].
    \]
    Case \(2\), the determinant-polynomial argument in \Cref{thm:fqtb_distance}, is unchanged. It gives
    \[
    |\widetilde{\ev}(f)|
    \ge
    \frac{q-1}{s}
    \left(
    \lambda
    -
    \frac{\delta-1}{m_g}
    -
    \frac{m_g-1}{s}
    \right).
    \]
    Thus for fixed \(m_f\), every codeword has folded weight at least
    \[
    \frac{q-1}{s}
    \left(
    \lambda
    -
    \min\left\{
    \lambda\left(1-\frac{u_n(m_f)}{n}\right),
    \;
    \max_{\max\{1,m_f-(\delta-1)\}\le m_g\le m_f}
    \left(
    \frac{\delta-1}{m_g}+\frac{m_g-1}{s}
    \right)
    \right\}
    \right).
    \]
    Taking the worst case over \(1\le m_f\le n\) gives the claimed expression for
    \(\epsilon_{\mathrm{comp}}\).
    \end{proof}
\section{Folded Hierarchical Quantum Tamo--Barg Codes}\label{sec:folded_hqtb}

    We now fold the hierarchical quantum Tamo--Barg codes from
    \Cref{sec:r_delta_hierarchical_qTB}. Let
    \(
    n_l:=r_l+\delta_l-1
    \) for $l = 1,\ldots, h$
    and assume
    \(
    n_h\mid n_{h-1}\mid\cdots\mid n_1\mid(q-1).
    \)
    Let \(\omega_{q-1}\in\F_q^*\) be a generator. We choose a folding parameter
    \(
    s\mid (q-1)/n_1.
    \)
    Since \(n_l\mid n_1\) for every \(l\), this also implies
    \(
    s\mid (q-1)/n_l.
    \)
    
    For \(i=0,\ldots,(q-1)/s-1\), define the folded block
    \[
    F_{\omega_{q-1}^{is}}
    :=
    \{\omega_{q-1}^{is},\omega_{q-1}^{is+1},\ldots,\omega_{q-1}^{is+s-1}\}.
    \]
    Equivalently, for \(x=\omega_{q-1}^b\in\F_q^*\), let
    \[
    F_x
    =
    \{\omega_{q-1}^{\lfloor b/s\rfloor s+j}:0\le j\le s-1\}.
    \]
    
    \begin{definition}[Folded Hierarchical Quantum Tamo--Barg code]
    \label{def:fhqtb}
    Let
    \(
    \mathcal Q=\mathrm{CSS}(C,C)
    \)
    be the \(h\)-level quantum Tamo--Barg code from
    \Cref{def:h_level_qtb}. The \textbf{folded Hierarchical Quantum Tamo--Barg (fHQTB) code}
    is the CSS code
    \(
    \widetilde{\mathcal Q}:=\mathrm{CSS}(\widetilde C,\widetilde C),
    \)
    where \(\widetilde C\subseteq(\F_q^s)^{(q-1)/s}\) is obtained from \(C\) by
    grouping the coordinates indexed by each folded block \(F_{\omega_{q-1}^{is}}\) into a
    single coordinate over the alphabet \(\F_q^s\).
    \end{definition}
    
    Folding preserves the rate, which we computed in \Cref{lem:dim_qhTB}. We next check that the folded code retains the hierarchical locality structure.
    
    \begin{lemma}\label{lem:h_folded_coset_block_intersection}
    Fix \(l\in\{1,\ldots,h\}\). Let \(\alpha\Omega_{n_l}\) be a coset of
    \(\Omega_{n_l}\) in \(\F_q^*\). Then every folded block \(F_{\omega_{q-1}^{is}}\)
    intersects \(\alpha\Omega_{n_l}\) in at most one point.
    \end{lemma}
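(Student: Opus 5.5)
The plan is to reduce this directly to \Cref{lem:coset_block_intersection}, whose argument used only the divisibility $s\mid (q-1)/(r+\delta-1)$ together with the explicit exponent description of a coset. The first step is to record that the hypothesis $s\mid (q-1)/n_1$ and $n_l\mid n_1$ give $(q-1)/n_l=(q-1)/n_1\cdot (n_1/n_l)$. Hence $s\mid (q-1)/n_l$ for every level $l$. This divisibility is the only input the one-level argument needs.

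Next, write $\alpha=\omega_{q-1}^b$, so that
\[
\alpha\Omega_{n_l}=\left\{\omega_{q-1}^{\,b+t(q-1)/n_l}:t\in[n_l]\right\}.
\]
Suppose two distinct elements, with indices $t_1\neq t_2$, lie in the same block $F_{\omega_{q-1}^{is}}$. Then there are offsets $e_1,e_2\in[s]$ with
\[
(t_1-t_2)\frac{q-1}{n_l}\equiv e_1-e_2 \pmod{q-1}.
\]
Since $s$ divides both $(q-1)/n_l$ and $q-1$, reducing this congruence modulo $s$ gives $s\mid e_1-e_2$. Because $|e_1-e_2|<s$, this forces $e_1=e_2$. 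The congruence then becomes $(t_1-t_2)(q-1)/n_l\equiv 0\pmod{q-1}$, so $n_l\mid t_1-t_2$. This contradicts $0\le t_1,t_2<n_l$ with $t_1\neq t_2$.

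There is no real obstacle here. The only point requiring care is the divisibility reduction in the first step. Once that is in place, the proof is verbatim that of \Cref{lem:coset_block_intersection} with $r+\delta-1$ replaced by $n_l$, and one could simply cite that lemma with parameters $(r_l,\delta_l)$.
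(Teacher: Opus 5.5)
Your proposal is correct and matches the paper's proof essentially verbatim. You obtain $s\mid (q-1)/n_l$ from $s\mid (q-1)/n_1$ and $n_l\mid n_1$, then run the exponent-congruence argument of \Cref{lem:coset_block_intersection}: reducing modulo $s$ forces $e_1=e_2$, and then $n_l\mid t_1-t_2$ gives the contradiction.
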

    
    \begin{proof}
    Write
    \[
    \alpha=\omega_{q-1}^b,
    \qquad
    \Omega_{n_l}=\langle \omega_{q-1}^{(q-1)/n_l}\rangle.
    \]
    Thus
    \[
    \alpha\Omega_{n_l}
    =
    \left\{
    \omega_{q-1}^{b+t\frac{q-1}{n_l}}:0\le t\le n_l-1
    \right\}.
    \]
    Suppose two distinct elements of this coset lie in the same folded block. Then
    for some \(t_1\neq t_2\) and some \(e_1,e_2\in\{0,\ldots,s-1\}\),
    \[
    b+t_1\frac{q-1}{n_l}\equiv is+e_1\pmod{q-1},
    \]
    and
    \[
    b+t_2\frac{q-1}{n_l}\equiv is+e_2\pmod{q-1}.
    \]
    Subtracting gives
    \[
    (t_1-t_2)\frac{q-1}{n_l}\equiv e_1-e_2\pmod{q-1}.
    \]
    Since \(s\mid (q-1)/n_l\), the left-hand side is divisible by \(s\). Hence
    \(e_1-e_2\) is divisible by \(s\). But
    \[
    -(s-1)\le e_1-e_2\le s-1,
    \]
    so \(e_1=e_2\). Therefore
    \[
    (t_1-t_2)\frac{q-1}{n_l}\equiv 0\pmod{q-1},
    \]
    which implies
    \[
    t_1\equiv t_2\pmod{n_l}.
    \]
    Since \(0\le t_1,t_2\le n_l-1\), this forces \(t_1=t_2\), a contradiction.
    Thus the intersection has size at most one.
    \end{proof}
    
    \begin{corollary}\label{cor:locality_fhqtb}
    The folded \(h\)-level quantum Tamo--Barg code
    \(\widetilde{\mathcal Q}\) from \Cref{def:fhqtb} is an
    \(h\)-level
    \(
    ((r_1,\delta_1),\ldots,(r_h,\delta_h))
    \text{-QHLRC}.
    \)
    \end{corollary}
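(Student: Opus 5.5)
The plan is to reduce to the classical setting and then copy the one-level folding argument of \Cref{cor:locality_fqtb} level by level. The first step is to check that the folded classical code $\widetilde C$ is dual-containing, with $d(\widetilde C^\perp)\ge \delta_1$. Folding is an $\F_q$-linear regrouping of coordinates, so $\widetilde{C^\perp}=\widetilde C^{\perp}$ under the trace-free dot product $\langle (a_i),(b_i)\rangle=\sum_i\sum_{e}a_{i,e}b_{i,e}$. This is the inner product used for CSS codes over $\F_q^s$ viewed as $s$ copies of $\F_q$. Hence $C^\perp\subseteq C$ implies $\widetilde C^\perp\subseteq\widetilde C$.

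Since the recursive structure in \Cref{def:h_level_qhlrc} is built from punctured CSS codes, I would argue inductively, exactly as in the proof of \Cref{cor:locality_hqtb}. For each level $l$ and each coset $\alpha\Omega_{n_l}$, define the folded repair group $\widetilde R_{l,\alpha}:=\{F_x:x\in\alpha\Omega_{n_l}\}$.
\begin{itemize}
\item By \Cref{lem:h_folded_coset_block_intersection}, this group has size exactly $n_l=r_l+\delta_l-1$.
\item The groups nest: if $\alpha\Omega_{n_{l+1}}\subseteq\alpha'\Omega_{n_l}$, then $\widetilde R_{l+1,\alpha}\subseteq\widetilde R_{l,\alpha'}$.
\item Distinct cosets at level $l$ give disjoint folded groups, since the map $x\mapsto F_x$ is injective on each coset.
\end{itemize}

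The key local claim is that $d(\widetilde C|_{\widetilde R_{l,\alpha}})\ge\delta_l$. To prove it, write $\alpha=\omega_{q-1}^b$ with $e_0\equiv b\pmod s$. Because $s\mid (q-1)/n_l$, all points of $\alpha\Omega_{n_l}$ sit at the same internal offset $e_0$ in their blocks. Shifting by $\omega_{q-1}^{e-e_0}$ gives, for each offset $e\in\{0,\dots,s-1\}$, a coset $A_{l,\alpha,e}$ of $\Omega_{n_l}$. Its points are precisely the $e$-th scalar components of the blocks in $\widetilde R_{l,\alpha}$. By \Cref{lem:low.wt.parity.checks.hqtb}, $B_l^\perp\subseteq C^\perp$ contains the $\delta_l-1$ Vandermonde-type checks $x\mapsto x^j\mathds 1_{A_{l,\alpha,e}}$. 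A nonzero word of $\widetilde C|_{\widetilde R_{l,\alpha}}$ supported on at most $\delta_l-1$ folded symbols has some offset $e$ whose scalar restriction is nonzero. That restriction is supported on at most $\delta_l-1$ points of $A_{l,\alpha,e}$, which contradicts the full column rank of those checks. Hence $d(\widetilde C|_{\widetilde R_{l,\alpha}})\ge\delta_l$.

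Next I would pass to the quantum statement through puncturing and \Cref{thm:css_qlrc_iff_clrc}. Puncturing the folded code to a level-$(l-1)$ group is again a CSS code by \Cref{lem:puncturing_css}. Moreover $d((\widetilde C|_{\widetilde R_{l-1}})^\perp)\ge d(\widetilde C|_{\widetilde R_{l-1}})\ge d(\widetilde C|_{\widetilde R_l})\ge\delta_l$, by the same shortening/puncturing chain used in \Cref{cor:locality_hqtb}. Applying \Cref{thm:css_qlrc_iff_clrc} to $\widetilde C|_{\widetilde R_{l-1}}$ with locality groups $\widetilde R_{l}\subseteq\widetilde R_{l-1}$ then gives $(r_l,\delta_l)$-locality of $\widetilde{\mathcal Q}|_{\widetilde R_{l-1}}$. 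Running the downward induction from $l=h$ to $l=1$ yields the full $h$-level structure.

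The main obstacle I expect is justifying the use of \Cref{thm:css_qlrc_iff_clrc}, which is stated over $\F_q$, for a code over the alphabet $\F_q^s$. My first fallback is to bypass it and give the local erasure recovery directly. Erasing a set $I$ of at most $\delta_l-1$ folded qudits in $\widetilde R_{l,\alpha}$ erases at most $\delta_l-1$ scalar qudits in each $A_{l,\alpha,e}$. These are correctable for $\mathcal Q$ using operations supported on $A_{l,\alpha,e}$, by \Cref{cor:locality_hqtb} applied at level $l$. The $s$ recoveries act on disjoint scalar supports inside $\widetilde R_{l,\alpha}$, so their composition is a recovery supported on $\widetilde R_{l,\alpha}$.
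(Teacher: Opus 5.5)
Your proposal is correct and follows essentially the paper's argument. The paper uses the same folded groups $\widetilde R_{\alpha,l}$, observes that each internal-offset slice is a coset of $\Omega_{n_l}$, and recovers each slice independently with the level-$l$ checks from \Cref{cor:locality_hqtb}, which is exactly your fallback; your route through $d(\widetilde C|_{\widetilde R_{l,\alpha}})\ge\delta_l$ and \Cref{thm:css_qlrc_iff_clrc} on the punctured folded codes only spells out the recursive bookkeeping the paper leaves implicit.

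Two small corrections, neither of which affects the argument:
\begin{itemize}
\item Distinct level-$l$ cosets need not give disjoint folded groups, since the $s$ cosets $A_{l,\alpha,e}$ all yield the same $\widetilde R_{l,\alpha}$. The correct statement is that folded groups are either equal or disjoint, and disjointness is not needed for \Cref{def:h_level_qhlrc} anyway.
\item The inequality $d(\widetilde C|_{\widetilde R_{l-1}})\ge d(\widetilde C|_{\widetilde R_{l}})$ is not a general property of puncturing. What you need is $d(\widetilde C|_{\widetilde R_{l-1}})\ge\delta_l$, which holds because every nonzero word on $\widetilde R_{l-1}$ is nonzero on some level-$l$ group inside it, and there it has weight at least $\delta_l$.
\end{itemize}
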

    
    \begin{proof}
    Let \(l\in\{1,\ldots,h\}\). For a coset \(\alpha\Omega_{n_l}\), define the
    folded level-\(l\) repair group
    \[
    \widetilde R_{\alpha,l}
    :=
    \{F_x:x\in \alpha\Omega_{n_l}\}.
    \]
    By \Cref{lem:h_folded_coset_block_intersection}, the map
    \(
    x\mapsto F_x
    \)
    is injective on \(\alpha\Omega_{n_l}\). Therefore
    \[
    |\widetilde R_{\alpha,l}|=|\alpha\Omega_{n_l}|=n_l=r_l+\delta_l-1.
    \]
    
    We now show that \(\widetilde R_{\alpha,l}\) can correct \(\delta_l-1\) erasures.
    Let the elements of \(\alpha\Omega_{n_l}\) be written as
    \[
    \omega_{q-1}^{b+t(q-1)/n_l},
    \qquad
    t=0,\ldots,n_l-1.
    \]
    Because \(s\mid(q-1)/n_l\), all these elements occur in folded blocks with the
    same internal offset \(b\bmod s\). More generally, for each internal offset
    \(e\in\{0,\ldots,s-1\}\), the \(e\)-th scalar components of the folded blocks
    in \(\widetilde R_{\alpha,l}\) form a coset of \(\Omega_{n_l}\).
    
    By the unfolded hierarchical locality proved in \Cref{cor:locality_hqtb}, on
    each such coset there are \(\delta_l-1\) independent local parity checks coming
    from \(C^\perp\), and any \(\delta_l-1\) erased scalar positions in that coset
    can be recovered. Applying this scalar recovery independently to each of the
    \(s\) internal offsets recovers all \(s\) components of any \(\delta_l-1\)
    erased folded symbols in \(\widetilde R_{\alpha,l}\).
    
    The nesting of the folded repair groups follows from the nesting
    \(
    \Omega_{n_h}\subseteq\cdots\subseteq\Omega_{n_1}
    \)
    and the injectivity of the folding map on every level-\(l\) coset. Hence
    \(\widetilde{\mathcal Q}\) is an
    \(((r_1,\delta_1),\ldots,(r_h,\delta_h))\) QHLRC.
\end{proof}
\subsection{Distance bound}
We now prove a distance bound for folded hierarchical QTB codes. The proof is
the hierarchical analogue of the two-case argument for folded QTB codes. The
first case uses the composite-order uncertainty principle on top-level cosets,
while the second case applies the determinant-polynomial argument independently
at each level of the hierarchy and retains the strongest resulting bound.

For \(l=1,\ldots,h\), set
\(
n_l:=r_l+\delta_l-1,
\)
and assume
\(
n_h\mid n_{h-1}\mid\cdots\mid n_1\mid(q-1).
\)
Let
\[
S_+=\bigcup_{l=1}^h S_{l,+},
\qquad
S_-=\bigcup_{l=1}^h S_{l,-}.
\]
For each level \(l\), define the level-\(l\) positive residue set
\[
H_l
:=
\{a\in [n_l]:\exists e\in S_+\text{ such that }e\equiv a\pmod{n_l}\}.
\]
Equivalently,
\[
H_l
=
\bigcup_{u=1}^h
\left\{
a\in[n_l]:
a\equiv j_u\pmod{\gcd(n_l,n_u)}
\text{ for some }1\le j_u\le \delta_u-1
\right\}.
\]
Since the \(n_l\)'s form a divisibility chain, this set is easy to compute by
reducing the hierarchical positive-residue set \(S_+\) modulo \(n_l\). Set
\(
\kappa_l:=|H_l|.
\)

For \(1\le m\le n_1\), define
\[
u_{n_1}(m)
:=
\frac{n_1}{d_1d_2}(d_1+d_2-m),
\]
where \(d_1<d_2\) are consecutive divisors of \(n_1\) satisfying
\(
d_1\le m\le d_2.
\)
This is the composite-order uncertainty lower bound.

\begin{theorem}[Folded hierarchical QTB distance bound]
\label{thm:fhqtb_distance}
Let \(\widetilde{\mathcal Q}=\mathrm{CSS}(\widetilde C,\widetilde C)\) be the
folded \(h\)-level QTB code from \Cref{def:fhqtb}. Assume that the
finite-field composite-order uncertainty principle from
\Cref{cor:uncertainty_composite_finite_field} holds for order \(n_1\) over
\(\F_q\). Let
\[
\lambda:=1-\frac{\ell-1}{q-1}.
\]

For \(c = \ev(f) \in C\setminus C^\perp\) where \(f=g+P\), define
\[
M_{f,1}
:=
\{a\in[n_1]:
\exists e\equiv a\pmod{n_1}\text{ with }f_e\neq 0\},
\]
and for each level \(l\),
\[
M_{g,l}
:=
\{a\in[n_l]:
\exists e\equiv a\pmod{n_l}\text{ with }g_e\neq 0\}.
\]
Write
\[
m_{f,1}:=|M_{f,1}|,
\qquad
m_{g,l}:=|M_{g,l}|.
\]
Then every folded codeword \(\widetilde{\ev}(f) \in \widetilde{C}\setminus\widetilde{C}^\perp\) satisfies
\[
|\widetilde{\ev}(f)|
\ge
\frac{q-1}{s}
\left(
\lambda
-
\min\left\{
\lambda\left(1-\frac{u_{n_1}(m_{f,1})}{n_1}\right),
\;
\min_{1\le l\le h}
\left(
\frac{\kappa_l}{m_{g,l}}
+
\frac{m_{g,l}-1}{s}
\right)
\right\}
\right).
\]
Consequently,
\[
d(\widetilde{\mathcal Q})
\ge
\frac{q-1}{s}\left(\lambda-\epsilon_{\mathrm{hier}}\right),
\]
where
\[
\epsilon_{\mathrm{hier}}
:=
\max_{\textup{feasible profiles}}
\min\left\{
\lambda\left(1-\frac{u_{n_1}(m_{f,1})}{n_1}\right),
\;
\min_{1\le l\le h}
\left(
\frac{\kappa_l}{m_{g,l}}
+
\frac{m_{g,l}-1}{s}
\right)
\right\}.
\]
Here ``feasible profiles'' means all tuples
\[
(m_{f,1},m_{g,1},\ldots,m_{g,h})
\]
arising from some nonzero \(f=g+P\) with
\(
\ev(f)\in C\setminus C^\perp
\) and \[
g\in \F_q[X]^{[\ell]\setminus(S_-\cup S_+)},
\qquad
P\in \F_q[X]^{[q-1]\cap S_+}.
\]
\end{theorem}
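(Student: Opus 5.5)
We will follow the two-case structure of \Cref{thm:fqtb_distance} and \Cref{cor:fqtb_distance_composite_uncertainty}, proving two independent lower bounds on \(|\widetilde{\ev}(f)|\) for a fixed \(\ev(f)\in C\setminus C^\perp\) with \(f=g+P\). Each bound is valid on its own, so the weight is at least their maximum. That maximum equals \(\frac{q-1}{s}(\lambda-\min\{\cdot,\cdot\})\), and taking the worst case over feasible profiles gives \(\epsilon_{\mathrm{hier}}\). As in the unfolded case, \(\ev(P)\in B^\perp\subseteq C^\perp\) by \Cref{lem:hqtb.construction}, and \(g\neq 0\).

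\textbf{Case 1 (top-level uncertainty).} On each top coset \(\alpha\Omega_{n_1}\), \(f\) agrees with \(f \bmod (X^{n_1}-\alpha^{n_1})\). This is a polynomial of degree \(<n_1\) with coefficient support inside \(M_{f,1}\). By \Cref{cor:uncertainty_composite_finite_field}, \(f\) either vanishes on the coset or has weight at least \(u_{n_1}(m_{f,1})\) there. To bound the number of vanishing cosets, we need a residue \(a\in M_{f,1}\) that is not contained in \(H_1\). Such a residue exists: if every residue of \(f\) modulo \(n_1\) lay in the positive residue set, then every exponent of \(f\) with a nonzero coefficient would lie in \(S_+\), since \(S_+\) is a union of residue classes mod \(n_1\) (because \(n_l\mid n_1\)). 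Then \(\ev(f)\in B^\perp\subseteq C^\perp\), a contradiction. The coefficient of \(X^a\) in \(f \bmod (X^{n_1}-\alpha^{n_1})\) is a polynomial in \(\alpha^{n_1}\) of degree at most \((\ell-1)/n_1\), because \(a\notin S_+\) forces all those exponents to be \(<\ell\). It therefore vanishes on at most \((\ell-1)/n_1\) cosets, and dividing by \(s\) gives \(\frac{q-1}{s}\lambda\,u_{n_1}(m_{f,1})/n_1\).

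\textbf{Case 2 (determinant polynomial at each level \(l\)).} For each fixed \(l\), we repeat Case 2 of \Cref{thm:fqtb_distance} with \(n=n_l\), \(\omega=\omega_{n_l}\) and \(m_g=m_{g,l}\). Form \(A_g(X)=(\omega_{n_l}^{-i}g(\omega_{n_l}^i\omega_{q-1}^jX))_{i,j\in[m_{g,l}]}\) and set \(G_l=\det A_g\), so \(\deg G_l\le m_{g,l}(\ell-1)\). Nonvanishing follows from the same factorization into a Vandermonde matrix in \(\omega_{n_l}^{u-1}\), \(u\in M_{g,l}\), times a matrix in the \(\omega_{q-1}^\alpha\). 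No \(Q_b\) nonvanishing is needed here.

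The key change is the rank bound on the rows indexed by \(Z_x\). On each level-\(l\) coset, \(P\) restricts to a polynomial in \(\omega\) whose coefficients are supported on the residue set \(H_l\) modulo \(n_l\). Concretely, \(P(X)=\sum_{a\in H_l}X^aF_a(X^{n_l})\), exactly as in the proof of \Cref{lem:low.wt.parity.checks}. Hence each column vector \(\mathbf c_j=(\omega_{n_l}^{-i}P(\omega_{n_l}^i\omega_{q-1}^jx))_{i\in Z_x}\) lies in the span of the \(\kappa_l\) vectors \((\omega_{n_l}^{(a-1)i})_{i\in Z_x}\) for \(a\in H_l\). This gives \(\mathrm{rank}A_g(x)\le m_{g,l}-|Z_x|+\kappa_l\), and \Cref{lem:det_poly_root} then yields multiplicity at least \(|Z_x|-\kappa_l\).

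The rest of the root count is unchanged: discard the \(\le m_{g,l}(m_{g,l}-1)(q-1)/s\) boundary terms, use \(\sum_x|Z_x|=(q-1-s|\widetilde{\ev}(f)|)m_{g,l}\), and compare with the degree. This produces \(\frac{q-1}{s}(\lambda-\kappa_l/m_{g,l}-(m_{g,l}-1)/s)\) for every \(l\). Since the bound holds for every level, we take the best level, which is the inner \(\min_l\).

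\textbf{Expected obstacle.} The main care is in Case 2: verifying that the folded-block alignment still holds at level \(l\), and that the rank deficit is governed by \(\kappa_l\) rather than \(\delta_l-1\). Alignment holds because \(s\mid (q-1)/n_l\), so rows \(i\) of \(A_f(x)\) fall inside single folded blocks when \(b \bmod s\le s-m_{g,l}\). The rank deficit is \(\kappa_l\) because lower and higher levels contribute extra residues to \(P\) modulo \(n_l\), and \(H_l\) accounts for all of them.
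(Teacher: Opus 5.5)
Your proposal is correct and follows essentially the same route as the paper's proof. The paper uses the same two bounds: a top-level composite-order uncertainty bound, with a residue in $M_{f,1}\setminus H_1$ limiting the number of vanishing cosets to at most $(\ell-1)/n_1$; and a determinant-polynomial bound at every level $l$, where the zero-block rows have rank at most $\kappa_l$ because $P$ restricts on level-$l$ cosets to the span of $(\omega_{n_l}^{(a-1)i})_{i\in Z_x}$, $a\in H_l$. The paper then keeps the stronger of the two bounds, exactly as you do, and the points you flag (residue-class structure of $S_+$ modulo $n_1$, folded-block alignment from $s\mid (q-1)/n_l$, and the boundary-term count) are precisely the ones it relies on.
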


\begin{proof}
Fix
\(
\widetilde{\ev}(f)\in \widetilde C\setminus \widetilde C^\perp,
\)
and let \(\ev(f)\in C\setminus C^\perp\) be the corresponding unfolded codeword.
Decompose
\[
f(X)=g(X)+P(X),
\]
where
\[
g\in \F_q[X]^{[\ell]\setminus(S_-\cup S_+)},
\qquad
P\in \F_q[X]^{[q-1]\cap S_+}.
\]
Since \(\ev(P)\in C^\perp\) and \(\ev(f)\notin C^\perp\), we have \(g\neq 0\). We prove two lower bounds for \(|\widetilde{\ev}(f)|\).

\smallskip
\noindent\textbf{Case 1: the composite-order uncertainty bound.}
Consider a top-level coset
\(
\alpha\Omega_{n_1}\subseteq \F_q^*.
\)
The restriction of \(f\) to this coset is represented by
\(
f \pmod{X^{n_1}-\alpha^{n_1}},
\)
a polynomial of degree \(<n_1\) with coefficient support size at most
\(m_{f,1}\). By \Cref{cor:uncertainty_composite_finite_field}, either
\(
\ev(f)|_{\alpha\Omega_{n_1}}=0,
\)
or
\[
|\ev(f)|_{\alpha\Omega_{n_1}}
\ge
u_{n_1}(m_{f,1}).
\]

Next, we bound the number of top-level cosets on which \(f\) vanishes
identically. Let
\[
H_1
=
\{a\in [n_1]:\exists e\in S_+\text{ with }e\equiv a\pmod{n_1}\}.
\]
If
\(
M_{f,1}\subseteq H_1,
\)
then every exponent occurring in \(f\) lies in \(S_+\), because membership in
\(H_1\) means that the entire congruence class modulo \(n_1\) is contained in
\(S_+\). Hence
\(
\ev(f)\in C^\perp,
\)
contradicting our assumption. Therefore there exists
\(
i\in M_{f,1}\setminus H_1.
\)
For this residue class, the coefficient of \(X^i\) in
\(f \pmod{X^{n_1}-\alpha^{n_1}}\) is
\[
\sum_j f_{i+n_1j}(\alpha^{n_1})^j.
\]
Because \(i\notin H_1\), no exponent congruent to \(i\pmod{n_1}\) lies in
\(S_+\). Thus these coefficients come only from the \([\ell]\)-part of the
code, and the above polynomial in \(\alpha^{n_1}\) has degree at most\((\ell-1)/n_1.\)
Consequently, it vanishes for at most \((\ell-1)/n_1\) top-level cosets.
Hence \(f\) is nonzero on at least
\[
\frac{q-1}{n_1}-\frac{\ell-1}{n_1}
=
\frac{q-1}{n_1}\lambda
\]
top-level cosets. Therefore,
\[
|\ev(f)|
\ge
\frac{q-1}{n_1}\lambda\,u_{n_1}(m_{f,1}).
\]
Since folding decreases Hamming weight by at most a factor of \(s\),
\[
|\widetilde{\ev}(f)|
\ge
\frac{q-1}{s}\lambda\frac{u_{n_1}(m_{f,1})}{n_1}.
\]
Equivalently,
\[
|\widetilde{\ev}(f)|
\ge
\frac{q-1}{s}
\left[
\lambda
-
\lambda\left(1-\frac{u_{n_1}(m_{f,1})}{n_1}\right)
\right].
\]

\smallskip
\noindent\textbf{Case 2: determinant-polynomial bounds.}
We show that the determinant-polynomial argument can be applied at every
level \(l\). Fix \(l\in\{1,\ldots,h\}\). Let \(\omega_l\) be a primitive
\(n_l\)-th root of unity and let \(\omega_{q-1}\) be a generator of \(\F_q^*\).
Set
\(
m_l:=m_{g,l}.
\)
Define
\[
A_{g,l}(X)
=
\left(
\omega_l^{-i}g(\omega_l^i\omega_{q-1}^jX)
\right)_{0\le i,j\le m_l-1},
\]
and define $A_{f,l}(X)$ and $A_{P,l}$ similarly. Let \(G_l(X):=\det(A_{g,l}(X)).\) 

We first show that \(G_l(X)\neq 0\). Decompose \(g\) by residue classes modulo
\(n_l\):
\[
g(X)=\sum_{a\in M_{g,l}}X^a g_a(X^{n_l}).
\]
Then
\[
A_{g,l}(X)
=
\sum_{a\in M_{g,l}}
\left(\omega_l^{(a-1)i}\right)_{0\le i\le m_l-1}
\cdot
\left(
\sum_{\substack{e\equiv a\pmod{n_l}}}
g_e X^e
(\omega_{q-1}^{ej})_{0\le j\le m_l-1}
\right).
\]
The column vectors
\[
\left(\omega_l^{(a-1)i}\right)_{0\le i\le m_l-1},
\qquad a\in M_{g,l},
\]
form a Vandermonde matrix and are linearly independent. The corresponding row
vectors are also linearly independent over \(\F_q[X]\): a nontrivial relation
would give, by taking the highest-degree term in \(X\), a nontrivial
Vandermonde relation among
\[
(\omega_{q-1}^{ej})_{0\le j\le m_l-1}
\]
for distinct exponents \(e\). Thus \(A_{g,l}(X)\) has full rank over
\(\F_q(X)\), and hence
\(
G_l(X)\neq 0.
\)
Moreover,
\(
\deg G_l\le m_l(\ell-1).
\)

We now bound the number of roots of \(G_l\). Fix
\(
x=\omega_{q-1}^b\in\F_q^*.
\)
If
\(
b\bmod s\in\{0,\ldots,s-m_l\}, 
\)  where this set is understood to be empty when $s < m_l$,
then for each \(0\le i\le m_l-1\), the points
\[
\omega_l^i\omega_{q-1}^j x,
\qquad
j=0,\ldots,m_l-1,
\]
lie in the folded block \(F_{\omega_l^ix}\).
Define
\[
Z_{x,l}
:=
\{0\le i\le m_l-1:
\widetilde{\ev}(f)_{F_{\omega_l^ix}}=0\}.
\]
For \(i\in Z_{x,l}\), the \(i\)-th row of
\[
A_{f,l}(x):=A_{g,l}(x)+A_{P,l}(x)
\]
is zero, where
\[
A_{P,l}(X)
=
\left(
\omega_l^{-i}P(\omega_l^i\omega_{q-1}^jX)
\right)_{0\le i,j\le m_l-1}.
\]
Therefore, on the rows indexed by \(Z_{x,l}\),
\[
A_{g,l}(x)=-A_{P,l}(x).
\]

We bound the rank of \(A_{P,l}(x)\) restricted to the rows \(Z_{x,l}\). Fix a
column \(j\). The points
\[
\omega_l^i\omega_{q-1}^jx,
\qquad i\in Z_{x,l},
\]
lie in the level-\(l\) coset
\(
\omega_{q-1}^j x\Omega_{n_l}.
\)
Since the exponents of \(P\) lie in \(S_+\), their residues modulo \(n_l\) lie
in \(H_l\). Thus, on this coset, \(P\) is a linear combination of monomials
\(\omega^a\) for \(a \in H_l\) where \(\omega\in\Omega_{n_l}\). After the row scaling by \(\omega_l^{-i}\),
each column of the restricted matrix lies in the span of the \(\kappa_l\)
vectors
\[
(\omega_l^{(a-1)i})_{i\in Z_{x,l}},
\qquad a\in H_l.
\]
Hence
\[
\rank\left(A_{P,l}(x)|_{Z_{x,l},\,[m_l]}\right)\le \kappa_l.
\]
It follows that
\[
\rank A_{g,l}(x)
\le
m_l-|Z_{x,l}|+\kappa_l.
\]
By \Cref{lem:det_poly_root}, \(G_l(X)\) has a root at \(X=x\) of multiplicity at
least
\(
|Z_{x,l}|-\kappa_l.
\)

Summing over all \(x=\omega_{q-1}^b\in\F_q^*\), and using the same folding-window
count as in the one-level folded proof, the total number of roots of \(G_l\),
counted with multiplicity, is at least
\[
(q-1)m_l
\left(
1
-
\frac{|\widetilde{\ev}(f)|\,s}{q-1}
-
\frac{m_l-1}{s}
-
\frac{\kappa_l}{m_l}
\right).
\]
Since \(G_l\neq 0\) and \(\deg G_l\le m_l(\ell-1)\), we obtain
\[
(q-1)m_l
\left(
1
-
\frac{|\widetilde{\ev}(f)|\,s}{q-1}
-
\frac{m_l-1}{s}
-
\frac{\kappa_l}{m_l}
\right)
\le
m_l(\ell-1).
\]
Rearranging gives the level-\(l\) determinant bound
\[
|\widetilde{\ev}(f)|
\ge
\frac{q-1}{s}
\left(
\lambda
-
\frac{\kappa_l}{m_{g,l}}
-
\frac{m_{g,l}-1}{s}
\right).
\]

Since this holds for every level \(l=1,\ldots,h\), we may keep the strongest of
these determinant bounds:
\[
|\widetilde{\ev}(f)|
\ge
\frac{q-1}{s}
\left(
\lambda
-
\min_{1\le l\le h}
\left[
\frac{\kappa_l}{m_{g,l}}
+
\frac{m_{g,l}-1}{s}
\right]
\right).
\]

Combining this with the uncertainty-case bound, we obtain
\[
|\widetilde{\ev}(f)|
\ge
\frac{q-1}{s}
\left(
\lambda
-
\min\left\{
\lambda\left(1-\frac{u_{n_1}(m_{f,1})}{n_1}\right),
\;
\min_{1\le l\le h}
\left(
\frac{\kappa_l}{m_{g,l}}
+
\frac{m_{g,l}-1}{s}
\right)
\right\}
\right).
\]
Finally, taking the worst case over all feasible profiles gives the stated lower bound.
\end{proof}

\begin{remark}
When \(h=1\), we have \(H_1=\{1,\ldots,\delta_1-1\}\), so
\(\kappa_1=\delta_1-1\), and the theorem recovers the composite-order folded QTB
distance bound presented in \Cref{cor:fqtb_distance_composite_uncertainty}. For \(h>1\), each level \(l\) gives a valid determinant
estimate, and the theorem keeps the best one. This is a genuine multilevel
generalization of the determinant case. Nevertheless, the uncertainty case
remains top-level because it is applied to restrictions of \(f\) on cosets of
\(\Omega_{n_1}\).
\end{remark}

\section{Decoding Quantum Tamo--Barg codes}
\label{sec:decoding}
In this section we present a classical decoding algorithm, $\mathrm{Dec}_C$, to decode a $(r,\delta)$ Quantum Tamo--Barg code $\mathcal{Q} = \mathrm{CSS}(C,C)$ with parameters $q, \ell, r,\delta$ as defined in \Cref{def:r.delta.qTB.qlrc}. Our decoding algorithm $\mathrm{Dec}_C$ can efficiently correct errors of weight strictly less than $e$. The algorithm $\mathrm{ListDec}_{\mathrm{RS}(q,\ell)}$ used in \Cref{alg:decoding_r_delta_qtb} decodes a Reed-Solomon code as stated in \Cref{thm:RS.list.dec}. Recall that \begin{align*}
    S_{\pm} = \bigcup_{j=1}^{\delta-1} (\pm j + (r+\delta-1)\Z)
\end{align*} and $B^\perp = \ev(\F_q[X]^{[q-1]\cap S_{+}})$ is the space of piecewise polynomials of degree at most $\delta-1$ and no constant term, as shown in \Cref{lem:low.wt.parity.checks}.
Let $n:= r+ \delta-1$. Define the sets $\Gamma_i = \{0,\ldots, \delta-2, i\}$ for $i \in \{\delta-1,\ldots, n-1\}$. Also recall the polynomial $Q_i$ as used in the proof of \Cref{thm:qtb_distance}. The algorithm takes a corrupted codeword $a$ as input and outputs a codeword $c' \in C$ such that $\mathrm{dis}(c'-a,C^\perp)$ is less than $e$. The performance of \Cref{alg:decoding_r_delta_qtb} is summarized in \Cref{thm:dec.r.delta.qtb}. 

We first dispose of the degenerate case \(\ell=q-1\). In this case, the
quantity \(e\) in \labelcref{eqn:assumption_on_e} is equal to \(0\). Thus there are no nonzero error patterns of weight \(<e\), and the decoding guarantee in \Cref{thm:dec.r.delta.qtb} is vacuous. If one instead formulates decoding for errors of weight \(\le e\), then the only such error is the zero error, and the decoder simply returns the received word if it is in the code or it returns failure. Therefore, in the nontrivial decoding analysis and in \Cref{alg:decoding_r_delta_qtb}, we assume \(\ell\le q-2.\)

\begin{algorithm}[ht]
\caption{Classical decoding algorithm for a $(r,\delta)$ Quantum Tamo--Barg code $\mathcal{Q} = \CSS(C,C)$ with parameters $q,\ell,r,\delta$.}
\label{alg:decoding_r_delta_qtb}
\SetKwInOut{Input}{Input}
  \SetKwInOut{Output}{Output}
  
  \SetKwFunction{FnDec}{$\mathrm{Dec}_C$}
  \SetKwFunction{FnListDecRS}{$\mathrm{ListDec}_{\mathrm{RS}(q,\ell)}$}
  \SetKwProg{Fn}{Function}{:}{}

  \Input{Received word $a:\F_q^*\rightarrow\F_q$ with $\mathrm{dis}(a,C)< e$}
  \Output{$c'\in C$ such that $\mathrm{dis}(c'-a,C^\perp)<e$}
  
  \Fn{\FnDec{$a$}}{
    $\mathcal{L}\gets\emptyset$ \\
    \For{$i \in \{\delta-1,\ldots, r+\delta-2\}$}{
      Fix a primitive root of unity $\omega \in \Omega_{r+\delta-1}$ and construct the Vandermonde matrix $V_i = [\omega^{(j-1)\gamma}]_{1 \leq j \leq \delta-1,\gamma \in \Gamma_i}$ and let $v_i = (v_{i,0}, \ldots, v_{i,\delta-2}, 1)$ be a nontrivial vector such that $V_i v_i = 0$.\\
      Define $a_i:\F_q^*\rightarrow\F_q$ by $a_i(X) = \sum_{t=0}^{\delta-2} v_{i,t}\omega^{-t}a(\omega^{t}X) + \omega^{-i}a(\omega^iX)$ \\
      $\mathcal{L}_i\gets \mathrm{ListDec}_{\mathrm{RS}(q,\ell)}(a_i)$ \\
      \For{$g_i(X)=\sum_{j\in[\ell]}g_{i,j}X^j\in\mathcal{L}_i$}{
        \If{$g_{i,j}=0$ for every $j\in S_{-} \cup S_+$}{
          Add $\ev\left(g(X):=\sum_{j\in[\ell]\setminus (S_- \cup S_+)}Q_i(\omega^{j-1})^{-1}g_{i,j}X^j\right)$ to $\mathcal{L}$
        }
      }
    }   \KwRet{$\mathrm{argmin}_{\ev(g)\in\mathcal{L}}\mathrm{dis}(\ev(g)-a,B^\perp)$}
  }
\end{algorithm}

\begin{theorem}\label{thm:dec.r.delta.qtb}
    Let $\mathcal{Q}$ be the $(r,\delta)$ QTB code from \Cref{def:r.delta.qTB.qlrc}. If $\delta \geq 3$, assume $\mathrm{char}(\F_q) \nmid \mathcal{M}_{r,\delta}$ or if $\delta = 2$, assume $r+1$ is prime. Then $\mathcal{Q}$ can be decoded from errors of weight $< e$ for  \begin{equation}\label{eqn:assumption_on_e}
        e = \frac{q-1}{4}\left(\frac{1}{\delta-1} + \frac{r}{r+\delta-1} - \sqrt{\left(\frac{r}{r+\delta-1} - \frac{1}{\delta-1}\right)^2 + \frac{4r}{(\delta-1)(r+\delta-1)}\cdot \frac{\ell}{q-1}}\right)
    \end{equation}
    in $(q^{O(\delta)}\operatorname{poly}(r,q))$ time.
\end{theorem}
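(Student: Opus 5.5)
Correctness of \Cref{alg:decoding_r_delta_qtb} reduces, via the CSS decoding proposition, to showing that the classical decoder $\mathrm{Dec}_C$ returns some $c'\in C$ with $c'-c\in C^\perp$ whenever $a=c+b$ with $c\in C$ and $|b|<e$. The point is that $e$ is exactly half of the distance bound of \Cref{thm:qtb_distance}, with $\ell-1$ replaced by $\ell$.

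Write $c=\ev(g+h)$ with $g\in\F_q[X]^{[\ell]\setminus(S_-\cup S_+)}$ and $\ev(h)\in B^\perp$. For each $i\in\{\delta-1,\dots,n-1\}$ the transform $a\mapsto a_i$ is $\F_q$-linear. It kills $\ev(h)$ by the computation in the proof of \Cref{thm:qtb_distance}, and it sends $\ev(g)$ to $\ev(g_i^{(1)})$, where $g_i^{(1)}(X)=\sum_j Q_i(\omega^{j-1})g_jX^j$. This polynomial has degree $<\ell$ and the same coefficient support as $g$, by \Cref{cor:Q_b_finite_field_version_outside_finitely_many_char} (or \Cref{rem:Q_b_delta=2_finite_field} when $\delta=2$). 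Hence $a_i=\ev(g_i^{(1)})+b_i$, where $b_i$ is the transform of $b$.

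The key claim is that there exists an $i$ for which $|b_i|$ lies within the Guruswami--Sudan radius $(q-1)(1-\sqrt{\ell/(q-1)})$ of \Cref{thm:RS.list.dec}. For this I would count, as in the distance proof, the pairs $(x,i)$ with $b_i(x)=0$. On a coset $A$, for each $x$ starting a block of $\delta-1$ zeros of $b$, at least $(r-|b|_A)_+$ indices $i$ give $b(\omega^ix)=0$, and hence $b_i(x)=0$. Summing over cosets and applying Jensen with \Cref{lem:psi_convex_monotone} gives
\[
\sum_i|\{x:b_i(x)=0\}|\ge\frac{q-1}{n}\,\psi\!\left(\frac{n|b|}{q-1}\right).
\]
Averaging over the $r$ values of $i$, some $i$ has
\[
|b_i|\le (q-1)\left(1-\frac{1}{rn}\psi\!\left(\frac{n|b|}{q-1}\right)\right).
\]
It then remains to check that $|b|<e$ forces $1-\psi(t)/(rn)<1-\sqrt{\ell/(q-1)}$. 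Equivalently, we need $\psi(t)>rn\sqrt{\ell/(q-1)}$ for $t=n|b|/(q-1)$. I expect this comparison to be the main obstacle. The definition of $e$ (half the root of $\psi=rn\,\ell/(q-1)$) does not obviously match the Johnson-type threshold $\sqrt{\ell/(q-1)}$. I would first try to exploit that $\psi$ is decreasing, convex, and satisfies $\psi(0)=rn$, so that $\psi(t/2)\ge(\psi(0)+\psi(t))/2$. Setting $\lambda=\ell/(q-1)$, this yields $\psi(t/2)\ge rn(1+\lambda)/2\ge rn\sqrt\lambda$ by AM--GM, at the point $t$ where $\psi(t)=rn\lambda$. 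This is precisely why the factor $\tfrac14$ (half of $\tfrac12$) appears in $e$.

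For that $i$, the list $\mathcal L_i$ contains $g_i^{(1)}$. Its coefficients vanish on $S_-\cup S_+$, so it passes the filter, and inverting the diagonal transform recovers $\ev(g)\in\mathcal L$.

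Finally, I must show that the argmin returns a correct answer. Every candidate $\ev(g')\in\mathcal L$ lies in $C$, and the true $\ev(g)$ satisfies $\mathrm{dis}(\ev(g)-a,B^\perp)\le|b|<e$. If the output $g'$ has $\mathrm{dis}(\ev(g')-a,B^\perp)<e$, then $\ev(g'-g)$ lies within distance $<2e$ of $B^\perp\subseteq C^\perp$. Since $2e$ is at most the CSS distance of \Cref{thm:qtb_distance} (note that $\ell\ge\ell-1$ only helps), $\ev(g'-g)+B^\perp$ contains a word of $C$ of weight $<d$, which must lie in $C^\perp$. Thus $c'-c\in C^\perp$.

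For the running time: there are $r$ list-decoding calls, each taking $q^{O(1)}$ time and producing polynomial-size lists. Computing distance to $B^\perp$ splits over the $(q-1)/n$ cosets, and on each coset one brute-forces the $q^{\delta-1}$ polynomials of degree $\le\delta-1$ with no constant term. The total is $q^{O(\delta)}\mathrm{poly}(r,q)$.
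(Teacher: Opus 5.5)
Everything except the key comparison step matches the paper's proof. That includes the transform that kills $B^\perp$ and acts diagonally with nonzero multipliers $Q_i(\omega^{j-1})$, the per-coset incidence count $\psi(|b|_A)$ with Jensen, averaging over the $r$ indices, the filter-and-invert step, the argmin argument using $2e\le$ the bound of \Cref{thm:qtb_distance}, and the running time.

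The gap is exactly at the step you flagged as the main obstacle. You claim that convexity of $\psi$ together with $\psi(0)=rn$ gives $\psi(t/2)\ge(\psi(0)+\psi(t))/2$. Convexity gives the \emph{reverse} inequality. On $[0,T]$ with $T=\min\{r,n/(\delta-1)\}$, $\psi$ is the quadratic $rn-(n+(\delta-1)r)t+(\delta-1)t^2$, and a direct computation gives
\[
\psi(t/2)=\frac{\psi(0)+\psi(t)}{2}-\frac{(\delta-1)t^2}{4}.
\]
So your inequality fails strictly for every $0<t\le T$, in particular at the smaller root $t^*$ of $\psi=rn\lambda$ (with your $\lambda=\ell/(q-1)$). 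The chain $\psi(t^*/2)\ge rn(1+\lambda)/2\ge rn\sqrt\lambda$ therefore breaks at its first link. As written, you have not shown that $|b|<e$ places $b_i$ inside the Guruswami--Sudan radius.

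The target inequality $\psi(t^*/2)\ge rn\sqrt{\lambda}$ is still true; it is weaker than your intermediate claim, since $\sqrt\lambda\le(1+\lambda)/2$. It just needs a different argument. The paper writes it as $H(\mu)<2H(\sqrt\mu)$, where $H(Y)=\eta+\xi-\sqrt{(\xi-\eta)^2+4\eta\xi Y}$, $\eta=1/(\delta-1)$, $\xi=r/n$, $\mu=\ell/(q-1)$. It proves this by showing that $F(Y)=2H(Y)-H(Y^2)$ is decreasing on $[0,1)$ with $F(1)=0$.

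A shorter repair in your own language uses the product structure of $\psi$ rather than its convexity:
\begin{enumerate}
\item For $0\le t\le T$, set $x_1=t/r$ and $x_2=(\delta-1)t/n$, both in $[0,1]$. Then $\psi(t)/(rn)=(1-x_1)(1-x_2)$ and $\psi(t/2)/(rn)=(1-x_1/2)(1-x_2/2)$.
\item Applying $(1-x/2)^2\ge 1-x$ to each factor gives $\psi(t/2)\ge rn\sqrt{\psi(t)/(rn)}$.
\item At $t=t^*$ this yields $\psi(t^*/2)\ge rn\sqrt\lambda$.
\item Since $n|b|/(q-1)<ne/(q-1)=t^*/2<T$ and $\psi$ is strictly decreasing on $[0,T)$, you get $\psi(n|b|/(q-1))>rn\sqrt\lambda$.
\end{enumerate}
This is exactly what the list-decoding step needs, and with this substitution the rest of your argument goes through.
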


\begin{remark}\label{rem:ell_to_ell_minus_one_difference}
    The error bound \(e\) in \Cref{thm:dec.r.delta.qtb} is slightly less than half the distance bound in \Cref{thm:qtb_distance}; one can see this by replacing \(\ell\) with \(\ell -1\) in the expression for \(e\).
\end{remark}

The following lemma helps prove that \Cref{alg:decoding_r_delta_qtb} runs in polynomial time. We show that computing $\mathrm{dis} (\ev(g)-a,B^\perp)$ can be done efficiently. Let \[\langle Y,\ldots, Y^{\delta-1}\rangle = \left\{P(Y) \mid P(Y) = \sum_{j=1}^{\delta-1} y_j Y^j, \quad y_j \in \F_q\right\}.\]

\begin{lemma}\label{lem:B_perp_distance_runtime}
Let
\(
B^\perp=\ev(\F_q[X]^{[q-1]\cap S_+})
\)
be as in \Cref{lem:qtb.construction}. Given a function
\(b:\F_q^*\to\F_q\), the quantity
\(
\operatorname{dis}(b,B^\perp)
\)
can be computed by solving independent nearest-neighbor problems on the cosets
of \(\Omega_n\), where \(n=r+\delta-1\). More precisely,
\[
\operatorname{dis}(b,B^\perp)
=
\sum_{\alpha\Omega_n}
\left(
n-
\max_{P\in\langle Y,\ldots,Y^{\delta-1}\rangle}
\left|
\{\omega\in\Omega_n:b(\alpha\omega)=P(\omega)\}
\right|
\right).
\]
In particular, by exhaustive search over the \(q^{\delta-1}\) polynomials in
\(\langle Y,\ldots,Y^{\delta-1}\rangle\), this can be computed in time
\(
O\left((q-1)q^{\delta-1}\delta\right).
\)
Thus for fixed \(\delta\), the computation is polynomial in \(q\).
\end{lemma}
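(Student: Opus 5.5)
The plan is to use the coset-local description of $B^\perp$ from \Cref{lem:low.wt.parity.checks} to split the distance computation into independent problems, one per coset of $\Omega_n$.

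\textbf{Step 1 (decomposition).} By \Cref{lem:low.wt.parity.checks}, $c\in B^\perp$ if and only if, for every coset $\alpha\Omega_n$, there is some $P_\alpha\in\langle Y,\ldots,Y^{\delta-1}\rangle$ with $c(\alpha\omega)=P_\alpha(\omega)$ for all $\omega\in\Omega_n$. The choices of $P_\alpha$ on different cosets are unconstrained relative to one another.

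To justify this independence, I will note two things. The map $c\mapsto (P_\alpha)_\alpha$ is injective, since $\deg P_\alpha\le\delta-1<n$ and a polynomial of that degree is determined by its values on $\Omega_n$. The converse direction of the lemma then shows that every tuple $(P_\alpha)_\alpha$ arises from some $c\in B^\perp$. Hence $B^\perp$ is the direct product, over cosets, of the local spaces $\{(P(\omega))_{\omega\in\Omega_n}:P\in\langle Y,\ldots,Y^{\delta-1}\rangle\}$, each transported to its coset $\alpha\Omega_n$.

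\textbf{Step 2 (distance splits).} The cosets partition $\F_q^*$, so Hamming distance is additive over them:
\[
\operatorname{dis}(b,c)=\sum_{\alpha\Omega_n}\bigl|\{\omega\in\Omega_n: b(\alpha\omega)\neq P_\alpha(\omega)\}\bigr|.
\]
Because the $P_\alpha$ vary independently, minimizing this sum over $c\in B^\perp$ is the same as minimizing each summand separately. Each summand equals $n$ minus the number of agreements, which gives exactly the stated formula.

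One small point needs care: the coset representative $\alpha$ is not canonical. Changing $\alpha$ to $\alpha\omega_0$ with $\omega_0\in\Omega_n$ replaces $P$ by $P(\omega_0 Y)$. This is a bijection of $\langle Y,\ldots,Y^{\delta-1}\rangle$ onto itself, so the maximum agreement count does not depend on the representative chosen.

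\textbf{Step 3 (running time).} There are $(q-1)/n$ cosets. For each coset, I enumerate all $q^{\delta-1}$ polynomials $P$. For each $P$, I evaluate it at the $n$ points of $\Omega_n$ and count agreements with $b$; this takes $O(n\delta)$ field operations, using Horner's rule or precomputed powers of $\omega$.

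The total cost is
\[
\frac{q-1}{n}\cdot q^{\delta-1}\cdot O(n\delta)=O\bigl((q-1)q^{\delta-1}\delta\bigr).
\]
Precomputing $\Omega_n$ and a primitive root of unity $\omega$ costs only $\operatorname{poly}(q)$. For fixed $\delta$ this bound is polynomial in $q$.

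The only step requiring real care is the independence argument in Step 1. Everything after it is bookkeeping.
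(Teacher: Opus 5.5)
Your proposal is correct and follows the same route as the paper. Both use \Cref{lem:low.wt.parity.checks} to see that $B^\perp$ is a direct sum of independent coset-local copies of $\langle Y,\ldots,Y^{\delta-1}\rangle$, minimize the Hamming distance separately on each coset of $\Omega_n$, and count $(q-1)/n$ cosets times $q^{\delta-1}$ candidates times $O(n\delta)$ work per candidate. Your extra care (deriving independence from the converse direction of that lemma, and checking that the coset representative is immaterial) makes explicit what the paper asserts without comment. One small correction: your degree-$<n$ remark shows that each $P_\alpha$ is uniquely determined by $c$, not that $c\mapsto(P_\alpha)_\alpha$ is injective, and the minimization does not need it.
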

\begin{proof}
By \Cref{lem:low.wt.parity.checks}, a function lies in \(B^\perp\) if and only
if, on every coset \(\alpha\Omega_n\), it is represented by a polynomial
\[
P_\alpha(Y)\in\langle Y,\ldots,Y^{\delta-1}\rangle.
\]
Moreover, the choices of the polynomials \(P_\alpha\) on distinct cosets are
independent. Therefore minimizing the Hamming distance to \(B^\perp\) separates
over cosets:
\[
\operatorname{dis}(b,B^\perp)
=
\sum_{\alpha\Omega_n}
\min_{P\in\langle Y,\ldots,Y^{\delta-1}\rangle}
|\{\omega\in\Omega_n:b(\alpha\omega)\neq P(\omega)\}|.
\]
For a fixed coset, this is equal to
\[
n-
\max_{P\in\langle Y,\ldots,Y^{\delta-1}\rangle}
|\{\omega\in\Omega_n:b(\alpha\omega)=P(\omega)\}|.
\]
This proves the formula.

For the runtime claim, the local space
\(\langle Y,\ldots,Y^{\delta-1}\rangle\) has dimension \(\delta-1\) over
\(\F_q\), so it contains \(q^{\delta-1}\) polynomials. For each such polynomial,
we can evaluate it on the \(n\) points of \(\Omega_n\) and count agreements in
time \(O(n\delta)\). Since there are
\((q-1)/n\) cosets, this gives the stated runtime.
\end{proof}

\begin{proof}[Proof of \Cref{thm:dec.r.delta.qtb}]
  Let $\mathcal{Q} = \CSS(C,C)$ be the $(r,\delta)$ Quantum Tamo--Barg code with parameters $q$, $\ell$, $r$, $\delta$. By the preceding discussion, we may assume \(\ell \leq q-2\) so \(e>0\). To show \Cref{alg:decoding_r_delta_qtb} efficiently decodes $\mathcal{Q}$, it is sufficient to show that any input corrupted word $a$ can be written as $a=c+b$ where $c \in C$ and $b:\mathbb{F}_q^*\rightarrow \mathbb{F}_q$ is some corruption of Hamming weight $|b| < e$, and outputs $c' \in C$ such that $c'-c \in C^\perp$. 
  
  Assume $c = \ev(f)$ is a codeword in $C$. Let $n : = r + \delta-1$. Fix a coset $\alpha\Omega_{n} \in \mathbb{F}_q^*/\Omega_{n} $. Let $\Gamma_i = \{0,\ldots,\delta-2,i\}$ for $i \in \{\delta-1,\ldots, r+\delta-2\}$. Construct the Vandermonde matrix $V_i = [\omega^{(j-1)\gamma}]_{1 \leq j \leq \delta-1,\gamma \in \Gamma_i}$ which has rank $\delta-1$ and let $v_i= (v_{i,0}, \ldots, v_{i,\delta-2}, 1)$ be a nontrivial vector such that $V_i v_i = 0$. 
  
  For a given $i \in \{\delta-1,\ldots, r+\delta-2\}$, we have \[a_i(x) = \sum_{t=0}^{\delta-2} v_{i,t}\omega^{-t}a(\omega^{t}x) + \omega^{-i}a(\omega^ix)\] equal to \[c_i(x) = f_i(x) = \sum_{t=0}^{\delta-2} v_{i,t}\omega^{-t}f(\omega^{t}x) + \omega^{-i}f(\omega^ix)\] at every point $x$ for which $\omega^{\gamma}x \not\in \supp(b)$ for all $\gamma \in \Gamma_i$. Within a given coset $A := \alpha\Omega_{n} \in \mathbb{F}_q^*/\Omega_{n}$, the number of such points which are not in the support of $b$ is at least \(\psi(|b|_A)\) where \[ \psi(t):=(r-t)_+(n-(\delta-1)t)_+ \] as in \Cref{lem:psi_convex_monotone}. Indeed, let
    \[
    N_A:=
    |\{x\in A:x,\omega x,\ldots,\omega^{\delta-2}x\notin\supp(b)\}|.
    \]
    Each error position in \(A\) can lie in at most \(\delta-1\) of the consecutive
    blocks
    \(
    \{x,\omega x,\ldots,\omega^{\delta-2}x\},
    \)
    so
    \[
    N_A\ge (n-(\delta-1)|b|_A)_+.
    \]
    For each such \(x\), all \(|b|_A\) errors lie among the remaining \(r\) possible
    positions indexed by \(i\in\{\delta-1,\ldots,n-1\}\). Hence at least
    \((r-|b|_A)_+\) choices of \(i\) remain valid. Therefore the contribution of the
    coset \(A\) is at least
    \[
    (r-|b|_A)_+(n-(\delta-1)|b|_A)_+=\psi(|b|_A).
    \]
    The function \(\psi\) is convex on \([0,\infty)\). Hence Jensen's inequality gives
    \[
    \sum_{A\in\F_q^*/\Omega_n}\psi(|b|_A)
    \ge
    \frac{q-1}{n}
    \psi\left(\frac{n|b|}{q-1}\right).
    \]
    Since \(|b|< e\), it follows that
    \[
    \psi\left(\frac{n|b|}{q-1}\right)
    \ge
    \psi\left(\frac{ne}{q-1}\right).
    \]
    Therefore,
    \[
    \sum_{A\in\F_q^*/\Omega_n}\psi(|b|_A)
    \ge
    \frac{q-1}{n}
    \psi\left(\frac{ne}{q-1}\right).
    \]
    We now justify that the argument of \(\psi\) remains in the nonzero decreasing
    range for the value of \(e\) in \labelcref{eqn:assumption_on_e}. Set
    \[
    \eta:=\frac1{\delta-1},\qquad
    \xi:=\frac rn,\qquad
    \mu:=\frac{\ell}{q-1},
    \]
    and
    \[
    H(Y):=
    \eta+\xi-
    \sqrt{(\xi-\eta)^2+4\eta\xi Y}.
    \]
    Then \labelcref{eqn:assumption_on_e} says
    \[
    e=\frac{q-1}{4}H(\mu).
    \]
    Since \(H\) is decreasing in \(Y\) and \(0 \le Y <1\), we have
    \[
    H(\mu)\le H(0)
    =
    \eta+\xi-|\xi-\eta|
    =
    2\min\{\eta,\xi\}.
    \]
    Thus
    \[
    e
    \le
    \frac{q-1}{2}\min\left\{\frac1{\delta-1},\frac rn\right\}.
    \]
    Multiplying by \(n/(q-1)\), we obtain
    \[
    \frac{ne}{q-1}
    \le
    \frac12\min\left\{\frac{n}{\delta-1},r\right\}.
    \]
    Therefore
    \[
    \frac{ne}{q-1}
    <
    \min\left\{\frac{n}{\delta-1},r\right\},
    \]
    so \(\frac{ne}{q-1}\) lies in the nonzero range of \(\psi\). 
    Therefore, 
    \[
    \frac{q-1}{n}
    \psi\left(\frac{ne}{q-1}\right) = \frac{(\delta-1) n}{q-1}
    \left(\frac{r(q-1)}{n}-e\right)^2
    -
    \left((r-1)(\delta-1)-r\right)
    \left(\frac{r(q-1)}{n}-e\right).
    \]
  Averaging over all $i \in \{\delta-1,\ldots, r+\delta-2\}$, there must be some $i$ such that 
  \begin{align}\label{eqn:average_agreement_a_f}
      &|\{x \in \mathbb{F}_q^* : a_i(x) = f_i(x)\}|\nonumber\\ &\geq \frac{(\delta-1)n}{r(q-1)}\left(\frac{r(q-1)}{n}-e\right)^2 - \frac{((r-1)(\delta-1)-r)}{r}\left(\frac{r(q-1)}{n} - e\right).
  \end{align}

  Recall that by \Cref{lem:qtb.construction}, since  $C = \ev(\F_q[X]^S)$ for \[S = \left([\ell] \setminus S_{-}\right)\cup ([q-1]\cap S_{+})\] we may write $f(X) = g(X) + h(X)$ where $g(X) \in \mathbb{F}_q[X]^{[\ell]\setminus (S_{-}\cup S_{+})}$ and $h(X) \in \mathbb{F}_q[X]^{ [q-1]\cap S_{+}}$. Define $g_i, h_i$ analogously to $f_i$ so $f_i = g_i + h_i$. Consider the polynomial \[Q_i(Y) = \sum_{t=0}^{\delta-2}v_{i,t}Y^t + Y^{i}.\] By construction, $Q_i(\omega^\gamma) = 0$ for $\gamma = 0,\ldots, \delta -2$. For a coset element $x\in \alpha\Omega_{n}$, \[\sum_{t=0}^{\delta-2} v_{i,t}\omega^{-t}(\omega^{t}x)^j + \omega^{-i}(\omega^ix)^j = x^j \left(\sum_{t=0}^{\delta-2}v_{i,t}\omega^{(j-1)t} + \omega^{(j-1)i}\right) = x^jQ_i(\omega^{j-1}) = 0\] for $j =1,\ldots, \delta-1$. It follows that \[h_i(x) = \sum_{t=0}^{\delta-2} v_{i,t}\omega^{-t}h(\omega^{t}x) + \omega^{-i}h(\omega^ix) = 0\] for all $x \in \mathbb{F}_q^*$ so $f_i = g_i$. Therefore, \labelcref{eqn:average_agreement_a_f} is equivalent to \begin{align}\label{eqn:average_agreement_a_g}
      &|\{x \in \mathbb{F}_q^* : a_i(x) = g_i(x)\}|\nonumber\\ &\geq \frac{(\delta-1)n}{r(q-1)}\left(\frac{r(q-1)}{n}-e\right)^2 - \frac{((r-1)(\delta-1)-r)}{r}\left(\frac{r(q-1)}{n} - e\right).
  \end{align} The coefficients of $g_i$ are given by \[g_{i,j} = Q_i(\omega^{j-1})g_j\] so $g$ and $g_i$ have coefficients of the same support by \Cref{cor:Q_b_finite_field_version_outside_finitely_many_char}. In particular, $\deg g_i = \deg g < \ell$ so $\ev(g_i) \in \mathrm{RS}(q,\ell)$ is a Reed-Solomon codeword. Therefore, \labelcref{eqn:average_agreement_a_g} says $a_i = \ev(g_i) + b_i$ is a corrupted Reed-Solomon codeword with \begin{align*}
      |b_i| &= q-1 -  |\{x \in \mathbb{F}_q^* : a_i(x) = g_i(x)\}|\\
      & \leq q-1 - \frac{(\delta-1)n}{r(q-1)}\left(\frac{r(q-1)}{n}-e\right)^2 + \frac{((r-1)(\delta-1)-r)}{r}\left(\frac{r(q-1)}{n} - e\right)\\
      & = (q-1)\left(1-\frac{(\delta-1)n}{r}\left(\frac{r}{n} - \frac{e}{q-1}\right)^2 + \frac{((r-1)(\delta-1)-r)}{r}\left(\frac{r}{n} - \frac{e}{q-1}\right)\right)
  \end{align*} 
  %for \[E = \left(\frac{r}{n} - \frac{e}{q-1}\right).\]

  By \Cref{thm:RS.list.dec}, the output of running $\mathrm{ListDec}_{\mathrm{RS}(q,\ell)}(a_i)$ is a list containing $\ev(g_i)$ as long as \[1-\frac{(\delta-1)n}{r}\left(\frac{r}{n} - \frac{e}{q-1}\right)^2 + \frac{((r-1)(\delta-1)-r)}{r}\left(\frac{r}{n} - \frac{e}{q-1}\right) < 1- \sqrt{\frac{\ell}{q-1}}\] which simplifies to needing \begin{equation}\label{eqn:condition_on_e} e < \frac{q-1}{2}\left(\frac{1}{\delta-1} + \frac{r}{n} - \sqrt{\left(\frac{r}{n} - \frac{1}{\delta-1}\right)^2 + \frac{4r}{(\delta-1)n}\cdot \sqrt{\frac{\ell}{q-1}}}\right).\end{equation} Let \[\eta = \frac{1}{\delta-1}, \quad \xi = \frac{r}{n},\quad \mu = \frac{\ell}{q-1},\] and \[H(Y) = \eta + \xi - \sqrt{(\xi-\eta)^2 + 4\eta\xi Y}.\] Substituting the variables, by the assumption of \eqref{eqn:assumption_on_e} on $e$, \eqref{eqn:condition_on_e} holds if we can show that $H(\mu) < 2H(\sqrt{\mu})$. Since $0 \leq \mu < 1$, letting $F(Y) = 2H(Y) - H(Y^2)$, it suffices to show $F(Y) > 0$ for $0 \leq Y < 1$. Differentiating, \[F'(Y) = -\frac{4\eta\xi}{\sqrt{(\xi-\eta)^2 + 4\eta\xi Y}}+\frac{4\eta\xi Y}{\sqrt{(\xi-\eta)^2 + 4\eta \xi Y^2}}\] and $F'(Y) < 0$ is equivalent to 
  \[\frac{Y}{\sqrt{(\xi-\eta)^2 + 4\eta \xi Y^2}} < \frac{1}{\sqrt{(\xi-\eta)^2 + 4\eta \xi Y}} \iff (1-Y^2)(\xi-\eta)^2 + 4\eta \xi Y^2(1-Y) > 0\] which is true for $0 \leq Y < 1$. Hence, $F$ is decreasing on $[0,1)$ and $F(1) = 0$ so $F > 0$ on $[0,1)$.
  
  Therefore, $\mathcal{L}_i$ in \Cref{alg:decoding_r_delta_qtb} will contain $g_i$ and after iteration $i$, $\mathcal{L}$ will contain $\ev(g)$. If \Cref{alg:decoding_r_delta_qtb} outputs $\ev(g') \in \mathcal{L}$, then $g' \in \mathbb{F}_q[X]^{[\ell]\setminus(S_- \cup S_+)}$ and \[\mathrm{dis}(\ev(g') - a, B^\perp) \leq \mathrm{dis}(\ev(g) - a, B^\perp) \leq |\ev(g + h) - a| = |\ev(f)-a| = |b| < e.\] Since $B^\perp \subseteq C^\perp$, we have $\mathrm{dis}(\ev(g) - a, C^\perp) < e$ and $\mathrm{dis}(\ev(g') - a, C^\perp)< e$ so $\mathrm{dis}(\ev(g) - \ev(g'), C^\perp) < 2e$. By definition of $e$ and \Cref{thm:qtb_distance}, $\mathcal{Q}$ has distance $\min_{c \in C\setminus C^\perp} |c| \geq 2e$ so $\ev(g) - \ev(g') \in C^\perp$. Thus, $\mathrm{Dec}_C(a)$ outputs some $\ev(g') \in \ev(g) + C^\perp$, as desired.

  The calls to $\mathrm{ListDec}_{\mathrm{RS}(q,\ell)}$ run in $q^{O(1)}$ time. \Cref{lem:B_perp_distance_runtime} implies the last line of \Cref{alg:decoding_r_delta_qtb} runs in $O(|\mathcal{L}|\cdot (q-1)q^{\delta-1}\delta) = (q^{O(\delta)}\operatorname{poly}(r,q))$ time. The rest of \Cref{alg:decoding_r_delta_qtb} also runs in $(q^{O(\delta)}\operatorname{poly}(r,q))$ time so the entire decoding procedure takes $(q^{O(\delta)}\operatorname{poly}(r,q))$ time.
\end{proof}

\begin{remark}
The decoding algorithm in \Cref{alg:decoding_r_delta_qtb} efficiently decodes $(r,\delta)$ Quantum Tamo--Barg (QTB) codes. Moreover, extending \Cref{alg:decoding_r_delta_qtb} similarly to \cite[Algorithm 2]{golowich2025quantum}, we can obtain a decoding algorithm for the folded quantum Tamo--Barg codes presented in \Cref{sec:folded_qtb}.
\end{remark}

\section{Discussion and Conclusions}
As demonstrated in \cite{gottesman1997stabilizercodesquantumerror, Grassl_1997} quantum erasures are easier to recover than other errors. Furthermore, detected errors can be converted into erasures on quantum hardware \cite{Grassl_1997,Wu_2022,Teoh_2023}. Erasure recovery is also believed to aid building efficient quantum storage. To this end, we develop the theory of quantum erasure recovery by extending the study of local recovery. In particular, we extend the literature on quantum locally recoverable codes. We provide constructions of $(r,\delta)$ QLRCs that are CSS codes by effectively constructing the parity-check matrices of underlying classical codes. We also demonstrate explicit examples of $(r,\delta)$ QLRCs by constructing $(r,\delta)$ quantum Tamo--Barg codes. 

Furthermore, we explore the idea of ``hierarchical local recovery'' in the quantum setting. Classical hierarchical locally recoverable codes were studied to improve efficiency of local recovery as well as recover more erasures. With a similar goal in mind, we define $h$-level quantum hierarchical locally recoverable codes for any integer $h \geq 2$. In the case that $h=1$, we reduce to the definition of $(r,\delta)$ QLRCs presented in \cite{galindo2024quantumrdeltalocallyrecoverablecodes}. Given locality and distance parameters $(r_1,\delta_1),\ldots,(r_h,\delta_h)$, we construct $h$-level QHLRCs that are CSS codes. The random constructions that we present rely on carefully choosing the parity-check matrices for associated classical codes. We also present a Singleton-like bound on the minimum distance of $h$-level QHLRCs. When $h=1$, we obtain the Singleton-like bound presented in \cite{galindo2024quantumrdeltalocallyrecoverablecodes}. We also present explicit constructions of $h$-level QHLRCs, the $h$-level quantum Tamo--Barg codes. In addition, we introduce folded $(r,\delta)$ quantum Tamo--Barg codes and folded quantum Hierarchical Tamo--Barg codes. %\RK{Change code name to match the one used in Section 7.} 
%However, constructing folded $h$-level quantum Tamo–Barg codes remains open largely due to the complexity of computing a good bound on their minimum distance.
%, of which the one-level folded quantum Tamo–Barg code ($h=1$) arises as a special case. 
Finally, we present an algorithm which decodes the $(r,\delta)$ quantum Tamo--Barg codes efficiently and can be extended to the folded codes.

While this work contributes to the growing literature on quantum locally recoverable codes (QLRCs), an area that has recently attracted significant attention, several important questions remain open. The classical Tamo--Barg codes are designed such that their minimum distance achieves the Singleton-like bound for classical locally recoverable codes. In contrast, the quantum Tamo--Barg codes introduced in this work do not attain the corresponding quantum Singleton-like bound in either the QLRC or QHLRC settings. Constructing QLRCs and QHLRCs that meet their respective quantum Singleton-like bounds therefore remains an open problem. Moreover, devising new proof techniques to prove tighter lower bounds on the minimum distances derived in this work is also of interest. Finally, a key avenue for future exploration lies in understanding the practical implications of these theoretical advances—specifically, how such quantum codes can be effectively integrated into emerging quantum hardware architectures.

%\bibliography{qHLRC_bibliography}
\printbibliography

@article{Calderbank_1996,
   title={Good quantum error-correcting codes exist},
   volume={54},
   ISSN={1094-1622},
   url={http://dx.doi.org/10.1103/PhysRevA.54.1098},
   DOI={10.1103/physreva.54.1098},
   number={2},
   journal={Physical Review A},
   publisher={American Physical Society (APS)},
   author={Calderbank, A. R. and Shor, Peter W.},
   year={1996},
   month=aug, pages={1098–1105} }

@article{Steane_1996, volume={452},
   title = {Multiple-particle interference and quantum error correction}, 
   author = {Steane, Andrew},
   ISSN={1471-2946},
   url={http://dx.doi.org/10.1098/rspa.1996.0136},
   DOI={10.1098/rspa.1996.0136},
   number={1954},
   journal={Proceedings of the Royal Society of London. Series A: Mathematical, Physical and Engineering Sciences},
   publisher={The Royal Society},
   year={1996},
   month=nov, pages={2551–2577} }

@book{Cover2006,
  at = {2008-03-31 06:17:47},
  author = {Cover, Thomas M. and Thomas, Joy A.},
  howpublished = {Hardcover},
  id = {1877660},
  isbn = {0471241954},
  month = {7},
  publisher = {Wiley-Interscience},
  title = {Elements of Information Theory 2nd Edition (Wiley Series in Telecommunications and Signal Processing)},
  year = 2006
}

@article{Grassl_1997,
   title={Codes for the quantum erasure channel},
   volume={56},
   ISSN={1094-1622},
   url={http://dx.doi.org/10.1103/PhysRevA.56.33},
   DOI={10.1103/physreva.56.33},
   number={1},
   journal={Physical Review A},
   publisher={American Physical Society (APS)},
   author={Grassl, M. and Beth, Th. and Pellizzari, T.},
   year={1997},
   month=jul, pages={33–38} }

@misc{gottesman1997stabilizercodesquantumerror,
      title={Stabilizer Codes and Quantum Error Correction}, 
      author={Daniel Gottesman},
      year={1997},
      eprint={quant-ph/9705052},
      archivePrefix={arXiv},
      primaryClass={quant-ph},
      url={https://arxiv.org/abs/quant-ph/9705052}, 
}

@article{Wu_2022,
   title={Erasure conversion for fault-tolerant quantum computing in alkaline earth Rydberg atom arrays},
   volume={13},
   ISSN={2041-1723},
   url={http://dx.doi.org/10.1038/s41467-022-32094-6},
   DOI={10.1038/s41467-022-32094-6},
   number={1},
   journal={Nature Communications},
   publisher={Springer Science and Business Media LLC},
   author={Wu, Yue and Kolkowitz, Shimon and Puri, Shruti and Thompson, Jeff D.},
   year={2022},
   month=aug }

@article{Teoh_2023,
   title={Dual-rail encoding with superconducting cavities},
   volume={120},
   ISSN={1091-6490},
   url={http://dx.doi.org/10.1073/pnas.2221736120},
   DOI={10.1073/pnas.2221736120},
   number={41},
   journal={Proceedings of the National Academy of Sciences},
   publisher={Proceedings of the National Academy of Sciences},
   author={Teoh, James D. and Winkel, Patrick and Babla, Harshvardhan K. and Chapman, Benjamin J. and Claes, Jahan and de Graaf, Stijn J. and Garmon, John W. O. and Kalfus, William D. and Lu, Yao and Maiti, Aniket and Sahay, Kaavya and Thakur, Neel and Tsunoda, Takahiro and Xue, Sophia H. and Frunzio, Luigi and Girvin, Steven M. and Puri, Shruti and Schoelkopf, Robert J.},
   year={2023},
   month=oct }

@misc{sharma2024quantumlocallyrecoverablecodes,
      title={Quantum Locally Recoverable Codes via Good Polynomials}, 
      author={Sandeep Sharma and Vinayak Ramkumar and Itzhak Tamo},
      year={2024},
      eprint={2411.01504},
      archivePrefix={arXiv},
      primaryClass={cs.IT},
      url={https://arxiv.org/abs/2411.01504}, 
}

@misc{zhou2025optimalquantumrdeltalocallyrepairable,
      title={Optimal Quantum $(r,\delta)$-Locally Repairable Codes via Classical Ones}, 
      author={Kun Zhou and Meng Cao},
      year={2025},
      eprint={2507.18175},
      archivePrefix={arXiv},
      primaryClass={quant-ph},
      url={https://arxiv.org/abs/2507.18175}, 
}

@ARTICLE{KPLK14,
  author={Kamath, Govinda M. and Prakash, N. and Lalitha, V. and Kumar, P. Vijay},
  journal={IEEE Trans. on Inf. Theory}, 
  title={Codes With Local Regeneration and Erasure Correction}, 
  year={2014},
  volume={60},
  number={8},
  pages={4637-4660},
  doi={10.1109/TIT.2014.2329872}}

@article{LRC_intro,
  title={On the locality of codeword symbols},
  author={Gopalan, Parikshit and Huang, Cheng and Simitci, Huseyin and Yekhanin, Sergey},
  journal={IEEE Trans. on Inf. Theory},
  volume={58},
  number={11},
  pages={6925--6934},
  year={2012},
  publisher={IEEE}
}

@ARTICLE{tamo.barg.lrc,
  author={Tamo, Itzhak and Barg, Alexander},
  journal={IEEE Transactions on Information Theory}, 
  title={A Family of Optimal Locally Recoverable Codes}, 
  year={2014},
  volume={60},
  number={8},
  pages={4661-4676},
  doi={10.1109/TIT.2014.2321280}}

@ARTICLE{ballentine.barg.vladut.hlrc,
  author={Ballentine, Sean and Barg, Alexander and Vlăduţ, Serge},
  journal={IEEE Transactions on Information Theory}, 
  title={Codes With Hierarchical Locality From Covering Maps of Curves}, 
  year={2019},
  volume={65},
  number={10},
  pages={6056-6071},
  doi={10.1109/TIT.2019.2919830}}

@misc{galindo2024quantumrdeltalocallyrecoverablecodes,
      title={Quantum $(r,\delta)$-locally recoverable codes}, 
      author={Carlos Galindo and Fernando Hernando and Helena Martín-Cruz and Ryutaroh Matsumoto},
      year={2024},
      eprint={2412.16590},
      archivePrefix={arXiv},
      primaryClass={cs.IT},
      url={https://arxiv.org/abs/2412.16590}, 
}

@inproceedings{golowich2025quantum,
  title={Quantum locally recoverable codes},
  author={Golowich, Louis and Guruswami, Venkatesan},
  booktitle={Proceedings of the 2025 Annual ACM-SIAM Symposium on Discrete Algorithms (SODA)},
  pages={5512--5522},
  year={2025},
  organization={SIAM}
}

@article{Gundersen2025puncturing,
   title={Puncturing Quantum Stabilizer Codes},
   volume={6},
   ISSN={2641-8770},
   url={http://dx.doi.org/10.1109/JSAIT.2025.3562287},
   DOI={10.1109/jsait.2025.3562287},
   journal={IEEE Journal on Selected Areas in Information Theory},
   publisher={Institute of Electrical and Electronics Engineers (IEEE)},
   author={Gundersen, Jaron Skovsted and Christensen, René Bødker and Grassl, Markus and Popovski, Petar and Wisniewski, Rafał},
   year={2025},
   pages={74–84} }

@INPROCEEDINGS{sasidharan2015hlrc,
  author={Sasidharan, Birenjith and Agarwal, Gaurav Kumar and Kumar, P. Vijay},
  booktitle={2015 IEEE International Symposium on Information Theory (ISIT)}, 
  title={Codes with hierarchical locality}, 
  year={2015},
  volume={},
  number={},
  pages={1257-1261},
  doi={10.1109/ISIT.2015.7282657}}

@book{Macdonald_2008, place={Oxford, NY}, title={Symmetric functions and hall polynomials}, publisher={Clarendon Press; Oxford University Press}, author={Macdonald, I. G.}, year={2008}}

@article{Grasslpuncturingstabilizer,
	author = {Markus Grassl},
	doi = {10.1080/23799927.2020.1850530},
	eprint = {https://doi.org/10.1080/23799927.2020.1850530},
	journal = {International Journal of Computer Mathematics: Computer Systems Theory},
	number = {4},
	pages = {243--259},
	publisher = {Taylor \& Francis},
	title = {Algebraic quantum codes: linking quantum mechanics and discrete mathematics},
	url = {https://doi.org/10.1080/23799927.2020.1850530},
	volume = {6},
	year = {2021}}

@article{Goldstein_Guralnick_Isaacs_2005, 
    title={Inequalities for finite group permutation modules}, 
    volume={357}, 
    DOI={10.1090/s0002-9947-05-03927-9}, 
    number={10}, 
    journal={Transactions of the American Mathematical Society}, 
    author={Goldstein, Daniel and Guralnick, Robert M. and Isaacs, I. M.}, 
    year={2005}, 
    month={05}, 
    pages={4017–4042}}

@INPROCEEDINGS{GS98,
  author={Guruswami, V. and Sudan, M.},
  booktitle={Proceedings 39th Annual Symposium on Foundations of Computer Science (Cat. No.98CB36280)}, 
  title={Improved decoding of Reed-Solomon and algebraic-geometric codes}, 
  year={1998},
  volume={},
  number={},
  pages={28-37},
  doi={10.1109/SFCS.1998.743426}}

@misc{li2025improvedboundsoptimalconstructions,
      title={Improved bounds and optimal constructions of pure quantum locally recoverable codes}, 
      author={Yang Li and Shitao Li and Gaojun Luo and San Ling},
      year={2025},
      eprint={2512.07256},
      archivePrefix={arXiv},
      primaryClass={cs.IT},
      url={https://arxiv.org/abs/2512.07256}, 
}

@misc{galindo2026quantumrdeltalocallyrecoverablebch,
      title={Quantum $(r,\delta)$-Locally Recoverable BCH and Homothetic-BCH Codes}, 
      author={Carlos Galindo and Fernando Hernando and Ryutaroh Matsumoto},
      year={2026},
      eprint={2601.22567},
      archivePrefix={arXiv},
      primaryClass={cs.IT},
      url={https://arxiv.org/abs/2601.22567}, 
}

@misc{li2025optimalquantumlrcshermitian,
      title={On optimal quantum LRCs from the Hermitian construction and $t$-designs}, 
      author={Yang Li and Shitao Li and Huimin Lao and Gaojun Luo and San Ling},
      year={2025},
      eprint={2508.13553},
      archivePrefix={arXiv},
      primaryClass={cs.IT},
      url={https://arxiv.org/abs/2508.13553}, 
}

@misc{bu2025quantumlocallyrecoverablecode,
      title={Quantum locally recoverable code with intersecting recovery sets}, 
      author={Kaifeng Bu and Weichen Gu and Xiang Li},
      year={2025},
      eprint={2501.10354},
      archivePrefix={arXiv},
      primaryClass={quant-ph},
      url={https://arxiv.org/abs/2501.10354}, 
}

@misc{luo2023boundsconstructionsquantumlocally,
      title={Bounds and Constructions of Quantum Locally Recoverable Codes from Quantum CSS Codes}, 
      author={Gaojun Luo and Bocong Chen and Martianus Frederic Ezerman and San Ling},
      year={2023},
      eprint={2312.11115},
      archivePrefix={arXiv},
      primaryClass={cs.IT},
      url={https://arxiv.org/abs/2312.11115}, 
}

@misc{galindo2025optimalquantumlocallyrecoverable,
      title={Optimal quantum locally recoverable codes from matrix-product construction}, 
      author={Carlos Galindo and Fernando Hernando and Carlos Munuera and Diego Ruano},
      year={2025},
      eprint={2310.15703},
      archivePrefix={arXiv},
      primaryClass={cs.IT},
      url={https://arxiv.org/abs/2310.15703}, 
}

@incollection{Gardner2014,
  author    = {Gardner, Robert B. and Govil, N. K.},
  title     = {Enestr{\"o}m--Kakeya Theorem and Some of Its Generalizations},
  booktitle = {Current Topics in Pure and Computational Complex Analysis},
  editor    = {Joshi, Santosh and Dorff, Michael and Lahiri, Indrajit},
  pages     = {171--199},
  publisher = {Springer India},
  address   = {New Delhi},
  year      = {2014},
  isbn      = {978-81-322-2113-5},
  doi       = {10.1007/978-81-322-2113-5_8},
  url       = {https://doi.org/10.1007/978-81-322-2113-5_8}
}

@article{Jacobi1841,
author = {Jacobi, C.G.J.},
journal = {Journal für die reine und angewandte Mathematik},
language = {lat},
pages = {360-371},
title = {De functionibus alternantibus earumque divisione per productum e differentiis elementorum conflatum.},
url = {http://eudml.org/doc/147139},
volume = {22},
year = {1841},
}

@article{cauchy1815memoire,
  author = {Cauchy, Augustin-Louis},
  title = {M{\'e}moire sur les fonctions qui ne peuvent obtenir que deux valeurs {\'e}gales et de signes contraires par suite des transpositions op{\'e}r{\'e}es entre les variables qu'elles renferment},
  journal = {Journal de l'{\'E}cole polytechnique},
  volume = {10},
  number = {17},
  pages = {29--112},
  year = {1815},
  note = {OEuvres compl{\`e}tes, ser. 2, vol. 1, pp. 91--169}
}

@book{trudi1862teoria,
	author = {Trudi, N.},
	publisher = {Libreria scientifica e industriale di B. Pellerano},
	title = {Teoria de' determinanti e loro applicazioni},
	url = {https://books.google.com/books?id=sIVZc_PczqsC},
	year = {1862}}

@book{neukirch1999algebraic,
  title={Algebraic Number Theory},
  author={Neukirch, J{\"u}rgen},
  volume={322},
  year={1999},
  publisher={Springer-Verlag},
  address={Berlin, Heidelberg},
  series={Grundlehren der mathematischen Wissenschaften},
  isbn={978-3-540-65399-8},
  doi={10.1007/978-3-662-03983-0}
}

@book{cox2005using,
  title={Using Algebraic Geometry},
  author={Cox, David A and Little, John and O'Shea, Donal},
  volume={185},
  year={2005},
  publisher={Springer Science \& Business Media},
  address={New York},
  edition={2nd},
  series={Graduate Texts in Mathematics},
  isbn={978-0-387-20706-3}
}

@book{gelfand1994discriminants,
  title={Discriminants, Resultants, and Multidimensional Determinants},
  author={Gelfand, Izrail Moiseevich and Kapranov, Mikhail M and Zelevinsky, Andrei V},
  year={1994},
  publisher={Birkh{\"a}user},
  address={Boston},
  isbn={978-0-8176-3660-9}
}

@article{MESHULAM200663,
	author = {Roy Meshulam},
	doi = {https://doi.org/10.1016/j.ejc.2004.07.009},
	issn = {0195-6698},
	journal = {European Journal of Combinatorics},
	number = {1},
	pages = {63-67},
	title = {An uncertainty inequality for finite abelian groups},
	url = {https://www.sciencedirect.com/science/article/pii/S0195669804001453},
	volume = {27},
	year = {2006}}

@article{tao2003,
       author = {{Tao}, Terence},
        title = "{An uncertainty principle for cyclic groups of prime order}",
      journal = {arXiv Mathematics e-prints},
         year = 2003,
        month = aug,
          eid = {math/0308286},
        pages = {math/0308286},
          doi = {10.48550/arXiv.math/0308286},
archivePrefix = {arXiv},
       eprint = {math/0308286},
 primaryClass = {math.CA},
       adsurl = {https://ui.adsabs.harvard.edu/abs/2003math......8286T}
}

@article{Dirichlet1837,
  author = {Dirichlet, P. G. Lejeune},
  title = {Beweis des Satzes, dass jede unbegrenzte arithmetische Progression, deren erstes Glied und Differenz ganze Zahlen ohne gemeinschaftlichen Factor sind, unendlich viele Primzahlen enth{\"a}lt},
  journal = {Abhandlungen der K{\"o}niglichen Preu{\ss}ischen Akademie der Wissenschaften zu Berlin},
  year = {1837},
  pages = {45--71},
  volume = {48}
}

\end{document}